    \newtheorem{theorem}{Theorem}[section]
    \newtheorem{lemma}[theorem]{Lemma}
    \newtheorem{corollary}[theorem]{Corollary}
    \newtheorem{proposition}[theorem]{Proposition}
    \newtheorem{definition}[theorem]{Definition}
  \newtheorem{example}[theorem]{Example}
\theoremstyle{remark}
\newcommand{\NoteProof}[1]{
	\ifthenelse{\boolean{withproofs}}{\ifthenelse{\boolean{appendix}}{
	\marginnote{Originally at p. \pageref{#1}}
	}{
	\marginnote{{Proof at p.\,{\pageref{app:#1}}}}
	}
	}{}
}
\newcommand{\applabel}[1]{$\phantomsection\label{app:#1}$}
\newcommand{\sem}[1]{\interp{#1}}
\newcommand{\semint}[1]{\interp{#1}^{\intsym}}
\newcommand{\ignore}[1]{}
\newcommand{\colspace}{@{\hspace{.5cm}}}
\newcommand{\hcolspace}{@{\hspace{.25cm}}}
\newcommand{\myinput}[1]{\ifthenelse{\boolean{withimages}}{\input{#1}}{}}
\newcommand{\reflemma}[1]{Lemma~\ref{l:#1}}
\newcommand{\reflemmaeq}[1]{{L.\ref{l:#1}}}
\newcommand{\refth}[1]{Thm.~\ref{th:#1}}
\newcommand{\refthm}[1]{Thm.~\ref{thm:#1}}
\newcommand{\refprop}[1]{Prop.~\ref{prop:#1}}
\newcommand{\refpropp}[2]{Prop.~\ref{prop:#1}\eqref{p:#1-#2}} %Giulio
\newcommand{\refsect}[1]{Sect.~\ref{sect:#1}}
\newcommand{\refapp}[1]{Appendix~\ref{app:#1} (p.~\pageref{app:#1})}
\renewcommand{\refeq}[1]{(\ref{eq:#1})} %renewcommand to avoid conflict with package mathtools
\newcommand{\reffig}[1]{Fig.~\ref{fig:#1}}
\newcommand{\refcoro}[1]{Cor.~\ref{coro:#1}}
\newcommand{\refnot}[1]{Notation~\ref{not:#1}}
\newcommand{\refdef}[1]{Definition~\ref{def:#1}}
\newcommand{\refex}[1]{Ex.~\ref{ex:#1}}
\newcommand{\ie}{\textit{i.e.}\xspace}
\newcommand{\eg}{\textit{e.g.}\xspace}
\newcommand{\ih}{\textit{i.h.}\xspace}
\newcommand{\blue}[1]{{\color{blue} {#1}}}
\newcommand{\darkgreen}[1]{{\color{green!50!black} {#1}}}
\newcommand{\defeq}{\coloneqq} %Giulio
\newcommand{\grameq}{\Coloneqq} %Giulio
\newcommand{\set}[1]{\{#1\}}
\newcommand{\nat}{\mathbb{N}}
\newcommand{\size}[1]{|#1|}
\renewcommand{\l}{\lambda}
\newcommand{\isub}[2]{\{#1/#2\}}
\renewcommand{\isub}[2]{\{#1\,{:=}\,#2\}}
\newcommand{\esub}[2]{[#1/#2]}
\renewcommand{\esub}[2]{[#1{\shortleftarrow}#2]}
\newcommand{\rootRew}[1]{\mapsto_{#1}}
\newcommand{\Rew}[1]{\rightarrow_{#1}}
\newcommand{\tob}{\Rew{\beta}}
\newcommand{\shufeqext}{\shufeqext} %Giulio
\newcommand{\tm}{t}
\newcommand{\tmtwo}{u}
\newcommand{\tmthree}{s}
\newcommand{\tmfour}{r}
\newcommand{\tmp}{\tm'}
\newcommand{\var}{x}
\newcommand{\vartwo}{y}
\newcommand{\varthree}{z}
\newcommand{\varfour}{w}
\newcommand{\ctxholep}[1]{\langle #1\rangle}
\newcommand{\ctxhole}{\ctxholep{\cdot}}
\newcommand{\ctx}{C}
\newcommand{\ctxtwo}{\ctx'}
\newcommand{\ctxp}[1]{\ctx\ctxholep{#1}}
\newcommand{\ctxtwop}[1]{\ctxtwo\ctxholep{#1}}
\newcommand{\Cctx}{\mathcal{C}}
\newcommand{\Cctxtwo}{\mathcal{D}}
\newcommand{\Hctx}{\mathcal{H}}
\newcommand{\Cctxtwop}[1]{\Cctxtwo\ctxholep{#1}}
\newcommand{\Cctxp}[1]{\Cctx\ctxholep{#1}}
\newcommand{\hctx}{H}
\newcommand{\hctxp}[1]{\Hctx\ctxholep{#1}}
\newcommand{\hauxctx}{P}
\newcommand{\hauxctxp}[1]{\hauxctx\ctxholep{#1}}
\newcommand{\arbctxp}[1]{\arbctxp{#1}}
\newcommand{\arbctxtwop}[1]{\arbctxtwop{#1}}
\newcommand{\tctx}{T}
\theoremstyle{acmdefinition}
\newtheorem{notation}[theorem]{Notation}
\newtheorem*{remark*}{Remark}
\newcommand{\la}[1]{\lambda #1.}
\newcommand{\myproof}[1]{
\ifthenelse{\boolean{omitproofs}}{\begin{IEEEproof} Proof available but omitted for readability. \end{IEEEproof}}{#1}}
\newcommand{\levy}{{L{\'e}vy}\xspace}
\newcommand{\node}{\mathtt{n}}
\newcommand{\withproofs}[1]{\ifthenelse{\boolean{withproofs}}{#1}{}}
\newcommand{\withoutproofs}[1]{\ifthenelse{\boolean{withproofs}}{}{#1}}
\newcommand{\doubt}[1]{}
\newcounter{numberone}
\newcounter{numberoneroman}
\newcounter{numberonealph}
\newcommand{\mset}[1]{[#1]}
\newcommand{\emptymset}{\zero}
\newcommand{\zero}{\mathbf{0}}
\newcommand{\typctx}{\Gamma}
\newcommand{\typctxtwo}{\Delta}
\newcommand{\tder}{\pi}
\newcommand{\tdertwo}{\sigma}
\newcommand{\ruleAp}{@}
\newcommand{\ruleAx}{\mathsf{ax}}
\newcommand{\hastype}{\!:\!}
\newcommand{\domain}[1]{\mathsf{dom}(#1)}
\newsavebox{\@brx}
\newcommand{\llangle}[1][]{\savebox{\@brx}{\(\m@th{#1\langle}\)}%
  \mathopen{\copy\@brx\kern-0.7\wd\@brx\usebox{\@brx}}}
\newcommand{\rrangle}[1][]{\savebox{\@brx}{\(\m@th{#1\rangle}\)}%
  \mathclose{\copy\@brx\kern-0.7\wd\@brx\usebox{\@brx}}}
\newcommand{\symfont}[1]{\mathtt{#1}}
\newcommand{\rtob}{\rootRew{\beta}}
\newcommand{\ntm}{n}
\newcommand{\Id}{\symfont{I}}
\let\cal\undefined
\newcommand{\cal}[1]{\mathcal{#1}}
\newcommand{\relsym}{{\cal R}}
\newcommand{\rel}{~\relsym~}
\newcommand{\FV}[1]{\mathsf{fv}(#1)}
\newcommand{\bshs}{\Downarrow_{\mathrm{h}}\,} 
\newcommand{\bsh}[1]{\Downarrow_{\mathrm{h}}^{#1}\,} 
\newcommand{\bshdiv}{\not\bsh{}}
\newcommand{\bohm}{B{\"o}hm\xspace}
\newcommand{\mtype}{\typefont{M}}
\newcommand{\mtypetwo}{\typefont{N}}
\newcommand{\mtypethree}{\typefont{O}}
\newcommand{\emptytype}{[~]}
\renewcommand{\emptytype}{\zero}
\newcommand{\multitype}[2]{[{#2}_{1},\ldots,{#2}_{#1}]}
\newcommand{\ltype}{\typefont{L}}
\newcommand{\ltypetwo}{\ltype'}
\newcommand{\ltypethree}{\ltype''}
\newcommand{\gtype}{\typefont{T}}
\newcommand{\gtypetwo}{\gtype'}
\newcommand{\vartype}{X}
\renewcommand{\vartype}{\typefont{A}}
\newcommand{\typectx}{\Gamma}
\newcommand{\typectxtwo}{\Delta}
\newcommand{\emptytypectx}{\emptyset}
\newcommand{\typingruleApp}{@}
\newcommand{\typingruleAx}{\mathsf{ax}}
\newcommand{\typingruleAbs}{\lambda}
\newcommand{\typingruleMany}{\mathsf{many}}
\newcommand{\typesym}{\mathrm{typ}}
\newcommand{\leqtype}{\sqsubseteq^{\typesym}_{\intsym}}
\newcommand{\equivtype}{\equiv^{\typesym}}
\newcommand{\leqtypepl}{\sqsubseteq^{\typesym}}
\newcommand{\interp}[1]{\llbracket #1 \rrbracket}
\newcommand{\tderiv}{\tder}
\newcommand{\tderivp}{\tderiv'} %Maico
\newcommand{\tderivtwo}{\tdertwo}
\newcommand{\derive}[2]{#1 \derives #2}
\newcommand{\concl}[4]{\derive{#1}{#2 \vdash #3 \hastype #4}}
\newcommand{\derives}{\vartriangleright}
\newcommand{\eqth}{\mathcal{T}}
\newcommand{\lang}{\mathcal{L}}
\newcommand{\htm}{h}
\newcommand{\htmtwo}{\htm'}
\newcommand{\red}[1]{{\color{red} {#1}}}
\newcommand{\blueclr}{\blue{\mathsf{b}}}
\newcommand{\redclr}{\red{\mathsf{r}}}
\newcommand{\rla}[1]{\red{\lambda_{\redclr}} #1.}
\newcommand{\bla}[1]{\blue{\lambda_{\blueclr}} #1.}
\newcommand{\rapp}[2]{#1 \red{\bullet_{\redclr}} #2}
\newcommand{\bapp}[2]{#1 \blue{\bullet_{\blueclr}} #2}
\newcommand{\hsym}{\symfont{h}}
\newcommand{\toh}{\Rew{\hsym}}
\newcommand{\hcolsym}{\hsym\redclr\blueclr}
\newcommand{\tohcol}{\to_{\hcolsym}}
\newcommand{\bshcols}{\Downarrow_{\hchsym}\,} 
\newcommand{\bshcol}[1]{\Downarrow_{\hchsym}^{\intsym#1}\,} 
\newcommand{\bshcoldiv}{\not\bshcols}
\newcommand{\Lambdac}{\Lambda_{\RB}}
\newcommand{\equivchcol}{\sqsubseteq_{\hcolsym}^{\qcontextualsym}}
\newcommand{\leqchcol}{\sqsubseteq_{\hcolsym}^{\qcontextualsym}}
\renewcommand{\equivchcol}{\equiv_{\intsym}^{\mathrm{ctx}}}
\renewcommand{\leqchcol}{\sqsubseteq_{\intsym}^{\mathrm{ctx}}}
\newcommand{\intleq}{\sqsubseteq^{\mathrm{int}}}
\newcommand{\inteq}{\equiv^{\mathrm{int}}}
\newcommand{\wintleq}{\sqsubseteq^{\redclr\mathrm{int}}}
\newcommand{\abscolorone}{p}
\newcommand{\abscolortwo}{q}
\newcommand{\clr}[1]{\mathsf{\abscolorone}_{#1}}
\newcommand{\clrtwo}[1]{\mathsf{\abscolortwo}_{#1}}
\newcommand{\clrp}[1]{\mathsf{\abscolorone}'_{#1}}
\newcommand{\colr}{\mathsf{\abscolorone}}
\newcommand{\colrtwo}{\mathsf{\abscolortwo}}
\newcommand{\cla}[2]{\lambda_{\clr{#1}} #2.}
\newcommand{\ccapp}[3]{#2 \bullet_{\clr {#1}} #3}
\newcommand{\capp}[3]{#2 \bullet_{\clrtwo {#1}} #3}
\newcommand{\appsym}{\bullet}
\renewcommand{\capp}[3]{#2 \appsym_{\clrtwo {#1}} #3}
\newcommand{\manyclam}[2]{\lambda_{\clr{1}\cdots\clr{#1}} #2_1\ldots#2_{#1}.\,}
\newcommand{\manyblam}[2]{\lambda_{\blueclr\cdots \blueclr}#2_1\ldots#2_{#1}.\,}
\newcommand{\manyrlam}[3][1]{\lambda_{\redclr\cdots \redclr}#3_{#1}\ldots#3_{#2}.\,}
\newcommand{\manycapp}[3]{\capp{#1}{\capp{1}{#2}{#3_1} \cdots}{#3_{#1}}}
\newcommand{\clavec}[2]{\lambda_{\vec{#1}}\vec{#2}.}
\newcommand{\cappvec}[3]{#2\appsymp{\vec{#1}}\vec{#3}}
\newcommand{\typearrow}{\rightarrow}
\newcommand{\ctypes}[1]{\vdash^{#1}_{\intsym}}
\newcommand{\leqctype}{\precsim_{\red{ty}\blue{pe}}}
\newcommand{\equivctype}{\simeq_{\red{ty}\blue{pe}}}
\renewcommand{\leqctype}{\leqtype}
\renewcommand{\equivctype}{\equivtype_{\intsym}}
\newcommand{\leqbtype}{\sqsubseteq^{\blueclr\typesym}}
\newcommand{\eqbtype}{\equiv^{\blueclr\typesym}}
\newcommand{\equivbtype}{\eqbtype}
\newcommand{\typearrowpp}[2]{\xrightarrow{{#1}{#2}}}
\newcommand{\monoToPlayer}[2]{\overline{#2}^{#1}}
\newcommand{\monoToBlue}[1]{\monoToPlayer{\blueclr}{#1}}
\newcommand{\monoToRed}[1]{\monoToPlayer{\redclr}{#1}}
\newcommand{\spsym}{\mathrm{sp}}
\newcommand{\laxsym}{\eta^\infty}
\newcommand{\laxbsym}{\cB\laxsym}
\newcommand{\btleq}{\btpre}
\newcommand{\etabtle}{\sqsubseteq_{\laxbsym}}
\newcommand{\etabteq}{=_{\laxbsym}}
\newcommand{\exder}{\,\triangleright}
\newcommand\mplus{\uplus}
\newcommand{\bsub}{\begin{enumerate}[(i)]}
\renewcommand{\esub}{\end{enumerate}}
\newcommand{\obsle}[1][]{\sqsubseteq_{#1}^{\mathrm{ctx}}}
\newcommand{\obseq}{\equiv^{\mathrm{ctx}}}
\newcommand{\obseqpar}[1]{\equiv_{#1}^{\mathrm{ctx}}}
\newcommand{\obseqh}{\obseqpar{\hsym}}
\newcommand{\costeq}{\equiv^{\mathrm{cost}}}
\newcommand{\lam}{\ensuremath{\lambda}}
\newcommand{\las}[2]{\lambda #1_{1}\dots #1_{#2}.}
\newcommand{\apps}[2]{{#1_1}\cdots #1_{#2}}
\newcommand{\Tupler}[1]{{\sf T}_{#1}}
\newcommand{\Tuple}[1]{\langle #1\rangle}
\newcommand{\Proj}[2]{{\sf S}^{#1}_{#2}}
\newcommand{\Var}{\textsc{Var}}
\newcommand{\conv}[1]{\!\Downarrow_{#1}\,} 
\renewcommand{\conv}[1]{\Downarrow_{#1}} 
\newcommand{\convc}[2]{\!\Downarrow_{#1}^{#2}\,} 
\newcommand{\relation}[1]{{\sf #1}}
\newcommand{\iset}{I}
\newcommand{\emptytypctx}{\epsilon_\typctx}
\newcommand{\yinyang}[1][1]{%
    \begin{tikzpicture}[scale=#1*0.07]
      \draw[line width = #1*0.05mm,transform canvas={yshift=0.02cm}] (0,0) circle (1cm);
      \path[fill=black,transform canvas={yshift=0.02cm}] (90:1cm) arc (90:-90:0.5cm)
                        (0,0)    arc (90:270:0.5cm)
                        (0,-1cm) arc (-90:-270:1cm);

    \end{tikzpicture}}
\newcommand{\intsym}{\yinyang}
\newcommand{\nointsym}{\tau}
\newcommand{\absblue}[1]{
	\ifthenelse{\boolean{colored}}{
		\blue{#1}}{
		#1}
}
\newcommand{\absred}[1]{
	\ifthenelse{\boolean{colored}}{
		\red{#1}}{
		#1}
}
\newcommand{\absdarkgreen}[1]{
	\ifthenelse{\boolean{colored}}{
		\darkgreen{#1}}{
		#1}
}
\renewcommand{\appsym}{\cdot}
\newcommand{\appsymp}[1]{\appsym^{#1}}
\renewcommand{\blueclr}{\absblue{\bullet}}
\renewcommand{\redclr}{\absred{\circ}}
\newcommand{\blackcol}{\blueclr}
\newcommand{\whitecol}{\redclr}
\newcommand{\bappsym}{\blueclr}
\newcommand{\rappsym}{\redclr}
\renewcommand{\rla}[1]{\absred{\lambda_{\redclr}} #1.}
\renewcommand{\bla}[1]{\absblue{\lambda_{\blueclr}} #1.}
\renewcommand{\rapp}[2]{#1 \rappsym #2}
\renewcommand{\bapp}[2]{#1 \bappsym #2}
\newcommand{\bnointsym}{\beta_{\nointsym}}
\newcommand{\bintsym}{\beta_{\intsym}}
\newcommand{\bchsym}{\beta_{\checkerssym}}
\newcommand{\rtobnoint}{\rootRew{\bnointsym}}
\newcommand{\rtobint}{\rootRew{\bintsym}}
\newcommand{\tobnoint}{\Rew{\bnointsym}}
\newcommand{\tobint}{\Rew{\bintsym}}
\newcommand{\checkerssym}{\absred\redclr \absblue\blueclr}
\renewcommand{\Lambdac}{{\Lambda_{\checkerssym}}}
\newcommand{\lap}[2]{\lambda_{#1} #2.}
\newcommand{\appp}[3]{#2 \appsymp{#1} #3}
\renewcommand{\ccapp}[3]{\appp{\clr{#1}}{#2}{#3}}
\renewcommand{\capp}[3]{\appp{\clrtwo{#1}}{#2}{#3}}
\newcommand{\typingruleAppInt}{@_{\intsym}}
\newcommand{\typingruleAppNoInt}{@_{\nointsym}}
\newcommand{\hnointsym}{\hsym_{\nointsym}}
\newcommand{\hintsym}{\hsym_{\intsym}}
\newcommand{\hchsym}{\hsym_{\checkerssym}}
\newcommand{\tohint}{\Rew{\hintsym}} 
\newcommand{\tohnoint}{\Rew{\hnointsym}}
\newcommand{\BT}[1]{{\sf BT}(#1)}
\newcommand{\restr}{\!\!\upharpoonright}
\newcommand{\head}{\relation{h}}
\newcommand{\comb}[1]{\symfont{#1}}
\newcommand{\emptyseq}{\langle\rangle}
\newcommand{\cB}{\mathcal{B}}
\newcommand{\rtoeta}{\rootRew{\eta}}
\newcommand{\toeta}{\Rew{\eta}}
\newcommand{\eqb}{=_\beta}
\newcommand{\eqeta}{=_\eta}
\newcommand{\clrd}[1]{\clr{#1}^\bot}
\newcommand{\bId}{\Id_{\blueclr}}
\newcommand{\bOne}{\comb{1}_{\blueclr}}
\newcommand{\rId}{\Id_{\redclr}}
\newcommand{\tobch}{\Rew{\bchsym}}
\newcommand{\tohch}{\Rew{\hchsym}}
\newcommand{\silconv}{=_{\bnointsym}}
\newcommand{\bconv}{=_{\beta}}
\newcommand{\btpre}{\sqsubseteq_{\cB}}
\newcommand{\speq}{=_{\spsym}}
\newcommand{\httsym}{\mathtt{h}}
\newcommand{\hdsize}[1]{| #1|_{\httsym} }
\newcommand{\hnf}{h}
\newcommand{\insize}[1]{\size{#1}_{@}}
\newcommand{\type}{\typefont{T}}
\newcommand{\bigmplus}{\biguplus}
\newcommand{\ruleMany}{\mathsf{many}}
\newcommand{\typefont}[1]{{\mathsf{#1}}}
\newcommand{\tvar}{\typefont{A}}
	\newcommand{\nftm}{\htm_{\tm}}
	\newcommand{\nftmtwo}{\htm_{\tmtwo}}
\newcommand{\med}[2]{
($(#1)!.5!(#2)$)
}
\tikzset{
ocenter/.style={baseline={([yshift=-.5ex, xshift=-.5ex]current bounding box)}},  
}
\newcommand{\leqchcolr}{\sqsubseteq_{\intsym}^{\mathrm{ctx\cdot imp}}}
\newcommand{\intrleq}{\sqsubseteq^{\mathrm{int\cdot imp}}}
\newcommand{\sbtsym}{\cB\eta_{\mathrm{red}}^\infty}
\newcommand{\slbtleq}{\sqsubseteq_{\sbtsym}}
\newcommand{\chcontexts}{\mathcal{C}_{\checkerssym}}
\newcommand{\chhcontexts}{\mathcal{H}_{\checkerssym}}
\newcommand{\evseq}{e}
\begin{document}

%%
%% The "title" command has an optional parameter,
%% allowing the author to define a "short title" to be used in page headers.
\title{Interaction Equivalence}

%%
%% The "author" command and its associated commands are used to define
%% the authors and their affiliations.
%% Of note is the shared affiliation of the first two authors, and the
%% "authornote" and "authornotemark" commands
%% used to denote shared contribution to the research.
\author{Beniamino Accattoli}
\orcid{0000-0003-4944-9944}
\affiliation{%
  \institution{Inria - École Polytechnique}
  \city{Palaiseau}
  \country{France}
  }
\email{beniamino.accattoli@inria.fr}

\author{Adrienne Lancelot}
\orcid{0009-0009-5481-5719}
\affiliation{%
	\institution{Inria - École Polytechnique - Université Paris Cité}
	\city{Palaiseau}
	\country{France}
	}
\email{lancelot@irif.fr}

\author{Giulio Manzonetto}
\orcid{0000-0003-1448-9014}
\affiliation{
  \institution{Université Paris Cité}
  \city{Paris}
  \country{France}
}
\email{gmanzone@irif.fr}

\author{Gabriele Vanoni}
\orcid{0000-0001-8762-8674}
\affiliation{
  \institution{Université Paris Cité}
  \city{Paris}
  \country{France}
}
\email{gabriele.vanoni@irif.fr}

%%
%% By default, the full list of authors will be used in the page
%% headers. Often, this list is too long, and will overlap
%% other information printed in the page headers. This command allows
%% the author to define a more concise list
%% of authors' names for this purpose.
\renewcommand{\shortauthors}{Accattoli, Lancelot, Manzonetto, and Vanoni}

%%
%% The abstract is a short summary of the work to be presented in the
%% article.
\begin{abstract}
% !TEX root = ../main.tex
Contextual equivalence is the \emph{de facto} standard notion of program equivalence. A key theorem is that contextual equivalence is an \emph{equational theory}. Making contextual equivalence more intensional, for example taking into account the time cost of the computation, seems a natural refinement. Such a change, however, does \emph{not} induce an equational theory, for an apparently essential reason: cost is not invariant under reduction.

In the paradigmatic case of the untyped $\lambda$-calculus, we introduce \emph{interaction equivalence}. Inspired by game semantics, we observe the number of interaction steps between terms and contexts but---crucially---ignore their internal steps. We prove that interaction equivalence is an equational theory and characterize it as $\mathcal{B}$, the well-known theory induced by B{\"o}hm tree equality. It is the first observational characterization of $\mathcal{B}$ obtained \emph{without} enriching the discriminating power of contexts with extra features such as non-determinism. To prove our results, we develop interaction-based refinements of the B{\"o}hm-out technique and of intersection types.
\end{abstract}
%%
%% The code below is generated by the tool at http://dl.acm.org/ccs.cfm.
%% Please copy and paste the code instead of the example below.
%%
\begin{CCSXML}
<ccs2012>
<concept>
<concept_id>10003752.10003753.10003754.10003733</concept_id>
<concept_desc>Theory of computation~Lambda calculus</concept_desc>
<concept_significance>500</concept_significance>
</concept>
<concept>
<concept_id>10003752.10010124.10010131.10010133</concept_id>
<concept_desc>Theory of computation~Denotational semantics</concept_desc>
<concept_significance>500</concept_significance>
</concept>
</ccs2012>
\end{CCSXML}

\ccsdesc[500]{Theory of computation~Lambda calculus}
\ccsdesc[500]{Theory of computation~Denotational semantics}

%%
%% Keywords. The author(s) should pick words that accurately describe
%% the work being presented. Separate the keywords with commas.
\keywords{Lambda calculus, program equivalences, Böhm trees.}
%\received{20 February 2007}
%\received[revised]{12 March 2009}
%\received[accepted]{5 June 2009}

%%
%% This command processes the author and affiliation and title
%% information and builds the first part of the formatted document.
\maketitle

% !TEX spellcheck = en-US
% !TEX root = ../main.tex
\section{Introduction}
A cornerstone of the theory of programming languages is the acceptance of contextual equivalence as a meaningful notion---if not as \emph{the} notion---of program equivalence.  Introduced by \citet{morris1968lambda} to study the untyped $\l$-calculus, contextual equivalence has the advantage that its definition is almost language-agnostic. Given a language $\lang$, one simply needs the notion of contexts $\ctx$ of $\lang$, usually defined as terms with a single occurrence of a special additional constant $\ctxhole$ (the \emph{hole} of the context), and a chosen predicate $\Downarrow$ of \emph{observation} for $\lang$, typically some notion of termination, which is the only language-specific ingredient. Then, contextual equivalence is defined as:%\footnote{Sometimes it is required that contexts $\ctx$ are \emph{closing}, \ie that they capture all the free variables.}
\[\begin{array}{cccc}
	t \obseq u &\mbox{ if for all contexts }C\,. &[\ 
	C\ctxholep{t}\conv{} \quad\Leftrightarrow\quad
	C\ctxholep{u}\conv{}
	\ ]
	\end{array}
\]

Contextual equivalence embodies a \emph{black box} behavioral principle with respect to termination. Nothing is said about the structure of $\tm$ and $\tmtwo$, it is only prescribed that any use of $\tm$ in a larger program can be replaced by $\tmtwo$ (and vice-versa) without affecting the observables. One of the most studied contextual equivalences is \emph{head contextual equivalence} $\obseqh$ for the untyped $\l$-calculus~\cite{Barendregt84}, where one observes termination of head reduction $\toh$. Another contextual equivalence that was studied at length is the one of Plotkin's \textsf{PCF}~[\citeyear{DBLP:journals/tcs/Plotkin77}], where the observation is termination on the same value of ground type, typically a natural number. Both played crucial roles in the historical development of denotational semantics of programming languages. 

Contextual equivalence is also relevant in more applied settings. The typical example being program transformations at work in compilers, for which contextual equivalence guarantees a strong form of soundness, see, for instance, the survey by \citet{DBLP:journals/csur/PatrignaniAC19}.

\paragraph{Equational Contextual Equivalence} In order for contextual equivalence $\obseq$ to be a proper semantical notion, one actually has to show that it is an \emph{equational theory} for $\mathcal{L}$. This means that the following two properties must be satisfied:
\begin{enumerate}
	\item \emph{Compatibility}: $\obseq$ is stable under context closure, that is, $\tm\obseq\tmtwo$ implies $\ctxp\tm\obseq\ctxp\tmtwo$ for all contexts $\ctx$, and; 
	\item \label{p:bconv} 
	\emph{Invariance}: $\obseq$ includes $\beta$-conversion $\eqb$, that is, if $\tm \eqb \tmtwo$ then $\tm \obseq \tmtwo$. 
\end{enumerate}
While compatibility holds by definition, invariance is a non-trivial theorem, because of the universal quantification on contexts in $\obseq$, which is notoriously hard to manage. For  head contextual equivalence $\obseqh$, for instance, a rewriting-based proof of the invariance property requires to prove the confluence of $\beta$-reduction and various non-trivial properties of head reduction\footnote{For an overview of such a proof, see \withproofs{\refapp{invariance}}\withoutproofs{Appendix A of the longer version on arXiv \cite{DBLP:journals/corr/abs-2409-18709}}.}. %, such as its untyped normalization theorem.\footnote{The statement of the untyped normalization theorem for head reduction: if $\tm$ $\beta$-reduces to a head normal form (possibly using non-head steps) then head reduction $\toh$ terminates on $\tm$, see \cite{DBLP:journals/iandc/Takahashi95,DBLP:conf/aplas/AccattoliFG19}.} 

An equational theory for $\lang$ can be seen as a \emph{semantics} for $\lang$. The abstract requirement defining denotational models is that their induced equivalence\footnote{The equivalence induced by a model $\mathcal{M}$ is defined as $\tm =_{\mathcal{M}} \tmtwo$ if $\sem\tm=\sem\tmtwo$, where $\sem\cdot$ is the interpretation in the model.} is an equational theory. Notably, head contextual equivalence $\obseqh$ is exactly the equational theory of Scott's $\mathcal{D}_\infty$ model~[\citeyear{Scott72}], the first model of the untyped $\l$-calculus.

\paragraph{Termination $vs$ Time Cost, Equationally} It is natural to wonder whether contextual equivalence can be refined by replacing the termination predicate $\Downarrow$ with a finer one $\Downarrow^k$ indicating termination in $k$ evaluation steps. Intuitively, the number of evaluation steps is taken as a time cost model. With the compiler analogy in mind, it is indeed natural to ask that a program transformation preserves the time behavior. This refinement is exactly Sands' notion of \emph{cost equivalence}, the symmetric form of his more famous \emph{improvement preorder} \citeyearpar{SandsImprovementTheory,SandsToplas,DBLP:journals/tcs/Sands96}:
\[\begin{array}{cccc}
	t \costeq u &\mbox{ if for all contexts }C,\exists k\ge 0\,. &[\ 
	C\ctxholep{t}\convc{}{k} \quad\Leftrightarrow\quad
	C\ctxholep{u}\convc{}{k}
	\ ]
	\end{array}
\]
Cost equivalence falls short, however, of being a refinement of contextual equivalence. The price to pay for the added quantitative information is that it can no longer be seen as a semantics, that is, it is \emph{not} an equational theory. The reason is both simple and deep. It is simple because (in, say, the $\l$-calculus) if $\tm \tob \tmtwo$, then $\ctxp\tmtwo$ might take one step less to terminate than $\ctxp\tm$, so $\beta$-conversion $\eqb$ cannot be included in cost equivalence $\costeq$, breaking the invariance of equational theories.

\paragraph{Internal and External} The issue is deep as it reflects a more general tension between quantitative intensional properties such as time cost, which by their nature cannot be invariant under evaluation, and semantical notions such as equational theories, that are expected to hide the computational process. 
More generally, one might identify two perspectives on programs. The \emph{internal view} studies programs \emph{in isolation}, and it is concerned with qualitative properties such as, \eg, termination, confluence, productivity, or quantitative properties such as time or space cost. The \emph{external view}, instead, studies how programs \emph{interact} with other programs, as it is the case for contextual equi\-va\-len\-ces, labelled transition systems (LTSs), or process calculi. Usually, the external view hides the internal dynamics of programs, for instance via the invariance requirement for contextual equivalence, or via the silent $\tau$ transitions of LTSs. 

The mentioned issue with cost equivalence is then an instance of a more general question: is there a way of analyzing quantitative properties externally, without interference from the hidden internal dynamics? Concretely, is there a way of measuring the amount of \emph{interactions} between a term and its context \emph{modulo} the internal dynamics of the term and of the context?

\paragraph{Interaction Equivalence and the Checkers Calculus} In this paper, we provide a positive answer.  The literature on improvements has focused on application-oriented $\l$-calculi based on call-by-need evaluation. Our focus is different, more theoretical, we are rather interested in the semantical aspect and the understanding of the internal/external dilemma. Therefore, we place ourselves in a minimalistic setting, namely the ordinary (call-by-name) untyped $\l$-calculus, the denotational semantics of which has been studied in depth---see the classic book \cite{Barendregt84} and its recent extension \cite{BarendregtM22}---and aim at refining head contextual equivalence.

We introduce a new notion of cost equivalence, called \emph{(head) interaction equivalence}, whose key property is being---as we prove---an equational theory, in contrast to what happens for cost equivalence. This is obtained via a reconciliation of the internal and the external perspectives, loosely inspired by \emph{game semantics} \cite{DBLP:journals/iandc/HylandO00,DBLP:journals/iandc/AbramskyJM00}.%, and yet it is considerably simpler as it does not require any of the technicalities of game semantics.

Interaction equivalence is based on the new \emph{checkers calculus} $\Lambdac$, a $\l$-calculus enriched with two players, black $\blackcol$ and white $\whitecol$. The idea is to duplicate the abstraction and application constructors of the $\l$-calculus as to have white ($\rla\var\tm$ and $\rapp\tm\tmtwo$) and black variants ($\bla\var\tm$ and  $\bapp\tm\tmtwo$) of each. Next, borrowing from LTSs, one defines two variants of $\beta$-reduction, the silent one (internal to a player) and the interaction (external) one, depending on whether the constructors involved in the redex belong to the same player or not:
\begin{center}
$\begin{array}{ccc\colspace |\colspace cccc}
\multicolumn{3}{c}{\textsc{Silent }\beta} & \multicolumn{3}{c}{\textsc{Interaction }\beta}
\\
\bapp{(\bla\var\tm)}\tmtwo& \tobnoint & \tm\isub\var\tmtwo
&
\bapp{(\rla\var\tm)}\tmtwo& \tobint & \tm\isub\var\tmtwo
\\
\rapp{(\rla\var\tm)}\tmtwo& \tobnoint & \tm\isub\var\tmtwo
&
\rapp{(\bla\var\tm)}\tmtwo& \tobint & \tm\isub\var\tmtwo
\end{array}$
\end{center}
Interaction equivalence of $\tm$ and $\tmtwo$ is then defined in the checkers calculus exactly as cost equivalence \emph{except} that one  uses the predicate $\bshcol{k}$ holding when checkers head reduction terminates using $k$ interaction steps $\tobint$ and---crucially---\emph{arbitrarily many} (possibly zero) silent steps $\tobnoint$.

Checkers interaction equivalence is then transferred to the ordinary $\l$-calculus via a \emph{paint-and-wash} construction: two ordinary $\l$-terms $\tm$ and $\tmtwo$ are interaction equivalent when their uniformly, say, black-painting $\monoToBlue\tm$ and $\monoToBlue\tmtwo$ are interaction equivalent in the checkers calculus, \ie:
\[\begin{array}{cccc}
	t \inteq u &\mbox{ if for all checkers contexts }C,\exists k\ge 0\,. &[\ 
	C\ctxholep{\monoToBlue{t}}\bshcol{k} \quad\Leftrightarrow\quad
	C\ctxholep{\monoToBlue{u}}\bshcol{k}
	\ ]
\end{array}
\]
Our first main result is that interaction equivalence is an equational theory of the ordinary untyped $\l$-calculus. As for contextual equivalence, proving that $\inteq$ is an equational theory is non-trivial, and the counting of interaction steps adds a further difficulty. We do it via a \emph{multi type system}, as explained after our second contribution. We also prove that two preorder variants of interaction equivalence, one of which is the interaction reformulation of Sands's improvements, are \emph{in}equational theories, that is, preorders verifying the compatibility and invariance of equational theories.

\paragraph{Inspecting Black Boxes} Contextual equivalences, as already mentioned, are black-box principles. While the concept is natural, it is hard to establish the contextual equivalence of two given terms, because of the universal quantification over all the contexts. It is then common to look for alternative, more explicit reformulations that relate the inner structure of equivalent terms. Historically, these reformulations were presented via semantic trees such as \emph{\bohm trees}, introduced by~\citet{Bare77}.
%Roughly, the \bohm tree $\BT\tm$ of a term $\tm$ is its possibly infinitary normal form which diverging sub-terms have been cut off and replaced with a special $\bot$ constant indicating divergence. 
Head contextual equivalence $\obseqh$ for the untyped $\l$-calculus is one of the few cases for which an explicit description is available, as they are not easy to obtain.
%, and, moreover, it has been studied in-depth 
Namely, $\obseqh$ was characterized by \citet{Hyland76} and \citet{Wadsworth76} as the equational theory $\laxbsym$ induced by the equality of \bohm trees up to possibly infinite $\eta$-equivalence. 
A natural question then arises: is there an explicit description of interaction equivalence? Does it correspond to anything already known?

\paragraph{Interaction Equivalence and \bohm Trees} The second contribution of our work is the answer to these two questions. We prove that interaction equivalence is exactly the well-known equational theory $\cB$ induced by \bohm tree equality. In particular, interaction equivalence is \emph{not} extensional, that is, it does not validate any form of $\eta$-equivalence, given that $\var$ and $\la\vartwo\var\vartwo$ clearly have a different number of interactions with any context providing an argument. Therefore, interaction equivalence $\inteq$ has a simpler explicit description than head contextual equivalence $\obseqh$, and provides an operational insight about $\eta$-equivalence. The failure of $\eta$-equivalence also reveals that, despite the analogy, our framework is not that close to game semantics, where $\eta$ is naturally validated unless additional constraints are imposed to avoid it, as in \cite{DBLP:conf/icalp/OngG02,DBLP:journals/tcs/KerNO03}.

Our results actually turn out to solve an open problem in the semantical theory of the untyped $\l$-calculus, namely the problem of finding a satisfying description of (non-extensional) \bohm tree equality $\cB$ as an observational equivalence. Partial solutions have been presented in the literature and are discussed among related works, in Section~\ref{sec:relatedworks}.
Ours is the first solution that does not require adding any extra features such as concurrency or non-determinism, but only refining the analysis. 

\paragraph{Normal Form Bisimilarities} Some readers might prefer a coinductive formulation of the problem. The equational theory $\cB$, indeed, corresponds to \emph{head normal form bisimilarity}, as shown by \citet{lassen1999bisimulation}, building over \citeauthor{DBLP:journals/iandc/Sangiorgi94}'s work~\citeyearpar{DBLP:journals/iandc/Sangiorgi94}. Normal form bisimilarities are known to be sound but---in general---\emph{not} complete for contextual equivalences. The open problem from the literature then might be recast as follows: is there an observational equivalence for which normal form bisimilarities are complete? We show that, for the head case, interaction equivalence is the answer.
 %In some calculi with effects, normal form bisimilarities are complete. We conjecture that, in those cases, contextual equivalence and (the adaptation of) interaction equivalence coincide.
%, without also adding $\eta$-equivalence. 

\paragraph{Technical Development} Our results are proved via new interaction-based refinements of standard proof methods in the literature, namely the \bohm-out technique and multi types, overviewed in the next subsection. While the proofs are non-trivial, we believe that they are compact and neat, hopefully reassuring that the introduced framework is not \emph{ad-hoc}.

\withoutproofs{\paragraph{Proofs} Some proofs are omitted; they are in the proof appendix on arXiv \cite{DBLP:journals/corr/abs-2409-18709}}.

%Various proofs are omitted and can be found in the version with proof appendix available on arXiv \cite{DBLP:journals/corr/abs-2409-18709}}.

\subsection*{Overview of the Proof Techniques}
In the study of program equivalences, proving the inclusion of an equivalence into another one is always challenging. As it is customary, we study \emph{preorders} rather than equivalences, and prove the equality of the \bohm preorder $\btleq$ and the interaction preorder $\intleq$ by showing the two inclusions. 

\paragraph{Proof Technique 1: \bohm-Out}  For $\intleq \,\subseteq\, \btleq$, we prove the contrapositive: if $\tm$ and $\tmtwo$ have different \bohm trees then they are not interaction equivalent. We adapt the \emph{\bohm-out technique} at work in \bohm's separation theorem [\citeyear{Boehm68}]---a classic result of the untyped $\l$-calculus---thus building a context that separates $\tm$ and $\tmtwo$. The original technique, used also more recently in \cite{Dezani98,BOUDOL199683,IntrigilaMP19,BarendregtM22}, cannot distinguish $\eta$-equivalent terms. We refine it by counting interaction steps: when interacting with a context, $\eta$-equivalent terms give rise to different amounts of interaction.

\paragraph{Proof Technique 2: Multi Types} For the inclusion $\btleq \,\subseteq\, \intleq$, we use a different technique. The main tool is a new \emph{multi type} system (also known as non-idempotent intersection types) for the checkers calculus. Multi types are an established tool for the study of untyped $\l$-calculi, mediating between operational and denotational studies; see \cite{BKV17} for an introduction. 

Similarly to intersection types, they \emph{characterize} various termination properties, in the sense that $\tm$ ``converges'' if and only if $\tm$ is typable\footnote{The notion of convergence that is captured depends on the system that is considered. Here, we consider head normalization.}. In contrast to intersection types, however, multi types are \emph{quantitative}, which can be expressed in at least two ways. Firstly, the fact that typability implies (head) termination can be proved easily using as a decreasing measure the size of type derivations, while intersection types require more involved techniques. Secondly, multi types allow one to extract a bound of the number of head steps to normal form, as first shown by \citet{Carvalho07,DBLP:journals/mscs/Carvalho18}, which is impossible with (idempotent) intersection types. 

As a disclaimer, please note that multi types are a theoretical tool \emph{not} meant to be used in real life programming languages, since typability in multi type systems is an undecidable property. Rather, they provide handy type-theoretic presentations of denotational models.

We prove that our multi type system for the checkers calculus characterizes head termination, and we use the type system to introduce a type preorder $\leqbtype$ and show that $\btleq \,\subseteq\, \leqbtype \,\subseteq\, \intleq$. The first of these two inclusions is proved exploiting the quantitative properties of the multi type system. The second inclusion requires more, namely to characterize the multi type judgements from which one can measure the \emph{exact} number of interaction steps, rather than simply providing a bound. For that, we adapt the \emph{tight} technique of \citet{DBLP:journals/jfp/AccattoliGK20}, in its turn refining previous work by \citet{Carvalho07,DBLP:journals/mscs/Carvalho18}. A possibly interesting point is that the literature uses the tight technique to measure the number of internal steps, while here we use it dually, for measuring interaction steps. The following diagram sums up the technical development:
\begin{center}
\begin{tikzpicture}
		\node at (0,0)[align = center](bt){$\btleq$};
		\node at (bt.center)[right = 160pt](ctxc){ $\intleq$};
		\node at \med{bt}{ctxc}[above = 11pt](relc){ $\leqbtype$};
		
		\draw[->, out=20, in=185](bt) to node[above left= 1pt and 2pt, overlay] {\scriptsize Quantitative properties of multi types}node[below right = 1pt and -7pt] {\scriptsize Thm. \ref{thm:bisimulation-preserves-typeder}} (relc);
		\draw[->, out=-5, in=160](relc) to node[above right= 1pt and 2pt, overlay] {\scriptsize Tight technique}node[below left = 1pt and -7pt] {\scriptsize Cor. \ref{coro:type-preorder-included-ctxc}\eqref{coro:type-preorder-included-ctxc3}}(ctxc);
		\draw[out=-160, in=-20,->](ctxc) to node[below] {\scriptsize Interaction \bohm-out}node[above] {\scriptsize Thm. \ref{th:intleq-included-in-bohm}}(bt);
	\end{tikzpicture}
\end{center}
In fact, we end up providing \emph{two} characterizations of interaction equivalence/preorder, one as \bohm trees \emph{and} one as multi types.
Our first main result, namely the fact that interaction equivalence is an equational theory (Cor. \ref{coro:requirements-for-inequational}\eqref{coro:requirements-for-inequational3}), is also proved using tight multi types. Essentially, it is obtained by symmetry from the inclusion $\leqbtype\,\subseteq\,\intleq$ above and the fact that $\leqbtype$ is an inequational theory.

\ignore{\ben{\paragraph{Perspective on the Theory of Multi Types} Our use of multi types is novel on two aspects. Firstly, in the theory of normal form bisimilarities, the main technique for the inclusion $\btleq \subseteq\, \intleq$---usually referred to as the soundness of the bisimilarity---is Lassen's variant of Howe's method for applicative bisimilarities. We here show that multi types are an alternative proof technique for this kind of results.

Secondly, the tight technique of \citet{DBLP:journals/jfp/AccattoliGK20} aims at counting the exact number of steps for the \emph{internal} dynamics of terms, and for that it provides types that forbid external interactions. In other words, the internal and external views are kept sealed and apart. Our use of the tight technique in the checkers type system aims at counting the exact number of interaction/\emph{external} steps. can be seen as an improvement, since therein the players tags allow one 
}}
% !TEX root = ../main.tex
% !TEX spellcheck = en-US
% !TEX root = ../../main.tex
\begin{figure}[t!]
\begin{tabular}{c|c}
$\begin{array}{rrllll}
%\textsc{Terms}\ \ 
\Lambda\ni &  t,u,s& ::=& x\,(\in\Var)\mid \lam x.t\mid tu
\\
%\textsc{Nfs} & \ntm & \grameq & \lambda x_{1}\ldots x_{n}.y\,\ntm_{1}\cdots \ntm_{k}\ \ \ n,k\ge 0
\textsc{hnfs} & \htm & \grameq & \lambda x_{1}\ldots x_{n}.y\,t_{1}\cdots t_{k}%\ \ \ \ \ (n,k\ge 0)
%\la{\vec{\var}}~\var\,{\vec{\tm}}
\\[4pt]
%\textsc{Ctxs}\ \ 
\Cctx\ni & 	C& ::=& \ctxhole\mid \lam x.C\mid \tm\,C \mid C\,\tmtwo
\\
%\textsc{Hctx}\ \ 
\Hctx\ni & 	\hctx& ::=& \lam x_1\dots x_n.\ctxhole t_1\cdots t_k%\ \ (n,k\ge 0)
%\la{\vec{\var}}~\ctxhole\,{\vec{\tm}}

\\
\\
\end{array}$
&
$\begin{array}{rrlllllll}
\textsc{$\beta$-rule} & (\lam x.t)u &\rtob& t\isub{x}{u}
\\
\textsc{$\eta$-rule} & \lam x.tx &\rtoeta& t, \textrm{if }x\notin\FV{t}
\\[4pt]
\textsc{$\beta$} & \tob &\defeq& \Cctxp\rtob
\\
\textsc{head} & \toh &\defeq& \hctxp\rtob
\\
\textsc{$\eta$} & \toeta &\defeq& \Cctxp\rtoeta

\end{array}$

\end{tabular}
\vspace{-5pt}
\caption{The $\l$-calculus.
$\Lambda$ is the set of \lam-terms, $\Cctx$ the class of contexts, $\Hctx$ the subclass of head contexts.}
%\vspace{-5pt}
\label{fig:lambda}
\end{figure}
\section{The $\lambda$-Calculus} 
To keep this article as self-contained as possible, we summarize some definitions and results concerning \lam-calculus that we shall use in the paper.
For more information, see \cite{Barendregt84}.
\begin{definition}
The syntax and rewriting rules of $\l$-calculus are given in \reffig{lambda}.
\end{definition}
\paragraph{$\l$-Terms and Contexts} 
The set $\Lambda$ of \emph{\lam-terms} is constructed over a countable set $\Var$ of variables. 
We assume that application associates to the left and has a higher precedence than abstraction.
Given $k,n\ge 0$ and $t,u, s_1,\dots,s_k\in\Lambda$, we write $\lam \vec x.t$ as an abbreviation of $\lam x_1\dots x_n.t$, and $t\vec s$ for $ts_1\cdots s_k$.
For instance, $\lam xywz.xy(wz)$ stands for $\lam x.(\lam y.(\lam w.(\lam z.((xy)(wz)))))$.

The set $\FV{t}$ of \emph{free variables} of $t$ and \emph{$\alpha$-conversion} are defined as in~\cite[\S1.2]{Barendregt84}.
%We denote by $\Lambda^o$ the set of all \emph{closed} \lam-terms (or \emph{combinators}), \ie those $t\in\Lambda$ satisfying $\FV{t} = \emptyset$.
Hereafter, we consider \lam-terms up to $\alpha$-conversion and we denote $\alpha$-conversion simply by $=$. The usual meta-level capture-avoiding substitution of $u$ for $x$ in $t$ is noted $t\isub{x}{u}$.

The class $\Cctx$ of \emph{contexts} contains $\lambda$-terms built using exactly one occurrence of the \emph{hole} $\ctxhole$, standing for a removed sub-term. 
We shall also use the subclass $\Hctx$ of \emph{head contexts}.
Given a context $C$ and a \lam-term $t$, we denote by $C\ctxholep{t}$ the \lam-term obtained by  replacing $t$ for the hole $\ctxhole$ in $C$, possibly with capture of free variables. 

\paragraph{Rewriting.}
The \lam-calculus can be endowed with several \emph{notions of reduction} $\to_{\relation{R}}$, turning the set $\Lambda$ into a higher-order term rewriting system.
Given a notion of reduction $\to_{\relation{R}}$: 
\bsub\item $\to^*_{\relation{R}}$ stands for the reflexive and transitive closure of $\to_{\relation{R}}$ (\emph{multistep $\relation{R}$-reduction});
\item $=_{\relation{R}}$ stands for its reflexive, symmetric, and transitive closure (\emph{$\relation{R}$-conversion});
\esub
We write $\tm\to_{\relation{R}}^n u$ to indicate a reduction sequence $\tm\to_{\relation{R}}\tm_1\to_{\relation{R}}\cdots\to_{\relation{R}}\tm_{n-1}\to_{\relation{R}} u$ of length $n$.
A \lam-term $\tm$ is an \emph{$\relation{R}$-normal form} (or \emph{$\relation{R}$-nf}) if there is no term $u$ such that $\tm\to_{\relation{R}} u$.
Given $\tm\in\Lambda$, we write $\tm\conv{\relation{R}} u$ if $\tm\to^*_{\relation{R}}u$ and $u$ is a $\relation{R}$-nf.
The term $\tm$ is \emph{$\relation{R}$-normalizable}, written $\tm\conv{\relation{R}}$, if $\tm\conv{\relation{R}} u$, for some $u$.

%Consider now an arbitrary binary relation $\rel\subseteq\Lambda\times\Lambda$. 
%We say that $\rel$ is \emph{context closed} if $t\ \rel\ u$ entails $C\ctxholep{t}\ \rel\ C\ctxholep{u}$, for all contexts $C[\xi]$.
%The \emph{closure} of a relation $\rel$ is the smallest context closed relation containing $\rel$.

\paragraph{Contextual Closure and Reductions.} We distinguish between the \emph{(rewriting) rule} $\relation{R}$ and the \emph{notion of reduction} $\to_{\relation{R}}$, where the latter is obtained from the former via some form of context closure.
The context closure defined next is applicable more generally to every relation $\relsym\subseteq\Lambda^2$ on $\l$-terms.

\begin{definition}[Contextual closure]\label{def:ctxclosure}
Let $\relsym\subseteq\Lambda^2$ be a relation and $\Cctxtwo$ be a class of contexts. The \emph{$\Cctxtwo$-closure of $\relsym$} is the least relation $\Cctxtwop{\relsym}$ such that $\tm\rel\tmtwo$ entails $\ctxp\tm \ \Cctxtwop{\relsym}\ \ctxp\tmtwo$, for all $\ctx\in\Cctxtwo$. $\relsym$ is called \emph{$\Cctxtwo$-compatible} if $\Cctxtwop{\relsym} \subseteq\relsym$, and simply \emph{compatible} when $\Cctxtwo$ is the class of all contexts $\Cctx$. 
\end{definition}

The \emph{$\beta$-reduction} $\tob$ (resp.\ \emph{$\eta$-reduction} $\toeta$) is the closure of the $\beta$-rule ($\eta$-rule) under all contexts. 

\begin{notation}\label{not:lam-terms} 
Concerning specific \lam-terms, we fix the following notations:
\begin{center}$
	\begin{array}{cccccc}
	\comb{I}\defeq\lam x.x,&	\comb{1}\defeq\lam xy.xy,&\comb{K}\defeq\lam xy.x,&\comb{F}\defeq\lam xy.y,&\comb{Y} \defeq \lam f.(\lam x.f(xx))(\lam x.f(xx)),&\Omega \defeq \comb{YI},
	\end{array}$
\end{center}
where $\comb{I}$ is the identity, $\comb{1}$ is an $\eta$-expansion of $\comb{I}$, $\comb{K}$ and $\comb{F}$ are the projections, $\comb{Y}$ is a fixed point operator satisfying $\comb{Y}f =_\beta f(\comb{Y}f)$, and $\Omega=_\beta (\lam x.xx)(\lam x.xx)$ is the paradigmatic looping combinator.
\end{notation}

Head reduction $\toh$, defined in \reffig{lambda}, is the evaluation strategy adopted in this paper. We choose head reduction because of its key role in the semantics of $\l$-calculus \cite{Barendregt84}, but our construction of interaction equivalence could be adapted smoothly to weak (\ie not under abstraction) head reduction. We shall discuss it in Section \ref{sect:future-work}.

\reffig{lambda} gives the standard characterization of head normal forms (or \emph{hnfs}). Given a hnf $\htm=\lambda x_{1}\ldots x_{n}.y\,t_{1}\cdots t_{k}$, we refer to $\vartwo$ (which may possibly be one of $x_{1},\ldots, x_{n}$) as to its \emph{head variable}.

\paragraph{Conversion and Equational Theories.} The equational theories of the \lam-calculus, called \emph{\lam-theories}, are compatible equivalence relations containing $\beta$-reduction. 
They arise naturally when one aims at equating \lam-terms displaying the same operational behavior.
Similarly, inequational theories express the fact that the behavior of a \lam-term is somewhat \emph{less defined} than the behavior of another term.

\begin{definition} 
\bsub
\item A relation $\relsym\subseteq\Lambda^2$ is called a \emph{congruence} if it is a compatible equivalence.
\item We say that $\relsym$ is \emph{$\beta$-invariant} if it contains $\beta$-conversion $\eqb$.
\item	An \emph{equational theory}, or \emph{\lam-theory}, is any $\beta$-invariant congruence $=_{\eqth}$. 
\item
	An \emph{inequational theory} is any compatible $\beta$-invariant preorder $\sqsubseteq_\eqth$.
\item
	An (in)equational theory is \emph{consistent} if it is different from $\Lambda^2$, \emph{extensional} if it contains $\eqeta$.
\item An inequational theory $\sqsubseteq_\eqth$ is \emph{semi-extensional} if it contains $\eta$-reduction $\to_\eta$.% Equivalently, $\sqsubseteq_\eqth$ is semi-extensional if $t\sqsubseteq_\eqth u$ and $t'\toeta t$ imply $t'\sqsubseteq_\eqth u$. 
%Note that if $~\sqsubseteq_\eqth$ and $\sqsupseteq_\eqth$ are both semi-extensional, then $=_\eqth$ is extensional.
%Note that 
%\item A \lam-theory $=_{\eqth}$ is called \emph{sensible} if it is consistent and $t =_{\eqth} u$ holds, for all unsolvables $t,u$.
%\item An inequational \lam-theory $\sqsubseteq_{\eqth}$ is \emph{order-sensible} if $t \sqsubseteq_{\eqth} u$ and $t$ unsolvable, entail $t\le_\eqth u$, for all $u\in\Lambda$.
\esub
%
%Moreover, 
\end{definition}
Note that inequational \lam-theories are not required to be symmetric---they are preorders---and yet they are required to contain $\beta$-conversion $\eqb$, which is symmetric.
Any inequational \lam-theory $\sqsubseteq_\eqth$, induces a \lam-theory denoted by $=_{\eqth}$ by setting: $\tm=_\eqth\tmtwo$ if both $\tm\sqsubseteq_\eqth\tmtwo$ and $\tmtwo\sqsubseteq_\eqth\tm$ hold.

\paragraph{Contextual Preorders and Equivalences} A nowadays standard notion in the study of programming languages and $\l$-calculi is contextual equivalence, and its asymmetric variant of contextual preorder. The idea is to observe the termination of \lam-terms when plugged in the same context. Thus, they depend on a notion of termination. In this paper, we focus on termination of head reduction.
\begin{definition}
The \emph{(head) contextual preorder} $\obsle$ on $\l$-terms is defined as follows:
\begin{center}$\begin{array}{cccc}
	t \obsle u &\mbox{ if for all contexts }C\,. &[\ 
	C\ctxholep{t}\conv{\hsym} \quad\Rightarrow\quad
	C\ctxholep{u}\conv{\hsym}
	\ ]
	\end{array}
$\end{center}
The associated \emph{(head) contextual equivalence} is defined by setting $t \obseq u$ if $t \obsle u$ and $u \obsle t$.
\end{definition}
Often the literature requires $\ctx$ to be a \emph{closing} context, that is, such that both $\ctxp\tm$ and $\ctxp\tmtwo$ are closed. In the ordinary $\l$-calculus, adding/removing the closing requirement does not change the defined relation, which is why we do not add it\ignore{ (proving that the two notions are equivalent is however not immediate, see \cite{DBLP:conf/fossacs/AccattoliL24})}.
It turns out that the head contextual preorder provides an inequational $\l$-theory.

\begin{theorem}[\cite{Barendregt84}]
\label{thm:head-ctx-inequational}
The head contextual preorder $\obsle$ is an inequational $\l$-theory. Moreover, it is consistent and extensional.
\end{theorem}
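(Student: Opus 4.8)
The plan is to check the three clauses separately. Being an inequational $\lambda$-theory unfolds into three requirements—that $\obsle$ is a preorder, that it is compatible, and that it contains $\eqb$—of which the first two are routine structural reasoning, leaving $\beta$-invariance (and, for extensionality, the analogous $\eta$-invariance) as the genuine content, resting on classical properties of head reduction.

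That $\obsle$ is a preorder is immediate: reflexivity is the tautology $C\ctxholep{t}\conv{\hsym}\Rightarrow C\ctxholep{t}\conv{\hsym}$, and transitivity follows by chaining the defining implications through a shared testing context. For compatibility I would use that contexts compose: given contexts $C$ and $D$, plugging $D$ into the hole of $C$ yields a context $C\ctxholep{D}$ with $C\ctxholep{D\ctxholep{t}}=(C\ctxholep{D})\ctxholep{t}$. Hence, to derive $D\ctxholep{t}\obsle D\ctxholep{u}$ from $t\obsle u$, I take an arbitrary testing context $C$ and apply the hypothesis $t\obsle u$ to the composite context $C\ctxholep{D}$; this reduces the new quantification over testing contexts to the one already granted by the hypothesis.

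The core is $\beta$-invariance. Since $\obsle$ is a preorder and $\eqb$ is the equivalence closure of $\tob$, it suffices to show that a single $\beta$-step is observationally invisible in both directions: $t\tob u$ implies $t\obsle u$ and $u\obsle t$. As $\tob$ is compatible, $t\tob u$ gives $C\ctxholep{t}\tob C\ctxholep{u}$ for every context $C$, so the goal reduces to the $\beta$-invariance of head normalizability, namely $s\tob s'$ implies $s\conv{\hsym}$ iff $s'\conv{\hsym}$. Here the two classical ingredients enter: the confluence of $\tob$, which makes \emph{having a head normal form} a $\eqb$-invariant property, and the fact that head reduction is a \emph{normalizing strategy} for head normal forms, namely that $s\conv{\hsym}$ holds exactly when $s$ reaches some hnf by arbitrary $\beta$-reduction. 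Combining them, $s\tob s'$ forces $s\eqb s'$, so $s$ and $s'$ have a hnf simultaneously, and by the normalizing-strategy property this is equivalent to $s\conv{\hsym}$ iff $s'\conv{\hsym}$. I expect the normalizing-strategy property to be the main obstacle, as it is exactly the non-trivial rewriting fact flagged in the footnote, typically proved through a standardization or quantitative argument.

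Consistency is witnessed by the empty context $C=\ctxhole$: the identity $\comb{I}$ is already a hnf, so $\comb{I}\conv{\hsym}$, whereas $\Omega$ has no head normal form, so $\Omega\not\conv{\hsym}$; hence $\comb{I}\not\obsle\Omega$ and $\obsle\neq\Lambda^2$. Finally, for extensionality I would contain $\eqeta$ by the same preorder-plus-compatibility reduction used for $\eqb$: it suffices that $s\toeta s'$ implies $s\conv{\hsym}$ iff $s'\conv{\hsym}$, i.e.\ that head normalizability is also $\eta$-invariant. This follows from the confluence of $\beta\eta$-reduction together with the invariance of having a head normal form under $\beta\eta$-conversion.
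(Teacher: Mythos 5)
Your preorder, compatibility, consistency, and $\beta$-invariance arguments are correct and follow essentially the paper's route (Appendix A): confluence of $\tob$ plus the untyped normalization theorem of head reduction. One small omission there: concluding that ``having a hnf'' is $\eqb$-invariant needs, besides confluence, the stability of hnfs under $\tob$ (a $\beta$-reduct of a hnf is again a hnf), which the paper lists as a third auxiliary lemma and which your phrasing uses silently.

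The genuine gap is in extensionality. You correctly reduce $\eqeta\,\subseteq\,\obsle$ to the $\eta$-invariance of head normalizability, but then assert that this ``follows from the confluence of $\beta\eta$-reduction together with the invariance of having a head normal form under $\beta\eta$-conversion''. Note that $\conv{\hsym}$ is termination of the \emph{pure-$\beta$ head strategy}, whereas your two ingredients only concern reachability of hnfs by arbitrary $\beta\eta$-steps. If ``having a head normal form'' means ``$\toh$ terminates'', then your second ingredient is precisely the statement to be proved, and the argument is circular. If instead it means ``$\to_{\beta\eta}$-reduces to some hnf'', then what is missing is the bridge: $\tm\to_{\beta\eta}^*\htm$ with $\htm$ a hnf implies $\tm\conv{\hsym}$. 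That bridge is an untyped normalization theorem for $\beta\eta$ with respect to $\toh$; it does not follow from confluence, and proving it is exactly where the difficulty of the $\eta$ case sits. The paper instead works with properties relating $\toh$ and $\toeta$ directly: commutation of $\toh$ and $\toeta$ plus stability of hnfs under $\eta$ (giving $\tm\toeta\tmtwo\Rightarrow\tm\obsle\tmtwo$), and postponement of $\toeta$ after $\toh$ plus head adequacy of $\eta$ (giving $\tmtwo\obsle\tm$). It also explains why these are delicate: the local swaps duplicate steps --- $(\la\var\var\var)\comb{1}\toeta(\la\var\var\var)\comb{I}\toh\comb{I}\comb{I}$ requires two $\eta$-steps after swapping, and $\comb{1}\tm\tmtwo\toeta\comb{I}\tm\tmtwo\toh\tm\tmtwo$ requires two head steps --- so no naive induction on one-step diagrams yields postponement or commutation. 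Until you establish these facts (or, equivalently, the $\beta\eta$ normalization theorem for head reduction), the extensionality part of your proof is incomplete.
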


As mentioned in the introduction, the proof of the previous theorem is non-trivial. Proving that $\beta$-conversion $\eqb$ is included in $\obsle$ indeed is not immediate, because of the famously fastidious universal quantification on contexts in $\obsle$. A rewriting-based proof requires the use of both confluence of $\beta$ and the untyped normalization theorem of head reduction (if $\tm\tob^*\tmtwo$ and $\tmtwo$ is normal for head reduction, then $\tm$ is head normalizing). For an overview, see \withproofs{\refapp{invariance}}\withoutproofs{Appendix A of the longer version on arXiv \cite{DBLP:journals/corr/abs-2409-18709}}. Similarly, a rewriting-based proof that $\eta$-conversion $\eqeta$ is included in $\obsle$ also rests on non-trivial theorems about $\eta$.  Also in this case, for an overview see \withproofs{\refapp{invariance}}\withoutproofs{Appendix A of the longer version on arXiv \cite{DBLP:journals/corr/abs-2409-18709}}.

Semantic proofs are possible but not easy anyway, as they rest on soundness of the model. For the similar case of interaction equivalence, we shall develop a semantic proof.

\paragraph{Background and Notable Variants} The head contextual preorder has been studied in-depth because it captures the (in)equational theory of Scott's denotational model $\mathcal{D}_\infty$~\cite{Scott72}, the first model of the untyped $\l$-calculus. 
The preorder $\obsle$ is also the maximal contextual preorder of interest: any strictly larger inequational \lam-theory is inconsistent \cite[Lemma~12.5]{BarendregtM22}.
In 1976, \citeauthor{Hyland76} and \citeauthor{Wadsworth76} have shown that the associated equivalence $\obseq$ coincides with Böhm trees equality up to inﬁnite $\eta$-expansions (Cf.\ Theorem~\ref{thm:HW76}, below).

Another relevant contextual preorder is obtained by considering termination with respect to $\beta$-reduction, instead of head reduction. This alternative preorder was originally introduced in Morris's PhD thesis [\citeyear{morris1968lambda}], but has been the subject of fewer investigations than the head one (until recently, see  \cite{IntrigilaMP19} and \cite[Ch.~12]{BarendregtM22}). \citet{Hyland75} showed that the associated equivalence captures B\"ohm trees equality up to \emph{finite} $\eta$-expansions.

\section{The Checkers Calculus}
\label{sec:checkers-calculus}

In this section we introduce the \emph{checkers calculus}, obtained from the $\l$-calculus by duplicating the abstraction and application constructors, that now both come in white and black dresses.
%In this section, we modify the $\l$-calculus by duplicating the abstraction and application constructors, having white and black instances of both, and obtaining the \emph{checkers ($\l$-)calculus}. 
%\reffig{col-def} recapitulates the checkers calculus.

\begin{definition} The syntax and operational semantics of the checkers calculus are defined in \reffig{col-def}.
\end{definition}

\paragraph{Black and White Constructors.} 
The constructors of abstraction and application receive a tag, called \emph{player}, that can be either white $\redclr$ or black $\blueclr$. 
The terms populating the set $\Lambdac$ are called \emph{checkers terms}, and inherit the notions of free variables, $\alpha$-conversion, and substitution from \lam-calculus.
Note that variables do \emph{not} receive a player tag. The main reason is simplicity: this way, we do not need to enforce the uniform tagging of all the occurrences of the same variable.

\begin{notation}
When the players are actually specified, we simply denote the applications by $\rapp\tm\tmtwo$ (and $\bapp\tm\tmtwo$) instead of $\appp\redclr\tm\tmtwo$ (and $\appp\blueclr\tm\tmtwo$).
Hereafter, we often need to refer to constructors of a fixed but arbitrary player $\colr$, thus we use $\lap{\colr}{\var}\tm$ and $\appp{\colr}{\tm}\tmtwo$ for abstractions and applications of player $\colr\in\{\redclr,\blueclr\}$. 
We also use $\colr^\bot$ to denote the opposite player, defined as $\redclr^\perp \defeq \blueclr$ and $\blueclr^\perp \defeq\redclr $. 
In case of many consecutive abstractions or applications, we shorten the notations to $\clavec{\colr}{\var}\tm$ and $\cappvec{\colr}{\tm}{\tmtwo}$, respectively. 
The former is sometimes slightly expanded to $\manyclam{k}{\var}\tm$, and the latter to $\appp{\clr{1}\cdots\clr{k}}{\tm}{\tmtwo_1\cdots \tmtwo_{k}}$. 
\end{notation}

% !TEX root = ../../main.tex

\begin{figure}
\arraycolsep=3.5pt
\tabcolsep=2pt
\centering
		\begin{tabular}{c|c}
		\textsc{Terms and Contexts}			
			&			
			\textsc{Beta rules and reductions}
			\\[2pt]\hline\\[-8pt]
			\begin{tabular}{c}
			$\begin{array}{rrllllll}
			\textsc{Players} &			\clr{},\clrtwo{} & \grameq & \redclr \mid \blueclr
			\\
			\textsc{Terms} \ \Lambdac\ni & \tm,\tmtwo,\tmthree & \grameq& \var \mid \lap\colr{\var}\tm  \mid \appp\colr\tm\tmtwo\\[4pt]
			%\textsc{Rigid terms}  & \rtm,\rtmtwo & \grameq & \var \mid \ccapp{}\rtm \tm
			%\\
			\textsc{$\hchsym$-nfs} & \htm,\htmtwo &\grameq &
			%\appp{\clrtwo{1}\cdots\clrtwo{k}}\var{\tm_1 \cdots \tm_k} \mid \lap\colr\var\htm
			\clavec\colr\var\appp{\clrtwo{1}\cdots\clrtwo{k}}\vartwo{\tm_1 \cdots \tm_k}%\,\cappvec\colrtwo\var\tm
			%\lap{\clr{1}\cdots\clr{k}}{\var_1\ldots\var_k}\appp{\clrtwo{1}\cdots\clrtwo{n}}\var{\tm_1 \cdots \tm_n} 	
		\end{array}$ 
		\\[28pt]
			$\begin{array}{rrllllll}
			\chcontexts\ni& \ctx & \grameq& \ctxhole \mid \cla{}{\var}\ctx \mid \ccapp{}\ctx\tmtwo \mid \ccapp{}\tm\ctx 
			\\
			%\textsc{Rigid ctxs} & \hauxctx & \grameq & \ctxhole \mid \ccapp{}\hauxctx\tm
			%\\
			\chhcontexts\ni& \hctx & \grameq & \clavec\colr\var\appp{\clrtwo{1}\cdots\clrtwo{k}}\ctxhole{\tm_1 \cdots \tm_k}
\\[4pt]
		\end{array}$ 
		\end{tabular}
		& 
		\begin{tabular}{c}
		$\begin{array}{r\hcolspace rll}
			\textsc{Silent} & \ccapp{}{(\cla{}\var\tm)}{\tmtwo} &\rtobnoint& \tm\isub\var{\tmtwo}
			\\
			\textsc{Interaction} & \appp{\clrd{}}{(\cla{}\var\tm)}\tmtwo &\rtobint& \tm\isub\var\tmtwo
			\end{array}$
			\\[9pt]
			$\begin{array}{r@{\hspace{.25cm}} rll}
			\textsc{Silent $\beta$} & \tobnoint &\defeq& \chcontexts\ctxholep\rtobnoint
			\\
			\textsc{Interaction $\beta$} & \tobint &\defeq& \chcontexts\ctxholep\rtobint
			\\
			\textsc{Checkers $\beta$} & \tobch &\defeq& \tobnoint \cup \tobint
			\\
			\textsc{Silent head} & \tohnoint &\defeq& \chhcontexts\ctxholep\rtobnoint
			\\
			\textsc{Interaction head} & \tohint &\defeq& \chhcontexts\ctxholep\rtobint
			\\
			\textsc{Checkers head} & \tohch &\defeq& \tohnoint \cup \tohint

		\end{array}$
		\end{tabular}
		\end{tabular}
	\caption{The checkers calculus.}
	%\vspace{-10pt}
	\label{fig:col-def}	
\end{figure}

\paragraph{Checkers Contexts.} The class $\chcontexts$ of \emph{checkers contexts}, and the subclass $\chhcontexts$ of \emph{checkers head contexts} are defined in \reffig{col-def}. 
\refdef{ctxclosure} of contextual closure with respect to a class of contexts, generalizes to this setting in the obvious way. 
Checkers contexts shall play a key role in the definition of contextual preorders and equivalences on checkers terms (see \refdef{interactionpreorder}, below).

\paragraph{Silent and Interaction Steps.} There are two kinds of colored $\beta$-redexes $\capp{}{(\cla{}\var\tm)}{\tmtwo}$, the silent one and the interaction one, each one with its own $\beta$-rule. 
Silent redexes are $\beta$-redexes where the color of the abstraction $\clr{}$ matches the color of the application $\clrtwo{}$. Intuitively, these steps are internal to each player's world. 
In interaction redexes, instead, the color of the abstraction and the color of the application are different, \ie $\clr{}\neq \clrtwo{}$. 
This represents the scenario where the two players interact with each other, which, from each player's perspective, amounts to interacting with the external world. Our focus shall be on the number of head interaction steps.

The internal/external dichotomy, and the idea of having two players are strong guiding intuitions but note that, in general, a checkers term can arbitrarily interleave black and white constructors: there is no neat frontier between the parts of a term corresponding to the two players.
%and are steps that, intuitively, are internal to each player's world. In interaction redexes, instead, the color of the abstraction is different than the color of the application, \ie when $\clr{}\neq \clrtwo{}$. Intuitively, that is when the two players communicate, or interact, \cadr{with the external world}{with each other, hence with the external world from each of the players' point of view}. We shall be interested in the number of interaction steps.
The key point is that, even if we start with cleanly separated black and white parts, they still end up mixing during the reduction.

\begin{example}\label{ex:delta} 
Consider the black identity $\bId \defeq \bla\var\var$ and the white term $\comb{D}_\circ \defeq \lam_\circ y.\lam_\circ x.x\circ (y\circ x)$. 
\begin{enumerate}
\item
If the black identity is black-applied to $\comb{D}_\circ$, then it gives rise to a silent step $\bapp{\bId} \comb{D}_\circ \tobnoint \comb{D}_\circ$, while if it is white-applied to $\comb{D}_\circ$ then the step is an interaction one, namely $\rapp{\bId} \comb{D}_\circ \tobint \comb{D}_\circ$. 
\item $\bapp{\bapp{\comb{D}_\circ}{\bId}}{\bId} \tohint \bapp{(\lam_\circ x.x\circ (\bId\circ x))}{\bId} \tobint \bapp{(\lam_\circ x.x\circ  x)}{\bId}\tohint \rapp{\bId}{\bId}\tohint\bId$
\end{enumerate}
\end{example}

\paragraph{Basic Rewriting Properties} As the $\l$-calculus, the checkers $\l$-calculus is an example of \emph{orthogonal higher-order rewriting system} \cite{AczelHO,phdklop,DBLP:conf/lics/Nipkow91}, that is a class of rewriting systems for which confluence always holds, because of the good shape of their rewriting rules. Similarly, the silent and interaction sub-relations $\tobnoint$ and $\tobint$ are also confluent and commute. 

\begin{theorem}[Confluence]
Reductions $\tobch$, $\tobnoint$, and $\tobint$ are confluent.
\end{theorem}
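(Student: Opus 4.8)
The plan is to follow the Tait--Martin-L\"of method of \emph{parallel reduction}, adapting the classical confluence proof of $\beta$ in the $\l$-calculus, while exploiting the fact that the colored rules are left-linear and non-overlapping, i.e. that the checkers calculus is orthogonal. Concretely, I would first introduce a parallel reduction relation $\Rightarrow$ on $\Lambdac$ that simultaneously contracts a set of checkers $\beta$-redexes already present in a term, via the usual inductive definition: $\Rightarrow$ is reflexive on variables, congruent for $\lap\colr\var{(-)}$ and $\appp\colr{(-)}{(-)}$, and contracts a top redex $\appp\colr{(\lap{\colr'}\var\tm)}\tmtwo \Rightarrow \tm'\isub\var{\tmtwo'}$ whenever $\tm\Rightarrow\tm'$ and $\tmtwo\Rightarrow\tmtwo'$, for any players $\colr,\colr'$. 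By construction $\tobch\ \subseteq\ \Rightarrow\ \subseteq\ \tobch^*$, so $\Rightarrow^*\ =\ \tobch^*$ and confluence of $\tobch$ reduces to the diamond property of $\Rightarrow$.

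Then I would prove the two standard ingredients. First, the parallel substitution lemma: if $\tm\Rightarrow\tm'$ and $\tmtwo\Rightarrow\tmtwo'$ then $\tm\isub\var\tmtwo \Rightarrow \tm'\isub\var{\tmtwo'}$, by induction on the derivation of $\tm\Rightarrow\tm'$. Second, the diamond property: if $\tm_0\Rightarrow\tm_1$ and $\tm_0\Rightarrow\tm_2$ then there is $\tm_3$ with $\tm_1\Rightarrow\tm_3$ and $\tm_2\Rightarrow\tm_3$, by induction on $\tm_0$, contracting in $\tm_3$ the residuals of all redexes fired on either side and using the substitution lemma to treat the case of two nested top redexes. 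Confluence of $\tobch$ then follows by the standard diamond-tiling argument.

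The only genuinely new point with respect to the $\l$-calculus is the bookkeeping of colors, and it turns out to be benign: substitution never recolors a constructor, so the classification of each redex as \emph{silent} or \emph{interaction}---which depends solely on the two players decorating its own application and abstraction---is preserved along reduction, and in particular the residual of a silent (resp. interaction) redex is again silent (resp. interaction). Hence the very same parallel-reduction argument, restricted to contracting only silent redexes, yields a relation $\Rightarrow_{\mathsf{s}}$ with $\tobnoint\ \subseteq\ \Rightarrow_{\mathsf{s}}\ \subseteq\ \tobnoint^*$ that satisfies the diamond property, giving confluence of $\tobnoint$; symmetrically one obtains confluence of $\tobint$ via $\Rightarrow_{\mathsf{i}}$. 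Alternatively, all three statements follow at once by observing that the colored rules form an orthogonal higher-order rewriting system and invoking the general confluence theorem for such systems \cite{AczelHO,phdklop,DBLP:conf/lics/Nipkow91}, noting that each subset of the rules is again orthogonal.

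I expect the main obstacle to be organizational rather than mathematical: one must thread the color-preservation observation carefully through the substitution lemma and through the diamond case of two nested top redexes, so that the restricted relations $\Rightarrow_{\mathsf{s}}$ and $\Rightarrow_{\mathsf{i}}$ are well defined and closed under residuals. Once this is in place, the commutation of $\tobnoint$ and $\tobint$ mentioned in the surrounding text drops out of the same diamond lemma applied to one $\Rightarrow_{\mathsf{s}}$-step against one $\Rightarrow_{\mathsf{i}}$-step, and confluence of $\tobch$ can be re-derived from confluence plus commutation of the two sub-relations via the Hindley--Rosen lemma.
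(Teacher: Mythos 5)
Your proposal is correct, but note that it does considerably more than the paper, which offers no detailed proof at all: the paper simply observes that the checkers calculus is an \emph{orthogonal} higher-order rewriting system (its rules are left-linear and non-overlapping) and invokes the general confluence theorem for such systems---exactly the route you sketch as an alternative in your last paragraph---and it likewise asserts, without proof, that $\tobnoint$ and $\tobint$ are confluent and commute. Your primary route, the Tait--Martin-L\"of parallel-reduction argument, is the elementary, self-contained way to establish the same facts, and its crux is the right observation: since variables carry no player tags, substitution never recolors a constructor, so the silent/interaction classification of a redex is stable under substitution and under taking residuals; this is also precisely why each sub-relation's rule set is again orthogonal, so the two routes rest on the same structural fact. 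What your approach buys is that the commutation of $\tobnoint$ and $\tobint$ (stated but not proved in the paper) falls out of the same residual analysis via a mixed diamond plus Hindley--Rosen; what the paper's approach buys is brevity. One organizational caveat: the mixed diamond ($\Rightarrow_{\mathsf{s}}$ against $\Rightarrow_{\mathsf{i}}$) is not literally an instance of the diamond lemma you prove for each relation separately---it needs its own (verbatim identical) induction, or better, a single color-tracking diamond lemma for the full $\Rightarrow$, recording which redexes are contracted, from which all three confluence claims and the commutation follow as specializations.
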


Similarly, checkers head reduction inherits various expected properties of head reduction, such as determinism. In particular, we shall use the following immediate substitutivity property.

\begin{lemma}[Substitutivity]
	\label{l:color-substitutivity}
	Let $\tm,\tmp,\tmtwo\in\Lambdac$ and $\relation{R}\in\{\bnointsym,\bintsym,\bchsym,\hnointsym,\hintsym,\hchsym\}$. If $\tm\Rew{\relation{R}}\tmp$ then $\tm\isub\var\tmtwo\Rew{\relation{R}}\tmp\isub\var\tmtwo$. 
\end{lemma}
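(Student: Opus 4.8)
The plan is to first reduce the statement to the four atomic relations $\tobnoint$, $\tobint$, $\tohnoint$, and $\tohint$: since $\tobch = \tobnoint \cup \tobint$ and $\tohch = \tohnoint \cup \tohint$, substitutivity for the two checkers relations is an immediate consequence of substitutivity for the silent and interaction variants. For each atomic relation a step $\tm \Rew{\relation{R}} \tmp$ is, by construction, a contextual closure, so there are a context $\ctx$ (ranging in $\chcontexts$ for the $\beta$-variants and in $\chhcontexts$ for the head variants), a root redex $s$ and its contractum $s'$ such that $\tm = \ctx\ctxholep{s}$, $\tmp = \ctx\ctxholep{s'}$, and $s$ root-reduces to $s'$ by the associated rule $\rtobnoint$ or $\rtobint$. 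I would then isolate two facts: (i) the root rules are stable under substitution, and (ii) the relevant context classes are closed under substitution; re-plugging combines them into the lemma.

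For (i), consider a silent root redex $\appp{\colr}{(\lap{\colr}{\vartwo}{s})}{r} \rtobnoint s\isub{\vartwo}{r}$; the interaction case $\appp{\colr^\bot}{(\lap{\colr}{\vartwo}{s})}{r} \rtobint s\isub{\vartwo}{r}$ is identical. By $\alpha$-convention I may take $\vartwo \neq \var$ and $\vartwo \notin \FV{\tmtwo}$. Applying $\isub{\var}{\tmtwo}$ and distributing it over the constructors turns the silent redex into $\appp{\colr}{(\lap{\colr}{\vartwo}{s\isub{\var}{\tmtwo}})}{r\isub{\var}{\tmtwo}}$, while the standard iterated-substitution lemma gives $(s\isub{\vartwo}{r})\isub{\var}{\tmtwo} = (s\isub{\var}{\tmtwo})\isub{\vartwo}{r\isub{\var}{\tmtwo}}$, which is exactly the contractum of the substituted redex. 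The one genuinely checkers-specific observation is that substitution rewrites only variable occurrences and never the player tags of the constructors: the abstraction keeps its player and the application keeps its player, so a redex whose two tags coincide stays silent and one whose tags differ stays an interaction redex. Thus the substituted redex root-reduces to the substituted contractum by the very same rule.

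For (ii), a short structural induction on $\ctx$ shows both that $(\ctx\ctxholep{s})\isub{\var}{\tmtwo} = (\ctx\isub{\var}{\tmtwo})\ctxholep{s\isub{\var}{\tmtwo}}$ (again using $\alpha$-convention to avoid capture at binders) and that $\ctx\isub{\var}{\tmtwo}$, the context obtained by substituting $\tmtwo$ for $\var$ in the non-hole parts of $\ctx$, belongs to the same class as $\ctx$. For $\chcontexts$ this is immediate from the grammar; for $\chhcontexts$ the only point to check is that the characteristic shape $\clavec\colr\var\appp{\clrtwo{1}\cdots\clrtwo{k}}\ctxhole{\tm_1 \cdots \tm_k}$ is preserved, which holds because substitution leaves the abstraction prefix and the application spine around the hole untouched and merely replaces each argument $\tm_i$ by $\tm_i\isub{\var}{\tmtwo}$. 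Combining (i) and (ii), from $\tm = \ctx\ctxholep{s}$ and $\tmp = \ctx\ctxholep{s'}$ we obtain $\tm\isub{\var}{\tmtwo} = (\ctx\isub{\var}{\tmtwo})\ctxholep{s\isub{\var}{\tmtwo}}$ and $\tmp\isub{\var}{\tmtwo} = (\ctx\isub{\var}{\tmtwo})\ctxholep{s'\isub{\var}{\tmtwo}}$ with the two plugged terms related by the root rule, so $\tm\isub{\var}{\tmtwo} \Rew{\relation{R}} \tmp\isub{\var}{\tmtwo}$ by definition of contextual closure. I do not expect a deep obstacle: the proof is routine, and the only real care goes into the $\alpha$-conversion bookkeeping and into the color-invariance observation of step (i), which is precisely what lets the silent and interaction steps be preserved separately, rather than only their union $\tobch$.
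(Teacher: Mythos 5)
Your proof is correct: the decomposition into root-rule stability plus closure of the relevant context classes under substitution, together with the key observation that substitution never alters player tags (so silent redexes stay silent and interaction redexes stay interaction ones), is exactly the routine argument the paper presupposes when it states this lemma as an ``immediate substitutivity property'' and gives no proof at all. Your handling of the $\alpha$-conversion bookkeeping and of the union relations $\tobch$ and $\tohch$ as derived cases is also sound, so there is no gap to report.
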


\paragraph{$\eta$-Conversion} Note that there are no checkers $\eta$-rules in \reffig{col-def}. A key point in our work, indeed, is that $\eta$-conversion can change the number of interaction steps.% as well as the player tags occurring in the $\eta$-normal form. 
\begin{example}[The Delicate Role of $\eta$]\label{ex:eta}
Consider the black $\eta$-expansion  $\bOne \defeq\bla\var\bla\vartwo{\bapp\var\vartwo}$ of the black identity $\bId$ and the following diagram:
\begin{center}
\begin{tikzpicture}
		\node at (0,0)[align = center](source){$\rapp{\rapp{\bOne}\varthree}\varfour$};
		\node at (source.east)[above right = 6pt and 25pt](source-up-right){ $\rapp{(\bla\vartwo\bapp\varthree\vartwo)}\varfour$};
		\node at (source-up-right.east)[right = 25pt](source-up-right-right){ $\bapp\varthree\varfour$};
		\node at (source.east)[below right = 6pt and 25pt](source-down-right){ $\rapp{\rapp{\bId}\varthree}\varfour$};
		\node at (source-up-right-right|-source-down-right)(source-down-right-right){$\rapp\varthree\varfour$};
		\node at \med{source-up-right-right}{source-down-right-right}(neq){$\neq$};
%		\node at (source-right-right.center)-|(source-below.center))(target){ $?$};
		\draw[->, out =35, in =185](source.north east) to node[above] {\scriptsize $\hintsym$} (source-up-right.west);
		\draw[->, out =-35, in =175](source.south east) to node[above] {\scriptsize $\blackcol\eta$}(source-down-right.west);
		\draw[->](source-down-right) to node[above] {\scriptsize $\hintsym$}(source-down-right-right);
		\draw[->](source-up-right) to node[above] {\scriptsize $\hintsym$}(source-up-right-right);
	\end{tikzpicture}
	\end{center}
It shows that $\eta$ changes the number of interaction steps. It also shows that standard properties of ordinary $\eta$ do not lift to the checkers case (see \withproofs{\refapp{invariance}}\withoutproofs{Appendix A on arXiv \cite{DBLP:journals/corr/abs-2409-18709}} for definitions): $\to_\eta$ cannot be postponed after $\tohint$, $\to_\eta$ and $\tohint$ do not commute, and adding $\eta$ to $\tobch$ breaks confluence.
\end{example}

\paragraph{Big-Step Notation and Interaction Index.} 
Recall that $\tm\bshcols$ stands for ``$\tm$ is $\hchsym$-normalizing''. 
We introduce the notation $\tm\bshcol{k}$ for: $\tm$ is $\hchsym$-normalizing \emph{and} the number of head interaction steps $\tohint$ in its $\hchsym$-evaluation is $k$. Note that the number $k$ is well-defined because $\tohch$ is deterministic.

\paragraph{Interaction Equivalence and Preorders} We now introduce interaction equivalence $\equivchcol$ as a form of quantitative contextual equivalence for the checkers calculus. Similarly to weak similarity for labeled transition systems, $\equivchcol$ ignores silent head steps. The quantitative aspect is that it requires to preserve the number of interaction head steps. The richer quantitative setting in fact gives rise to \emph{two} possible preorders, both generating $\equivchcol$ when symmetrized.

\begin{definition}[Checkers interaction preorders and equivalence]\label{def:interactionpreorder}
	We define the \emph{interaction preorder} $\leqchcol$, the \emph{interaction improvement (preorder)} $\leqchcolr$, and the \emph{interaction equivalence} $\equivchcol$ on checkers terms $\tm,\tmtwo\in\Lambdac$ as follows:
	\bsub
		\item $\tm \leqchcol \tmtwo$ if $\ctxp{{\tm}}\bshcol{k}$ implies $\ctxp{\tmtwo}\bshcol{k}$ for all checkers contexts $\ctx\in\chcontexts$ and $k\in\nat$;
		\item $\tm \leqchcolr \tmtwo$ if $\ctxp{{\tm}}\bshcol{k}$ implies $\ctxp{\tmtwo}\bshcol{k'}$ with $k' \leq k$, for all contexts $\ctx\in\chcontexts$ and $k\in\nat$;
		\item ${\tm} \equivchcol {\tmtwo}$ is the equivalence relation induced by $\leqchcol$, that is, ${\tm} \equivchcol {\tmtwo}$ if  ${\tm} \leqchcol {\tmtwo}$ and ${\tmtwo} \leqchcol {\tm}$.
	\esub
\end{definition}
The interaction improvement $\leqchcolr$ adapts Sands' improvements \citeyearpar{SandsImprovementTheory,SandsToplas,DBLP:journals/tcs/Sands96} to our \emph{up to silent steps} setting. Note that $\leqchcol\,\subseteq\, \leqchcolr$. It is easy to see that the two new preorders are different, \ie that $\leqchcol\,\subsetneq\, \leqchcolr$. Indeed, $\rapp\bId\rId \leqchcolr\rId$ but $\rapp\bId\rId \not\leqchcol\rId$, as both checkers terms have $\rId$ as head normal form but $\rapp\bId\rId$ requires one more interaction step to reach it. %The interaction improvement $\leqchcolr$ adapts Sands' improvements \citeyearpar{SandsImprovementTheory,SandsToplas,DBLP:journals/tcs/Sands96} to our \emph{up to silent steps} setting. Note that $\leqchcol\,\subseteq\, \leqchcolr\ben{\,\subseteq\, \obsle}$. It is easy to see that the two new preorders are different, \ie that $\leqchcol\,\subsetneq\, \leqchcolr$. Indeed, $\rapp\bId\rId \leqchcolr\rId$ but $\rapp\bId\rId \not\leqchcol\rId$, as both terms have $\rId$ as head normal form but $\rapp\bId\rId$ requires one more interaction step to reach it. \ben{Symmetrically, $\rId \not\leqchcolr \rapp\bId\rId$, which gives $\leqchcolr\,\subsetneq\,\obsle$. Summing up: $\leqchcol\,\subsetneq\, \leqchcolr\,\subsetneq\, \obsle$.}

The interaction preorder $\leqchcol$ turns out to be more easily manageable than the interaction improvement $\leqchcolr$. Moreover, from the inclusion $\leqchcol\,\subseteq\, \leqchcolr$ we shall be able to transfer some properties of $\leqchcol$ to $\leqchcolr$. In this paper, then, we shall rather focus on $\leqchcol$.

\begin{example}\label{ex:whitedelta} 
Recall that $\bId$ and $\comb{D}_\circ$ have been defined in \refex{delta}, and $\bOne$ in \refex{eta}.% $\comb{Y}$ in \refnot{lam-terms}.
\bsub
\item $(\lam_\circ x.x\circ x)\circ(\lam_\circ x.x\circ x)\leqchcol \bId$, because the former is not $\hchsym$-normalizable.
\item\label{ex:whitedelta2} $\bId\not\leqchcol\bOne$ and $\bOne\not\leqchcol\bId$, as they are separated by $C = \ctxhole\circ\varthree\circ\varfour$, see \refex{eta}.
\item $\comb{D}_\circ\circ(\lam x_\circ.x)\circ(\lam x_\circ.x) \equivchcol (\lam x_\circ.x)$, because they are $\bnointsym$-convertible.
\esub
\end{example}

\paragraph{Back to Ordinary $\lambda$-Terms}
Composing the interaction relations for checkers terms defined above with the \emph{black embedding} of $\Lambda$ into $\Lambdac$, we obtain new \emph{interaction} relations on ordinary $\l$-terms.

\begin{definition}[Player Lifting and interaction preorders/equivalence]\ 
\bsub
\item
Given $\colr\in\{\redclr,\blueclr\}$, define the \emph{player $\colr$-lifting of ordinary $\l$-terms and ordinary contexts}, as the maps $\monoToPlayer{\colr}{\cdot} : \Lambda \rightarrow \Lambdac$  and $\monoToPlayer{\colr}{\cdot} : \mathcal{C} \rightarrow \chcontexts$ obtained by $\colr$-tagging every constructor:
\begin{center}
	$\begin{array}{c}
	\begin{array}{c@{\hspace{0.1cm}}c@{\hspace{0.1cm}}c\colspace\colspace c@{\hspace{0.1cm}}c@{\hspace{0.1cm}}c\colspace\colspace c@{\hspace{0.1cm}}c@{\hspace{0.1cm}}c\colspace\colspace c@{\hspace{0.1cm}}c@{\hspace{0.1cm}}c}
	\monoToPlayer{\colr}{\var} & \defeq & \var, 
	&
	\monoToPlayer{\colr}{\la\var\tm} & \defeq & \lap\colr\var\monoToPlayer{\colr}\tm,
	&
	\monoToPlayer{\colr}{\tm\tmtwo} & \defeq & \appp{\colr}{\monoToPlayer{\colr}{\tm}}{\monoToPlayer{\colr}{\tmtwo}};
	
	\\[3pt]
	
		\monoToPlayer{\colr}{\ctxhole} & \defeq & \ctxhole, 
		&
		\monoToPlayer{\colr}{\la\var\ctx} & \defeq & \lap\colr\var\monoToPlayer{\colr}\ctx,
		&
		\monoToPlayer{\colr}{\tm\ctx} & \defeq & \appp{\colr}{\monoToPlayer{\colr}{\tm}}{\monoToPlayer{\colr}{\ctx}},
		&
		\monoToPlayer{\colr}{\ctx\tmtwo} & \defeq & \appp{\colr}{\monoToPlayer{\colr}{\ctx}}{\monoToPlayer{\colr}{\tmtwo}}.
	\end{array}
	\end{array}$
\end{center}
\item 
The \emph{interaction preorder} $\intleq$, the \emph{interaction improvement} $\intrleq$, and the \emph{interaction equivalence} $\inteq$ are the following relations on ordinary $\l$-terms $t,u\in\Lambda$ defined via black lifting:
\begin{itemize}
	\item $\tm \intleq \tmtwo$ if $\monoToBlue{\tm} \leqchcol \monoToBlue{\tmtwo}$; 
	\item $\tm \intrleq \tmp$ if $\monoToBlue{\tm} \leqchcolr \monoToBlue{\tmtwo}$;
	\item ${\tm} \inteq {\tmtwo}$ is the equivalence relation induced by $\intleq$, that is, ${\tm} \inteq {\tmtwo}$ if ${\tm} \intleq {\tmtwo}$ and ${\tmtwo} \intleq {\tm}$.
\end{itemize}
\esub
\end{definition}
%\red{GV: I think we could use the same symbols $\intleq,\inteq$ for both preorders/equalities. The overloading is actually harmless.}

\begin{example} We use the \lam-terms introduced in \refnot{lam-terms}. We also consider $\comb{D} \defeq \lam yx.x(yx)$, having the property that $\comb{YD}$ is a fixed point operator whose head reduction is `slower' than that of $\comb{Y}$.
\bsub
\item $\comb{YK}\intleq t$, for all $t\in\Lambda$, because $\comb{YK}$ does not have an hnf. Similarly, $\comb{YI} =_\beta \Omega\intleq t$.
\item We have $\comb{I}\not\intleq \comb{1}$, nor $\comb{1}\not\intleq \comb{I}$  (Cf.\ \refex{whitedelta}\eqref{ex:whitedelta2}).
It is easily seen that $\comb{K}\not\intleq\comb{F}$ and $\comb{F}\not\intleq\comb{K}$.
\item $\comb{YD} \inteq\comb{Y}$, as the former needs more head-reduction steps to converge, but they are silent.
\esub
\end{example}

\paragraph{The Interaction Preorder is Inequational.} We now show that the interaction preorder $\intleq$ is a semantics of $\l$-calculus, \ie, it is an inequational $\l$-theory, from which the corresponding results for $\intrleq$ and $\inteq$ follow. This is our first main result, showing that our framework does solve the internal/external tension evoked in the introduction.

As for contextual equivalence, proving invariance (that is, that $\beta$-conversion is included) is non-trivial, and even harder because of the constraint on the number of interaction steps. It follows from the following theorem for the checkers interaction preorder, which we shall prove in a later section via a semantic proof based on multi types (page \pageref{app:delayed-proof}), rather than via rewriting-based techniques.

%We say that an evaluation strategy $\evpred$ is:
%\begin{itemize}
%	\item \emph{Consistent}: if there exists two closed terms $\tm$ and $\tmtwo$ such that $\tm$ is $\evpred$-normalizing and $\tmtwo$ is $\evpred$-diverging;
%	\item \emph{Normalizing}: if $\tm \tolang^*\tmtwo$ with $\tmtwo$ $\evpred$-normal implies that $\tm$ is $\evpred$-normalizing;
%	\item \emph{Stabilizing}: if $\tm$ $\evpred$-normal and $\tm\tolang^* \tmtwo$ imply $\tmtwo$ $\evpred$-normal;
%	\item \emph{Weak}: if there are no $\evpred$-redexes under abstraction.
%\end{itemize}
\begin{toappendix}
\begin{theorem}[The interaction preorder includes silent conversion]
\label{thm:silconv-included-in-inter-ctx}
For all checkers terms $t,u\in\Lambdac$, $t \silconv u$ entails $t \leqchcol u$. 
%In other words, the following inclusion holds: $\ \silconv\,\subseteq \ \leqchcol$.
\end{theorem}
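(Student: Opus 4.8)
The plan is to reduce the statement to a single context-free invariance property of silent reduction and then establish that property \emph{semantically}, via a multi type system, rather than by a direct rewriting analysis. The first observation is that silent reduction $\tobnoint$ is closed under all checkers contexts by definition, hence so is its equivalence closure $\silconv$; therefore $t \silconv u$ gives $C\ctxholep{t} \silconv C\ctxholep{u}$ for every checkers context $C \in \chcontexts$. It thus suffices to prove the context-free biconditional
\[
 s \silconv s' \quad\Longrightarrow\quad \bigl( s\bshcol{k} \Leftrightarrow s'\bshcol{k}\bigr) \text{ for all } k \in \nat,
\]
and then instantiate it with $s := C\ctxholep{t}$ and $s' := C\ctxholep{u}$: this immediately yields $C\ctxholep{t}\bshcol{k} \Rightarrow C\ctxholep{u}\bshcol{k}$, i.e.\ $t \leqchcol u$ (and, since $\silconv$ is symmetric, also $u \leqchcol t$). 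Because the target is a symmetric and transitive relation between $s$ and $s'$, it further reduces to the single-step case $s \tobnoint s'$, which can then be chained along the silent conversion.

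For the single-step case I would argue through types rather than by rewriting. A purely rewriting-based argument is tempting but treacherous: although a non-head silent step lying off the head-reduction path looks harmless, a silent step can syntactically \emph{create} an interaction redex, by substituting an abstraction of one player underneath an application of the opposite player. Hence the naive slogan ``silent steps commute with head reduction and preserve the interaction count'' is false as stated, and repairing it would require a full residual/standardisation analysis tracking how such a latent interaction redex is already accounted for in the original term. Instead, I would introduce a multi (non-idempotent intersection) type system for $\Lambdac$, designed so that (i) typability characterises $\hchsym$-normalisation, and (ii) in the \emph{tight} refinement (in the tight style of Accattoli, Graham-Lengrand and Kesner) a suitable index of a tight derivation counts the \emph{exact} number of head interaction steps $\tohint$; that is, $s\bshcol{k}$ iff $s$ admits a tight derivation of interaction index $k$.

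The heart of the proof is then a \emph{quantitative subject reduction and expansion} result for this system, specialised to silent steps: a silent step $s \tobnoint s'$ transforms a tight derivation of $s$ into a tight derivation of $s'$ of the \emph{same} interaction index, and conversely. The decisive design constraint is that the counter must track only the interaction rule, so that contracting a silent redex — whether it erases its argument, duplicates it, or exposes a latent interaction redex as above — leaves the index untouched, the exposed interaction redex being already paid for by the corresponding typing in $s$. Granting this, the single-step biconditional is immediate: $s\bshcol{k}$ iff $s$ has a tight derivation of index $k$ iff $s'$ has one iff $s'\bshcol{k}$.

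The \textbf{main obstacle} is precisely property (ii) together with this index-invariance of silent steps. I must isolate which judgements measure interaction steps \emph{exactly} rather than merely bounding them, and then verify, case by case over the shapes of silent redexes (erasure, duplication, and creation of interaction redexes), that silent contraction is neutral for this measure. Once these counting and invariance lemmas are in place, the theorem itself is a short corollary obtained by the reduction described in the first paragraph; by symmetry of $\silconv$ the same argument also delivers $u \leqchcol t$, so silent-convertible terms are in fact interaction equivalent.
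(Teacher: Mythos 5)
Your proposal is correct and takes essentially the same route as the paper: the paper also proves this theorem semantically, via the checkers multi type system, tight typings whose index counts the \emph{exact} number of head interaction steps, and silent subject reduction/expansion preserving that index. The only difference is organizational --- the paper factors the argument through the compatible type preorder $\leqctype$ (showing $\silconv\,\subseteq\,\leqctype\,\subseteq\,\leqchcol$), whereas you push the context through silent conversion first (using that $\tobnoint$ is a contextual closure) and then apply the tight characterization to a single silent step; the underlying lemmas are the same in a different order.
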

\end{toappendix}
%\begin{proof}%Not very informative
%It shall be proved via the multi type system.
%\end{proof}

\begin{corollary}
	\label{coro:requirements-for-inequational}
	\hfill
	\begin{enumerate}
	\item \label{coro:requirements-for-inequational1} The interactional preorder $\intleq$ is a consistent inequational $\lambda$-theory. Moreover, it is not semi-extensional, whence not extensional.
	\item The interactional improvement $\intrleq$ is a consistent inequational $\lambda$-theory.
	\item \label{coro:requirements-for-inequational3} Interaction equivalence $\inteq$ is a consistent $\l$-theory. Moreover, it is not extensional.
	\end{enumerate}
\end{corollary}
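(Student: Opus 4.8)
The plan is to establish the four defining ingredients of an inequational $\l$-theory---compatibility, $\beta$-invariance, consistency, and the (negative) extensionality claims---for the base preorder $\intleq$, and then to transfer them cheaply to $\intrleq$ and $\inteq$ using the inclusion $\leqchcol\subseteq\leqchcolr$ and the general fact, recorded before the contextual-preorder definitions, that the equivalence induced by an inequational $\l$-theory is again a $\l$-theory. The only genuinely hard ingredient is $\beta$-invariance, which I would obtain in one line from the deferred Theorem~\ref{thm:silconv-included-in-inter-ctx}; granting that theorem, everything else is short bookkeeping.

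First I would record the commutation lemma $\monoToBlue{\ctx\ctxholep{\tm}}=\monoToBlue{\ctx}\ctxholep{\monoToBlue{\tm}}$, immediate from the definition of the black lifting, which sends the hole to the hole and paints every other constructor black. Compatibility of $\intleq$ then reduces to compatibility of $\leqchcol$ over checkers contexts, which holds by the standard context-composition argument: if $\tm\leqchcol\tmtwo$ and $D$ is a checkers context, then for every checkers context $C$ we have $C\ctxholep{D\ctxholep{\tm}}=(C\circ D)\ctxholep{\tm}$, so the context $C\circ D$ transports the observation, yielding $D\ctxholep{\tm}\leqchcol D\ctxholep{\tmtwo}$. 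Instantiating $D$ at $\monoToBlue{\ctx}$ and using the commutation lemma, $\tm\intleq\tmtwo$ gives $\monoToBlue{\ctx}\ctxholep{\monoToBlue{\tm}}\leqchcol\monoToBlue{\ctx}\ctxholep{\monoToBlue{\tmtwo}}$, i.e.\ $\ctx\ctxholep{\tm}\intleq\ctx\ctxholep{\tmtwo}$ for every ordinary $\ctx$. Reflexivity and transitivity of $\intleq$ are inherited pointwise from $\leqchcol$.

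For $\beta$-invariance the key observation is that ordinary $\beta$ lifts to a \emph{silent} step: black-lifting turns the (black) application and (black) abstraction of a redex into a black-on-black, hence silent, redex, and since the lifting commutes with substitution, $\tm\tob\tmtwo$ entails $\monoToBlue{\tm}\tobnoint\monoToBlue{\tmtwo}$; closing under conversion, $\tm\eqb\tmtwo$ entails $\monoToBlue{\tm}\silconv\monoToBlue{\tmtwo}$. Theorem~\ref{thm:silconv-included-in-inter-ctx} then yields $\monoToBlue{\tm}\leqchcol\monoToBlue{\tmtwo}$, i.e.\ $\tm\intleq\tmtwo$, so $\eqb\subseteq\intleq$. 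For consistency I would separate $\comb{I}$ from $\Omega$ with the empty context $\ctxhole$: $\monoToBlue{\comb{I}}=\bla{x}{x}$ is already a $\hchsym$-normal form, so $\monoToBlue{\comb{I}}\bshcol{0}$, whereas $\monoToBlue{\Omega}$ is not $\hchsym$-normalizing; hence $\comb{I}\not\intleq\Omega$ and $\intleq\neq\Lambda^2$. For non-semi-extensionality, since $\comb{1}\toeta\comb{I}$, the inclusion $\to_\eta\subseteq\intleq$ would force $\comb{1}\intleq\comb{I}$, contradicting \refex{whitedelta}\eqref{ex:whitedelta2}; as $\to_\eta\subseteq\eqeta$, this also rules out extensionality, completing item~(1).

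Finally I would propagate to the other two relations. For item~(2), $\leqchcolr$ is again a contextual preorder, so the compatibility and preorder arguments apply verbatim; $\beta$-invariance is inherited through $\intleq\subseteq\intrleq$ (coming from $\leqchcol\subseteq\leqchcolr$), and the same $\comb{I}/\Omega$ witness gives consistency. For item~(3), once $\intleq$ is a consistent inequational $\l$-theory, its induced equivalence $\inteq=\intleq\cap\intleq^{-1}$ is a consistent $\l$-theory by the general fact above (the converse of a compatible relation is compatible, and the symmetric $\eqb$ sits inside the intersection), with consistency witnessed again by $\comb{I}\not\inteq\Omega$; non-extensionality follows from $\comb{1}\eqeta\comb{I}$ together with $\comb{1}\not\inteq\comb{I}$. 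The entire difficulty is thus concentrated in the single deferred Theorem~\ref{thm:silconv-included-in-inter-ctx}, itself to be proved semantically via multi types.
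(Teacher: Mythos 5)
Your proposal is correct and takes essentially the same route as the paper's own proof: compatibility is reduced to compatibility of $\leqchcol$ via context composition and the commutation of black lifting with plugging, $\beta$-invariance is obtained by lifting $\beta$-steps to silent steps and invoking the deferred Theorem~\ref{thm:silconv-included-in-inter-ctx}, consistency uses $\comb{I}\not\intleq\Omega$ under the empty context, failure of (semi-)extensionality uses the $\eta$ counterexample of \refex{eta}/\refex{whitedelta}, and items (2) and (3) are transferred exactly as in the paper (via $\eqb\subseteq\intleq\subseteq\intrleq$ and the induced-equivalence fact). The only difference is presentational: you spell out the commutation lemma and the silent-lifting of $\beta$ that the paper leaves implicit under ``clearly''.
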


% !TEX root = ../../main.tex
\begin{proof}
(\ref{coro:requirements-for-inequational3}) follows from (\ref{coro:requirements-for-inequational1}).
\begin{enumerate}
\item We unfold the definition of inequational \lam-theory, and check the following properties.
\begin{itemize}
\item \emph{Preorder}. Reflexivity and transitivity are straightforward.
\item \emph{Compatibility}. It follows from the compatibility of $\leqchcol$. If $\tm \intleq \tmtwo$, then we need to prove that $\ctxp\tm \intleq \ctxp\tmtwo$, for any context $\ctx$. Let $\ctxtwo$ be a context. If $\monoToBlue{\ctxtwop{\ctxp\tm}} \bshcol{k}$ then $\monoToBlue{\ctxtwop{\ctxp\tmtwo}} \bshcol{k}$ by $\monoToBlue\tm \leqchcol \monoToBlue\tmtwo$.

\item \emph{Invariance}. Let $\tm \bconv \tmtwo$. Then clearly $\monoToBlue\tm \silconv \monoToBlue\tmtwo$. By \refthm{silconv-included-in-inter-ctx}, $\monoToBlue\tm \leqchcol \monoToBlue\tmtwo$. Then $\tm \intleq \tmtwo$.
\end{itemize}
Consistency of $\intleq$ is given by the fact that $\Id \not\intleq \Omega$, as it can be seen by considering the empty context. The failure of semi-extensionality is shown in \refex{eta}. 

\item The proof that $\intrleq$ is a compatible and consistent goes as for $\intleq$. For invariance, just note that $\eqb\,\subseteq\,\intleq\,\subseteq\,\intrleq$.\qedhere
\end{enumerate}
\end{proof}

\paragraph{Interaction Improvement and $\eta$} Note that the corollary does not say anything about $\intrleq$ and $\eta$. We conjecture that $\eta$-reduction is included in $\intrleq$, which would imply that the obvious inclusion $\intleq\,\subseteq\, \intrleq$ is strict, since it is not included in $\intleq$ (\refex{eta}) (beware that strictness of the inclusion does not follows from $\leqchcol\,\subsetneq\, \leqchcolr$ as strictness in that case relies on black \emph{and} white terms). For instance, $\comb{1} \toeta \Id$ and we expect that  $\comb{1} \intrleq \Id$ because, intuitively, head termination of $\ctxp\bOne$ does require at most one more interaction step than for $\ctxp\bId$, for all $\ctx\in\mathcal{C}_{\checkerssym}$. 

At present, however, it is only a conjecture. There is a real technical difficulty because the properties of $\eta$ that are usually used for rewriting-based proofs of similar facts fail for the checkers calculus, see \refex{eta}. The semantic tools that we shall develop in the next sections for studying $\intleq$ are not able to deal with $\eta$ either.

\paragraph{Adding and Removing Players} The next proposition formalizes the fact that the black-only and white-only fragments of the checkers calculus are embeddings of the ordinary $\l$-calculus, not only statically but also dynamically. The statements concern head steps, but generalize to arbitrary steps.
	\begin{toappendix}
	\begin{proposition}[Lifting Properties]
		\label{prop:embedding-of-head-reduction}
		Let $\colr\in\{\redclr,\blueclr\}$ and $t,u\in\Lambda$:
		\NoteProof{prop:embedding-of-head-reduction}
		\begin{enumerate}
			\item \emph{Head steps are turned into silent ones}: if $\tm\toh\tmtwo$ then $\monoToPlayer{\colr}{\tm}\tohnoint \monoToPlayer\colr\tmtwo$.
			
			\item \emph{Mono-player head-steps can be pulled back to the ordinary $\lambda$-calculus}: if $~\monoToPlayer{\colr}{\tm}\tohcol \tmthree$ then there exists a \lam-term $\tmtwo\in\Lambda$ such that $\tm\toh\tmtwo$ and $\tmthree=\monoToPlayer{\colr}{\tmtwo}$. %\cadr{and $\monoToPlayer{\colr}\tm\tohnoint\monoToPlayer{\colr}\tmtwo$}{and }.
			
			\item \emph{Head normal forms are preserved}: $\tm$ is a hnf if and only if $~\monoToPlayer{\colr}{\tm}$ is $\hchsym$-normal.
		\end{enumerate}
\end{proposition}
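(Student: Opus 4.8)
The three items all flow from one structural observation, namely that $\monoToPlayer{\colr}{\cdot}$ tags every constructor uniformly, so that a lifted term $\monoToPlayer{\colr}{\tm}$ is \emph{mono-$\colr$} (all of its abstractions and applications carry the tag $\colr$), together with two routine commutation facts proved by straightforward structural induction: (i) lifting commutes with substitution, $\monoToPlayer{\colr}{s\isub{x}{r}} = \monoToPlayer{\colr}{s}\isub{x}{\monoToPlayer{\colr}{r}}$; and (ii) lifting sends ordinary contexts to checkers contexts while commuting with hole-filling, $\monoToPlayer{\colr}{\ctx\ctxholep{t}} = \monoToPlayer{\colr}{\ctx}\ctxholep{\monoToPlayer{\colr}{t}}$, and in particular maps a head context $\hctx \in \Hctx$ to a checkers head context $\monoToPlayer{\colr}{\hctx} \in \chhcontexts$, with all tags equal to $\colr$. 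I would then prove the items in the order (1), (3), (2), since (3) leans on (1) and (2) leans on both.

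\emph{Item~(1).} Decompose the head step as $\tm = \hctx\ctxholep{(\lam x.s)r} \toh \hctx\ctxholep{s\isub{x}{r}} = \tmtwo$ with $\hctx \in \Hctx$. By (ii), $\monoToPlayer{\colr}{\tm} = \monoToPlayer{\colr}{\hctx}\ctxholep{\appp{\colr}{(\lap{\colr}{x}\monoToPlayer{\colr}{s})}{\monoToPlayer{\colr}{r}}}$. The plugged redex is \emph{silent}, because the lifted abstraction and the lifted application both carry $\colr$; and by (i) its contractum is $\monoToPlayer{\colr}{s}\isub{x}{\monoToPlayer{\colr}{r}} = \monoToPlayer{\colr}{s\isub{x}{r}}$. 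Since $\monoToPlayer{\colr}{\hctx} \in \chhcontexts$, this is precisely a silent head step $\monoToPlayer{\colr}{\tm} \tohnoint \monoToPlayer{\colr}{\tmtwo}$.

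\emph{Item~(3).} For the forward direction, a hnf $\tm = \lam x_1\dots x_n.\vartwo\,t_1\cdots t_k$ lifts to $\clavec{\colr}{\var}\appp{\colr\cdots\colr}{\vartwo}{\monoToPlayer{\colr}{t_1}\cdots\monoToPlayer{\colr}{t_k}}$, which matches the grammar of $\hchsym$-nfs in \reffig{col-def} (its head occurrence is the variable $\vartwo$), so $\monoToPlayer{\colr}{\tm}$ is $\hchsym$-normal. The converse is the contrapositive of~(1): if $\tm$ is not a hnf it fires a head step $\tm \toh \tmtwo$, whence $\monoToPlayer{\colr}{\tm} \tohnoint \monoToPlayer{\colr}{\tmtwo}$ by~(1), so $\monoToPlayer{\colr}{\tm}$ is not $\hchsym$-normal.

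\emph{Item~(2).} This pullback is the step that looks hardest, but it becomes routine once items~(1) and~(3) and the determinism of $\tohch$ (already recorded for the checkers calculus) are available, so that no direct dissection of the checkers redex and its enclosing context is needed. Indeed, from $\monoToPlayer{\colr}{\tm} \tohch \tmthree$ we know $\monoToPlayer{\colr}{\tm}$ is not $\hchsym$-normal, so by~(3) the term $\tm$ is not a hnf and hence fires a head step $\tm \toh \tmtwo$, unique by determinism of head reduction. By~(1) we then have $\monoToPlayer{\colr}{\tm} \tohnoint \monoToPlayer{\colr}{\tmtwo}$, and since $\tohnoint$ is a $\tohch$-step, determinism of $\tohch$ forces $\tmthree = \monoToPlayer{\colr}{\tmtwo}$, as required. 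The only genuinely delicate ingredient in the whole argument is the substitution-commutation fact~(i); everything else is pattern matching on normal-form shapes or an appeal to determinism.
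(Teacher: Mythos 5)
Your proposal is correct; no step fails. Items (1) and (3) are essentially the paper's own argument in different clothing: the paper writes the head step explicitly as $\la{\vec\var}(\la\vartwo\tmtwo)\tmthree\,\vec\tm \toh \la{\vec\var}\tmtwo\isub\vartwo\tmthree\,\vec\tm$ and observes that lifting sends it to a silent step, and it proves (3) by matching the hnf grammars on both sides --- your head-context decomposition plus the two commutation facts (substitution and hole-filling) is the same computation, just factored through explicitly stated lemmas, and your converse of (3) via the contrapositive of (1) is an equally valid variant of the paper's two-way shape argument. The genuine difference is item (2): the paper proves it \emph{directly}, by dissecting the mono-player redex and reading off that the reduct is the lifting of the ordinary contractum (essentially inverting the computation of item (1)), whereas you derive it abstractly from (1), (3), and determinism of $\tohch$. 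Your route is slicker and more modular --- no second pattern-matching pass is needed --- but it leans on determinism of checkers head reduction, which the paper asserts as an inherited property (and indeed relies on elsewhere, e.g.\ for the well-definedness of $\bshcol{k}$) but never proves; the paper's direct dissection is self-contained in that respect. Either way the logical content is sound, and your version arguably makes the dependency structure of the three items clearer.
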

\end{toappendix}

Let us also formalize the fact that adding tags to application and abstractions does not change the possible reductions. %Let $\mathsf{blind}:\Lambdac \rightarrow \Lambda$ be the function that forgets the tags on applications and abstractions.
Let $\tm\in\Lambda$, a tagging $\tm^T$ is a checkers term that preserves the syntax but tags its abstractions and applications with players. 
\begin{toappendix}
	\begin{proposition}\label{prop:ordinary-reductions-can-happen-in-checkers}
	%Let $f : \Lambda \rightarrow \Lambdac$ such that for all $\tm\in\Lambda$ if $f(\tm) \tohcol \tmtwo$ then there exists $\tmtwop\in\Lambda$ such that $\tmtwo=f(\tmtwop)$ and such that for all $\tm\in\Lambda$, $\mathsf{blind}(f(\tm)) = \tm$. Then for any 
	
	Let $\tm\toh\tmtwo$ be a reduction in the ordinary $\l$-calculus. For any tagging $T$ of $\tm$, there exists a tagging $T'$ of $\tmtwo$ such that $\tm^T \tohcol \tmtwo^{T'}$ in the checkers calculus.\NoteProof{prop:ordinary-reductions-can-happen-in-checkers}
\end{proposition}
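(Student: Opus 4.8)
The plan is to unfold the single ordinary head step and replay it on the checkers side, noting that the colors recorded by $T$ only determine \emph{which} checkers rule fires, never \emph{whether} one fires. By the standard decomposition of a non-head-normal term (head reduction being the $\Hctx$-closure of the $\beta$-rule, \reffig{lambda}), the hypothesis $\tm\toh\tmtwo$ yields a head context $\hctx = \la x_1\dots x_n.\ctxhole\,a_1\cdots a_k\in\Hctx$, an abstraction $\la y.r$, and an argument $p$ with
\[
\tm=\hctx\ctxholep{(\la y.r)\,p}
\qquad\text{and}\qquad
\tmtwo=\hctx\ctxholep{r\isub y p}.
\]

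First I would push $T$ through this decomposition: write $\hctx^T$, $r^T$, $p^T$ for the taggings that $T$ induces on the head context, on $r$, and on $p$, and let $\colr$ (resp.\ $\clrtwo{}$) be the player $T$ assigns to the head abstraction $\la y.r$ (resp.\ to the head application). Then $\tm^T = \hctx^T\ctxholep{\appp{\clrtwo{}}{(\lap{\colr}{y}r^T)}{p^T}}$. Two observations close the core of the argument. First, comparing the grammar of $\chhcontexts$ in \reffig{col-def} with that of $\Hctx$, a checkers head context is precisely an ordinary head context with players decorating the leading abstractions and the spine applications; hence $\hctx^T\in\chhcontexts$. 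Second, the subterm sitting in the hole is a colored redex: if $\colr=\clrtwo{}$ it is silent and contracts by $\rtobnoint$, while if $\colr\neq\clrtwo{}$ it is an interaction redex and contracts by $\rtobint$; in either case, closing under the checkers head context $\hctx^T$ produces a $\tohch$ step.

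It then remains to identify the contractum as a tagging of $\tmtwo$. This rests on the routine commutation of checkers substitution with tagging: since variables carry no player, plugging the tagged term $p^T$ into each occurrence of $y$ in $r^T$ yields a tagging of $r\isub y p$, so $r^T\isub{y}{p^T} = (r\isub y p)^{T''}$ for the evident tagging $T''$ of the body. Defining $T'$ to be the tagging of $\tmtwo = \hctx\ctxholep{r\isub y p}$ that agrees with $\hctx^T$ on the head context and with $T''$ on the body, we conclude
\[
\tm^T
= \hctx^T\ctxholep{\appp{\clrtwo{}}{(\lap{\colr}{y}r^T)}{p^T}}
\ \tohch\
\hctx^T\ctxholep{r^T\isub{y}{p^T}}
= \tmtwo^{T'}.
\]
I expect no genuine obstacle: the delicate points are purely bookkeeping. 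One must fix the notion of tagging precisely enough that ``the tagging induced by $T$ on a subterm'' and ``gluing two taggings along a context'' are well defined, and then check the substitution-commutes-with-tagging fact; both are immediate once taggings are formalized as the evident decoration of the syntax tree. Everything else is a matching of the two head-context grammars together with the remark that the silent and interaction rules of \reffig{col-def} jointly cover the two possible color configurations of the head redex. In particular this generalizes Proposition~\ref{prop:embedding-of-head-reduction}(1), which treats the uniform mono-player tagging, where the resulting step is always silent.
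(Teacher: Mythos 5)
Your proof is correct and follows essentially the same route as the paper's: the paper writes the source term explicitly in spine form $\la{\vec{\var}}(\la{\vartwo}\tmtwo)\tmthree\vec{\tm}$ (which is exactly your head-context decomposition unfolded), observes that under any tagging the head redex fires as either a silent or an interaction step depending on the two players involved, and notes that the contractum is a tagging of the reduct. Your write-up merely makes explicit the bookkeeping (head-context grammar matching and commutation of substitution with tagging) that the paper compresses into ``which is clearly a tagging of''.
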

\end{toappendix}

It follows that sequences of head reductions in the ordinary $\l$-calculus are preserved in the checkers calculus, under some tagging determined by the first term in the reduction.

\paragraph{Hierarchy} Since $\leqchcol\,\subseteq\, \leqchcolr$, we have $\intleq\,\subseteq\, \intrleq$ (we showed that $\leqchcol\,\subsetneq\, \leqchcolr$, but not by using black lifted terms, so that does not give us $\intleq\,\subsetneq\, \intrleq$). The lifting properties above are used to prove the following expected lemma.
\begin{toappendix}
\begin{lemma}
$\intrleq\,\subseteq\, \obsle$.\label{l:improvements-are-ctx-equivalent}\NoteProof{l:improvements-are-ctx-equivalent}
\end{lemma}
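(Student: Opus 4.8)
The plan is to unfold both preorders and reduce the statement, which lives in the ordinary $\l$-calculus, to a statement about $\hchsym$-normalization in the checkers calculus, where the interaction improvement already gives us traction. Fix ordinary terms $\tm,\tmtwo$ with $\tm \intrleq \tmtwo$, that is $\monoToBlue\tm \leqchcolr \monoToBlue\tmtwo$, and an ordinary context $C$ with $C\ctxholep\tm \conv{\hsym}$; the goal is $C\ctxholep\tmtwo \conv{\hsym}$. The first step is the purely syntactic observation that black lifting is a constructor-wise homomorphism sending the hole to the hole, so it commutes with plugging: $\monoToBlue{C}\ctxholep{\monoToBlue\tm} = \monoToBlue{C\ctxholep\tm}$, and likewise for $\tmtwo$. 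This lets me read the lifted plugging as the insertion of the lifted term $\monoToBlue\tm$ into the lifted, hence checkers, context $\monoToBlue C \in \chcontexts$, which is exactly the shape on which $\leqchcolr$ quantifies.

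Second, I would show $\monoToBlue{C\ctxholep\tm}\bshcol{0}$. Since $C\ctxholep\tm$ head-normalizes, it reaches an ordinary hnf in finitely many $\toh$ steps; by Proposition~\ref{prop:embedding-of-head-reduction}(1) each such step lifts to a silent head step $\tohnoint$ of the fully black term $\monoToBlue{C\ctxholep\tm}$, and by Proposition~\ref{prop:embedding-of-head-reduction}(3) the lifted hnf is $\hchsym$-normal. Thus $\monoToBlue{C\ctxholep\tm}$ $\hchsym$-normalizes using only silent steps, that is, with interaction index $0$. Now I apply the hypothesis at the context $\monoToBlue C$ with $k=0$ (Definition~\ref{def:interactionpreorder}): from $\monoToBlue C\ctxholep{\monoToBlue\tm}\bshcol{0}$ I obtain $\monoToBlue C\ctxholep{\monoToBlue\tmtwo}\bshcol{k'}$ with $k'\le 0$, so in particular $\monoToBlue{C\ctxholep\tmtwo}$ is $\hchsym$-normalizing. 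Only the preservation of $\hchsym$-normalization is used here, so the very same argument works for the smaller preorder $\leqchcol$, which is why the inclusion also holds starting from $\intleq$.

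The last step is the back-translation from the checkers calculus to the ordinary one. The term $\monoToBlue{C\ctxholep\tmtwo}$ is mono-player (fully black), and a silent black step substitutes a black term for a variable inside a black term, so the property of being a black lifting is preserved along $\tohch$; consequently every step of its $\hchsym$-normalizing reduction is a mono-player head step. By induction on the length of this reduction, Proposition~\ref{prop:embedding-of-head-reduction}(2) pulls each step back to a $\toh$ step of $C\ctxholep\tmtwo$, maintaining the invariant that the current checkers term is the black lifting of the current ordinary term, and Proposition~\ref{prop:embedding-of-head-reduction}(3) turns the final $\hchsym$-normal form back into an ordinary hnf. Hence $C\ctxholep\tmtwo\conv{\hsym}$, and since $C$ was arbitrary, $\tm\obsle\tmtwo$.

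I expect the only delicate point to be this final pull-back: one must check that reduction never escapes the mono-player fragment, so that Proposition~\ref{prop:embedding-of-head-reduction}(2) keeps applying and the reduction lengths match up. Everything else is definitional unfolding plus the homomorphism property of black lifting. Notably, the count-matching refinement built into $\leqchcolr$ is not needed, since the head contextual preorder observes only termination and not the number of interaction steps; the proof therefore factors cleanly through the weaker observation ``$\monoToBlue{C\ctxholep\tm}$ $\hchsym$-normalizing implies $\monoToBlue{C\ctxholep\tmtwo}$ $\hchsym$-normalizing''.
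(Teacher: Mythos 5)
Your proof is correct and follows essentially the same route as the paper's: lift the head-normalizing reduction of $C\ctxholep{t}$ to a silent checkers reduction via Proposition~\ref{prop:embedding-of-head-reduction}(1) and (3), invoke $\leqchcolr$ at the black-lifted context with index $0$, and pull the resulting $\hchsym$-normalizing reduction back to the ordinary calculus via Proposition~\ref{prop:embedding-of-head-reduction}(2) and (3). The paper compresses the final pull-back into a single multi-step appeal to Proposition~\ref{prop:embedding-of-head-reduction}(2), whereas you spell out the induction maintaining the mono-player invariant; this is the same argument, just made more explicit.
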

\end{toappendix}
From \refex{eta}, it follows that $\Id\not\intrleq \comb{1}$, while $\Id \obsle \comb{1}$ holds. Thus, $\intrleq\,\subsetneq\, \obsle$. Summing up, we obtain the following hierarchy of preorders.
\begin{lemma}[Hierarchy]
\label{l:hierarchy-preorders}
$\intleq\,\subseteq\, \intrleq\,\subsetneq\, \obsle$. Similarly, $\inteq\, \subsetneq\, \obseq$.
\end{lemma}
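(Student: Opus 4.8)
The plan is to assemble the chain $\intleq \subseteq \intrleq \subsetneq \obsle$ from pieces that are, by this point, essentially at hand, and then to symmetrize for the equivalence statement. The only genuine content is imported from Lemma~\ref{l:improvements-are-ctx-equivalent}; everything else is a matter of unfolding definitions and reading off interaction counts.

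First I would establish the non-strict inclusion $\intleq \subseteq \intrleq$, which is purely definitional. It follows by transporting the inclusion $\leqchcol \subseteq \leqchcolr$ (recorded just after \refdef{interactionpreorder}) along the black lifting: if $\tm \intleq \tmtwo$ then $\monoToBlue\tm \leqchcol \monoToBlue\tmtwo$ by definition, hence $\monoToBlue\tm \leqchcolr \monoToBlue\tmtwo$, which is exactly $\tm \intrleq \tmtwo$. Note that no quantitative reasoning is needed here, and in particular the known strictness $\leqchcol \subsetneq \leqchcolr$ is \emph{not} invoked, since (as already remarked in the text) it relies on mixing black and white terms and so does not lift to black-painted terms. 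For the second, non-trivial link $\intrleq \subseteq \obsle$, I would simply invoke Lemma~\ref{l:improvements-are-ctx-equivalent}. Conceptually this is where the quantitative data is forgotten: an improvement bounds the number of interaction steps from above and in particular preserves $\hchsym$-normalizability, which—via the lifting of head reduction into silent checkers reduction and back, Proposition~\ref{prop:embedding-of-head-reduction}—matches ordinary head normalizability.

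The strictness $\intrleq \subsetneq \obsle$ would then be witnessed by $\Id$ and $\comb{1}$. On one side, $\obsle$ is extensional by \refthm{head-ctx-inequational}, and $\comb{1} \toeta \Id$, so $\Id \obsle \comb{1}$. On the other side, \refex{eta} already exhibits a separating context: taking $C \defeq \ctxhole\circ\varthree\circ\varfour$, the $\hchsym$-evaluation of $\ctxp{\monoToBlue\Id}$ reaches its head normal form using exactly one interaction step, that is $\ctxp{\monoToBlue\Id}\bshcol{1}$, whereas $\ctxp{\monoToBlue{\comb{1}}}\bshcol{2}$, because the extra abstraction in $\bOne = \monoToBlue{\comb{1}}$ forces one additional $\tohint$. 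The main point to get right is the \emph{orientation} of the inequality on counts: the improvement condition demands that the target use \emph{at most} as many interaction steps as the source, so it is crucial that the separating direction runs from the cheaper $\Id$-side to the costlier $\comb{1}$-side. Since $2 \not\le 1$, the condition fails and $\Id \not\intrleq \comb{1}$, making the inclusion proper.

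Finally, for $\inteq \subsetneq \obseq$ I would symmetrize. Composing the two inclusions gives $\intleq \subseteq \obsle$, and symmetrizing yields $\inteq \subseteq \obseq$ directly from the induced-equivalence definitions. For strictness the same pair serves: $\Id \obseq \comb{1}$ by extensionality of $\obseq$, while $\Id \not\inteq \comb{1}$ because already $\Id \not\intleq \comb{1}$ (indeed $\bId \not\leqchcol \bOne$, \refex{whitedelta}\eqref{ex:whitedelta2}). I expect no real obstacle beyond Lemma~\ref{l:improvements-are-ctx-equivalent} itself, which is the sole load-bearing ingredient; the remaining steps are definitional bookkeeping together with a direct reading of the two interaction counts from \refex{eta}.
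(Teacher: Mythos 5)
Your proposal is correct and follows essentially the same route as the paper: the inclusion $\intleq\subseteq\intrleq$ by transporting $\leqchcol\subseteq\leqchcolr$ along the black lifting (with the same caveat that strictness does not transfer), the inclusion $\intrleq\subseteq\obsle$ by invoking Lemma~\ref{l:improvements-are-ctx-equivalent}, strictness via the same witness pair $\comb{I}$, $\comb{1}$ and the interaction counts of Example~\ref{ex:eta}, and symmetrization for $\inteq\subsetneq\obseq$. Your explicit check of the orientation of the improvement inequality ($2\not\le 1$, hence $\comb{I}\not\intrleq\comb{1}$) is exactly the point the paper relies on when it asserts $\comb{I}\not\intrleq\comb{1}$ from that example.
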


%\paragraph{\ben{EXTRA STUFF}} Interaction equivalence can be defined equivalently, much like the usual qualitative contextual equivalence, as the greatest adequate $\Lambdac$-congruence. Such a presentation is critical in order to give the proof technique one must follow to prove that a relation is included in interaction equivalence.
%
%\adr{[To be included in lambda theory preliminaries: contextual equivalence is the greatest adequate congruence, where adequate stands for $\tm\rel\tmtwo$ and $\tm\bshs$ implies $\tmtwo\bshs$.]}
%
%
%\begin{definition}[Checkers adequacy]
%	A relation $\relsym$ on $\Lambdac$-terms is \emph{(checkers) adequate} if, for all terms $\tm$ and $\tmtwo$, $\tm\rel\tmtwo$ and $\tm\bshcol{k}$ implies that $\tmtwo\bshcol{k}$.
%\end{definition}
%
%\begin{proposition} Let $\relsym\subseteq\Lambdac\times\Lambdac$.
%	If $\relsym$ is compatible and checkers adequate, then $\relsym\subseteq\,\leqchcol$.
%\end{proposition}
%
%\begin{proof}
%	Let $\tm\rel\tmtwo$. By compatibility, for any context $\ctx$, $\ctxp\tm\rel\ctxp\tmtwo$. By adequacy, if $\ctxp\tm\bshcol{k}$ then $\ctxp\tmtwo\bshcol{k}$. Hence, $\tm\leqchcol\tmtwo$.
%\end{proof}

% !TEX root = ../main.tex
% !TEX spellcheck = en-US
\section{\bohm Trees}
Here starts the second part of the paper, where interaction equivalence $\inteq$ shall be characterized as the equational theory $\cB$ induced by the equality of \bohm trees. In this section, we recall \bohm trees and two notions of equality between them.

Barendregt proposed to represent the (possibly infinite) behavior of a \lam-term~$t$ as a (possibly infinite) tree, obtained by repeatedly slicing it with respect to head termination. We first present the idea informally, and then formally, following the similarity-based approach of \citet{lassen1999bisimulation}.%collecting all stable portions of the output (if any) coming out of its iterated head reduction. 

\begin{definition}[\citeauthor{Bare77} \citeyear{Bare77}]
The \emph{B\"{o}hm tree} $\BT{t}$ of a \lam-term $t$ is defined as follows: 
\begin{itemize}
\item
	If $t$ is head terminating then $t\to^*_\head\lambda x_{1}\ldots x_{n}.y\,t_{1}\cdots t_{k}$, for some $n,k\ge 0$, and we define:
	\begin{center}
\begin{tikzpicture}
\node (root) at (-70pt,8pt) {~};
\node (BT) at (0,0) {$\BT{t}\defeq$};
\node[right of = BT, node distance =2cm] (head) {$\lam x_{1} \ldots x_{n}.y\, $};
\node (BTN1) at ($(head)+(-.25,-.75)$) {$\BT{t_1}$};
\draw ($(head.south east)+(-0.5,0.1)$) -- ($(BTN1.north)-(0.,0)$);
\node (BTNk) at ($(head)+(2.,-.75)$) {$\BT{t_k}$};
\draw ($(head.south east)+(-0.1,0.1)$) -- ($(BTNk.north)-(0.,0)$);
\node at ($(head)+(.9,-.75)$) {$\cdots$};
\end{tikzpicture}
\end{center}
\item
	Otherwise, $t$ is head diverging and we define $\BT{t} \defeq \bot$.
\end{itemize}
B\"ohm trees are naturally ordered as follows: $\BT{t} \le_\bot \BT{u}$ whenever $\BT{u}$ is obtained from $\BT{t}$ by replacing some (possibly zero, or infinitely many) occurrences of $\bot$ by arbitrary trees.
\end{definition}
Intuitively, the constant $\bot$ represents the complete lack of information and, accordingly, the relation $\BT{t} \le_\bot \BT{u}$ captures the fact that the behavior of $u$ is more defined than that of $t$.

\begin{example}\label{ex:Bohmtrees}
Some examples of B\"ohm trees of notable \lam-terms:
\begin{center}
\begin{tikzpicture}
\node (root) at (0,0) {};
%%%%%%%%
\node (BTP) at ($(root)+(10pt,-.8)$) {$\BT{\comb{1}} =\hspace{5pt} \lam xy.x$};
\node (M2) at ($(BTP)+(25pt,-19pt)$) {$y$};
\draw ($(BTP.south east)+(-11pt,2pt)$) -- ($(M2.north)+(0pt,-2pt)$);
%%%%%%%%
\node (BTM) at (root) {$\BT{\Omega} = \bot$};
%%%%%%%%
\node (BTP) at ($(root)+(2.75,0)$) {$\BT{\comb{P}z} =\hspace{5.5pt} \lam x_0.x_0$};
\node (M1) at ($(BTP)+(26.7pt,-18pt)$) {$\lam x_1.x_1$};
\node (M2) at ($(M1)+(4pt,-17pt)$) {$\lam x_2.x_2$};
%\node (M3) at ($(M2)+(4pt,-17pt)$) {$\lam x_3.x_3$};
\draw ($(BTP.south east)+(-8.75pt,2pt)$) -- ($(M1.north)+(4pt,-2pt)$);
\draw ($(M1.south)+(7pt,1pt)$) -- ($(M2.north)+(3pt,-3pt)$);
%\draw ($(M2.south)+(7pt,1pt)$) -- ($(M3.north)+(3pt,-4pt)$);
%\draw[densely dotted] ($(M3.south)+(6pt,2pt)$) -- ($(M3.south)+(6pt,-4pt)$);
\draw[densely dotted] ($(M2.south)+(6pt,2pt)$) -- ($(M2.south)+(6pt,-4pt)$);
%%%%%%%%
\node (BTY) at ($(root)+(5.75,0)$) {$\BT{\comb{Y}} =\hspace{5.5pt} \lam f.f$};
\node (M1) at ($(BTY)+(25.5pt,-18pt)$) {$f$};
\node (M2) at ($(M1)+(0pt,-17pt)$) {$f$};
%\node (M3) at ($(M2)+(0pt,-17pt)$) {$f$};
\draw ($(BTY.south east)+(-8.75pt,2pt)$) -- ($(M1.north)+(0,-1pt)$);
\draw ($(M1.south)+(-0pt,2pt)$) -- ($(M2.north)+(0,-1pt)$);
%\draw ($(M2.south)+(-0pt,2pt)$) -- ($(M3.north)+(0,-1pt)$);
%\draw[densely dotted] ($(M3.south)+(-0pt,2pt)$) -- ($(M3.south)+(0,-4pt)$);
\draw[densely dotted] ($(M2.south)+(-0pt,2pt)$) -- ($(M2.south)+(0,-4pt)$);
%%%%%%%%
\node (BTY3) at ($(root)+(7.75,0)$) {$\ge_\bot\ \lam f.f$};
\node (M1) at ($(BTY3)+(13pt,-18pt)$) {$f$};
\node (M2) at ($(M1)+(0pt,-17pt)$) {$\bot$};
\draw ($(BTY3.south east)+(-8.75pt,2pt)$) -- ($(M1.north)+(0,-1pt)$);
\draw ($(M1.south)+(-0pt,2pt)$) -- ($(M2.north)+(0,-1pt)$);
%%%%%%%%
\node (BTY4) at ($(root)+(9.5,0)$) {$\ge_\bot\ \lam f.f$};
\node (M1) at ($(BTY4)+(13.2pt,-18pt)$) {$\bot$};
\draw ($(BTY4.south east)+(-8.75pt,2pt)$) -- ($(M1.north)+(0,-1pt)$);
%%%%%%%%
\node at ($(BTY4)+(1.5,0)$) {$\ge_\bot\ \bot$};
\end{tikzpicture}
\end{center}
where $\comb{P} = \comb{Y}(\lam yzx.x(yz))$ satisfies $\comb{P}z =_\beta \lam x.x(\comb{P}z)$.
Note that $z$ is never erased by $\comb{P}$, rather ``pushed into infinity'' in the sense that $\comb{P}z\to^*_\beta t$ entails $z\in\FV{t}$, but $z$ does not occur in $\BT{\comb{P}z}$.
\end{example}
It is well-known that B\"ohm trees are invariant under $\beta$-conversion~\cite[Ch.~10]{Barendregt84}. Informally, the \bohm preorder $\btpre$ on terms is defined by pulling back the preorder on the associated trees, that is, $\tm \btpre \tmtwo$ if $\BT{t} \le_\bot \BT{u}$. Formally, we rather define the \bohm preorder as a notion of similarity, following \citet{lassen1999bisimulation}.

\begin{definition}[\bohm preorder, formally]\label{def:bohm-preorder}The \bohm preorder $\btpre$, also known as \emph{head (normal form) similarity}, is the largest relation $\tm \btpre \tmtwo$ closed under the following clauses:
	\begin{itemize}
	\item[(bot)] $\tm\bshdiv$  \ie ~ $\tm$ has no head normal form.
	\item[($\relation{H}$)] $t\bshs \lam x_1\dots x_n.y\,t_1\cdots t_k$ and $u\bshs \lam x_1\dots x_n.y\,u_1\cdots u_k$ with $(\tm_i\btpre u_i)_{i\leq k}$.
	\end{itemize}
\end{definition}

\begin{theorem}[{\cite[Cor.~14.3.20(iii)]{Barendregt84}}]
The \bohm preorder $\btpre$ is an inequational $\l$-theory. Moreover, $\btpre$ is neither extensional nor semi-extensional.
 \end{theorem}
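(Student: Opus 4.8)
The plan is to verify directly, against the coinductive definition of $\btpre$, the three requirements of an inequational $\l$-theory—being a preorder, $\beta$-invariance, and compatibility—and then to refute (semi-)extensionality with an explicit $\eta$-counterexample. The first two are light; compatibility is the real work.

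First I would show that $\btpre$ is a preorder by the standard coinductive arguments. The identity relation trivially satisfies clauses (bot) and ($\relation{H}$), so it is a head similarity and hence contained in the largest one, giving reflexivity. The relational composition $\btpre\circ\btpre$ is again closed under (bot) and ($\relation{H}$): on a composite pair whose left term has an hnf, determinism of head normal forms forces the two intermediate decompositions to coincide, so the leading abstractions, head variable, and arity all match, and the arguments compose componentwise; thus $\btpre\circ\btpre\subseteq\btpre$, giving transitivity. For $\beta$-invariance I would invoke the well-known stability of Böhm trees under $\beta$-conversion: if $\tm\eqb\tmtwo$ then $\BT\tm=\BT\tmtwo$, so $\BT\tm\le_\bot\BT\tmtwo$ and $\BT\tmtwo\le_\bot\BT\tm$, whence $\tm\btpre\tmtwo$ and $\tmtwo\btpre\tm$ through the correspondence between the similarity $\btpre$ and the tree order $\le_\bot$. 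In particular $\eqb\subseteq\btpre$.

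The genuine work, and the main obstacle, is compatibility. As contexts are generated by abstraction and application, it suffices—using reflexivity on the hole-free parts—to prove closure under the one-hole constructors: $\tm\btpre\tmtwo$ entails $\lam x.\tm\btpre\lam x.\tmtwo$, and $\tm\btpre\tmtwo$ together with $s\btpre s'$ entails $\tm s\btpre\tmtwo s'$. The abstraction case is immediate from the two clauses, since prefixing $\lam x$ neither creates nor destroys head normal forms and merely lengthens the leading abstraction block. For the application case, if $\tm$ has no hnf then $\tm s$ head-diverges and (bot) applies; if $\tm\bshs y\,\tm_1\cdots\tm_k$ has no leading abstraction then $\tm s\bshs y\,\tm_1\cdots\tm_k\,s$ and ($\relation{H}$) applies with $s\btpre s'$ supplying one extra matching argument; the remaining case $\tm\bshs\lam x\vec z.y\,\tm_1\cdots\tm_k$ is the crux, for contracting the exposed head redex of $\tm s$ yields $(\lam\vec z.y\,\tm_1\cdots\tm_k)\isub x s$, and symmetrically for $\tmtwo s'$. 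Closing the coinduction then hinges on a substitutivity lemma—$\tm\btpre\tmtwo$ and $s\btpre s'$ imply $\tm\isub x s\btpre\tmtwo\isub x s'$—which is the technical heart; I would prove it either through monotonicity of the Böhm-tree substitution operation with respect to $\le_\bot$, or by a direct bisimulation-up-to-context coinduction. Granting this lemma, the relation $\{(\ctxp\tm,\ctxp\tmtwo)\mid\tm\btpre\tmtwo\}$ is a head similarity, which yields compatibility.

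Finally, to see that $\btpre$ is not semi-extensional—and a fortiori not extensional, since $\toeta\subseteq\eqeta$—I would use $\comb{1}=\lam xy.xy\toeta\lam x.x=\comb{I}$. Their Böhm trees differ: $\BT{\comb 1}$ has two leading abstractions with head variable $x$ applied to one argument, while $\BT{\comb I}$ has a single abstraction with head variable $x$ applied to none. Clause ($\relation{H}$) fails because the numbers of leading abstractions do not match, and clause (bot) fails because $\comb 1$ has an hnf; hence $\comb 1\not\btpre\comb I$, so $\toeta\not\subseteq\btpre$, as required.
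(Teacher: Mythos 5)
First, a point of comparison: the paper does not prove this theorem at all — it imports it from the literature (Barendregt, Cor.~14.3.20(iii)), where it rests on the approximation/continuity machinery for B\"ohm trees. So your direct coinductive development is necessarily a different route; it is essentially Lassen's congruence proof for head normal form similarity, which is the natural choice given that the paper defines $\btpre$ in Lassen's style. Your preorder argument (identity and relational composition are head similarities, using determinism of head reduction to match the intermediate hnf decompositions), your derivation of $\beta$-invariance from invariance of B\"ohm trees under $\eqb$, and your counterexample $\comb{1}\toeta\comb{I}$ with mismatched abstraction prefixes are all correct.

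The genuine gap is the substitutivity lemma, which you rightly call the technical heart but never actually prove — and it is not a routine step; it is the entire mathematical content of the theorem. The obvious candidate relation $R=\{(t\isub{x}{s},\,u\isub{x}{s'})\mid t\btpre u,\ s\btpre s'\}$ is \emph{not} closed under the clauses: when the head variable of $t$'s hnf is the substituted variable $x$, the term $t\isub{x}{s}$ head-reduces to something of the shape $s\,(t_1\isub{x}{s})\cdots(t_k\isub{x}{s})$, an application of $s$ to substitution instances, which falls outside $R$. This is exactly where plain coinduction breaks and where ``up to context'' reasoning becomes necessary; but the soundness of bisimulation up to context for head normal form similarity is itself the main theorem of Lassen's paper (proved via a Howe-style relational construction), not something one can invoke in passing. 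Your alternative route — monotonicity of B\"ohm-tree substitution with respect to $\le_\bot$ — is even more delicate: B\"ohm trees do not substitute naively ($\BT{t\isub{x}{s}}$ is not $\BT{t}$ with $\BT{s}$ grafted in, since grafting creates new redexes that must be re-normalized), and once stated correctly this monotonicity is interderivable with the compatibility you are trying to establish (via the context $(\lam x.\ctxhole)s$, $\beta$-invariance, and transitivity), so invoking it without an independent proof is circular; making it non-circular means developing the approximation theorem, i.e., redoing Barendregt's original argument. In short: the skeleton and the easy parts are right, but the proof has a hole exactly where the difficulty lies, and closing it requires machinery (Howe's method, or approximants) that the proposal only names.
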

The main result of the paper is that $\btpre$ coincides with the interaction preorder $\intleq$. For the direction $\intleq\,\subseteq\ \btpre$, we shall partly rely on a similar important result in the theory of the $\l$-calculus, namely Hyland's semi-separation theorem, which concerns an extensional variant of $\btpre$ below.

\paragraph{Extensional \bohm Preorder} 
We now formally introduce an extensional version of $\btleq$ capturing the preorder induced by Scott's model $\mathcal{D}_\infty$. 

\begin{definition}[Extensional \bohm preorder]\label{def:HNFBISIM}
	\label{def:hnf-nfs}The \emph{extensional \bohm preorder} $\etabtle$ is the largest relation $\tm \etabtle \tmtwo$ closed under the following clauses:
	\begin{itemize}
	\item[(bot)] $\tm\bshdiv$  \ie ~ $\tm$ has no head normal form.
	\item[($\relation{H}\eta$)] $t\bshs h$,  $u\bshs h' $, and there exist $n,k\ge 0$ such that $h =_\eta \lam x_1\dots x_n.y\,t_1\cdots t_k$ and $h' =_\eta  \lam x_1\dots x_n.y\,u_1\cdots u_k$ with $(\tm_i\etabtle u_i)_{i\leq k}$.
	\end{itemize}
\end{definition}

\begin{example} Recall that $\comb{I}$, $\comb{1}$, $\comb{K}$, and $\comb{F}$ have been defined in Notation~\ref{not:lam-terms}.
\bsub
\item We start with some negative examples: $\comb{K}\not\etabtle\comb{F}$ and $\comb{F}\not\etabtle\comb{K}$ (their head variables differ).
\item Both $\comb{I}\etabtle\comb{1}$ and $\comb{1}\etabtle\comb{I}$ hold. This entails the extensionality of $~\etabtle$, hence of $~\etabteq$.
\item $\lam xy.x\Omega\etabtle \comb{I}$, holds since $\lam xy.x\Omega\btpre\comb{1}\etabtle \comb{I}$ and $\btpre\,\subseteq\ \etabtle$. Conclude by transitivity.
\item To understand the role of the infinitary $\eta$-expansion, consider the \lam-term $\comb{J} = \comb{Y}(\lam jx.x(jx))$. 
It is easy to check that $\comb{J}\etabtle\comb{I}$ by looking at the in-line depiction of its Böhm tree:
\begin{center}$
	\BT{\comb{J}} = \lam xy_0.x(\lam y_1.y_0(\lam y_2.y_1(\lam y_3.y_2(\cdots))))
$\end{center}
\esub
\end{example}

The relations $\etabtle$ and $\etabteq$ have characterizations as contextual preorders/equivalences.
\begin{theorem}[\citeauthor{Hyland75} \citeyear{Hyland75}/\citeauthor{Wadsworth76} \citeyear{Wadsworth76}]\label{thm:HW76} 
For all \lam-terms $t,u$, we have $t\obsle u$ if and only if $t\etabtle u$.
Therefore $t\obseq u$ if and only if $t\etabteq u$.
\end{theorem}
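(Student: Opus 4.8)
It suffices to establish the preorder statement $t\obsle u \iff t\etabtle u$; the equational statement $t\obseq u \iff t\etabteq u$ then follows immediately by symmetrizing (both $\obseq$ and $\etabteq$ are the symmetrizations of their respective preorders). I would prove the two inclusions separately, handling the soundness direction $(\Leftarrow)$ by a direct coinductive argument and the completeness direction $(\Rightarrow)$ by its contrapositive via an $\eta$-aware Böhm-out.

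\textbf{Soundness $(\Leftarrow)$: $t\etabtle u \Rightarrow t\obsle u$.} Since $\etabtle$ is an inequational $\l$-theory, it is compatible, so $t\etabtle u$ gives $\ctxp{t}\etabtle\ctxp{u}$ for every context $\ctx$. Fix $\ctx$ and assume $\ctxp{t}\conv{\hsym}$. Because $\ctxp{t}$ has a head normal form, the clause $(\mathsf{bot})$—which requires $\ctxp{t}\bshdiv$—cannot apply to the pair $(\ctxp{t},\ctxp{u})$; hence the defining clauses of $\etabtle$ force $(\relation{H}\eta)$, and in particular $\ctxp{u}$ also head-normalizes. Thus $\ctxp{t}\conv{\hsym}$ implies $\ctxp{u}\conv{\hsym}$ for all $\ctx$, i.e.\ $t\obsle u$.

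\textbf{Completeness $(\Rightarrow)$: $t\obsle u \Rightarrow t\etabtle u$.} I would argue the contrapositive: from $t\not\etabtle u$ I build a context $\ctx$ with $\ctxp{t}\conv{\hsym}$ but $\ctxp{u}\not\conv{\hsym}$, witnessing $t\not\obsle u$. Since $\etabtle$ is the largest relation closed under its clauses, $t\not\etabtle u$ admits a \emph{finite} certificate of disagreement: descending the two Böhm trees in lockstep one reaches, at some finite depth, a pair of subterms where $t$ is essentially more defined than $u$—either $t$ head-normalizes while $u$ diverges, or both head-normalize to hnfs whose head variables (or arities, beyond the $\eta$-slack permitted by $(\relation{H}\eta)$) genuinely differ. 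I would then apply the Böhm-out technique, in the form of Hyland's semi-separation theorem, to construct a context that navigates to this node, projects out the offending subterm, and collapses the $u$-side to a non-head-normalizing term (e.g.\ by supplying $\Omega$ in a selected argument position) while keeping the $t$-side head-normalizing.

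\textbf{Main obstacle.} The delicate point is that the separating context must be \emph{$\eta$-aware}. Ordinary Böhm separation distinguishes $\beta$-distinct normal forms but identifies $\eta$-equivalent ones, so a naive construction would wrongly separate terms such as $\comb{I}$ and $\comb{1}$ that $(\relation{H}\eta)$ deliberately equates. The construction must therefore extract only genuine, non-$\eta$ disagreements; this is exactly where the \emph{semi}-separation (one-directional) nature enters, and it is what makes the characterization land on $\etabtle$ rather than on the finer non-extensional $\btpre$. A secondary subtlety is the infinitary $\eta$-expansion phenomenon—witnessed by a term like $\comb{J}$, whose Böhm tree is an infinite $\eta$-expansion of $\comb{I}$—which forces one to check that any \emph{essential} disagreement already surfaces at finite depth, so that the finite Böhm-out construction suffices.
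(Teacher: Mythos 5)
This statement is not proved in the paper at all: it is the classical Hyland--Wadsworth characterization, stated with citations to \citeauthor{Hyland75} and \citeauthor{Wadsworth76} and used as a black box --- notably in the proof of Theorem~\ref{th:intleq-included-in-bohm}, where the case $t\not\etabtle u$ is discharged precisely by invoking Theorem~\ref{thm:HW76}. Your sketch must therefore be measured against the classical proof, and against that standard it has a genuine circularity problem: both of your directions defer their entire difficulty to results that are, or are equivalent in depth to, the theorem itself.

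For soundness you write ``since $\etabtle$ is an inequational $\l$-theory, it is compatible.'' That premise is nowhere available: the paper records inequational-theory status only for the non-extensional preorder $\btpre$ (citing Barendregt), and for $\etabtle$ compatibility is exactly the hard half of the Hyland--Wadsworth soundness direction --- classically it is obtained via the $\mathcal{D}_\infty$ model (whose inequational theory is shown to coincide with $\etabtle$ and is compatible by construction), or by a heavy syntactic congruence argument in the style of Howe's method. Granting it, your remaining reasoning (clause (bot) is excluded for $\ctxp{t}$, hence clause $(\relation{H}\eta)$ applies, hence $\ctxp{u}$ head normalizes) is correct, but it is the trivial part. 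Dually, for completeness you ``apply the B\"ohm-out technique, in the form of Hyland's semi-separation theorem'': Hyland's semi-separation theorem \emph{is} the contrapositive of the completeness direction, so as written this half is a citation of the statement being proved, not a proof; what a proof would have to supply is precisely the $\eta$-aware separating construction (colliding selectors, and the fact that tupler padding erases finite $\eta$-differences) that you only gesture at. The parts of your sketch that are genuinely fine are peripheral: the reduction of the equational statement to the preorder statement by symmetrization, and the observation that $t\not\etabtle u$ yields a finite-depth disagreement (the clause functional is finitary, so its greatest fixed point is the $\omega$-limit of its approximants). Note finally that the paper's own separation result (Lemma~\ref{lem:separatingeta}) is not a substitute: it handles the orthogonal case where $t\etabtle u$ but $t\not\btpre u$, by counting interaction steps, and it presupposes Theorem~\ref{thm:HW76} rather than reproving it.
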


\section{Completeness, or Separating \bohm Different Terms}\label{sec:bohmout}
In this section, we prove that the interaction preorder $\intleq$ is included in the \bohm preorder $\btleq$.

\paragraph{Proof Technique} The standard way of proving that a contextual preorder $\sqsubseteq$ is included in a tree similarity $\precsim$ is to proceed by proving the contrapositive: one supposes $\BT{t}\not\precsim\BT{u}$, and constructs a context $C$ that:
\begin{enumerate}
\item \emph{Extraction of the difference}: brings up this difference from possibly deep down the tree structure of $\BT{t}$ and $\BT{u}$, that is, $\ctx$ is such that $\ctxp\tm$ and $\ctxp\tmtwo$ head reduce to two terms $\tm'$ and $\tmtwo'$ for which the difference $\BT{t'}\not\precsim\BT{u'}$ is on the root node, and
\item \emph{Root separation}: exploits the root difference to make $\tm'$ head terminating and $\tmtwo'$ head divergent, thus obtaining that $\tm\not\sqsubseteq\tmtwo$.
\end{enumerate} 

In the literature, this extraction process is known as \emph{B\"ohm out} technique \cite{Boehm68}; see also \cite{Boehm68,Dezani98,BOUDOL199683,IntrigilaMP19,BarendregtM22}. It is a concise and yet sophisticated technique. The culprit is that the extracting context needs to first \emph{reorganize} the applicative structure of $\tm$ and $\tmtwo$ (via \emph{tupler combinators}, see below), to then apply the suitable selectors for extracting the discriminating sub-terms. 

What raises difficulties is when one has two different terms, say, $\tm \defeq \var\tm_1\tm_2$ and $\tmtwo \defeq \var\tm_1\tm_3$ with the difference deep down the structure of $\tm_2$ and $\tm_3$. Intuitively, to 'extract' $\tm_2$ and $\tm_3$ one would simply substitute for $\var$ a term that selects the second argument, namely $\lam\vartwo\varthree.\varthree$. Unfortunately, this approach is too simple to work, because then extracting the difference deep down $\tm_2$ and $\tm_3$ might require to substitute a \emph{different} selecting term (say, of the first argument) for another occurrence of $\var$ in $\tm_2$ and $\tm_3$, but clearly all the occurrences of $\var$ must receive the \emph{same} selecting term. An example of how the \bohm out technique solves this \emph{colliding selectors issue} is discussed below.

\paragraph{Definitions for the \bohm Out Technique} On closed \lam-terms, the technique amounts to applying the \emph{tupler} $\Tupler{n}$ and the \emph{$i$-th selector $\Proj{n}{i}$} defined as follows:
\begin{center}$
\begin{array}{rrllllllll}
\textsc{$n$-tuples} & \Tuple{t_1,\dots,t_n} & \defeq & \lam x.xt_1\cdots t_n, & \mbox{ with $x$ fresh};
\\
\textsc{Tuplers}& \Tupler n &\defeq & \lam x_1\ldots x_n.\Tuple{x_1,\dots,x_n};
\\
\textsc{Selectors} & \Proj{n}{i} & \defeq & \lam x_1\ldots x_n.x_i,&\textrm{with } 1\le i \le n.
	\end{array}$
\end{center}
So, the tupler $\Tupler n$ takes $n$ arguments $\tm_1,\ldots,\tm_n$ and returns the tuple $\Tuple{\tm_1,\dots,\tm_n}$, while the selector $\Proj{n}{i}$ takes $n$ arguments $\tm_1,\dots,\tm_n$ and returns the $i$-th argument $\tm_i$.
Note that $\Proj{1}1 = \comb{I}$. Then, $\Tupler nt_1\cdots t_nu \toh^* ut_1\cdots t_n$ and $\Proj{n}{i}t_1\cdots t_n\toh^* t_i$, whence we have the following combined \emph{extraction property}:
\begin{equation}
\begin{array}{ccc}
\Tupler nt_1\cdots t_n\Proj{n}{i} &\toh^*& t_i. 
\end{array}
\label{eq:extraction-property}
\end{equation}

In the following, we shall need to locate nodes/hnfs occurring at a certain path in a B\"ohm tree. 
\begin{definition}
\hfill
\begin{itemize}
\item \emph{Path}: a path is a (possibly empty) finite list of natural numbers $\alpha=\Tuple{a_1,\ldots,a_n}$, where $a_i\geq 1$, for each $1\leq i\leq n$. 
\item \emph{Concatenation}: given $i\in\nat$ and a path $\alpha$ as above, their concatenation is $i\cdot\alpha \defeq \Tuple{i,a_1,\ldots,a_n}$.
\item \emph{Node occurring at a path}: let $\tm$ be a $\lambda$-term such that $t\to^*_\head \lam \vec x.yt_1\cdots t_k$. If $\alpha = \Tuple{}$ then $t\restr_{\alpha} = \lam \vec x.yt_1\cdots t_k$, if $\alpha = i\cdot\alpha'$ with $i \leq k$ then $t_{\alpha} = (t_i)_{\alpha'}$.
	Otherwise, $t\restr_{\alpha}$ is undefined.
	\end{itemize}
\end{definition}

\begin{example}We show how the B\"ohm out technique is able to construct a context separating the \lam-term $\tm\defeq \lam xy.x(x\Omega y)\Omega$ from the \lam-term $\tmtwo\defeq \lam xy.x(xy\Omega)\Omega$, which provide an example of the colliding selectors issue. We represent below their B\"ohm trees:
\begin{center}
$\begin{array}{ccc\colspace\colspace\colspace|\colspace\colspace\colspace ccc}
	\BT\tm
	& =&
	\begin{tikzpicture}[ocenter]
	\node (t) {$\lam xy.x$};
	
	\node at (t.center)[below left=13pt and 7pt] (t1) {$x$};
	\node at (t.center)[below right=13pt and 7pt] (t2) {$\bot$};
	\draw (t.south) -- (t1.north);
	\draw (t.south) -- (t2.north);
	
	\node at (t1.center)[below left=13pt and 7pt] (t11) {$\bot$};
	\node at (t1.center)[below right=13pt and 7pt] (t12) {$y$};
	\draw (t1.south) -- (t11.north);
	\draw (t1.south) -- (t12.north);		
	\end{tikzpicture}
	
	&
	
	\BT\tmtwo
	& =&
	\begin{tikzpicture}[ocenter]
	\node (t) {$\lam xy.x$};
	
	\node at (t.center)[below left=13pt and 7pt] (t1) {$x$};
	\node at (t.center)[below right=13pt and 7pt] (t2) {$\bot$};
	\draw (t.south) -- (t1.north);
	\draw (t.south) -- (t2.north);
	
	\node at (t1.center)[below left=13pt and 7pt] (t11) {$y$};
	\node at (t1.center)[below right=13pt and 7pt] (t12) {$\bot$};
	\draw (t1.south) -- (t11.north);
	\draw (t1.south) -- (t12.north);		
	\end{tikzpicture}
\end{array}$
\end{center}

%	\[
%	\begin{tikzpicture}[node distance=20pt]
%		\node (t) {$t =\qquad \lam xy.x\hspace{51pt}$};
%	\node[below of=t] (t1) {$x\qquad\bot$};
%	\draw (t.south) -- ($(t1.north)!0.75!(t1.north east)$);	
%	\draw (t.south) -- ($(t1.north)!0.75!(t1.north west)$);	
%	\node[below of=t1.north east,xshift=-26pt] (t2) {$y\qquad\bot$};	
%		\draw ($(t1.south west)+(9pt,0)$) -- ($(t2.north)!0.65!(t2.north east)$);	
%		\draw ($(t1.south west)+(2pt,0)$) -- ($(t2.north)!0.65!(t2.north west)$);		
%		%%%%%%%%%%%%%%%%%%%%%%%%
%		\node (u) at ($(t)+(160pt,0)$) {$u =\qquad \lam xy.x\hspace{51pt}$};
%		\node[below of=u] (t1) {$x$};
%		\draw (u.south) -- (t1.north);
%		\node[below of=t1] (t2) {$\bot\qquad y$};		
%		\draw ($(t1.south west)+(9pt,0)$) -- ($(t2.north)!0.65!(t2.north east)$);	
%		\draw ($(t1.south west)+(2pt,0)$) -- ($(t2.north)!0.65!(t2.north west)$);				
%	\end{tikzpicture}
%	\]
	
%	Let us first show that $t \not\obsle u$ by building a context $C$ that extracts $y$ from $t$, and $\Omega$ from $u$. 
%	We need to follow the path $\alpha = \Tuple{1,1}$ since $t\restr_\alpha = y$ and $u\restr_\alpha = \Omega$, thus selecting twice the first argument---for the moment, there are no collisions. 
%	It is enough to take $C=\ctxhole\Proj{1}{1}=\ctxhole\comb{I}$ to substitute $x$ for the selector of the 1\textsuperscript{st} child: $C\ctxholep{t}\to_\head \lam y.\comb{I}(\comb{I}y\Omega)\Omega\to_\head \lam y.\comb{I}y\Omega\Omega\to_\head \lam y.y\Omega\Omega$ which is a head normal form, while, \emph{mutatis mutandis}, $C\ctxholep{u}\to^*_\head \lam y.\Omega y\Omega$ which is head divergent.
	
	Showing that $\tm \not\obsle \tmtwo$ requires a context $\ctx$ making $\tm$ converge and $\tmtwo$ diverge. The path to extract along is $\alpha'=\Tuple{1,2}$, which showcases the colliding selectors issue: the first occurrence of $\var$ needs to select the first argument, while the second occurrence needs the second argument.  \bohm's trick consists in using tuplers, as we now attempt to explain. The \bohm out context is $C \defeq \ctxhole\Tupler{2}\comb{I}\Proj{2}{1}\Proj{2}{2}$. The idea is that, by substituting a tupler $\Tupler 2$ on $\var$ and having as further arguments the right selectors for each occurrence, one exploits the extraction property \refeq{extraction-property} above to select the right sub-term in each case. The identity $\comb{I}$ is added like a padding when more arguments are needed. Concretely, we have:
	 \[
	 \begin{array}{c\colspace | \colspace c}
	 \begin{array}{rllllllll}
	 C\ctxholep{\tm}&=&(\lam xy.x(x\Omega y)\Omega)\Tupler{2}\comb{I}\Proj{2}{1}\Proj{2}{2}
\\& \to_\head^2 &
	 \Tupler{2}(\Tupler{2}\Omega\comb{I})\Omega\Proj{2}{1}\Proj{2}{2}
	 
	 \\
	\mbox{by } \refeq{extraction-property}  & \to_\head^* &\Tupler{2}\Omega\comb{I} \Proj{2}{2} &
	 \\
\mbox{by } \refeq{extraction-property} & \to_\head^* &\comb{I} &
\end{array}
%%%%
&
%%%%
\begin{array}{rllllllll}
	 C\ctxholep{\tmtwo}&=&(\lam xy.x(xy\Omega)\Omega)\Tupler{2}\comb{I}\Proj{2}{1}\Proj{2}{2}
\\& \to_\head^2 &
	 \Tupler{2}(\Tupler{2}\comb{I}\Omega)\Omega\Proj{2}{1}\Proj{2}{2}
	 
	 \\
	\mbox{by } \refeq{extraction-property}  & \to_\head^* &\Tupler{2}\comb{I}\Omega \Proj{2}{2} &
	 \\
\mbox{by } \refeq{extraction-property} & \to_\head^* &\Omega &
\end{array}
\end{array}
\]
\end{example}

There are more technicalities of the \bohm out technique, unfortunately. Firstly, since the extraction process works through substitutions of tuplers and selectors, one in general extracts a \emph{substitution instance of a sub-term}, and not the sub-term itself. Secondly, in our example both occurrences of $\var$ have two arguments, but in general different occurrences might have different numbers of arguments. Then, one needs to apply a tupler $\Tupler n$ with $n$ ``large enough'', and this over-approximation may destroy some $\eta$-differences between the two trees. If one observes only termination, then $\eta$-equivalent terms are \emph{not} separable. 

In our setting, we are able to discriminate $\eta$-convertible \lam-terms because we observe termination \emph{and} count the number of interaction steps, which are changed by $\eta$, as showed by \refex{eta}. The idea is to black dress the differing terms $\tm$ and $\tmtwo$ and to white dress the separating context $\ctx$, so that the black $\eta$-differences in $\monoToBlue\tm$ and $\monoToBlue\tmtwo$ that are erased by $\monoToRed\ctx$ are turned into interaction steps.

\paragraph{Interaction \bohm Out} The following lemma shows how to separate those \lam-terms which have B\"ohm trees differing \emph{only} by some (possibly infinitary) $\eta$-expansions, that is, the case when $t\etabtle u$ and $t \not\btpre u$, because the case $t\not\etabtle u$ is handled by \refthm{HW76}, that is, via standard \bohm out. Because of its technical nature, it is labeled as a \emph{lemma} and yet it is one of the main technical contributions of the paper. The inclusion $\intleq\,\subseteq\,\btleq$ then follows easily.

\begin{lemma}[Interaction \bohm-out]
%\NoteProof{lem:separatingeta}
Let $t,u\in\Lambda$ such that $t\etabtle u$ and $t \not\btpre u$.
Then, there exists a context $\ctx\in\Cctx$ such that $
\monoToRed{\ctx}\ctxholep{\monoToBlue{t}}\bshcol{i}$ and
$\monoToRed{\ctx}\ctxholep{\monoToBlue{u}}\bshcol{i'}$ with $i'\neq i$.
\label{lem:separatingeta} 
\end{lemma}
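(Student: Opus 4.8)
The plan is to prove the statement by an \emph{interaction-aware} refinement of the Böhm-out technique: I would build a white-dressed context that extracts the node where the Böhm trees of $t$ and $u$ differ and, at that node, supplies fresh variables so that the purely $\eta$-nature of the discrepancy surfaces as a difference in the number of interaction steps, exactly as in the $\bId$ versus $\bOne$ computation of \refex{eta}. For the first step, \emph{locating the discrepancy}, I compare \refdef{bohm-preorder} and \refdef{HNFBISIM}: the only way $t \etabtle u$ can hold while $t \not\btpre u$ is that at some node the two head normal forms are $\eta$-equal but have a \emph{different number of leading abstractions}; the head-variable clause cannot be the culprit, precisely because $t \etabtle u$. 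I then fix a path $\alpha$ of minimal length to such a node, so that $t\restr_\alpha$ and $u\restr_\alpha$ head-normalize to $\lambda x_1\dots x_p.y\,a_1\cdots a_q$ and $\lambda x_1\dots x_{p'}.y\,b_1\cdots b_{q'}$, $\eta$-equal with, say, $p<p'$; set $d \defeq p'-p = q'-q \geq 1$. Minimality guarantees that \emph{strictly above} $\alpha$ the two trees share the very same structure: same head variables, same numbers of abstractions, same arities.

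\emph{The base case $\alpha=\Tuple{}$.} Here I take $C \defeq \ctxhole\, z_1\cdots z_r$ with $r \geq p'$ fresh variables, so that $\monoToRed{C}$ white-applies $z_1,\dots,z_r$ to the black term. Since black-internal head reduction lifts to silent reduction, $\monoToBlue{t}$ first reaches its black hnf by silent steps; then its $p$ leading black abstractions are consumed by white applications, each a single interaction step, after which the head is a variable and evaluation stops. Crucially, head reduction never enters the arguments, so the fact that the $a_i$ and $b_i$ agree only up to $\eta$ is irrelevant to the count. Thus $\monoToRed{C}\ctxholep{\monoToBlue{t}}\bshcol{p}$ and, identically, $\monoToRed{C}\ctxholep{\monoToBlue{u}}\bshcol{p'}$ with $p \neq p'$.

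\emph{The general case $|\alpha| = m \geq 1$.} I navigate to the node by the standard tupler/selector mechanism, white-dressed: at each strict ancestor I substitute, through the leading abstractions, a white tupler $\Tupler{n}$ for the current head variable, with $n$ chosen once and large enough to dominate all arities along $\alpha$, and then apply the selector $\Proj{n}{a_i}$ to descend, invoking the extraction property \refeq{extraction-property}. The terminal node is reached as a substitution instance, to which I apply the base-case gadget of fresh variables---and \emph{not} a tupler, so as not to over-approximate the terminal arity. Because $t$ and $u$ share their structure strictly above $\alpha$, the tuplers, selectors and padding are identical for the two terms, so the navigation contributes the \emph{same} number of interaction steps to both; the only surviving difference is the one produced at the terminal node, namely $d$. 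Hence $\monoToRed{C}\ctxholep{\monoToBlue{t}}\bshcol{i}$ and $\monoToRed{C}\ctxholep{\monoToBlue{u}}\bshcol{i'}$ with $i' = i + d \neq i$.

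\emph{Main obstacle.} The delicate point is the exact bookkeeping of interaction steps through the substitution of white combinators into the black terms. One must show that the over-approximation inherent in choosing a single large tupler arity---which in ordinary Böhm-out precisely \emph{erases} $\eta$-differences---is here applied identically above $\alpha$ and therefore cancels in the difference $i'-i$, while the terminal $\eta$-discrepancy is preserved because the extracted substitution instance still has $p$ (resp.\ $p'$) leading abstractions: this requires that the substituted tuplers replace only the head variables bound strictly above $\alpha$, leaving the terminal head variable free to be counted by the fresh-variable gadget. Handling the colliding-selectors phenomenon uniformly for $t$ and $u$, and verifying that the whole term head-normalizes in $\Lambdac$ with finitely many interaction steps, is where essentially all of the work resides.
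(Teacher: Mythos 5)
You have the right strategy, and it is the paper's: refine B\"ohm-out by counting interactions, locate a minimal path where the spines of the two trees differ (which, given $t\etabtle u$, must be a pure $\eta$-difference: same head variable, abstraction and argument counts off by the same $d>0$), navigate there with white tuplers and selectors---which cost $t$ and $u$ the same number of interactions because the trees agree strictly above the path---and let the terminal difference surface in the count. Your base case at the root is correct, and is even a slight simplification of the paper's: white-applying $r\ge p'$ fresh variables costs $p$ vs $p'$ interaction steps to consume the black abstractions, after which the head is a variable (either the free $y$, or one of the fresh variables substituted for a bound head) and checkers head reduction stops, with no case distinction needed.

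The gap is in the general case, precisely at the point you defer as ``where essentially all of the work resides''. Your terminal gadget needs the terminal head variable $y$ to still be a \emph{variable} after navigation, and you stipulate that the tuplers substituted during navigation ``replace only the head variables bound strictly above $\alpha$, leaving the terminal head variable free''. This cannot be arranged: if $y$ is bound strictly above $\alpha$ (or is free in $t$ and $u$), then \emph{all} of its occurrences---including the terminal head occurrence---receive the same white tupler; this is exactly the colliding-selectors phenomenon, and it is the generic situation, not a corner case. In that situation, after the $p$ (resp.\ $p'$) interactions consuming the terminal abstractions, the head is $\monoToRed{\Tupler{K}}$ rather than a variable, reduction continues, and your count is wrong as stated. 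The repair is to embrace the tupler as the counting gadget rather than avoid it: a white tupler black-applied to $q$ vs $q+d$ arguments generates $q$ vs $q+d$ interaction steps before reaching a hnf, so the $\eta$-difference \emph{survives} arity over-approximation---your stated reason for refusing the tupler at the terminal node (``so as not to over-approximate the terminal arity'') is unfounded, and in the collision case the totals become $p+q$ vs $p'+q'$, which still differ (by $2d$). This is how the paper proceeds, packaged as a strengthened statement proved by induction on the length of the minimal discrepancy path: there exist closed $\vec s$ such that for all $K$ large enough, $\rapp{\monoToBlue{t}\isub{\vec y}{\monoToRed{\Tupler{K}}}}{\monoToRed{\vec{s}}}\bshcol{i}$ and $\rapp{\monoToBlue{u}\isub{\vec y}{\monoToRed{\Tupler{K}}}}{\monoToRed{\vec{s}}}\bshcol{i'}$ with $i\neq i'$, where $\vec y$ contains \emph{all} variables free in $t$ and $u$---so collided variables are tupler-substituted by construction, the base case counts the tupler-induced interactions ($k$ vs $k+m$), and the induction hypothesis composes. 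Without this strengthening (or the explicit collision case analysis sketched above), your construction does not go through as written.
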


	\paragraph{Terminology and Notations for the Proof}
		As customary in mathematical analysis, we say that a relation $\relation{P}(-)$ holds for all $K\in\nat$ \emph{large enough} whenever there exists a $K'\in\nat$ such that $\relation{P}(K)$ holds for all $K\ge K'$. We also use the notation $tu^{\sim n}$ for $(\cdots((tu)u)\cdots)u$ ($n$ times). Finally, two head normal forms $h,h'$ are \emph{spine equivalent}, written $h \speq h'$, if there are $n,k\ge 0$ such that:
\begin{equation}\label{eq:t-le-u}
	h= \las{\var}{n} \vartwo \,\apps{\tm}{k} \quad\textrm{ and }\quad
	h'= \lambda \var_1\ldots \var_{n}. \vartwo \,\apps{\tmtwo}{k}.
\end{equation}

% !TEX spellcheck = en-US
% !TEX root = ../../main.tex
\begin{proof} 
\applabel{lem:separatingeta} 
We prove a stronger statement, \ie that there exist closed terms $\vec s\in\Lambda$ such that, for all $\vec y$ containing $\FV{t}\cup\FV{u}$  and for all $K\in\nat$ large enough, the following holds:
\begin{center}$
\rapp{
	\monoToBlue{t}\isub{\vec y}{\monoToRed{\Tupler{K}}}
}{\monoToRed{\vec{s}}}\bshcol{i}
\quad\text{ and }\quad
\rapp{
	\monoToBlue{u}\isub{\vec y}{\monoToRed{\Tupler{K}}}
}{\monoToRed{\vec{s}}}\bshcol{i'}\textrm{ with }i'\neq i.
$\end{center}
Given variables $\vec x$ and a \lam-term $t$ we write $\sigma_{\vec x}$ for $\isub{\vec x}{\monoToRed{\Tupler{K}}}$, and $t^{\sigma_{\vec x}}$ for $ t\isub{\vec x}{\monoToRed{\Tupler{K}}}$.

Note that $t \not\btpre u$ is only possible if $t\bsh{}$. Moreover, $t\bsh{}\!\!h$ and $t\etabtle u$ entail $u\bsh{}\!\!h'$, for some $h'$.
We proceed by induction on the length of a minimal path $\delta\in\nat^*$ such that $t\restr_\delta\ \not\speq u\restr_\delta$.

\underline{Base case} $\delta = \emptyseq$, \ie $h\ \not\speq h'$. Then $t\etabtle u$ is only possible if the amount of spine abstractions and applications in $h,h'$ can be matched via $\eta$-expansion, say:
\begin{center}$
	t \to^*_\head h = \lam x_1\dots x_n.y\,t_1\cdots t_k
	\qquad
	\textrm{ and }
	\qquad	
	u \to^*_\head h'= \lam x_1\dots x_nz_1\dots z_m.y\,u_1\cdots u_{k+m}
$\end{center}
for $n,k\ge0$ and $m>0$. (The symmetrical case where $h$ has more abstractions/applications than $h'$ is omitted because analogous.)
There are two subcases to consider, depending on whether $y$ is free.
\begin{enumerate}
\item \emph{$y$ is free}, \ie $y\in\vec y$. Take any $K \ge k+m$, and  empty $\vec s$. For $t$, we have:
{ \begin{center}$
	\begin{array}{c|ll}
	\intsym\textsc{-Steps}&\textsc{Terms and $\nointsym$-steps}
	\\[2pt]
	\hline
	&
	\monoToBlue{t}\isub{\vec y}{\monoToRed{\Tupler{K}}}
	\ \ \tohnoint^* \ \ 
	\monoToBlue{h}\isub{\vec y}{\monoToRed{\Tupler{K}}},\phantom{X^{X^{X^{}}}}\hfill\textrm{ by Pr.~\ref{prop:embedding-of-head-reduction}(1) \&  \reflemmaeq{color-substitutivity}}
		\\[2pt]
	=	
	&
	\bapp{\bapp{\bapp{
		\manyblam{n}{x}\monoToRed{\Tupler{K}}
	}{{\monoToBlue{t_1}}^{\sigma_{\vec y}}}}{\cdots}}{{\monoToBlue{t_k}}^{\sigma_{\vec y}}}
	\\[2pt]	
	\tohint^k	
	&
	\manyblam{n}{x}\manyrlam[k+1]{K}{w}\Tuple{{\monoToBlue{t_1}}^{\sigma_{\vec y}},\dots,{\monoToBlue{t_k}}^{\sigma_{\vec y}},w_{k+1},\dots,w_K}_{\redclr}\\[2pt]
	\end{array}
$\end{center}}
where $\Tuple{-,\dots,-}_{\redclr} $ denotes the tuple with white applications $ \lam z.z-\circ\cdots\circ -$. For $u$, we have:
{\begin{center}$
	\begin{array}{c|ll}
	\intsym\textsc{-Steps}&\textsc{Terms and $\nointsym$-steps}
	\\[2pt]
	\hline
	&

		\monoToBlue{u}\isub{\vec y}{\monoToRed{\Tupler{K}}}
	\ \ \tohnoint^* \ \ 
		\monoToBlue{h'}\isub{\vec y}{\monoToRed{\Tupler{K}}},\phantom{X^{X^{X^{}}}}\hfill\textrm{ by Pr.~\ref{prop:embedding-of-head-reduction}(1) \& \reflemmaeq{color-substitutivity}},
		\\[2pt]
=	&
	\bapp{\bapp{\bapp{
\monoToRed{\Tupler{K}}
	}{{\monoToBlue{u_1}}^{\sigma_{\vec y}}}}{\cdots}}{{\monoToBlue{u_{k+m}}}^{\sigma_{\vec y}}}
	\\[2pt]
\tohint^{k+m}	&
		\manyblam{n}{x}\manyrlam[k+1]{K}{w}\Tuple{{\monoToBlue{u_1}}^{\sigma_{\vec y}},\dots,{\monoToBlue{u_{k+m}}}^{\sigma_{\vec y}},w_{k+m+1},\dots,w_K}_{\redclr}
	\end{array}
$\end{center}}
Summing up, ${\monoToBlue{t}}^{\sigma_{\vec y}}\bshcol{k}$ and  ${\monoToBlue{u}}^{\sigma_{\vec y}}\bshcol{k+m}$. The statement holds because $m>0$.

\item \emph{$y$ is bound}, \ie $y = x_j\in\vec x$. Take any $K\ge k+m$, and let the arguments $\vec s$ be $n$ copies of $\Tupler{K}$ (noted $\Tupler{K}^{\sim n}$ for short). On the one hand:
{\begin{center}$
	\begin{array}{c|lll}
	\intsym\textsc{-Steps}&\textsc{Terms and $\nointsym$-steps}
	\\[2pt]
	\hline
	&\rapp{\monoToBlue{t}\isub{\vec y}{\monoToRed{\Tupler{K}}}}{{\monoToRed{\Tupler{K}}}^{\sim n}}
	\ \ \tohnoint^* \ \ 
\rapp{
		\monoToBlue{h}\isub{\vec y}{\monoToRed{\Tupler{K}}}
	}{{\monoToRed{\Tupler{K}}}^{\sim n}},
	&\textrm{by Pr.~\ref{prop:embedding-of-head-reduction}(1) \& \reflemmaeq{color-substitutivity}},
	\\[2pt]
=	&
	\rapp{
		\big(\manyblam{n}{x}\bapp{\bapp{\bapp{x_j}{{\monoToBlue{t_1}}^{\sigma_{\vec y}}}}{\cdots}}{{\monoToBlue{t_k}}^{\sigma_{\vec y}}}\big)
	}{{\monoToRed{\Tupler{K}}}^{\sim n}}\\[2pt]
\tohint^n	&
	\bapp{\bapp{\bapp{
		\monoToRed{\Tupler{K}}
	}{{\monoToBlue{t_1}}^{\sigma_{\vec x\vec y}}}}{\cdots}}{{\monoToBlue{t_k}}^{\sigma_{\vec x\vec y}}}\\[2pt]	
\tohint^k	&
		\manyrlam[k+1]{K}{w}\Tuple{{\monoToBlue{t_1}}^{\sigma_{\vec x\vec y}},\dots,{\monoToBlue{t_k}}^{\sigma_{\vec x\vec y}},w_{k+1},\dots,w_K}_{\redclr}
	\end{array}
$\end{center}}
On the other hand:
{\begin{center}$
	\begin{array}{c|lll}
	\intsym\textsc{-Steps}&\textsc{Terms and $\nointsym$-steps}
	\\[2pt]
	\hline
	&\rapp{\monoToBlue{u}\isub{\vec y}{\monoToRed{\Tupler{K}}}}{{\monoToRed{\Tupler{K}}}^{\sim n}}
	\ \ \tohnoint^* \ \ 
	\rapp{
		\monoToBlue{h'}\isub{\vec y}{\monoToRed{\Tupler{K}}}
	}{{\monoToRed{\Tupler{K}}}^{\sim n}},
		\ \ \ \ \textrm{by Pr.~\ref{prop:embedding-of-head-reduction}(1) \& \reflemmaeq{color-substitutivity}},
	\\[2pt]
=	&
	\rapp{
		\big({\lambda_{\blueclr\cdots \blueclr}}x_1\ldots x_{n}\vec z.\,\bapp{\bapp{\bapp{x_j}{{\monoToBlue{u_1}}^{\sigma_{\vec y}}}}{\cdots}}{{\monoToBlue{u_{k+m}}}^{\sigma_{\vec y}}}\big)
	}{{\monoToRed{\Tupler{K}}}^{\sim n}}\\[2pt]
\tohch^n	&
	\bapp{\bapp{\bapp{
				\monoToRed{\Tupler{K}}
	}{{\monoToBlue{u_1}}^{\sigma_{\vec x\vec y}}}}{\cdots}}{{\monoToBlue{u_{k+m}}}^{\sigma_{\vec x\vec y}}}\\[2pt]	
\tohch^{k+m}&
		\manyrlam[k+m+1]{K}{w}\Tuple{{\monoToBlue{u_1}}^{\sigma_{\vec x\vec y}},\dots,{\monoToBlue{u_{k+m}}}^{\sigma_{\vec x\vec y}},w_{k+m+1},\dots,w_K}_{\redclr}.
	\end{array}
$\end{center}}
Summing up, ${\monoToBlue{t}}^{\sigma_{\vec y}}\bshcol{n+k}$ and  ${\monoToBlue{u}}^{\sigma_{\vec y}}\bshcol{n+k+m}$. The statement holds because $m>0$.
\end{enumerate}

 \underline{Inductive case} $\delta = j\cdot\gamma$. In this case, we must have:
\begin{center}$
	t \to^*_\head h = \lam x_1\dots x_n.y\,t_1\cdots t_k
	\qquad
	\textrm{ and }
	\qquad	
	u \to^*_\head h'= \lam x_1\dots x_n.y\,u_1\cdots u_k
$\end{center}
with $t_j \not\btpre u_j$ and $(t_l \etabtle u_l)_{l \le k}$. By \ih, there exists $K'$ and $\vec{s'}$ such that for all $K\ge K'$:
\begin{center}$
	\rapp{{\monoToBlue{t_j}}^{\sigma_{\vec x\vec y}}}{\monoToRed{\,\vec{s'}}}
	\bshcol{i}
	\quad\mbox{ and }\quad
	\rapp{{\monoToBlue{u_j}}^{\sigma_{\vec x\vec y}}}{\monoToRed{\,\vec{s'}}}\bshcol{i'}\textrm{ with }i'\neq i.
$\end{center}
We consider any $K\ge \max\{K',k\}$.
We assume wlog.\ that $y$ is free, the other case being analogous.
{ \begin{center}$
	\begin{array}{c|lll}
	\textsc{Steps}&\textsc{Terms}
	\\[2pt]
	\hline
&
\rapp{
		\rapp{
			\rapp{
				\monoToBlue{t}\isub{\vec y}{\monoToRed{\Tupler{K}}}
			}{{\monoToRed{\Tupler{K}}}^{\sim n+K-k}}
		}{\monoToRed{\Proj{K}{j}}}
	}{\monoToRed{\,\vec{s'}}}, 
	\hfill \textrm{by Pr.~\ref{prop:embedding-of-head-reduction}(1) \& \reflemmaeq{color-substitutivity}},
	\\[2pt]
	\tohnoint^* &
	\rapp{
		\rapp{
			\rapp{
				\monoToBlue{h}\isub{\vec y}{\monoToRed{\Tupler{K}}}
			}{{\monoToRed{\Tupler{K}}}^{\sim n+K-k}}
		}{\monoToRed{\Proj{K}{j}}}
	}{\monoToRed{\,\vec{s'}}}
	&
	\\[2pt]
	=
	&
	\rapp{
		\rapp{
			\rapp{
				\big(\manyblam{n}{x}\bapp{\bapp{\bapp{\monoToRed{\Tupler{K}}}{{\monoToBlue{t_1}}^{\sigma_{\vec y}}}}{\cdots}}{{\monoToBlue{t_k}}^{\sigma_{\vec y}}}\big)
			}{{\monoToRed{\Tupler{K}}}^{\sim n+K-k}}
		}{\monoToRed{\Proj{K}{j}}}
	}{\monoToRed{\,\vec{s'}}}\\[2pt]
	\tohint^n&
	\rapp{
		\rapp{
			\rapp{
				\bapp{\bapp{\bapp{\monoToRed{\Tupler{K}}}{{\monoToBlue{t_1}}^{\sigma_{\vec x\vec y}}}}{\cdots}}{{\monoToBlue{t_k}}^{\sigma_{\vec x\vec y}}}
			}{{\monoToRed{\Tupler{K}}}^{\sim K-k}}
		}{\monoToRed{\Proj{K}{j}}}
	}{\monoToRed{\,\vec{s'}}}\\[2pt]				
	\tohint^k\tohnoint^*&
	\rapp{{\monoToBlue{t_j}}^{\sigma_{\vec x\vec y}}}{\monoToRed{\,\vec{s'}}} 
	\hfill \textrm{by \refeq{extraction-property}}.
	\end{array}
$\end{center}}
An identical sequence of steps extracts $\monoToBlue{u_j}$ from the other term, that is, we have:
\begin{center}
$\begin{array}{ccc}
\rapp{\rapp{\rapp{
				\monoToBlue{u}\isub{\vec y}{\monoToRed{\Tupler{K}}}
			}{{\monoToRed{\Tupler{K}}}^{\sim n+K-k}}
		}{\monoToRed{\Proj{K}{j}}}
	}{\monoToRed{\,\vec{s'}}}

&\tohnoint^*\tohint^{n+k}\tohnoint^*&
	\rapp{{\monoToBlue{u_j}}^{\sigma_{\vec x\vec y}}}{\monoToRed{\,\vec{s'}}}
	\end{array}$
	\end{center}
Note the same number of $\intsym$-steps. By defining $\monoToRed{\vec s}$ as the arguments ${\monoToRed{\Tupler{K}}}^{\sim n+K-k},\monoToRed{\Proj{K}{j}}, \monoToRed{\,\vec{s'}}$, and by composing with what is obtained by the \ih, we obtain:
\begin{center}$
\rapp{
	\monoToBlue{t}\isub{\vec y}{\monoToRed{\Tupler{K}}}
}{\monoToRed{\vec{s}}}\bshcol{n+k+i}
\quad\text{and}\quad
\rapp{
	\monoToBlue{u}\isub{\vec y}{\monoToRed{\Tupler{K}}}
}{\monoToRed{\vec{s}}}\bshcol{n+k+i'},
$\end{center}
which is an instance of the statement because $i\neq i'$ by \ih\qedhere
\end{proof}

%We can finally state and prove our completeness result.

\begin{theorem}[Completeness] \label{th:intleq-included-in-bohm}
	Let $t,u\in\Lambda$. If $t\intleq u$ then $t\btpre u$.
\end{theorem}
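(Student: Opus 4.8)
The theorem asserts $t \intleq u \Rightarrow t \btpre u$, i.e. that the interaction preorder is contained in the \bohm preorder. I would prove the contrapositive: assuming $t \not\btpre u$, I exhibit a (white-dressed) context that distinguishes $\monoToBlue{t}$ and $\monoToBlue{u}$ with respect to the interaction index, thereby witnessing $t \not\intleq u$. The whole difficulty of the separation has already been isolated in the two preceding results, so the theorem should reduce to a short case split.

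\textbf{Case analysis.} The relation $\btpre$ differs from $\etabtle$ exactly by $\eta$-expansions, and $\btpre \subseteq \etabtle$, so from $t \not\btpre u$ there are two mutually exclusive possibilities. First, if $t \not\etabtle u$, then by \refthm{HW76} we have $t \not\obsle u$, so there is an ordinary context $\ctx$ with $\ctxp{t}\conv{\hsym}$ but $\ctxp{u}\not\conv{\hsym}$. Lifting $\ctx$ to $\monoToRed{\ctx}$ and using the Lifting Properties (\refprop{embedding-of-head-reduction}, together with \refprop{ordinary-reductions-can-happen-in-checkers}), head termination and head divergence transfer to the checkers calculus: $\monoToRed{\ctx}\ctxholep{\monoToBlue{t}}$ is $\hchsym$-normalizing while $\monoToRed{\ctx}\ctxholep{\monoToBlue{u}}$ is not, i.e. $\monoToRed{\ctx}\ctxholep{\monoToBlue{t}}\bshcol{i}$ for some $i$ but $\monoToRed{\ctx}\ctxholep{\monoToBlue{u}}$ diverges; this refutes $\monoToBlue{t}\leqchcol\monoToBlue{u}$, hence $t\not\intleq u$. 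Second, if $t \etabtle u$ but still $t \not\btpre u$, the difference is purely $\eta$, and this is precisely the situation addressed by \reflem{separatingeta} (Interaction \bohm-out): it yields a context $\ctx$ with $\monoToRed{\ctx}\ctxholep{\monoToBlue{t}}\bshcol{i}$ and $\monoToRed{\ctx}\ctxholep{\monoToBlue{u}}\bshcol{i'}$ with $i\neq i'$. Taking the index-$i$ observation, $\monoToBlue{t}\leqchcol\monoToBlue{u}$ would force $\monoToRed{\ctx}\ctxholep{\monoToBlue{u}}\bshcol{i}$, contradicting $i'\neq i$; hence $t\not\intleq u$ again.

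\textbf{Combining the cases.} In both cases I produce a checkers context separating the black liftings, so $\monoToBlue{t}\not\leqchcol\monoToBlue{u}$, which is exactly $t\not\intleq u$ by definition of $\intleq$. Contraposing gives $t\intleq u \Rightarrow t\btpre u$, as required. The only subtlety worth spelling out is that the separating context used in each case is white-dressed while the arguments are black-dressed, so that the only cross-player redexes are interaction steps and the counting of the index is meaningful; this is handled uniformly by the statements of \reflem{separatingeta} and \refthm{HW76} respectively.

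\textbf{Expected main obstacle.} The genuine technical content lives entirely in \reflem{separatingeta}, which is already proved above; the present theorem is a packaging step. The one point that deserves care, rather than hard work, is the first case: I must make sure that the classical Hyland--Wadsworth separation (phrased for ordinary head termination and the ordinary contextual preorder $\obsle$) transfers faithfully to the checkers setting, namely that head reduction of $\monoToRed{\ctx}\ctxholep{\monoToBlue{t}}$ simulates, via silent and interaction steps, the ordinary head reduction of $\ctxp{t}$, and conversely that checkers-head-divergence reflects ordinary head divergence. This is exactly what the lifting propositions guarantee, so the argument is routine, but it is the place where an overly hasty write-up could slip.
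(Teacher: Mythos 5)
Your proof is correct and follows essentially the same route as the paper's: the contrapositive with a case split on $t\etabtle u$, handling the non-$\eta$ case via Theorem~\ref{thm:HW76} plus the lifting propositions, and the purely-$\eta$ case via the interaction B\"ohm-out lemma (Lemma~\ref{lem:separatingeta}). Your extra care about transferring convergence/divergence between the ordinary and checkers calculi is exactly the role the paper assigns to Proposition~\ref{prop:ordinary-reductions-can-happen-in-checkers}.
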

% !TEX root = ../../main.tex
\begin{proof} 
Assume $t\not\btpre u$, towards a contradiction. There are two cases:
	\bsub
	\item If $t\not\etabtle u$, then by Theorem~\ref{thm:HW76} there exists a context $C$ such that $C\ctxholep{t}\!\Downarrow_\head$, while $C\ctxholep{u}\!\not\Downarrow_\head$.
	By~\refprop{ordinary-reductions-can-happen-in-checkers}, it follows that $\monoToRed{\ctx}\ctxholep{\monoToBlue{t}}\bshcols$, while $\monoToRed{\ctx}\ctxholep{\monoToBlue{u}}\bshcoldiv$. This shows $t\not\intleq u$.
	\item If $t\etabtle u$ then $t\not\intleq u$ follows directly from interaction \bohm out (Lemma~\ref{lem:separatingeta}).\qedhere
	\esub
\end{proof}

% !TEX spellcheck = en-US
% !TEX root = ../main.tex
\section{Multi Types and Relational Semantics}\label{sec:relsem}
We now start preparing the ground for the proof of the inclusion $\btleq\, \subseteq\, \intleq$. The main tool shall be a system of checkers multi types. The needed background on multi types, a.k.a. \emph{non-idempotent intersection types}, is recalled here, the checkers variant shall be introduced in the next section.

Here, we present Engeler's relational model \cite{HylandNPR06} in terms of \citeauthor{Carvalho07}'s system of multi types \citeyearpar{Carvalho07,DBLP:journals/mscs/Carvalho18}, together with some classic results. 
In particular, we recall the multi type characterization of head normalizability, and the bounds of the length of head evaluations\ignore{, and of the head size of hnfs,} that can be extracted from the type derivations.

\begin{definition} Types and typing rules of the multi types system are given in \reffig{plain-multi-type-system}.
\end{definition}

\paragraph{Multi Types.} There are two categories of types: \emph{linear types} $\ltype$, which include a single\footnote{One may ask for more atomic types, but this choice does not really affect the results presented in the paper.} atomic type $\tvar$ and arrow types $\mtype \typearrow \ltype$; \emph{multi types} $\mtype$, which are possibly empty multisets of linear types. 
Multi types are generally represented as unordered lists $\mset{\ltype_1, \dots, \ltype_n}$ of linear types $\ltype_1, \dots, \ltype_n$, possibly with repetitions.
The \emph{empty} multi type $\mset{\,}$, obtained by taking $n = 0$, is also denoted by $\emptymset$.

A multi type $\mset{\ltype_1, \dots, \ltype_n}$ should be intended as a conjunction $\ltype_1 \land \dots \land 
\ltype_n$, for a commutative, associative, non-idempotent conjunction %connective 
$\land$ (morally a tensor $\otimes$), having $\emptymset$ as a neutral element.
The intuition is that a linear type corresponds to a single use of a term $\tm$, which is typed with a 
multiset $\mtype$ of cardinality $n$ if it is going to be used $n$ times. 
In particular, if $n>0$ and $\tm$ is part of a larger term $\tmtwo$, then a copy of $\tm$ shall end up in evaluation (\ie head) position during the evaluation of $\tmtwo$.

% !TEX root = ../../main.tex
\begin{figure}[t!]	
\centering
\arraycolsep=2pt
\begin{tabular}{c|c}
\textsc{Types} & \textsc{Typing rules}
\\\hline
\\[-5pt]
	$\begin{array}{rrll}
	\textsc{Linear} & \ltype, \ltypetwo &\grameq& \vartype \mid \mtype \typearrow \ltype
	\\
	\textsc{Multi} & \mtype, \mtypetwo &\grameq& \multitype{n}{\ltype} \ \ n\geq 0
		\\[4pt]
	\textsc{Generic} & \gtype, \gtypetwo &\grameq& \ltype \mid \mtype
	\\
		\textsc{Zero} & \zero &\grameq& \mset{\ }
	\end{array}$
	&
%	\\[5pt]
%	\hline
%	\\
	\begin{tabular}{c\colspace c}
	\infer[\typingruleAx]{\var \hastype [\ltype] \vdash \var \hastype \ltype}{}
	&	
	\infer[\typingruleMany]{\uplus_{i\in I}\typectx_i \vdash  \tm \hastype [\ltype_i]_{i\in I}}{(\typectx_i \vdash \tm \hastype \ltype_i)_{i\in I}  & I~ \text{finite}}
	\\[5pt]
	\infer[\typingruleAbs]{\typectx \vdash \la\var\tm \hastype \mtype \typearrow \ltype}{\typectx, \var \hastype \mtype \vdash \tm \hastype\ltype}
	 &
	\infer[\typingruleApp]{\typectx \uplus \typectxtwo \vdash \tm\tmtwo \hastype \ltype}{ \typectx \vdash \tm \hastype \mtype \typearrow \ltype & \typectxtwo \vdash \tmtwo \hastype \mtype  }
\end{tabular}
\end{tabular}
\caption{De Carvalho's multi type system.}
%\vspace{-10pt}
\label{fig:plain-multi-type-system}
\end{figure}

\paragraph{Typing Rules.} 
\emph{Judgments} have shape $\typctx \vdash \tm \hastype \ltype$ or $\typctx \vdash \tm \hastype \mtype$, where $\tm$ is a \lam-term, $\mtype$ is a multi type, $\ltype$ is a 
linear type, and $\typctx$ is a \emph{type environment}, \ie, a total function from variables to multi types such that $\domain{\typctx} \defeq \{\var \mid \typctx(\var) \neq \emptymset\}$ is finite. We say that $\typctx$ is \emph{empty} if $\domain{\typctx} = \emptyset$.  
We write $\var_1 \hastype \mtype_1, \dots, \var_n \hastype \mtype_n$ for the environment $\Gamma$ such that $\typctx(y) = \mtype_{i}$, if $y = \var_i\in\vec x$, $\Gamma(y) = \emptymset$, otherwise. %In particular, $\domain{\typctx}\subseteq\set{x_1,\dots,x_n}$.

%The abstraction rule $\ruleFun$ uses the notation $\typctx\sm\var$ for the type environment defined as $\typctx$ on every variable but possibly $\var$, for which $(\typctx\sm\var)(\var)=\zero$. It is a compact way to express the rule in both the cases $\var\in\domain\typctx$ and $\var\notin\domain\typctx$. 
Note that the application rule $\ruleAp$ requires the argument to be typed with a multi type $\mtype$, which is necessarily introduced by rule $\ruleMany$,  having as hypotheses a multiset of derivations, indexed by a possibly empty set $I$. When $I$ is empty, the rule $\ruleAp$ has no premises and can type every term with~$\emptymset$. For instance, $\vdash \Omega\hastype\zero$ is derivable, but no linear type can be assigned to $\Omega$. 
Intuitively, $\zero$ is the type of erasable terms, and every \lam-term is erasable in the (call-by-name) $\l$-calculus.

\paragraph{Technicalities about Types.} 
The \emph{multiset union} is denoted by $\mplus$ and is extended to type environments pointwisely,
\ie\  $(\typctx \mplus \typctxtwo)(\var) \defeq \typctx(\var) \mplus \typctxtwo(\var)$, for all $\var\in\Var$.
This notion is extended further to a finite family of type environments as expected. 
In particular, if $J = \emptyset$ we let $\bigmplus_{i \in J\!} \typctx_i$ be the empty environment.
Given two type environments $\typctx$ and $\typctxtwo$ having disjoint domain $\domain{\typctx} \cap \domain{\typctxtwo} = \emptyset$, we simply write $\typctx, \typctxtwo$ for $\typctx\mplus\typctxtwo$.
%the type environment $\typctx, \typctxtwo$ is defined by $(\typctx, \typctxtwo)(\var) \defeq \typctx(\var)$ if $\var \in 
%\domain{\typctx}$, $(\typctx, \typctxtwo)(\var) \defeq \typctxtwo(\var)$ if $\var \in \domain{\typctxtwo}$, and $(\typctx, \typctxtwo)(\var) \defeq \emptymset$ otherwise.
Note that $\typctx, \var \hastype \emptymset = \typctx$, where we implicitly assume $\var \notin \domain{\typctx}$. 
%The cardinality of a finite multiset $\mtype$ is denoted by $\size{\mtype}$. 
%\paragraph{Type Derivations.} 
We write $\concl{\tderiv}{\typctx}{\tm}{\type}$ whenever $\tderiv$ is a (\emph{type}) \emph{derivation} (\ie a finite tree constructed bottom up by applying the rules in \reffig{plain-multi-type-system}) with as conclusion the judgment $\typctx \vdash \tm \hastype \type$.
%In particular,  we write $\concl{\tderiv}{\,}{\tm}{\type}$ when $\typctx$ is empty.
We write $\derive{\tderiv}{\tm}$ if $\concl{\tderiv}{\typctx}{\tm}{\type}$, for some type environment $\typctx$ and some type $\type$.

\paragraph{Type Preorder and Relational Semantics.} 
The multi type system induces a notion of semantic interpretation into what is known as \emph{relational model} of (the call-by-name) \lam-calculus.
The interpretations of \lam-terms are naturally ordered by set-theoretical inclusion, and this induces a preorder on \lam-terms, namely the inequational theory of the model. 
%\emph{Terminology}: if $\tm$ is a term and $\var_1, \dots, \var_n$ ($n \geq 0$) are pairwise distinct variables then the list $\vec{\var} = (\var_1, \dots, \var_n)$ is \emph{suitable for} $\tm$ when $\fv{\tm} \subseteq \{\var_1, \dots, \var_n\}$.

\begin{definition}[Relational interpretation and type preorder]\label{def:relsem}~
\bsub\item
The \emph{relational interpretation} $\sem\tm$ \emph{of a \lam-term} $\tm$ is defined as follows:
\begin{center}$\begin{array}{lllllll}
	\sem\tm &\defeq &\{(\typctx,\ltype) \mid 
	\exists 
	\, 
	\concl{\tderiv}{\typctx}{\tm}{\ltype} \}.
\end{array}$\end{center}
	\item The \emph{type preorder} $\leqtypepl$ on ordinary $\l$-terms is defined as $\tm \leqtypepl \tmtwo$ if $\sem\tm \subseteq \sem \tmtwo$, 
	and the induced \emph{type equivalence} is noted $\equivtype$.
	\esub
\end{definition}

We have the following fundamental properties of the multi type preorder.

\begin{theorem}[{\citet{BreuvartMR18}}]\label{thm:standardprops}
 Let $t,u\in\Lambda$.
\begin{enumerate}
\item \emph{Compatibility}: if $\tm \leqtypepl \tmtwo$ then $\ctxp\tm \leqtypepl \ctxp\tmtwo$, for every context $\ctx$.
\item \emph{$\beta$-invariance}: if $\tm \to_\beta \tmtwo$ then $\sem\tm = \sem\tmtwo$.
\item\label{thm:standardprops3} \emph{$\eta$-reduction}: if $\tm \to_\eta \tmtwo$ then $\sem\tm \subseteq \sem\tmtwo$.
\item\label{thm:standardprops4} \emph{No $\eta$-expansion}: $\lam y.xy \to_\eta x$, but $\sem{x} \not\subseteq \sem{\lam y.xy}$.
\end{enumerate}
\end{theorem}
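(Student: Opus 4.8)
The four statements follow from the syntax-directedness of the rules in \reffig{plain-multi-type-system} and from the compositional structure of the interpretation $\sem\cdot$, so the plan is to reduce everything to derivation surgery resting on a substitution lemma and an elementary relevance lemma (if $x\notin\FV{s}$ and $\typctx\vdash s\hastype\gtype$ then $\typctx$ charges $\emptymset$ to $x$).

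For \emph{compatibility} (1) I would argue by induction on $\ctx$, the base case $\ctx=\ctxhole$ being the hypothesis $\sem\tm\subseteq\sem\tmtwo$. The inductive step needs only three one-hole monotonicity facts: $\sem{s}\subseteq\sem{s'}$ implies $\sem{\lam x.s}\subseteq\sem{\lam x.s'}$, $\sem{s\,r}\subseteq\sem{s'\,r}$, and $\sem{r\,s}\subseteq\sem{r\,s'}$. Each is obtained by inspecting the last rule of a derivation: an abstraction ends with an abstraction rule and an application ends with the application rule $\ruleAp$, so one transplants the subderivation of $s$ by a subderivation of $s'$ with the \emph{same} conclusion, which exists by $\sem{s}\subseteq\sem{s'}$. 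The only delicate point is the argument position $r\,\ctxhole$, where $s$ is typed by a multi type $\mtype=\mset{\ltype_i}_{i\in I}$ introduced by $\ruleMany$ from an $I$-indexed family of derivations; I would replace each component using the inclusion (which returns a derivation of $s'$ with the exact same pair $(\typctx_i,\ltype_i)$) and recombine through $\ruleMany$ and $\ruleAp$.

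For \emph{$\beta$-invariance} (2) the heart is a quantitative substitution lemma: if $\typctx, x\hastype\mtype\vdash s\hastype\ltype$ and $\typctxtwo\vdash r\hastype\mtype$ then $\typctx\mplus\typctxtwo\vdash s\isub{x}{r}\hastype\ltype$, together with its converse. Subject reduction then handles a root redex $(\lam x.s)r$ by peeling off the $\ruleAp$/abstraction pair and applying substitution; subject expansion runs the converse backwards; both lift from the root to an arbitrary $\beta$-step exactly as in (1), yielding $\sem\tm=\sem\tmtwo$. I expect the \emph{anti-substitution} lemma behind subject expansion to be the main obstacle: one must reconstruct how the copies of $r$ created by the substitution were distributed over the occurrences of $x$ in $s$, recovering both the multiset $\mtype$ and the splitting $\typctx\mplus\typctxtwo$ of the environment. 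This is precisely where non-idempotency is essential, each occurrence contributing its own linear type to the multiset, and where the bookkeeping is most error-prone; the erasing case $x\notin\FV{s}$ must be treated apart, typing $r$ with $\emptymset$ via $\ruleMany$ with empty index set.

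For \emph{$\eta$-reduction} (3) I would only prove the inclusion, treating the root step $\lam x.sx\to_\eta s$ with $x\notin\FV{s}$ directly. A derivation of $\lam x.sx$ with conclusion $\typctx\vdash\lam x.sx\hastype\mtype\typearrow\ltype$ ends with an abstraction rule over a premise typing $sx$ by $\ruleAp$, whose argument $x$ is typed by axioms through $\ruleMany$; by the relevance lemma $s$ charges $\emptymset$ to $x$, so the multi type the argument charges to $x$ coincides with the domain $\mtype$ of $s$'s arrow type and $s$'s environment is $\typctx$, whence $(\typctx,\mtype\typearrow\ltype)\in\sem{s}$. This gives $\lam x.sx\leqtypepl s$ at the root, and compatibility lifts it to any context. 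Finally, \emph{no $\eta$-expansion} (4) is a concrete separation witnessing that this last inclusion is strict: the axiom yields $(x\hastype\mset{\vartype},\vartype)\in\sem{x}$, whereas every derivation of $\lam y.xy$ ends with an abstraction rule and can therefore assign only an arrow type, never the atom $\vartype$; hence $(x\hastype\mset{\vartype},\vartype)\notin\sem{\lam y.xy}$ and $\sem{x}\not\subseteq\sem{\lam y.xy}$.
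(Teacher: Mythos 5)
Your proposal is correct, and it is essentially the canonical argument: the paper itself does not reprove this statement (it is cited from \citet{BreuvartMR18}), but your route --- compatibility by induction on the context with the three one-hole monotonicity cases, substitution and anti-substitution lemmas giving root subject reduction/expansion and hence $\beta$-invariance, a relevance lemma for the root $\eta$-step, and the atomic-type separation $(x\hastype\mset{\vartype},\vartype)\in\sem{x}\setminus\sem{\lam y.xy}$ for Point 4 --- is exactly the route the paper takes when proving the checkers analogue in its appendix (Theorem \ref{thm:ch-types-compatibility}, via Lemmas \ref{l:types-splitting-multisets}, \ref{l:types-substitution}, \ref{l:types-anti-substitution} and Propositions \ref{prop:silent-root-subject-reduction}, \ref{prop:silent-root-subject-expansion}), and your Point 4 argument coincides with the paper's inline explanation following the theorem. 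No gaps: you correctly identify anti-substitution (reconstructing the multiset and the environment splitting from the occurrences of the substituted variable) as the delicate step where non-idempotency does the work.
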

The failure of $\eta$-expansion, \ie Point \ref{thm:standardprops4} of the theorem, is due to the fact that $\var$ can be typed with the atomic type $\vartype$ using an axiom, while there is no way of typing $\lam y.xy$ with $\vartype$, since it can only be typed with an arrow type.

\begin{corollary}
The relation $\leqtypepl$ is an inequational $\l$-theory. Moreover, $\leqtypepl$ is semi-extensional but not extensional.
\end{corollary}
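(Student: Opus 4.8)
The plan is to unfold the definition of inequational $\l$-theory and discharge each requirement using the four properties collected in Theorem~\ref{thm:standardprops}, which does essentially all the work. Recall that an inequational theory is a compatible, $\beta$-invariant preorder; semi-extensionality asks for the inclusion of $\eta$-reduction, while extensionality asks for the inclusion of full $\eta$-conversion $\eqeta$.

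First I would observe that $\leqtypepl$ is a preorder for free: since it is defined as the set-theoretic inclusion $\sem\tm \subseteq \sem\tmtwo$ of relational interpretations, reflexivity and transitivity of $\leqtypepl$ are inherited verbatim from those of $\subseteq$. Compatibility is then nothing but Point~1 of Theorem~\ref{thm:standardprops}.

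For $\beta$-invariance, I would use Point~2: $\tm \to_\beta \tmtwo$ gives $\sem\tm = \sem\tmtwo$, and passing to the reflexive, symmetric, and transitive closure yields $\sem\tm = \sem\tmtwo$ whenever $\tm \eqb \tmtwo$. This equality entails both $\tm \leqtypepl \tmtwo$ and $\tmtwo \leqtypepl \tm$, so $\eqb\,\subseteq\,\leqtypepl$, which is exactly the $\beta$-invariance required of a $\l$-theory. Semi-extensionality is immediate from Point~\ref{thm:standardprops3}: $\tm \to_\eta \tmtwo$ implies $\sem\tm \subseteq \sem\tmtwo$, i.e.\ $\tm \leqtypepl \tmtwo$, whence $\to_\eta\,\subseteq\,\leqtypepl$.

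The only genuinely informative step is the failure of extensionality, and even this is handed to us by Point~\ref{thm:standardprops4}: although $\lam y.xy \to_\eta x$ (so $x \eqeta \lam y.xy$), we have $\sem{x}\not\subseteq\sem{\lam y.xy}$, that is $x \not\leqtypepl \lam y.xy$. Hence $\eqeta\,\not\subseteq\,\leqtypepl$ and $\leqtypepl$ is not extensional. No step here is a real obstacle; if anything, the subtle point worth stressing is the asymmetry between Points~\ref{thm:standardprops3} and~\ref{thm:standardprops4}, which is precisely what separates semi-extensionality (one direction of $\eta$, valid) from extensionality (both directions, invalid).
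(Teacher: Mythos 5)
Your proof is correct and matches the paper's intended argument: the paper states this as an immediate corollary of Theorem~\ref{thm:standardprops}, leaving implicit exactly the unfolding you spell out (preorder from $\subseteq$, compatibility from Point~1, $\beta$-invariance from Point~2, semi-extensionality from Point~3, failure of extensionality from Point~4). Your handling of the closure step for $\eqb$ and of the asymmetry between $\eta$-reduction and $\eta$-expansion is exactly right.
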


The corollary captures the soundness of Engeler's relational model. It is possible to prove that $\leqtypepl\,\subseteq\,\obsle$, thus obtaining a semantic proof that $\obsle$ is an inequational theory (\refthm{head-ctx-inequational}). In \refsect{bohm-included-in-interaction}, we shall follow this approach for proving that the interaction preorder is an inequational theory.

\paragraph{Adequacy with Respect to Head Reduction.} For showing that typable terms are head terminating, 
we need a notion of size for type derivations, that shall bound the number of head steps.

\begin{definition}[Size]
 Let $\tderiv$ be a type derivation. 	The \emph{(applicative) size} $\insize{\tderiv}$ of $\tderiv$ is the number of  occurrences of rules $\ruleAp$ in $\tderiv$.
\label{def:two-sizes}
\end{definition}

\begin{proposition}[\cite{BarendregtM22}]
%	\NoteProof{propappendix:qual-subject}
	Let $\tm, \tm'\in\Lambda$ be such that $\tm \toh \tm'$.
	\label{prop:qual-subject} % \refpropp{qual-subject}{expansion}
	\begin{enumerate}
		\item \label{p:qual-subject-reduction}
		\emph{Quantitative subject reduction}: 	if $\tderiv\exder \typctx \vdash \tm\hastype \ltype$ then there exists a derivation $\tderiv'\exder \typctx \vdash \tm' \hastype \ltype$ such that $\insize{\tderiv'} = \insize{\tderiv} -1$.
		
		\item \label{p:qual-subject-expansion}
		\emph{Subject expansion}: if $\tderiv'\exder \typctx \vdash \tm' \hastype \ltype$ then there is a derivation $\tderiv\exder \typctx \vdash \tm\hastype \ltype$.
	\end{enumerate}
\end{proposition}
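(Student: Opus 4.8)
The plan is to reduce both points to a quantitative \emph{substitution lemma} and its converse, and then to propagate the result through the head context. First I would establish the substitution lemma: if $\concl{\pi}{\typctx, \var\hastype\mtype}{\tm}{\ltype}$ and $\rho$ is a $\ruleMany$-headed derivation of $\typctxtwo \vdash \tmtwo \hastype \mtype$, obtained by gluing component derivations $(\rho_i)_{i\in I}$ with $\mtype = \mset{\ltype_i}_{i\in I}$, then there is a derivation $\concl{\sigma}{\typctx\uplus\typctxtwo}{\tm\isub\var\tmtwo}{\ltype}$ with $\insize{\sigma} = \insize{\pi} + \insize{\rho}$. I would also prove the converse (anti-substitution): every derivation of $\tm\isub\var\tmtwo\hastype\ltype$ factors as such a $\pi$ and $\rho$, with the same additive size relation. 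Both are proved by induction on $\pi$ (respectively on the derivation of $\tm\isub\var\tmtwo$); the only delicate point is the environment bookkeeping via $\uplus$ and the matching of the indexed family $(\rho_i)_{i\in I}$ with the occurrences of $\var$. Crucially, an occurrence of $\var$ is typed by an axiom, which contributes $0$ to $\insize{\cdot}$ and gets replaced by the corresponding $\rho_i$, and $\ruleMany$ does not count either, so the sizes add up exactly.

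Second, I would handle a redex at the root, i.e.\ $(\lam\var.\tm)\tmtwo \rtob \tm\isub\var\tmtwo$. A derivation of $\typctx \vdash (\lam\var.\tm)\tmtwo \hastype \ltype$ necessarily ends in an $\ruleAp$ (which counts $1$ towards the size), whose left premise is an $\ruleAbs$ introducing $\lam\var.\tm \hastype \mtype\typearrow\ltype$ from a subderivation $\pi$ of $\typctx', \var\hastype\mtype \vdash \tm\hastype\ltype$, and whose right premise is a derivation $\rho$ of $\typctxtwo \vdash \tmtwo\hastype\mtype$, with $\typctx = \typctx'\uplus\typctxtwo$. The substitution lemma then produces a derivation of $\typctx \vdash \tm\isub\var\tmtwo\hastype\ltype$ of size $\insize\pi+\insize\rho = (\insize\pi+\insize\rho+1) - 1$, establishing Point~\eqref{p:qual-subject-reduction} at the root. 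For Point~\eqref{p:qual-subject-expansion}, I would run anti-substitution on a derivation of $\tm\isub\var\tmtwo$ to recover $\pi$ and $\rho$ and then reassemble the $\ruleAbs$ and $\ruleAp$ nodes; note that $\tmtwo$ is always typable here, since when $\var\notin\FV\tm$ the multi type $\mtype$ is $\emptymset$ and is introduced by the empty instance of $\ruleMany$.

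Third, I would lift both statements from the root to the general head step $\tm = \hctx\ctxholep{(\lam\var.\tmtwo)\tmthree} \toh \hctx\ctxholep{\tmtwo\isub\var\tmthree} = \tm'$ by structural induction on the head context $\hctx = \lam x_1\dots x_n.\ctxhole\,t_1\cdots t_k$. The key observation is that the head redex lies on the left spine, so in any derivation it is typed exactly once, by the single linear type threaded through the spine's $\ruleAp$ and $\ruleAbs$ nodes. Hence its subderivation can be extracted, transformed by the root case, and plugged back while leaving all surrounding $\ruleAp$/$\ruleAbs$ nodes---and thus their contribution to $\insize{\cdot}$---untouched; this yields $\insize{\tderiv'} = \insize{\tderiv}-1$ for subject reduction and the mere existence of a derivation for subject expansion.

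The step I expect to be the main obstacle is the exact size accounting in the substitution lemma: one must check that $\mtype$ decomposes into precisely the linear types used at the occurrences of $\var$, that the environments combine by $\uplus$ without clashes, and that the substitution itself neither creates nor destroys any $\ruleAp$ node. Everything else is routine structural induction. As an aside, the purely qualitative half of Point~\eqref{p:qual-subject-expansion}---existence of a derivation for $\tm$---is in fact immediate from $\beta$-invariance of the relational semantics (Theorem~\ref{thm:standardprops}), since $\toh\,\subseteq\,\tob$ gives $\sem{\tm} = \sem{\tm'}$; the substitution-based argument is needed only for the quantitative content of Point~\eqref{p:qual-subject-reduction}.
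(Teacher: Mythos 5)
Your proof is correct and is essentially the paper's own argument: the paper cites this statement from the literature but proves its checkers generalization (Prop.~\ref{prop:ch-subject}) in the appendix by exactly your route --- a quantitative substitution lemma and its anti-substitution converse, the root-redex case where the disappearing $\ruleAp$ node accounts for the $-1$, and an induction on head contexts exploiting that the redex sits on the left spine and is therefore typed exactly once, by a single linear type. One remark on your closing aside: invoking $\beta$-invariance of the relational semantics (Theorem~\ref{thm:standardprops}) to get the qualitative half of Point~\ref{p:qual-subject-expansion} is circular in substance, since that invariance is itself established via subject reduction and expansion --- but your anti-substitution argument already covers this, so nothing is lost.
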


Note the quantitative aspect of subject reduction (\refpropp{qual-subject}{reduction}), stating that the derivation size \emph{strictly decreases} along head steps. 
It does not say that it decreases at \emph{arbitrary} $\beta$-steps because the contraction of redexes occurring in sub-terms typed with rule $\ruleMany$ might not change the size. 
For instance, if $\var \tm \to_\beta \var\tm'$ and $\tm$ is typed using an empty $\ruleMany$ rule (\ie with 0 premises), which is a sub-derivation of size 0, then also $\tm'$ is typed using an empty $\ruleMany$ rule, of size 0. In fact, not all typable terms are $\beta$-normalizable: $\var\Omega$ is typable as follows, for any linear type $\ltype$, but it has no $\beta$-nf:\begin{equation}
\AxiomC{}
\RightLabel{$\ruleAx$}
\UnaryInfC{$\var\hastype \mset{\zero\typearrow\ltype} \vdash \var\hastype \zero\typearrow\ltype$}
\AxiomC{}
\RightLabel{$\ruleMany$}
\UnaryInfC{$\vdash \Omega\hastype \zero$}
\RightLabel{$\ruleAp$}
\BinaryInfC{$\var\hastype \mset{\zero\typearrow\ltype} \vdash \var\Omega\hastype \ltype$}
\DisplayProof
\label{eq:var-omega-tderiv}
\end{equation}
Since the size of type derivations decreases at every head step, it provides a termination measure (only) for the head reduction of typable terms. 
The fact that typable terms are head terminating is also called \emph{correctness} of the type system. 

\emph{Completeness} of the type system---\ie every head terminating term is typable---is obtained via typability of all head normal forms, proved next, and subject expansion (\refpropp{qual-subject}{expansion}). 
%Conversely, every head normalizable term is typable. %Additionally, one can show that the size of type derivations bounds the applicative size of head normal forms, and that there exists derivations having exactly the applicative size of the normal form, as in the example \refeq{var-omega-tderiv} above.

\begin{proposition}[Typability of head normal forms. \cite{BarendregtM22}]
\label{prop:typability-hnf}
Let $\hnf\in\Lambda$ be a head normal form. Then there exists a derivation $\tderiv \derives \typctx\vdash \hnf\hastype \ltype$.
\end{proposition}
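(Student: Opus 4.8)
The plan is to construct the required derivation explicitly, exploiting the fact that in this (call-by-name) multi type system every term is erasable, i.e.\ typable with $\emptymset$ via the empty instance of rule $\ruleMany$. A head normal form has the rigid shape $\hnf = \lam x_1 \ldots x_n.\,y\,t_1 \cdots t_k$, so the only occurrence that truly needs to be typed is the head variable $y$; all the arguments $t_1, \ldots, t_k$ can be discarded with the empty multi type. In particular, \emph{no induction on the structure of the arguments is needed}, which is what makes the statement elementary despite the arguments being arbitrary $\l$-terms.

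Concretely, I would first fix the target linear type for the head variable, namely
\[
\sigma \defeq \underbrace{\emptymset \typearrow \cdots \typearrow \emptymset}_{k}\typearrow \vartype,
\]
the arrow type that consumes $k$ empty arguments and returns the atom $\vartype$. An axiom gives $y \hastype \mset{\sigma} \vdash y \hastype \sigma$. Next, for each argument $t_i$ I would apply the empty instance of rule $\ruleMany$ (taking the index set $I = \emptyset$): it has no premises and yields $\vdash t_i \hastype \emptymset$ for \emph{any} term $t_i$. Feeding these into $k$ successive applications of rule $\ruleAp$ produces a derivation of $y \hastype \mset{\sigma} \vdash y\,t_1 \cdots t_k \hastype \vartype$; here the side condition on environments is trivially met, since each argument contributes the empty environment and so the multiset union leaves $y \hastype \mset{\sigma}$ unchanged.

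Finally I would close the $n$ leading abstractions by $n$ applications of the abstraction rule, peeling off $x_1, \ldots, x_n$ from the outside in. Each abstraction over $x_i$ extracts $\typctx(x_i)$ from the environment and records it in the resulting arrow type, so the conclusion has the form $\typctx \vdash \hnf \hastype M_1 \typearrow \cdots \typearrow M_n \typearrow \vartype$ for suitable multi types $M_i$.

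The only point requiring a little care, and the nearest thing to an obstacle, is whether the head variable $y$ is free or bound. If $y$ is free, it is not among $x_1, \ldots, x_n$, every abstraction uses $M_i = \emptymset$, and the final environment is $\typctx = y \hastype \mset{\sigma}$. If instead $y = x_j$ for some $j$, then abstracting $x_j$ uses $M_j = \mset{\sigma}$ while the remaining $M_i = \emptymset$, and the final environment $\typctx$ is empty; this is exactly why the statement asserts only the existence of \emph{some} environment $\typctx$. Both cases are handled uniformly by the abstraction rule, so the construction goes through verbatim and produces the claimed derivation $\tderiv \derives \typctx \vdash \hnf \hastype \ltype$.
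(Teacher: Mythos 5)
Your construction is correct and is essentially the paper's own argument: the paper defers this proposition to the literature but proves its (stronger, checkers) analogue, Proposition~\ref{prop:nf-are-typable-tight}, by exactly the same construction --- typing the head variable with an arrow type $\emptymset \typearrow \cdots \typearrow \emptymset \typearrow \vartype$ built from $k$ empty multi types, typing every argument with $\emptymset$ via the zero-premise instance of rule $\typingruleMany$, closing with $k$ applications and $n$ abstractions, and distinguishing the two cases of a free versus bound head variable. Nothing is missing, and your observation that no induction on the arguments is needed is precisely what makes the statement elementary.
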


%Type derivations capture only the head size of head normal forms, because they might ignore arguments, and so they cannot catch the whole applicative size. For instance, $\var\Omega$ has head size $\hdsize{\var\Omega}=1$ (but applicative size $\insize{\var\Omega} = 4$) and the type derivation \refeq{var-omega-tderiv} has applicative size 1.

Summing up, we obtain the following characterization of head normalization.

\begin{theorem}[Typability characterizes head normalization. \cite{BarendregtM22}]\label{thm:head-characterization}
Let $\tm\in\Lambda$. 
\begin{enumerate}
\item \emph{Correctness}: if $\derive{\tderiv}{\tm}$ then there exists a head normalizing evaluation $\tm \toh^n \hnf$ with $\hnf$ normal and $n  \leq \insize{\tderiv}$.

\item \emph{Completeness}: if $\tm \toh^* \hnf$ is a head normalizing sequence,
	then there exists a derivation $\derive{\tderiv}{\tm}$.
\end{enumerate}
Therefore $\sem\tm\neq\emptyset$ if and only if $\tm$ is head normalizable. In particular, $\leqtypepl$ is consistent.
\end{theorem}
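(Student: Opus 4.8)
The plan is to assemble the statement from the three propositions already in hand—quantitative subject reduction (\refpropp{qual-subject}{reduction}), subject expansion (\refpropp{qual-subject}{expansion}), and typability of head normal forms (\refprop{typability-hnf})—without redoing any rewriting or type-theoretic work. All the delicate content is encapsulated in those propositions; what remains is to organize two inductions and then read off the final equivalence. The one point to keep in mind throughout is that $\sem\tm$ records only judgements with a \emph{linear} type, so both inductions must be anchored at linear-type derivations.

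For \emph{correctness}, I would argue by induction on the size $\insize\tderiv$ of a derivation $\concl\tderiv\typctx\tm\ltype$ with a linear type $\ltype$. If $\tm$ is already a head normal form, the empty evaluation witnesses the claim with $n=0\leq\insize\tderiv$. Otherwise $\tm$ is not a head normal form and hence exposes a head redex, so $\tm\toh\tm'$; by \refpropp{qual-subject}{reduction} there is a derivation $\tderiv'$ of $\typctx\vdash\tm'\hastype\ltype$ with $\insize{\tderiv'}=\insize\tderiv-1$, so the induction hypothesis yields $\tm'\toh^{n'}\hnf$ with $n'\leq\insize{\tderiv'}$, and prefixing the step gives $\tm\toh^{n'+1}\hnf$ with $n'+1\leq\insize\tderiv$. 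A derivation ending in a non-empty multi type reduces to this case by inverting rule $\ruleMany$ to pick out a linear sub-derivation of no larger size; the empty multi type $\emptymset$ is deliberately excluded, precisely because it types non-normalizing terms such as $\Omega$, which is also why $\sem\tm$ is defined using linear types only.

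For \emph{completeness}, given a head normalizing sequence $\tm\toh^*\hnf$, I would induct on its length. The base case is \refprop{typability-hnf}, which types the final head normal form with a linear type. For the inductive step, writing the sequence as $\tm\toh\tm_1\toh^*\hnf$, the induction hypothesis types $\tm_1$, and \refpropp{qual-subject}{expansion} applied to the single step $\tm\toh\tm_1$ transfers the derivation to $\tm$ while preserving the linear type; chaining these expansions back yields $\derive\tderiv\tm$ with a linear type. The \emph{Therefore} clause is then immediate: $\sem\tm\neq\emptyset$ exhibits some $\concl\tderiv\typctx\tm\ltype$, so correctness makes $\tm$ head normalizable, while conversely head normalizability produces such a linear-type derivation by completeness, whence $\sem\tm\neq\emptyset$. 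Consistency of $\leqtypepl$ follows by exhibiting a non-included pair, e.g. $\Id\not\leqtypepl\Omega$: the head normal form $\Id$ has $\sem\Id\neq\emptyset$, whereas $\Omega$ is not head normalizable and so $\sem\Omega=\emptyset$, giving $\sem\Id\not\subseteq\sem\Omega$ and hence $\leqtypepl\neq\Lambda^2$.

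I do not expect a genuine obstacle here, since the hard lemmas are assumed. The only thing that requires care is the bookkeeping between linear and multi types: the measure in correctness decreases strictly only along the linear-type derivations singled out by inverting $\ruleMany$, and the conclusion of completeness must be kept at a linear type so that it actually lands in $\sem\tm$. Once this alignment is respected, the two inductions and the final equivalence go through routinely.
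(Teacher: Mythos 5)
Your proof is correct and follows essentially the same route as the paper: correctness by induction on $\insize{\tderiv}$ anchored at linear types via quantitative subject reduction (\refpropp{qual-subject}{reduction}), and completeness by induction on the length of the head-normalizing sequence using typability of hnfs (\refprop{typability-hnf}) and subject expansion, exactly as in the paper's proof sketch and its appendix proof of the checkers analogue. Your explicit care about excluding the empty multi type $\emptymset$ (which types $\Omega$) and reducing non-empty multi types to linear ones by inverting $\ruleMany$ is a sound clarification of a point the paper's statement leaves implicit.
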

\paragraph{Exact Bounds?} It is natural to wonder whether there are type derivations $\tderiv$ for which correctness holds with $n  = \insize{\tderiv}$. 
The answer is no: the type derivation in \refeq{var-omega-tderiv}, as well as any other type derivation for $\var\Omega$, has at least one $\ruleAp$ rule even if the \lam-term $\var\Omega$ is already a head normal form, whence $n=0$. 

The question has been studied and refined in the literature. Such a mismatch can be improved in two ways, both studied in-depth by \citet{DBLP:journals/jfp/AccattoliGK20}. The first one traces back to \citet{Carvalho07,DBLP:journals/mscs/Carvalho18}, and takes into account the number $\hdsize\hnf$ of application constructors in the spine of the head normal form $\hnf$. Then type derivations $\tderiv$ satisfying a certain \emph{tight predicate} verify $n +\hdsize\hnf = \insize{\tderiv}$. The second one is developed in \citet{DBLP:journals/jfp/AccattoliGK20}. It introduces:
\begin{itemize}
\item A second set of typing rules assigning some new type constants to the constructors that occur in the head normal form, and;
\item A tight predicate forcing all such constructors to be typed with these alternative rules.
\end{itemize}
In such a system, one can actually obtain $n = \insize{\tderiv}$ when the tight predicate holds. 
The drawback is that in this case one obtains a constants-only type that cannot be composed with any other type. 

For the new type system of the next section, we shall give in \refsect{bohm-included-in-interaction} a refined technique for exact bounds that exploits player tags. We shall measure the exact number of interaction head steps without resorting to constants-only types, which is a novelty.

% !TEX spellcheck = en-US
% !TEX root = ../main.tex
\section{Checkers Multi Types}
\label{sect:multi-types}
In this section, we introduce a system $\ctypes{}$ of multi types for the checkers calculus, that can be seen as an annotated version of the standard one presented in Section~\ref{sec:relsem}. We shall prove that the new system characterizes termination of checkers head reduction $\tohch$, similarly to the standard system. Despite the similarity, however, the two systems are inherently different because the new one shall \emph{not} be invariant under $\eta$-reduction, while the standard one is (\refthm{standardprops}(\ref{thm:standardprops3})).

\begin{definition} Types and typing rules of the checkers multi types system $\ctypes{}$ are given in \reffig{colored-types-for-cbn}.
\end{definition}

% !TEX root = ../../main.tex
\begin{figure}[t!]	
\begin{center}
\arraycolsep=3pt
\begin{tabular}{c}
	$\begin{array}{rrll @{\hspace{15pt}} l@{\hspace{12pt}} |@{\hspace{12pt}} rrll }
	\textsc{Linear types} & \ltype, \ltypetwo &\grameq& \vartype \mid \mtype \typearrowpp{\clr{}}{\clrtwo{}} \ltype & \clr{},\clrtwo{}\in\{\redclr,\blueclr\}
&
	\textsc{Generic types} & \gtype, \gtypetwo &\grameq& \ltype \mid \mtype
	\\
	\textsc{Multi types} & \mtype, \mtypetwo &\grameq& \multitype{n}{\ltype} & n\geq 0
&
		\textsc{Zero} & \zero &\grameq& \mset{\ }

	\end{array}$
\\[15pt]
\hline
\\[-6pt]
	\begin{tabular}{c\hcolspace c\hcolspace | \hcolspace c}
	\infer[\typingruleAx]{\var \hastype [\ltype] \ctypes{0} \var \hastype \ltype}{}
	&	
	\infer[\typingruleMany]{\uplus_{i\in I}\typectx_i \ctypes{\sum_{i\in I} k_i}  \tm \hastype [\ltype_i]_{i\in I}}{(\typectx_i \ctypes {k_i} \tm \hastype \ltype_i)_{i\in I}  & I~ \text{finite}}
	 &
	\infer[\typingruleAppNoInt]{\typectx \uplus \typectxtwo \ctypes {k_1+k_2} \ccapp{}\tm\tmtwo \hastype \ltype}{ \typectx \ctypes {k_1} \tm \hastype \mtype \typearrowpp{\clr{}}{\clr{}} \ltype & \typectxtwo \ctypes{k_2} \tmtwo \hastype \mtype  }
	
	\\[5pt]
	\infer[\typingruleAbs]{\typectx \ctypes k \cla{}\var\tm \hastype \mtype \typearrowpp{\clr{}}{\clrtwo{}} \ltype}{\typectx, \var \hastype \mtype \ctypes k \tm \hastype\ltype}
	 &
	\infer[\typingruleApp]{\typectx \uplus \typectxtwo \ctypes {k} \capp{}\tm\tmtwo \hastype \ltype}{ \typectx \ctypes {k_1} \tm \hastype \mtype \typearrowpp{\clr{}}{\clrtwo{}} \ltype & \typectxtwo \ctypes{k_2} \tmtwo \hastype \mtype  }
		 &
	\infer[\typingruleAppInt]{\typectx \uplus \typectxtwo \ctypes {k_1+k_2+1} \appp{\clr{}^\bot}\tm\tmtwo \hastype \ltype}{ \typectx \ctypes {k_1} \tm \hastype \mtype \typearrowpp{\clr{}}{\clr{}^\bot} \ltype & \typectxtwo \ctypes{k_2} \tmtwo \hastype \mtype  }

\end{tabular}
\end{tabular}
\end{center}\smallskip
In rule $\typingruleApp$, $k=k_1 {+} k_2$ if $\clr{}=\clrtwo{}$, otherwise $k= k_1 {+} k_2 {+} 1$. Rule $\typingruleApp$ compactly sums up rules $\typingruleAppNoInt$ and $\typingruleAppInt$.

\caption{Checkers multi type system $\ctypes{}$.}
%\vspace{-10pt}
\label{fig:colored-types-for-cbn}
\end{figure}
\paragraph{Main Ideas 1: Checkers Arrows.} We start by turning the arrow type $\mtype\typearrow\ltype$ into a checkers arrow type $\mtype\typearrowpp{\clr{}}{\clrtwo{}}\ltype$ carrying \emph{two} tags $\clr{},\clrtwo{}\in\set{\bullet,\circ}$, thus giving rise to four possible player combinations. 
The idea is that if, say, $\tm:\mtype\typearrowpp{\redclr{}}{\blueclr{}}\ltype$ then $\tm$ can only be applied via $\blueclr{}$-applications and if it reduces to an abstraction then it must be a $\redclr{}$-abstraction. 
In the typing rule $\typingruleAbs$ for abstractions, the first player $\clr{}$ is determined by the external abstraction, while the second player $\clrtwo{}$ can be freely chosen. 
We shall refer to $\typearrowpp{\redclr{}}{\blueclr{}}$ and $\typearrowpp{\blueclr{}}{\redclr{}}$ as \emph{interaction arrow types}, and to $\typearrowpp{\blueclr{}}{\blueclr{}}$ and $\typearrowpp{\redclr{}}{\redclr{}}$ as \emph{silent arrow types}.

\paragraph{Main Ideas 2: Interaction Application Rule and Index.} The second main ingredient is that there are now two application rules $\typingruleAppNoInt$ and $\typingruleAppInt$, one for the application of silent arrows and one for interaction arrows. We provide also a rule $\typingruleApp$ that sums them up compactly. 
Moreover, typing judgement $\typctx \ctypes k \tm\hastype \gtype$ now carry an index $k$ which counts the number of $\typingruleAppInt$ rules in the derivation. 
These rules, intuitively, type interaction steps, which can be \emph{factual} or \emph{potential}, as we now explain. 
The application of an abstraction typed with an interaction arrow type gives rise to an interaction step $\tobint$, whence it is a factual interaction. 
The application of a free variable typed with an interaction arrow, for instance, does \emph{not} give rise to an interaction step, but that \emph{potential} interaction is recorded in the type and it might arise if the term is plugged in a context, according to its type.

When the interaction index $k$ is irrelevant, we omit it and simply write $\typctx \ctypes{} \tm\hastype \gtype$.

\begin{definition}[Checkers relational interpretation and type preorder]\ 
\bsub
\item
The \emph{checkers relational interpretation} $\semint\tm$ \emph{of a checkers term} $\tm\in\Lambdac$ is defined by:
\begin{center}
$\begin{array}{lllllll}
	\semint\tm &\defeq &\{((\typctx,k,\ltype) \mid 
	\exists 
	\, 
	\tderiv\exder \typctx \ctypes k \tm \hastype \ltype \}.
\end{array}$
\end{center}
	\item The  type preorder $\leqctype$ on checkers $\l$-terms $\tm,\tmtwo\in\Lambdac$ is defined as $\tm \leqctype \tmtwo$ if $\semint\tm \subseteq \semint\tmtwo$, and the induced type equivalence is noted $\equivctype$. 
	\item The black type preorder $\leqbtype$ on ordinary $\l$-terms $\tm,\tmtwo\in\Lambda$ is defined as $\tm \leqbtype \tmtwo$ if $\monoToBlue\tm \leqctype \monoToBlue\tmtwo$, and the induced type equivalence is noted $\equivbtype$. 
\esub
\end{definition}

We have the following fundamental properties of the multi type preorder.

\begin{toappendix}
\begin{theorem}
\label{thm:checkers-type-compatibility}
Let $\tm,\tmtwo\in \Lambdac$.\NoteProof{thm:ch-types-compatibility}\label{thm:ch-types-compatibility}
\begin{enumerate}
\item \emph{Compatibility}: if $\tm \leqctype \tmtwo$ then $\ctxp\tm \leqctype \ctxp\tmtwo$, for every context $\ctx$.
\item \emph{Silent $\beta$-invariance}: if $\tm \tobnoint \tmtwo$ then $\semint\tm = \semint\tmtwo$.
%\cadr{\item \emph{No $\eta$-reduction / expansion}: if $\vartwo\notin\fv\tm$ then $\semv{\cla{}\vartwo\capp{}\tm\vartwo}{\vec{\var}} \not\subseteq \semv\tm{\vec{\var}}$ and $\semv\tm{\vec{\var}} \not\subseteq \semv{\cla{}\vartwo\capp{}\tm\vartwo}{\vec{\var}}$ for all $\vec{\var}$ suitable for $\tm$.}{
\item \emph{No $\eta$-reduction}: for all players $\clr{},\clrtwo{}\in\set{\bullet,\circ}\,.\,\semint{\cla{}\vartwo\capp{}\var\vartwo} \not\subseteq \semint\var$.
\item \emph{No $\eta$-expansion}: for all players $\clr{},\clrtwo{}\in\set{\bullet,\circ}\,.\,\semint\var \not\subseteq \semint{\cla{}\vartwo\capp{}\var\vartwo}$.
\end{enumerate}
\end{theorem}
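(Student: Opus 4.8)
The plan is to treat the four items by two distinct methods: (1) and (2) use the standard relational-semantics toolkit, adapted to carry the interaction index $k$ along derivations, whereas (3) and (4) are settled by exhibiting a single discriminating typing. For \textbf{Compatibility} (1), I would first lift the hypothesis $\semint\tm \subseteq \semint\tmtwo$ from linear types to multi types: by the shape of rule $\typingruleMany$, any derivation of $\typctx \ctypes{k} \tm \hastype \mtype$ with $\mtype = \mset{\ltype_i}_{i\in I}$ decomposes into linear derivations $\typctx_i \ctypes{k_i} \tm \hastype \ltype_i$ with $\typctx = \uplus_{i\in I}\typctx_i$ and $k = \sum_{i\in I} k_i$; each linear triple lies in $\semint\tm \subseteq \semint\tmtwo$, and recombining via $\typingruleMany$ produces the same multi-type judgement for $\tmtwo$. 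Compatibility then follows by induction on the context $\ctx$: for each context former ($\typingruleAbs$ and the two argument positions of $\typingruleApp$) the derivations of $\ctxp\tm$ factor through a derivation of the subterm filling the hole, so swapping that subderivation preserves the environment, the conclusion type, \emph{and} the index.

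For \textbf{Silent $\beta$-invariance} (2), the heart of the matter is a quantitative substitution lemma with exact index tracking: from $\typctx, \var\hastype\mtype \ctypes{k_1} \tm \hastype \ltype$ and $\typctxtwo \ctypes{k_2} \tmtwo \hastype \mtype$ one obtains $\typctx \mplus \typctxtwo \ctypes{k_1+k_2} \tm\isub\var\tmtwo \hastype \ltype$, together with its converse (anti-substitution), which splits a derivation of the substituted term back into the two components with indices that add up. The crucial observation is that substitution neither creates nor destroys interaction applications: the silent-versus-interaction status of an application is fixed by its constructor's player and by the arrow tag, both untouched by substitution, so the only index contributed by the arguments is exactly the $k_2$ already recorded. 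A silent redex $\ccapp{}{(\cla{}\var\tm)}{\tmtwo}$ is necessarily typed by $\typingruleAppNoInt$ over $\typingruleAbs$ along a silent arrow, and since $\typingruleAppNoInt$ does \emph{not} increment the index, subject reduction and expansion at the root preserve $k$ exactly; closing under all contexts $\chcontexts\ctxholep\rtobnoint$ reuses the compositionality established in (1), yielding $\semint\tm = \semint\tmtwo$.

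Items (3) and (4) are each witnessed by one typing. For \textbf{No $\eta$-reduction} (3), I would type the body of $\cla{}\vartwo\capp{}\var\vartwo$ by assigning $\var$ an arrow $\mtype \typearrowpp{\clrtwo{}^\bot}{\clrtwo{}} \ltype$ whose target tag $\clrtwo{}$ matches the player of the internal application while its source tag is the opposite, thereby forcing rule $\typingruleAppInt$ and hence index $\geq 1$; since a bare variable is typable only by $\typingruleAx$, every triple of $\semint\var$ has index $0$, so this typing cannot belong to $\semint\var$, and the argument is uniform in $\clr{}$ and $\clrtwo{}$. For \textbf{No $\eta$-expansion} (4), the triple $(\var\hastype\mset{\vartype}, 0, \vartype) \in \semint\var$ already separates the two terms: $\cla{}\vartwo\capp{}\var\vartwo$ is an abstraction, typable only by $\typingruleAbs$, whose conclusion is always an arrow type and never the atom $\vartype$.

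The step I expect to be the main obstacle is the exact-index substitution and anti-substitution lemma underlying (2). In contrast to De Carvalho's plain system, one must certify that silent reduction is \emph{index-neutral}, which amounts to verifying that contraction introduces and removes no interaction application and that the $\typingruleMany$ decomposition distributes the index additively over the substituted occurrences; getting this bookkeeping right is precisely the point where the checkers system departs from the standard one, and it is what makes the silent fragment behave as an ordinary relational model.
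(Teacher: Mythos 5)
Your proposal is correct and takes essentially the same route as the paper's proof: compatibility by induction on the context (including the linear-to-multi-type lifting via the many-rule that the paper leaves implicit in the argument-position case), silent $\beta$-invariance via substitution and anti-substitution lemmas with exact additive index bookkeeping, giving root subject reduction/expansion and then closure under all contexts by compatibility, and single discriminating typings for the two $\eta$ failures. The only cosmetic difference is in item (3): you force non-membership through the index ($k \geq 1$ against the axiom's index $0$), whereas the paper's witness also exploits the mismatch of arrow tags; both disjuncts work and arise from the same typing of the $\eta$-expansion.
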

\end{toappendix}
The fact that the checkers relational interpretation invalidates $\eta$-reduction, is specific to interaction arrow types.
As an example of \refthm{ch-types-compatibility}(3), consider the black $\eta$-expansion of $\var$:
\begin{center}
$\infer[\typingruleAbs]{
	\var\hastype [\emptytype \typearrowpp{\redclr}{\blueclr} \ltype] \ctypes{1} \bla\vartwo \bapp{\var}{\vartwo}\hastype \emptytype \typearrowpp{\blueclr}{\clr{}} \ltype 
	}{
	\infer[\typingruleAppInt]{
		\var\hastype [\emptytype \typearrowpp{\redclr}{\blueclr} \ltype] \ctypes{1} \bapp{\var}{\vartwo}\hastype \ltype 
		}{ 
		\infer[\typingruleAx]{
			\var\hastype [\emptytype \typearrowpp{\redclr}{\blueclr} \ltype] \ctypes{0} \var \hastype \emptytype \typearrowpp{\redclr}{\blueclr} \ltype
			}{}
		~~~~&~~~~
		\infer[\typingruleMany]{
			\ctypes{0} \vartwo \hastype \emptytype
			}{}
		}
	}$
\end{center}
and note that instead $\var\hastype [\emptytype \typearrowpp{\redclr}{\blueclr} \ltype] \not \ctypes{~k} \var\hastype \emptytype \typearrowpp{\blueclr}{\clr{}} \ltype$, for all indices $k$ and players $\clr{}$. An analogous typing derivation shows that $\bla\vartwo\rapp\var\vartwo\not\leqctype\var$.

\begin{toappendix}
\begin{corollary}
The black type preorder $\leqbtype$ is an inequational \lam-theory. Moreover, $\leqbtype$ is neither extensional nor semi-extensional.\label{coro:btype-preorder-inequational}\NoteProof{coro:btype-preorder-inequational}
\end{corollary}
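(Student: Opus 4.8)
The plan is to derive the corollary mechanically from \refthm{ch-types-compatibility}, transporting its four properties from the checkers type preorder $\leqctype$ to the black type preorder $\leqbtype$ along the black lifting $\monoToBlue{\cdot}$. First, since $\tm \leqbtype \tmtwo$ unfolds to the set inclusion $\semint{\monoToBlue\tm} \subseteq \semint{\monoToBlue\tmtwo}$, reflexivity and transitivity of $\leqbtype$ are inherited verbatim from those of $\subseteq$, so $\leqbtype$ is a preorder with no further work.

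For \emph{compatibility}, I would first record the elementary fact that black lifting commutes with hole plugging, i.e.\ $\monoToBlue{\ctxp\tm} = \monoToBlue\ctx\ctxholep{\monoToBlue\tm}$ for every ordinary context $\ctx$, which follows by a routine structural induction on $\ctx$ from the homomorphic clauses defining $\monoToBlue{\cdot}$ on contexts. Granting this, if $\tm \leqbtype \tmtwo$, i.e.\ $\monoToBlue\tm \leqctype \monoToBlue\tmtwo$, then \refthm{ch-types-compatibility}(1) applied to the checkers context $\monoToBlue\ctx$ yields $\monoToBlue\ctx\ctxholep{\monoToBlue\tm} \leqctype \monoToBlue\ctx\ctxholep{\monoToBlue\tmtwo}$; rewriting both sides with the commutation fact gives $\monoToBlue{\ctxp\tm} \leqctype \monoToBlue{\ctxp\tmtwo}$, that is $\ctxp\tm \leqbtype \ctxp\tmtwo$.

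For \emph{$\beta$-invariance}, the key observation is that black lifting sends an ordinary $\beta$-step into a \emph{silent} checkers step: if $\tm \tob \tmtwo$ then, because a black redex has matching players and because $\monoToBlue{\cdot}$ commutes with substitution ($\monoToBlue{s\isub\var u} = \monoToBlue s\isub\var{\monoToBlue u}$), one gets $\monoToBlue\tm \tobnoint \monoToBlue\tmtwo$, which is exactly \refprop{embedding-of-head-reduction}(1) in its generalization from head to arbitrary steps. By silent $\beta$-invariance (\refthm{ch-types-compatibility}(2)) each such step leaves the interpretation unchanged, $\semint{\monoToBlue\tm} = \semint{\monoToBlue\tmtwo}$; since $\eqb$ is the reflexive-symmetric-transitive closure of $\tob$ and set equality is symmetric and transitive, any $\tm \eqb \tmtwo$ gives $\semint{\monoToBlue\tm} = \semint{\monoToBlue\tmtwo}$, whence both $\tm \leqbtype \tmtwo$ and $\tmtwo \leqbtype \tm$. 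Thus $\eqb \subseteq \leqbtype$ and $\leqbtype$ is an inequational $\lambda$-theory.

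Finally, the failure of (semi-)extensionality is witnessed by the $\eta$-redex $\la\vartwo\,\var\vartwo$, whose black lifting is $\bla\vartwo\bapp\var\vartwo$. Part (3) of \refthm{ch-types-compatibility} gives $\semint{\bla\vartwo\bapp\var\vartwo} \not\subseteq \semint\var$, i.e.\ $\la\vartwo\,\var\vartwo \not\leqbtype \var$; since $\la\vartwo\,\var\vartwo \toeta \var$ this refutes semi-extensionality, and since also $\la\vartwo\,\var\vartwo \eqeta \var$ it already refutes extensionality (as not semi-extensional implies not extensional). Part (4) yields the symmetric non-inclusion $\var \not\leqbtype \la\vartwo\,\var\vartwo$, confirming the two-sided failure. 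I expect no genuine obstacle in this corollary: all the mathematical content sits in \refthm{ch-types-compatibility}, and the only points requiring a line of care are the two commutation facts for $\monoToBlue{\cdot}$ (with plugging and with substitution) and the observation that an ordinary $\beta$-step lifts to a \emph{silent}, rather than an interaction, step.
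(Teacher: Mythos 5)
Your proof is correct and follows essentially the same route as the paper's: transporting compatibility, silent $\beta$-invariance, and the two $\eta$-failures of Theorem~\ref{thm:ch-types-compatibility} along the black lifting $\monoToBlue{\cdot}$, using Proposition~\ref{prop:embedding-of-head-reduction}(1) to turn $\beta$-steps into silent steps. You are merely more explicit than the paper about the commutation of lifting with plugging and substitution, which the paper's proof uses silently.
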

\end{toappendix}

\paragraph{Adequacy with Respect to Checkers Head Reduction.} As in the plain type system, the applicative size $\insize{\tderiv}$ of type derivations (which is still defined as the number of rules $\ruleAp$ in $\tderiv$, even if the rule itself has changed) decreases with each head step $\tohch$. 
The difference, however, is that if the step is an interaction one---and only in that case---then also the index $k$ decreases by exactly 1. 

\begin{toappendix}
\begin{proposition}\label{prop:ch-subject}
	Let $\tm,\tm'\in\Lambdac$ be such that $\tm \tohch \tm'$.
	\NoteProof{prop:ch-subject} % \refpropp{qual-subject}{expansion}
	\begin{enumerate}
		\item %\label{p:ch-subject-reduction}
		\emph{Quantitative subject reduction}: 	if $\tderiv\exder \typctx \ctypes k \tm\hastype \ltype$ then there is a derivation $\tderiv'\exder \typctx \ctypes {k'} \tm' \hastype \ltype$ such that $\insize{\tderiv'} = \insize{\tderiv} -1$. Moreover, if $\tm \tohint \tm'$ then $k'=k-1$ and if $\tm \tohnoint \tm'$ then $k'=k$.
		
		\item %\label{p:ch-subject-expansion}
		\emph{Subject expansion}: if $\tderiv'\exder \typctx \ctypes{} \tm' \hastype \ltype$ then there is a derivation $\tderiv\exder \typctx \ctypes{} \tm\hastype \ltype$.
	\end{enumerate}
\end{proposition}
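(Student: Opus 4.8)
The plan is to reduce both items to a single quantitative \emph{substitution lemma} and then close everything under head contexts. Since $\tohch$ is the contextual closure of the root rules $\rtobnoint$ and $\rtobint$ under the head contexts $\chhcontexts$, and since the contracted head redex always sits on the typed spine of $\tm$---hence is typed with a linear (necessarily arrow) type, so that its subderivation exists---it suffices to prove the statement for a root step and then propagate it through the abstraction and application layers of a head context. First I would establish the substitution lemma: if $\tderiv\exder \typctx, \var\hastype\mtype \ctypes{k} \tm\hastype\ltype$ and $\tderiv'\exder \typctxtwo \ctypes{k'} \tmtwo\hastype\mtype$ (the latter built by $\typingruleMany$, summing the indices over the multiset $\mtype$), then there is a derivation $\tderiv''\exder \typctx\mplus\typctxtwo \ctypes{k+k'} \tm\isub\var\tmtwo\hastype\ltype$ with $\insize{\tderiv''} = \insize{\tderiv} + \insize{\tderiv'}$. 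This goes by induction on $\tderiv$; the point is that both the applicative size and the interaction index (the latter merely counting occurrences of $\typingruleAppInt$) are \emph{additive} over subderivations, so they recombine cleanly; the axiom case consumes exactly one component of $\mtype$ together with its index.

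The root analysis is where the player tags do the work. Consider a redex $\appp{\clrtwo{}}{(\lap{\clr{}}\var\tm)}{\tmtwo}$: the function is a $\clr{}$-abstraction, so by $\typingruleAbs$ it receives an arrow whose first tag is $\clr{}$, while the application constructor forces the second tag to be its own color $\clrtwo{}$. Hence the last rule is $\typingruleAppNoInt$ exactly when $\clrtwo{} = \clr{}$ (the silent case, $\rtobnoint$) and $\typingruleAppInt$ exactly when $\clrtwo{} = \clr{}^\bot$ (the interaction case, $\rtobint$). In either case I peel off the application and the abstraction rule, apply the substitution lemma, and observe that the contractum drops exactly one application-rule occurrence, so the applicative size decreases by $1$. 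For a silent step $\typingruleAppNoInt$ adds $0$ to the index, so the index is preserved; for an interaction step $\typingruleAppInt$ adds $1$, so removing it decreases the index by exactly $1$---matching the behaviour of $\tohnoint$ and $\tohint$. To conclude item~(1), I lift this root step through the abstraction and application layers of a head context: the surrounding rules are untouched by the step, so their contributions to size and index are unchanged, and the full derivation inherits the $-1$ on the size and the stated change of index from the redex subderivation.

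For item~(2) I would dualize the substitution lemma into an \emph{anti-substitution lemma}: from any derivation of $\tm\isub\var\tmtwo$ one extracts a multi type $\mtype$ together with a derivation of $\tm$ carrying $\var\hastype\mtype$ in its environment and a derivation of $\tmtwo\hastype\mtype$, whose environments recombine via $\mplus$ into the original one; this is obtained by collecting one copy of the derivation of $\tmtwo$ for each typed occurrence of $\var$. Root subject expansion then rebuilds the redex by an application of $\typingruleAbs$ followed by the application rule whose color matches the redex's application constructor---$\typingruleAppNoInt$ in the silent case and $\typingruleAppInt$ in the interaction one---and the head-context closure is as in item~(1). Since the statement asks only for the \emph{existence} of a derivation for $\tm$, no index bookkeeping is required here, although the same reasoning would deliver the expected index shift if one wished to track it.

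The inductions themselves are routine for multi types; the genuine obstacle is the precise correspondence between the operational silent/interaction dichotomy and the two application rules. One must verify that the color of the application constructor in a redex \emph{forces} the second tag of the function's arrow type, thereby selecting $\typingruleAppNoInt$ versus $\typingruleAppInt$, so that the index counts exactly the interaction steps and nothing else. Getting this matching right, together with the additivity of the index under substitution, is what makes the quantitative half of subject reduction---the clauses $k'=k-1$ for $\tohint$ and $k'=k$ for $\tohnoint$---go through.
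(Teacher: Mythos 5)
Your proposal is correct and follows essentially the same route as the paper: a quantitative substitution lemma (additive in both applicative size and interaction index, proved with a multiset-splitting argument), a root-step analysis where the abstraction's color fixes the arrow's first tag and the application's color fixes the second—so that silent/interaction redexes select exactly the silent/interaction application rule and the index drops by $\gamma_{\clr{}\clrtwo{}}$—followed by closure under the abstraction and application layers of head contexts, and a dual anti-substitution lemma for expansion. The paper likewise only tracks existence (not the index) in subject expansion, exactly as you observe.
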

\end{toappendix}

As before, quantitative subject reduction entails the correctness of the checkers type system, and the following typability of head normal forms gives its completeness.
\begin{toappendix}\begin{proposition}[Typability of head normal forms]
\label{prop:ch-typability-hnf}
Let $\hnf\in\Lambdac$ be a head normal form. Then there exists a derivation $\tderiv \derives \typctx\ctypes{} \hnf\hastype \ltype$.\NoteProof{prop:ch-typability-hnf}
\end{proposition}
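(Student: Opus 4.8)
The plan is to construct the derivation explicitly, following the shape of a checkers head normal form and mimicking the proof of the plain case (\refprop{typability-hnf}); the only genuinely new ingredient is the bookkeeping of the player tags carried by the arrow types. Recall that a checkers hnf has the form $\hnf = \clavec\colr\var\appp{\clrtwo{1}\cdots\clrtwo{k}}\vartwo{\tm_1 \cdots \tm_k}$, that is, $n$ abstractions with players $\colr_1,\dots,\colr_n$ over a head spine in which the head variable $\vartwo$ is applied to the arguments $\tm_1,\dots,\tm_k$ by applications with players $\clrtwo{1},\dots,\clrtwo{k}$.

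First, I would type each argument $\tm_j$ with the empty multi type $\zero$ by an instance of rule $\typingruleMany$ with empty index set ($I=\emptyset$): this rule has no premises and types an arbitrary term, so the $\tm_j$ need not be typable themselves. Second, I would type the head variable $\vartwo$ by rule $\typingruleAx$ with the linear type $\ltypetwo \defeq \zero \typearrowpp{\clrtwo{1}}{\clrtwo{1}} \cdots \zero \typearrowpp{\clrtwo{k}}{\clrtwo{k}} \vartype$, choosing for the $j$-th arrow both tags equal to the player $\clrtwo{j}$ of the $j$-th application. The head spine is then assembled by $k$ applications: at the $j$-th step the arrow to be consumed is the silent arrow $\typearrowpp{\clrtwo{j}}{\clrtwo{j}}$, whose second tag matches the application player $\clrtwo{j}$, so rule $\typingruleAppNoInt$ applies and, paired with the $\zero$-typing of $\tm_j$, yields type $\vartype$ with interaction index $0$. (Taking the first tag opposite to $\clrtwo{j}$ would instead force $\typingruleAppInt$; either choice type-checks, so the value of the index is immaterial for the statement.) Finally, I would reintroduce the binders from the inside out: for each abstraction $\lap{\colr_i}{x_i}$, rule $\typingruleAbs$ produces an arrow whose first tag is forced to be the abstraction's player $\colr_i$, whose second tag may be chosen freely, and whose source multi type is $\typctx(x_i)$. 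This yields a derivation $\tderiv \derives \typctx \ctypes{0} \hnf \hastype \ltype$ with $\ltype$ a chain of tagged arrows ending in $\vartype$, as required.

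The only delicate point---and the single place where the checkers system departs from the plain one---is the matching of player tags. There is a clean duality: an application forces the \emph{second} tag of the arrow it consumes, so the head-spine applications fix the second tags of the arrows in $\ltypetwo$ to be $\clrtwo{1},\dots,\clrtwo{k}$; dually, an abstraction forces the \emph{first} tag of the arrow it introduces, so the outer binders fix the first tags of the outermost arrows of $\ltype$ to be $\colr_1,\dots,\colr_n$. All remaining tags are free, so no conflict can arise and the construction always goes through.

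One should also note that the head variable $\vartwo$ may coincide with a bound variable $x_i$: then the singleton $\mset{\ltypetwo}$ is accumulated into $\typctx(x_i)$ and absorbed by the corresponding instance of $\typingruleAbs$; otherwise $\vartwo$ stays in the final environment $\typctx$. In both cases the derivation is well-formed and witnesses the claim. I expect no serious obstacle here: once the tag-matching duality above is made explicit, the argument is a direct structural construction, parallel to the plain-types proof, with $\zero$ typing the arguments and the player annotations determined (or left free) as described.
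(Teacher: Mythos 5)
Your construction is correct and is essentially the paper's own argument: the paper discharges this proposition by appeal to the stronger Proposition~\ref{prop:nf-are-typable-tight}, whose existence part builds exactly the derivation you describe---arguments typed with $\zero$ via an empty $\typingruleMany$, the head variable given a chain of silent arrows $\zero \typearrowpp{\clrtwo{j}}{\clrtwo{j}} \cdots$ ending in $\vartype$ whose tags match the application players, and a case split on whether the head variable is bound or free. The only difference is that the paper's version additionally checks the tight predicate (all arrows silent, one non-empty multiset), which your choices happen to satisfy anyway.
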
\end{toappendix}

Typability of all head normal forms (\refprop{ch-typability-hnf}) together with subject expansion (\refprop{ch-subject}(2)) implies the \emph{completeness} of the type system: every head terminating term is typable. Summing up, we obtain the following characterization of head normalization.
%Below, $\numberint{\tm \tohch^n \hnf}$ stands for the number $k\le n$ of interaction steps occurring in the sequence $\tm \tohch^n \hnf$ of length $n$.

\begin{toappendix}
\begin{theorem}[Typability characterizes head normalization]
	\label{thm:ch-head-characterization}
	Let $\tm\in\Lambdac$. 
	\NoteProof{thm:ch-head-characterization}
	\begin{enumerate}
		\item \emph{Correctness}: if $\tderiv \exder \typctx \ctypes{k} \tm\hastype\ltype$ then there exists a $\hchsym$-normal form $\hnf$ and an evaluation sequence $\evseq: \tm \tohch^n \hnf$ with $n  \leq \insize{\tderiv}$ and such that the number of interaction steps in $\evseq$ is $\leq k$.
		
		\item \emph{Completeness}: if $\tm \tohch^* \hnf$ is a head normalizing sequence,
		then there exists $\tderiv \exder \typctx \ctypes{} \tm\hastype\ltype$.
	\end{enumerate}
	Therefore, $\semint\tm\neq\emptyset$ if and only if $\tm$ is $\hchsym$-normalizable. 
\end{theorem}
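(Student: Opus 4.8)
The plan is to derive both items from the two auxiliary facts already established for the checkers system, namely quantitative subject reduction and subject expansion (\refprop{ch-subject}) and typability of checkers head normal forms (\refprop{ch-typability-hnf}), together with the determinism of $\tohch$.

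For \emph{correctness}, I would argue by induction on the applicative size $\insize{\tderiv}$ of a derivation $\tderiv \exder \typctx \ctypes{k} \tm\hastype\ltype$. If $\tm$ is already $\hchsym$-normal, the empty evaluation sequence works: $n = 0 \leq \insize{\tderiv}$ and it contains $0 \leq k$ interaction steps. Otherwise $\tm$ is not $\hchsym$-normal, hence it has a (unique, by determinism) head redex and $\tm \tohch \tm'$. By quantitative subject reduction (\refprop{ch-subject}(1)) there is $\tderiv' \exder \typctx \ctypes{k'} \tm'\hastype\ltype$ with $\insize{\tderiv'} = \insize{\tderiv} - 1$, where $k' = k-1$ if the step is an interaction step and $k' = k$ if it is silent. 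Since the size strictly decreased, the induction hypothesis yields an evaluation $\evseq' : \tm' \tohch^{n'} \hnf$ with $n' \leq \insize{\tderiv'}$ and at most $k'$ interaction steps. Prepending the step $\tm \tohch \tm'$ gives $\evseq : \tm \tohch^{n'+1} \hnf$ with $n'+1 \leq \insize{\tderiv'}+1 = \insize{\tderiv}$; and the interaction count of $\evseq$ equals that of $\evseq'$ when the first step is silent (so $\leq k' = k$), or is one more when it is an interaction step (so $\leq k'+1 = k$). In both cases the number of interaction steps in $\evseq$ is $\leq k$, as required.

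For \emph{completeness}, I would induct on the length $n$ of the given head normalizing sequence $\tm \tohch^n \hnf$. If $n = 0$ then $\tm = \hnf$ is $\hchsym$-normal and is typable by \refprop{ch-typability-hnf}. If $n > 0$, write $\tm \tohch \tm' \tohch^{n-1} \hnf$; the induction hypothesis gives a derivation $\tderiv' \exder \typctx \ctypes{} \tm'\hastype\ltype$, and subject expansion (\refprop{ch-subject}(2)) lifts it to a derivation $\tderiv \exder \typctx \ctypes{} \tm\hastype\ltype$. The final equivalence is then immediate: $\semint\tm \neq \emptyset$ means some judgement $\typctx \ctypes{k} \tm\hastype\ltype$ is derivable, which by correctness forces $\tm$ to be $\hchsym$-normalizable; conversely, $\hchsym$-normalizability of $\tm$ yields a derivation by completeness, so $\semint\tm \neq \emptyset$.

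I do not expect a serious obstacle, since the heavy lifting is done by \refprop{ch-subject} and \refprop{ch-typability-hnf}. The only delicate point—absent in the plain system—is the bookkeeping of the interaction index in correctness: one must use precisely the refined statement of \refprop{ch-subject}(1) distinguishing $\tohint$, which decrements $k$ by exactly $1$, from $\tohnoint$, which leaves $k$ unchanged, so that the interaction steps counted along $\evseq$ track the decreases of the index and remain bounded by the initial $k$.
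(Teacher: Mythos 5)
Your proposal is correct and follows essentially the same route as the paper: correctness by induction on the applicative size of the derivation with a case split on head normality and on silent vs.\ interaction steps via quantitative subject reduction, and completeness by induction on the length of the reduction sequence using typability of $\hchsym$-normal forms and subject expansion. Your write-up is in fact slightly more explicit than the paper's about tracking the bound $n \leq \insize{\tderiv}$ alongside the interaction count, but the argument is the same.
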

\end{toappendix}

% !TEX root = ../main.tex
%\cadr{\section{The \bohm Preorder is Included in the Interaction One}}{\section{From the \bohm Preorder to the Interaction One, via Multi Types}}
\section{From the \bohm Preorder to the Interaction One, via Multi Types}
\label{sect:bohm-included-in-interaction}
In this section, we use the obtained results about checkers multi types to prove the chain of inclusions $\btpre\,\subseteq\, \leqbtype\,\subseteq\, \intleq$, as to complete the proof that $\btpre\,=\, \intleq$. The first inclusion $\btpre\,\subseteq\, \leqbtype$ is simple: it follows from an easy induction on the size of checkers type derivations, exploiting the properties of the checkers type system. The second inclusion $\leqbtype\,\subseteq\, \intleq$ requires slightly more work. The key point is to show that the type preorder preserves the number of interaction steps during head normalization in $\Lambdac$, that is, that if $\tm\bshcol{k}$ and $\tm\leqctype\tmtwo$ then $\tmtwo\bshcol{k}$. Such a property requires to characterize a class of checkers type derivations whose index captures exactly the number of head interaction steps to head normal form. We do it via a notion of \emph{tight} typing.

\paragraph{The First Inclusion} The proof of the following proposition goes by a simple induction on the size of derivations. The use of \emph{quantitative} subject reduction (hence of multi types, instead of idempotent intersection types) is critical in order for the induction argument to go through.

%\begin{lemma}[Types for variables]
%	\label{l:types-for-variables}
%	If $\typctx \ctypes{k} \var \hastype \mtype$ then either:
%	\begin{enumerate}
%		\item $\typctx = \emptytypectx$ and $\mtype = \emptytype$, or,
%		
%		\item $\typctx = \var\hastype \mtype$
%	\end{enumerate}
%	
%	In any case, $k=0$.
%\end{lemma}
%
%\begin{proof}
%	Either the last rule of the type derivation is $\ruleMany$, and so we are in case 1, or it is an axiom, and we are in case 2.
%\end{proof}

%\begin{lemma}[Relevance]
%	\label{l:relevance}
%	If $\typctx \ctypes{k} \tm \hastype \gtype$ and $\typctx(\var) \neq \emptytype$ then $\var\in\fv\tm$.
%\end{lemma}

\begin{theorem}[The \bohm preorder is included in the checkers type preorder]
	\label{thm:bisimulation-preserves-typeder}
Let $\tm,\tmtwo\in\Lambda$. If $\tm \btpre \tmtwo$ then $\tm \leqbtype \tmtwo$.
\end{theorem}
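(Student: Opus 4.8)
The goal unfolds to the set inclusion $\semint{\monoToBlue t}\subseteq\semint{\monoToBlue u}$, i.e.\ to showing that every triple $(\typctx,j,\ltype)\in\semint{\monoToBlue t}$ also belongs to $\semint{\monoToBlue u}$. Concretely, I would prove the following strengthening by strong induction on the applicative size $\insize\tderiv$: \emph{for all $t\btpre u$ and every derivation $\tderiv\exder\typctx\ctypes j\monoToBlue t\hastype\ltype$, there is a derivation $\typctx\ctypes j\monoToBlue u\hastype\ltype$ with the same context, index, and type.} The first move is to notice that, since $\monoToBlue t$ is typable, correctness (\refthm{ch-head-characterization}) makes it $\hchsym$-normalizing, and the lifting properties (\refprop{embedding-of-head-reduction}) then force $t$ to have a head normal form. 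Hence the clause (bot) of $\btpre$ cannot apply to $t\btpre u$, and the clause $(\relation H)$ must: $t\bshs h_t=\lam x_1\dots x_n.y\,t_1\cdots t_k$ and $u\bshs h_u=\lam x_1\dots x_n.y\,u_1\cdots u_k$ with $(t_i\btpre u_i)_{i\le k}$.

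Next I would transport the given derivation onto the head normal form. By \refprop{embedding-of-head-reduction}(1) the head reduction $t\toh^* h_t$ (say $m$ steps) lifts to a \emph{silent} reduction $\monoToBlue t\tohnoint^*\monoToBlue{h_t}$, so iterating quantitative subject reduction (\refprop{ch-subject}, silent case) produces $\tderiv_h\exder\typctx\ctypes j\monoToBlue{h_t}\hastype\ltype$ with the same index $j$ and with $\insize{\tderiv_h}=\insize\tderiv-m\le\insize\tderiv$. Because $\monoToBlue{h_t}$ is the black lifting of a head normal form—$n$ black abstractions over a spine of $k$ black applications with head variable $y$ and arguments $\monoToBlue{t_1},\dots,\monoToBlue{t_k}$—syntax-directedness forces the shape of $\tderiv_h$: $n$ abstraction rules, then $k$ application rules along the spine, an axiom typing $y$ with some arrow type $\mu$, and, to the right of each application, a $\ruleMany$ rule bundling argument sub-derivations $\pi_i^{(\ell)}\exder\typctx_i^{(\ell)}\ctypes{j_i^{(\ell)}}\monoToBlue{t_i}\hastype\ltype_i^{(\ell)}$. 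When $k\ge1$ each $\pi_i^{(\ell)}$ satisfies $\insize{\pi_i^{(\ell)}}\le\insize{\tderiv_h}-k<\insize{\tderiv_h}\le\insize\tderiv$, since the $k$ spine application rules count towards $\insize{\tderiv_h}$ but lie outside the $\pi_i^{(\ell)}$. Thus the induction hypothesis applies to each $\pi_i^{(\ell)}$ along $t_i\btpre u_i$, yielding $\pi_i'^{(\ell)}\exder\typctx_i^{(\ell)}\ctypes{j_i^{(\ell)}}\monoToBlue{u_i}\hastype\ltype_i^{(\ell)}$. (If $k=0$ then $h_t=h_u$ and there is nothing to transform.)

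I would then reassemble a derivation $\tderiv_h'\exder\typctx\ctypes j\monoToBlue{h_u}\hastype\ltype$ by keeping the $n$ abstraction rules, the axiom for $y$ with the very same type $\mu$, and the $k$ application rules, while plugging the $\pi_i'^{(\ell)}$ in place of the $\pi_i^{(\ell)}$. This is well typed because $h_u$ has the same head variable, the same number of abstractions, and the same number of arguments as $h_t$; and it carries the same index $j$, since all applications are black in both terms, so each spine application rule makes the same choice between $\ruleApNoInt$ and $\ruleApInt$ (determined by the tags of $\mu$) and contributes the same amount, while the $\ruleMany$ rules re-sum the unchanged indices $j_i^{(\ell)}$. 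Finally I would transfer back from $\monoToBlue{h_u}$ to $\monoToBlue u$: the reduction $u\toh^* h_u$ lifts to $\monoToBlue u\tobnoint^*\monoToBlue{h_u}$, and silent $\beta$-invariance (\refthm{ch-types-compatibility}(2)) gives $\semint{\monoToBlue u}=\semint{\monoToBlue{h_u}}$, whence $(\typctx,j,\ltype)\in\semint{\monoToBlue{h_u}}$ yields the desired derivation of $\monoToBlue u$.

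The genuinely delicate point—and the reason the two quantitative tools are used asymmetrically—is the \emph{exact preservation of the interaction index $j$}. On the $t$-side I rely on quantitative subject reduction precisely because it simultaneously bounds $\insize{\tderiv_h}$ (making the argument sub-derivations strictly smaller, so the induction terminates) and keeps the index fixed along silent steps; the strict decrease of derivation size along head steps, specific to \emph{non-idempotent} multi types, is what lets the size-based induction go through at all. On the $u$-side I rely on silent $\beta$-invariance rather than on subject expansion, because the former equates the whole interpretation—indices included—whereas the subject-expansion statement only recovers typability and would not by itself guarantee that $\monoToBlue u$ admits the same index $j$.
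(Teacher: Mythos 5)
Your proof is correct and follows essentially the same route as the paper's: induction on the applicative size of the type derivation, transport onto the head normal form via quantitative subject reduction along silent lifted head steps (which preserves the interaction index), syntax-directed shape analysis of the derivation of the lifted hnf, application of the induction hypothesis to the strictly smaller argument sub-derivations, reassembly of a derivation for $\monoToBlue{h_u}$ with the same typing, and transfer back along the silent reduction of $\monoToBlue{u}$. The only cosmetic difference is the last step, where the paper invokes subject expansion together with the remark that the steps involved are silent, while you invoke silent $\beta$-invariance (Theorem~\ref{thm:ch-types-compatibility}(2)) directly—a slightly cleaner packaging of the same underlying fact, since it manifestly preserves the index $j$.
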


% !TEX root = ../../main.tex
\begin{proof} 
Assume $\tm \btpre \tmtwo$. If $\tm\bshdiv$ then by Proposition~\ref{prop:embedding-of-head-reduction} also $\monoToBlue{\tm}$ is not $\hchsym$-normalizable. From~\refthm{ch-head-characterization}, we obtain $\semint{\monoToBlue{\tm}} = \emptyset \subseteq \semint{\monoToBlue{\tmtwo}}$ whence $\tm \leqbtype \tmtwo$ (by definition).

Otherwise, we must have $\tm\bsh{}\!\! h_t \defeq\lam x_1\dots x_m.y\,\tm_1\cdots \tm_n$ and $\tmtwo\bsh{}\!\! h_u \defeq\lam x_1\dots x_m.y\,\tmtwo_1\cdots \tmtwo_n$, with $(\tm_i\btpre\tmtwo_i)_{i \le n}$.
%Wlog, we assume that $y$ is free in $h_t$ and $h_u$, the other case being analogous.
Let us take any typing $(\typctx,k,\ltype) \in\semint{\monoToBlue{\tm}}$, and show that it belongs to $\semint{\monoToBlue{\tmtwo}}$.
We proceed by induction on the size $\insize{\tderiv}$ of a derivation $\tderiv\exder \typectx \ctypes {k} \monoToBlue\tm \hastype \ltype$.
By \refprop{embedding-of-head-reduction}(1) we get $\monoToBlue{\tm}\tohnoint^*\monoToBlue h_{\tm}$, and by quantitative subject reduction (\refprop{ch-subject}(1)) there is a derivation $\tderiv_{\mathrm{hnf}}\exder\ctypes{k'} \monoToBlue h_\tm \hastype \ltype$ such that $k = k'$ (since all head reductions involved are silent) and $\insize{\tderiv_{\mathrm{hnf}}} \le \insize{\tderiv}$. Moreover, 
	\[\begin{array}{ccc}
	\tderiv_{\mathrm{hnf}} &=&
	\infer=[m\,\typingruleAbs]{
		\typctx \ctypes{k} \monoToBlue{\nftm} \hastype \ltype
		=
		\mtype_1 \typearrowpp{\blueclr}{\clrtwo{1}} \cdots \mtype_n \typearrowpp{\blueclr}{\clrtwo{m}} \ltype'
		}{\infer=[n\, \typingruleApp]{
			\typctx, (\var_i \hastype \mtype_i)_{1 \leq i\leq m} \ctypes{k} \monoToBlue{\vartwo\, \tm_1\, \cdots \,\tm_{n}} \hastype \ltype'
			}{
			\typctxtwo_0 \ctypes{0} \vartwo \hastype 
			\mtypetwo_1\typearrowpp{\clr{1}}{\blueclr}\cdots\mtypetwo_n \typearrowpp{\clr{n}}{\blueclr} \ltype'
			& 
			(\tderiv_{i}\exder \typctxtwo_i \ctypes{k_i} \monoToBlue\tm_{i} \hastype \mtypetwo_{i})_{1 \leq i\leq n}
			}
		}\end{array}\]
where $n\,\typingruleApp$ (resp.\ $m\,\typingruleAbs$) denotes $n$ (resp.\ $m$) consecutive applications of the rule, and
\begin{itemize}
		\item[-] $\typctx, (\var_i \hastype \mtype_i)_{1 \leq i\leq m} = \biguplus_{0 \leq i \leq n} \typctxtwo_i$;
		\item[-] $\sum_{1 \leq i\leq n} (k_i + \gamma_{\blueclr}(\clr{i}))= k$, where $\gamma_{\blueclr}(a) \defeq 0$ if $a = \blueclr$, and $\gamma_{\blueclr}(a) \defeq 1$ otherwise.
	\end{itemize}
Note that, for every $i\le n$, $\tm_i\btpre\tmtwo_i$ and $\insize{\tderiv_i} <  \insize{\tderiv_{\mathrm{hnf}}} \le \insize{\tderiv}$, so we can apply the \ih to each linear type in the multi type $\mtypetwo_i$ and get a derivation $\typctxtwo_i \ctypes{k_i} \monoToBlue{\tmtwo_{i}} \hastype \mtypetwo_{i}$. We use these derivations to build the appropriate derivation for $\nftmtwo$:
	\[
	\infer=[m\,\typingruleAbs]{
		\typctx \ctypes{k} \monoToBlue{\nftmtwo} \hastype \ltype
		=
		\mtype_1 \typearrowpp{\blueclr}{\clrtwo{1}} \cdots \mtype_n \typearrowpp{\blueclr}{\clrtwo{m}} \ltype'
		}{\infer=[n\, \typingruleApp]{
			\typctx, (\var_i \hastype \mtype_i)_{1 \leq i\leq m} \ctypes{k} \monoToBlue{\vartwo\, \tmtwo_1\, \cdots \,\tmtwo_{n}} \hastype \ltype'
			}{
			\typctxtwo_0 \ctypes{0} \vartwo \hastype 
			\mtypetwo_1\typearrowpp{\clr{1}}{\blueclr}\cdots\mtypetwo_n \typearrowpp{\clr{n}}{\blueclr} \ltype'
			& 
			(\tderiv_{\tmtwo_i}\exder \typctxtwo_i \ctypes{k_i} \monoToBlue\tmtwo_{i} \hastype \mtypetwo_{i})_{1 \leq i\leq n}
			}
		}\]
By \refprop{embedding-of-head-reduction}(1) we get $\monoToBlue{\tmtwo}\tohnoint^*\monoToBlue h_{\tmtwo}$ and by subject expansion (Prop.~\ref{prop:ch-subject}(2)) we obtain $\typctx \ctypes{k'} \monoToBlue{\tmtwo} \hastype \ltype$.
Since all head reductions involved are silent, we conclude $k = k'$ and $(\typctx,k,\ltype) \in\semint{\monoToBlue{\tmtwo}}$. 

(Note that a `hidden' base case is when $n = 0$, as it does not require the induction hypothesis.)\end{proof}

\paragraph{The Second Inclusion, via Tight Typings} Now, we consider a \emph{tight} predicate on type judgements that forces the index $k$ on the type derivation to be exactly the number of head interaction steps to head normal form, as we show below. An essential aspect is the fact that the predicate concerns \emph{types}, and not \emph{type derivations}, so that it can be transferred from $\tm$ to $\tmtwo$ when they are related by the type preorder, that is, when $\tm\leqctype\tmtwo$.

\begin{definition}[Tight typings and derivations]
	Let $\tm\in\Lambdac$ and $\typctx\ctypes{} \tm\hastype \ltype$ be a type judgement. The pair $(\typctx,\ltype)$ is a \emph{tight typing} if: 
	\begin{enumerate}
	\item All multi types $\mtype$ occurring in $\typctx$ and $\ltype$ are empty except for one, and 
	\item All arrows in $(\typctx,\ltype)$ are silent. 
	\end{enumerate}
	For ease of language, we shall also say that a derivation $\tderiv\exder \typctx\ctypes{} \tm\hastype \ltype$ is tight if $(\typctx,\ltype)$ is a tight typing.
\end{definition}
The definition of tight typing can actually be slightly weakened, by asking that only the arrows in the non-empty multi type are silent, without loosing any of its properties. This weakened definition, however, is slightly more technical, which is why we avoid it.

The crucial property ensured by tightness is that tightly typed head normal forms have interaction index 0, and that a tight typing can be derived for every head normal form.

\begin{toappendix}
\begin{proposition}[Tightness and head normal forms]\label{prop:nf-are-typable-tight}
Let $\htm\in\Lambdac$ be a $\hchsym$-normal form.\NoteProof{prop:nf-are-typable-tight}
\begin{enumerate}
	\item \emph{Existence:} there exists a tight derivation $\typctx \ctypes{k} \htm \hastype \ltype$;
	\item \emph{Zero interaction:} if $\typctx \ctypes k \htm \hastype \ltype$ is tight then $k=0$.
\end{enumerate}
\end{proposition}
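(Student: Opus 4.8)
The plan is to establish the two items by different means: \emph{Existence} by exhibiting one canonical tight derivation, and \emph{Zero interaction} by reading off the consequences of tightness on an arbitrary derivation. Throughout, recall that a $\hchsym$-normal form has the shape $\htm = \clavec\colr\var\appp{\clrtwo{1}\cdots\clrtwo{m}}\vartwo{\tm_1\cdots\tm_m}$, that is, $n$ leading abstractions over a head variable $\vartwo$ applied to $m$ arguments $\tm_1,\dots,\tm_m$ via applications of players $\clrtwo{1},\dots,\clrtwo{m}$.

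For \emph{Existence}, I would type the head variable $\vartwo$ with the linear type $\ltype_\vartwo \defeq \emptymset\typearrowpp{\clrtwo{1}}{\clrtwo{1}}\cdots\emptymset\typearrowpp{\clrtwo{m}}{\clrtwo{m}}\vartype$, where each arrow's second player is chosen equal to the corresponding application player $\clrtwo{i}$ of $\htm$ (so that the application rules apply) and its first player is chosen equal to the second (so that every arrow is silent). Each argument $\tm_i$ is typed with $\emptymset$ via the empty instance of $\typingruleMany$ (zero premises, hence index $0$), and the body $\vartwo\,\tm_1\cdots\tm_m$ is obtained by $m$ silent instances of $\typingruleApp$, yielding type $\vartype$ under the environment $\vartwo\hastype\mset{\ltype_\vartwo}$. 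Adding back the $n$ leading abstractions with $\typingruleAbs$, each time instantiating the free second arrow-player to the abstraction's own player, keeps all introduced arrows silent. The resulting typing $(\typctx,\ltype)$ is tight: its sole non-empty multi type is $\mset{\ltype_\vartwo}$ --- located in $\typctx$ when $\vartwo$ is free, or in an arrow domain of $\ltype$ when $\vartwo$ is one of the abstracted variables --- and all its arrows are silent by construction. (This derivation already has index $0$, anticipating item~2.)

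For \emph{Zero interaction}, I would first note that $\typingruleAbs$ affects neither the index nor tightness: every multi type and every arrow occurring in its premise already occurs in its conclusion, so peeling the $n$ leading abstractions off a tight derivation $\typctx\ctypes{k}\htm\hastype\ltype$ produces a tight derivation $\typctx'\ctypes{k}\vartwo\,\tm_1\cdots\tm_m\hastype\ltype'$ of the body with the \emph{same} index $k$. It then suffices to bound $k$ for such a body. Here $\vartwo$ is a free variable of the body, typed by an axiom with a linear type $\ltype_\vartwo = \mtype_1\typearrowpp{\clr{1}}{\clrtwo{1}}\cdots\mtype_m\typearrowpp{\clr{m}}{\clrtwo{m}}\ltype'$ whose second players are forced by the term to match the application players $\clrtwo{i}$; hence $\ltype_\vartwo\in\typctx'(\vartwo)$ and the multi type $\typctx'(\vartwo)$ is non-empty. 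By tightness $\typctx'(\vartwo)$ is therefore \emph{the} unique non-empty multi occurrence of $(\typctx',\ltype')$, which forces every multi type nested inside $\ltype_\vartwo$ to be empty, i.e.\ $\mtype_1=\cdots=\mtype_m=\emptymset$. Consequently each $\tm_i$ is typed by the empty $\typingruleMany$ of index $0$, so the only contributions to $k$ come from the $m$ application rules; but their arrows $\typearrowpp{\clr{i}}{\clrtwo{i}}$ occur inside $\ltype_\vartwo\in\typctx'$, so tightness makes them silent ($\clr{i}=\clrtwo{i}$), each application is an instance of $\typingruleAppNoInt$, and contributes $0$. Summing up, $k=0$.

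The step I expect to be the main obstacle is the bookkeeping around the unique non-empty multi type: one must argue carefully that the \emph{occurrence} $\typctx'(\vartwo)$ witnessing the head axiom is exactly the single non-empty one permitted by tightness, since it is precisely this that forces the nested argument multi types $\mtype_i$ to be empty and collapses the index to $0$. By comparison, the preservation of tightness under abstraction-peeling is routine, requiring only the observation that the domain and codomain that $\typingruleAbs$ exposes are, respectively, a multi type and a type already occurring in its conclusion typing, and thus inherit emptiness and silence.
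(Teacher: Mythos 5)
Your proof is correct and follows essentially the same route as the paper's: the same canonical all-silent, index-$0$ derivation for existence, and for zero interaction the same key observation that the axiom typing the head variable makes its multiset the unique non-empty multi type of the tight typing, which forces the argument multi types to be empty and the spine arrows to be silent, collapsing the index to $0$. The only cosmetic difference is that you peel off the abstractions first via a tightness-preservation remark, whereas the paper analyses the whole derivation shape (abstractions included) in one display; the bookkeeping is the same.
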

\end{toappendix}

From the properties of tightness for head normal forms and the characterization of head reduction (\refthm{ch-head-characterization}), we obtain the following refined characterization.

\begin{theorem}[Tight characterization]
	\label{th:tight-characterization}
	Let $\tm\in\Lambdac$. 
	\begin{enumerate}
	\item\label{th:tight-characterization1} \emph{Correctness}: if $\tderiv : \typctx \ctypes k \tm \hastype \ltype$ is tight then there exists a head normal form $\htm$ and an evaluation sequence $\evseq: \tm \tohch^* \htm$ such that the number of interaction steps in $\evseq$ is exactly $k$.
	\item\label{th:tight-characterization2} \emph{Completeness}: if $\evseq: \tm \tohch^* \htm$ with $\htm$ head normal, and $k$ is the number of interaction steps in $\evseq$, then there exists a tight derivation $\tderiv : \typctx \ctypes k \tm \hastype \ltype$.
	\end{enumerate}
\end{theorem}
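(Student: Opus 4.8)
The plan is to prove both directions by tracking the interaction index $k$ along the unique head-reduction sequence to head normal form, using quantitative subject reduction as the main engine and the two properties of tight typings on head normal forms (\refprop{nf-are-typable-tight}) as boundary conditions. The guiding observation is that tightness, being a property of the \emph{typing} $(\typctx,\ltype)$ rather than of the derivation, is preserved by reduction.

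\textbf{Correctness (\ref{th:tight-characterization1}).} Suppose $\tderiv\exder\typctx\ctypes{k}\tm\hastype\ltype$ is tight. Since $\semint\tm\neq\emptyset$, correctness of the type system (\refthm{ch-head-characterization}) gives a head-normalizing evaluation $\evseq:\tm\tohch^n\htm$. First I would apply quantitative subject reduction (\refprop{ch-subject}(1)) step by step along $\evseq$, obtaining derivations $\typctx\ctypes{k_j}\tm_j\hastype\ltype$ with $k_0=k$ and, crucially, $k_{j+1}=k_j-1$ exactly when the $j$-th step is an interaction step and $k_{j+1}=k_j$ when it is silent. Since subject reduction keeps the typing $(\typctx,\ltype)$ fixed, the derivation reached at $\htm$ is again tight; hence the zero-interaction property (\refprop{nf-are-typable-tight}(2)) forces $k_n=0$. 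Telescoping, $k=k_0-k_n$ equals the number of interaction steps in $\evseq$, as desired.

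\textbf{Completeness (\ref{th:tight-characterization2}).} Now suppose $\evseq:\tm\tohch^*\htm$ has $k$ interaction steps. By the existence part of \refprop{nf-are-typable-tight}(1) there is a tight derivation $\typctx\ctypes{k_0}\htm\hastype\ltype$, and by zero-interaction $k_0=0$. Iterating subject expansion (\refprop{ch-subject}(2)) backwards along $\evseq$ then yields a derivation $\typctx\ctypes{k'}\tm\hastype\ltype$ with the very same typing $(\typctx,\ltype)$, hence tight. It remains to identify $k'$: since $\tm$ now carries a tight derivation, I would simply invoke the already-proven correctness direction, which says $k'$ is the number of interaction steps in the head-reduction sequence of $\tm$ to its head normal form. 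As checkers head reduction is deterministic and $\htm$ is $\hchsym$-normal, that sequence is exactly $\evseq$, whose interaction count is $k$; therefore $k'=k$, giving the required tight derivation.

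\textbf{Main obstacle.} The conceptual crux is the stability of tightness under reduction: because quantitative subject reduction keeps $(\typctx,\ltype)$ fixed, tightness propagates all the way down to $\htm$, where zero-interaction pins the residual index to $0$; everything else is a telescoping sum. The one point requiring care is that the index decrement is tied specifically to interaction (not silent) steps, which is precisely the refined statement of \refprop{ch-subject}(1). For completeness, the mild subtlety is that bare subject expansion does not itself track the index, so $k'=k$ is recovered only indirectly, by reusing correctness together with the determinism of $\tohch$.
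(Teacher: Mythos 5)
Your proposal is correct and follows essentially the same route as the paper's proof: correctness by pushing the tight typing through quantitative subject reduction to the head normal form, where the zero-interaction property pins the residual index to $0$; completeness by starting from a tight index-$0$ derivation of the head normal form, pulling it back by subject expansion, and then recovering the index via the already-proven correctness direction together with determinism of $\tohch$. The only cosmetic difference is that you spell out the step-by-step telescoping and make explicit the (correct) observation that tightness is preserved because it is a property of the typing $(\typctx,\ltype)$, which the paper uses implicitly.
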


% !TEX root = ../../main.tex

\begin{proof}
\hfill
\begin{enumerate}
	\item 	By correctness (\refthm{ch-head-characterization}(1)), there exists an evaluation sequence $\evseq: \tm \tohch^* \htm$ such that the number of interaction steps in $\evseq$ is $k'\leq k$.
	By quantitative subject reduction (\refprop{ch-subject}(1)), $\typctx \ctypes {k-k'} \htm \hastype \ltype$. Since $(\typctx,\ltype)$ is tight, by the zero interaction property of tight typings (\refprop{nf-are-typable-tight}(2)) we obtain $k-k'=0$, that is, $k'=k$.

\item 	By the existence of tight derivations for head normal forms (\refprop{nf-are-typable-tight}(1)), we obtain a tight derivation $\tderiv \exder \typctx \ctypes 0 \htm \hastype \ltype$. By subject expansion (\refprop{ch-subject}), the same typing types $\tm$ but with an index $k'$, \ie $\typctx \ctypes{k'} \tm \hastype \ltype$. By Point \ref{th:tight-characterization1}, $\tm \tohch^* \htmtwo$ for some head normal form $\htmtwo$ doing $k'$ interaction steps. By determinism of $\tohch$, $\htm=\htmtwo$ and $k=k'$.\qedhere	
	\end{enumerate}
\end{proof}

The tight characterization is then used to show that the type preorder is sound with respect to the interaction preorder.

\begin{corollary}
\label{coro:type-preorder-included-ctxc}\hfill
\begin{enumerate}
		\item \emph{Tightness of the checkers type preorder}:  let $\tm,\tmtwo\in\Lambdac$. If $\tm\bshcol k$ and $\tm\leqctype\tmtwo$ then $\tmtwo\bshcol k$.
\item\label{coro:type-preorder-included-ctxc2} \emph{Soundness of the checkers type preorder}: let $\tm,\tmtwo\in\Lambdac$. If $\tm\leqctype\tmtwo$ then $\tm\leqchcol\tmtwo$.
\item\label{coro:type-preorder-included-ctxc3} \emph{On ordinary \lam-terms}: let $\tm,\tmtwo\in\Lambda$. If $\tm\leqbtype\tmtwo$ then $\tm\intleq\tmtwo$.
\end{enumerate}
\end{corollary}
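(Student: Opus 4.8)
The plan is to derive all three statements from the tight characterization (Theorem~\ref{th:tight-characterization}) together with compatibility of the checkers type preorder (Theorem~\ref{thm:checkers-type-compatibility}(1)), proving them in the order (1), (2), (3), since each feeds into the next. The single conceptual point that makes everything work is that \emph{tightness is a property of the typing $(\typctx,\ltype)$ and not of the derivation}, so a tight typing of $\tm$ is automatically a tight typing of any $\tmtwo$ with $\semint\tm\subseteq\semint\tmtwo$.

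For Point (1), I would assume $\tm\bshcol k$ and $\tm\leqctype\tmtwo$. Since $\tm\bshcol k$, there is an evaluation $\tm\tohch^*\htm$ to a $\hchsym$-normal form using exactly $k$ interaction steps, so tight completeness (Theorem~\ref{th:tight-characterization}(\ref{th:tight-characterization2})) provides a \emph{tight} derivation $\typctx\ctypes k\tm\hastype\ltype$, \ie $(\typctx,k,\ltype)\in\semint\tm$. From $\tm\leqctype\tmtwo$ we get $(\typctx,k,\ltype)\in\semint\tm\subseteq\semint\tmtwo$, hence a derivation $\typctx\ctypes k\tmtwo\hastype\ltype$, which is tight because $(\typctx,\ltype)$ is unchanged. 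Feeding it into tight correctness (Theorem~\ref{th:tight-characterization}(\ref{th:tight-characterization1})) yields an evaluation of $\tmtwo$ to head normal form with exactly $k$ interaction steps, \ie $\tmtwo\bshcol k$.

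For Point (2), I would assume $\tm\leqctype\tmtwo$ and take any checkers context $\ctx$ and any $k$ with $\ctxp\tm\bshcol k$. Compatibility (Theorem~\ref{thm:checkers-type-compatibility}(1)) gives $\ctxp\tm\leqctype\ctxp\tmtwo$, and then Point (1) applied to $\ctxp\tm$ and $\ctxp\tmtwo$ gives $\ctxp\tmtwo\bshcol k$. As $\ctx$ and $k$ are arbitrary, this is exactly the definition of $\tm\leqchcol\tmtwo$. Point (3) is then immediate by unfolding definitions: $\tm\leqbtype\tmtwo$ means $\monoToBlue\tm\leqctype\monoToBlue\tmtwo$, whence $\monoToBlue\tm\leqchcol\monoToBlue\tmtwo$ by Point (2), which is precisely $\tm\intleq\tmtwo$.

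I do not expect a genuine obstacle here: the entire argument rests on the one structural fact—already designed into the definition of tightness—that it constrains only the types in $(\typctx,\ltype)$, so that a tight typing is transported to $\tmtwo$ along the interpretation inclusion $\semint\tm\subseteq\semint\tmtwo$. Everything else is routine bookkeeping with the definitions of the three preorders and of the $\bshcol{k}$ notation.
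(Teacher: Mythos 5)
Your proposal is correct and follows essentially the same route as the paper's proof: tight completeness to obtain a tight typing with index $k$ for $\tm$, transfer of that typing along $\semint\tm\subseteq\semint\tmtwo$ (valid precisely because tightness constrains only $(\typctx,\ltype)$, not the derivation), tight correctness to conclude $\tmtwo\bshcol{k}$, then compatibility plus Point (1) for Point (2), and unfolding of definitions for Point (3). The key structural observation you single out is exactly the one the paper emphasizes when introducing tight typings.
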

% !TEX root = ../../main.tex
\begin{proof}
(\ref{coro:type-preorder-included-ctxc3}) follows immediately from (\ref{coro:type-preorder-included-ctxc2}).
\begin{enumerate}
\item Let $\tm$ and $\tmtwo$ such that $\tm \leqctype \tmtwo$ and $\tm \bshcol{k}$. 
By tight completeness (\refth{tight-characterization}(2)),  there exists a tight typing $\typctx,\ltype$ such that $\typctx \ctypes{k} \tm \hastype \ltype$. 
By $\tm\leqctype\tmtwo$, we obtain a derivation of $\typctx \ctypes{k} \tmtwo \hastype \ltype$. 
By tight correctness (\refth{tight-characterization}(1)) and tightness of $(\typctx,\ltype)$, we obtain $\tmtwo \bshcol{k}$.
			\item Let $\tm,\tmtwo\in\Lambdac$ such that $\tm\leqctype\tmtwo$. By compatibility of $\leqctype$ (\refthm{ch-types-compatibility}(1)), $\ctxp\tm\leqctype\ctxp\tmtwo$ for all $\ctx\in\chcontexts$. By tightness of $\leqctype$ (Point 1), if $\ctxp\tm\bshcol{k}$ then $\ctxp\tmtwo\bshcol{k}$. Hence, $\tm\leqchcol\tmtwo$.\qedhere
	\end{enumerate}
\end{proof}

We can now put the all the inclusions together, thus obtaining our main theorem.
%\begin{theorem}[The \bohm preorder is included in the interaction one]
%	Let $\tm,\tmtwo\in\Lambda$. If $\tm\btpre\tmtwo$ then $\tm\intleq\tmtwo$.
%\end{theorem}
%\begin{proof}
%By \refprop{bisimulation-preserves-typeder}, $\tm\leqbtype\tmtwo$. By \refcoro{type-preorder-included-ctxc}\eqref{coro:type-preorder-included-ctxc2}, $\tm\intleq\tmtwo$.
%\end{proof}
\begin{theorem}[Tree and type characterizations of interaction equivalence]
The preorders $\btleq$, $\leqbtype$, and $\intleq$ coincide.
	Therefore, the \lam-theories $=_{\mathcal{B}}$, $\eqbtype$, and $\inteq$ coincide.\label{thm:final}
\end{theorem}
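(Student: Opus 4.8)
The plan is to close a cycle of inclusions among the three preorders, from which their coincidence is immediate, and then to transfer the equality to the induced $\lambda$-theories by symmetrization. All the substantial work has already been done in the component results; the present theorem is purely their assembly.

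First I would collect the three inclusions established earlier. Completeness (Theorem~\ref{th:intleq-included-in-bohm}), proved via the interaction B\"ohm-out technique, gives $\intleq\,\subseteq\,\btleq$. The inclusion of the B\"ohm preorder into the checkers type preorder (Theorem~\ref{thm:bisimulation-preserves-typeder}), proved by induction on the size of checkers type derivations using quantitative subject reduction, gives $\btleq\,\subseteq\,\leqbtype$. Finally, soundness of the checkers type preorder on ordinary $\lambda$-terms (Corollary~\ref{coro:type-preorder-included-ctxc}(\ref{coro:type-preorder-included-ctxc3})), obtained from the tight characterization of interaction indices, gives $\leqbtype\,\subseteq\,\intleq$.

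Chaining these yields $\btleq\,\subseteq\,\leqbtype\,\subseteq\,\intleq\,\subseteq\,\btleq$, so the three preorders coincide. For the second statement, recall that $=_{\mathcal{B}}$, $\eqbtype$, and $\inteq$ are by definition the equivalences obtained by symmetrizing $\btleq$, $\leqbtype$, and $\intleq$, respectively. Since the three preorders are equal, their symmetrizations are equal as well, whence the three $\lambda$-theories coincide.

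I do not expect any genuine obstacle at this stage. The only care needed is bookkeeping: one must verify that the three cited inclusions point in the directions that make them compose into a \emph{cycle} rather than an open chain, and that each $\lambda$-theory is indeed defined as the symmetrization of the corresponding preorder (which the definitions confirm). Everything difficult---the Böhm-out separation counting interaction steps, and the tightness argument pinning the interaction index to the exact number of head interaction steps---is encapsulated in the hypotheses, so the proof itself is a short concatenation.
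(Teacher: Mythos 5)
Your proposal is correct and follows exactly the paper's own proof: chaining $\btleq\subseteq\leqbtype$ (Theorem~\ref{thm:bisimulation-preserves-typeder}), $\leqbtype\subseteq\intleq$ (Corollary~\ref{coro:type-preorder-included-ctxc}(\ref{coro:type-preorder-included-ctxc3})), and $\intleq\subseteq\btleq$ (Theorem~\ref{th:intleq-included-in-bohm}) into a cycle, then symmetrizing. The paper's proof is just the three citations, so your additional remarks about the cycle closure and the induced equivalences only make explicit what the paper leaves implicit.
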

\begin{proof} $(\btleq \,\subseteq\, \leqbtype)$
By \refthm{bisimulation-preserves-typeder}.
$(\leqbtype \,\subseteq\, \intleq)$ By \refcoro{type-preorder-included-ctxc}\eqref{coro:type-preorder-included-ctxc3}.
 $(\intleq \,\subseteq\, \btleq)$ By \refth{intleq-included-in-bohm}.
\end{proof}

\paragraph{Back to a Delayed Proof} We can now finally prove \refthm{silconv-included-in-inter-ctx}, stating that silent conversion $\silconv$ is included in the checkers interaction preorder $\intleq$, which is the key point of the proof that the interaction preorder $\intleq$ is an inequational \lam-theory (\refcoro{requirements-for-inequational}).

\gettoappendix{thm:silconv-included-in-inter-ctx}
\begin{proof}
\applabel{delayed-proof}
From \refthm{checkers-type-compatibility}(2), if $\tm\silconv\tmtwo$ then $\tm \leqctype\tmtwo$. By \refcoro{type-preorder-included-ctxc}(2), we obtain $\tm\leqchcol\tmtwo$.
\end{proof}

\paragraph{Interaction Improvement and $\eta$} We provided \bohm tree characterizations of $\intleq$ and $\inteq$, but not of the interaction improvement $\intrleq$. Nonetheless, we almost have one. In the previous sections, we obtained the following chain of relationships:
\begin{equation}
\begin{array}{cccccccccc}
\btleq &=_{T.\ref{thm:final}} &\intleq &\subseteq_{\reflemmaeq{hierarchy-preorders}} &\intrleq& \subsetneq_{\reflemmaeq{hierarchy-preorders}} & \obsle & =_{T.\ref{thm:HW76}} & \etabtle
\end{array}
\label{eq:chain-preorders}
\end{equation}
That is, interaction improvement $\intrleq$ possibly enlarges the interaction preorder $\intleq$ and yet stays confined within the further fence of $\eta$-equivalence. Additionally, the fact that $\eta$-reduction can decrease the number of interactions (\refex{eta}) suggests the following.
\begin{conjecture}
\label{conj:improvement}
Interaction improvement $\intrleq$ is characterized by $\slbtleq$, the variant of $\etabtle$ (\refdef{HNFBISIM}) up to possibly infinite $\eta$-\emph{reduction} (rather than \emph{$\eta$-equivalence}).
 \end{conjecture}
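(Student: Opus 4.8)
The plan is to prove the two inclusions $\intrleq \subseteq \slbtleq$ and $\slbtleq \subseteq \intrleq$ separately, reusing the architecture developed for $\intleq = \btleq$ but now tracking the \emph{sign} of the difference in interaction steps rather than only its non-nullity. First observe that $\slbtleq$ sits inside the interval already characterised: since a (possibly infinite) $\eta$-reduction on Böhm trees is a special case of $\eta$-equivalence, and the empty reduction is allowed, we have $\btleq \subseteq \slbtleq \subseteq \etabtle$, which by \refthm{final} and \refthm{HW76} reads $\intleq \subseteq \slbtleq \subseteq \obsle$. As we already know $\intleq \subseteq \intrleq \subseteq \obsle$ (Lemma~\ref{l:hierarchy-preorders}), the whole task is to pin $\intrleq$ \emph{exactly} onto $\slbtleq$ inside this interval.

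\paragraph{Completeness $\intrleq \subseteq \slbtleq$} I would prove the contrapositive by a \emph{directed} refinement of the interaction Böhm-out of Lemma~\ref{lem:separatingeta}. Suppose $t \not\slbtleq u$. If moreover $t \not\etabtle u$, then $t \not\obsle u$ by \refthm{HW76}, and the separating context makes $\monoToRed{\ctx}\ctxholep{\monoToBlue t}$ head terminate while $\monoToRed{\ctx}\ctxholep{\monoToBlue u}$ diverges (as in \refth{intleq-included-in-bohm}); since improvement refines $\obsle$, this already gives $t \not\intrleq u$. The interesting case is $t \etabtle u$ but $t \not\slbtleq u$: then the only $\eta$-discrepancy forbidding a reduction is that at some node $u$ is \emph{strictly more $\eta$-expanded} than $t$. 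Extracting that node with the tupler/selector context of Lemma~\ref{lem:separatingeta}, the base-case computation there already shows that the \emph{more expanded} side performs $m>0$ \emph{additional} interaction steps; orienting the extraction so that this side is $u$ yields $\monoToRed{\ctx}\ctxholep{\monoToBlue t}\bshcol{i}$ and $\monoToRed{\ctx}\ctxholep{\monoToBlue u}\bshcol{i'}$ with $i' > i$, violating the defining inequality $k' \le k$ of $\leqchcolr$. Thus the only addition over Lemma~\ref{lem:separatingeta} is the bookkeeping recording which term carries the extra abstractions/applications.

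\paragraph{Soundness $\slbtleq \subseteq \intrleq$} This is where the genuine difficulty lies, and I would split it into three layers. (i) The pure-tree core is free: $\btleq \subseteq \intleq \subseteq \intrleq$. (ii) The new ingredient is that a single $\eta$-reduction improves, i.e.\ $\to_\eta \subseteq \intrleq$ (so $\comb{1} \intrleq \comb{I}$, and contracting any $\eta$-redex in any checkers context never increases, and generically decreases, the interaction count). Granting this, compatibility and transitivity of $\intrleq$ (\refcoro{requirements-for-inequational}) propagate it to every \emph{finite} $\eta$-reduction sequence between Böhm trees. (iii) The \emph{infinite} $\eta$-reductions permitted by $\slbtleq$ (e.g.\ $\comb{J} \slbtleq \comb{I}$) are recovered by a compactness argument: any terminating computation of $\monoToRed{\ctx}\ctxholep{\monoToBlue t}$ probes only a finite prefix of $\BT{t}$, so only finitely many of the infinitely many $\eta$-steps relating $\BT{t}$ and $\BT{u}$ are relevant to that computation, reducing the infinite case to layer (ii) through the coinductive definition of $\slbtleq$.

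\paragraph{The main obstacle} The crux is layer (ii): establishing $\to_\eta \subseteq \intrleq$ quantitatively, which the paper already flags as open. Two standard routes are blocked. Operationally, the usual $\eta$-postponement and commutation facts \emph{fail} in the checkers calculus (\refex{eta}: $\to_\eta$ neither postpones after nor commutes with $\tohint$, and adding $\eta$ breaks confluence), so one cannot simply permute $\eta$-steps out of an interaction reduction. Semantically, the checkers tight type system \emph{invalidates} $\eta$ (\refthm{ch-types-compatibility}(3)-(4)), so the soundness route of \refcoro{type-preorder-included-ctxc} cannot be reused verbatim. The most promising attack, in my view, is to replace the set-inclusion semantics by an \emph{index-non-increasing matching} between tight derivations: design an $\eta$-aware refinement in which $\eta$-\emph{expanding} a type is charged exactly one extra interaction unit via the interaction application rule, so that every tight typing of an expansion $t$ with index $k$ is mapped to a tight typing of its $\eta$-reduct $u$ with index $k' \le k$ while preserving typability. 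Proving that such a matching exists \emph{uniformly under all contexts}—rather than the exact-equality matching exploited in the tight characterization (\refth{tight-characterization})—is the real work, and precisely the reason the statement remains a conjecture.
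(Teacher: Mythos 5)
The statement you set out to prove is labeled a \emph{conjecture} in the paper, and it remains one: the paper offers no proof, only the partial result that the interaction B\"ohm-out of Lemma~\ref{lem:separatingeta} adapts from preorder to improvement, yielding $\intrleq\,\subseteq\,\slbtleq$ (hence $\btleq\,\subseteq\,\intrleq\,\subseteq\,\slbtleq$), together with the explicit observation that the whole problem reduces to showing $\slbtleq\,\subseteq\,\intrleq$. Your proposal reproduces exactly this state of affairs rather than resolving it. Your completeness half ($\intrleq\,\subseteq\,\slbtleq$ via a directed refinement of Lemma~\ref{lem:separatingeta}, tracking the sign of the $m>0$ discrepancy so that the more $\eta$-expanded side performs the extra interaction steps) coincides with the paper's claim that the B\"ohm-out ``smoothly adapts'', and is plausible. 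But your soundness half is not a proof: its load-bearing step, your layer~(ii) asserting $\toeta\,\subseteq\,\intrleq$, is precisely the open problem the paper flags, and you concede as much in your final paragraph. The two routes you (correctly) identify as blocked are the same ones the paper rules out: the rewriting route fails because $\eta$-postponement, commutation with $\tohint$, and confluence all break in the checkers calculus (Example~\ref{ex:eta}), and the semantic route fails because the checkers multi type system invalidates $\eta$ in both directions (Theorem~\ref{thm:checkers-type-compatibility}(3)--(4)). Your proposed ``index-non-increasing matching'' between tight derivations is a research direction, not an argument: no construction is given, and even its statement (what the matching preserves, how it behaves under the tight predicate, why it is uniform in all contexts) is left undefined.

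A second, independent gap sits in your layer~(iii). Even granting layer~(ii), reducing the infinitary $\eta$-reductions of $\slbtleq$ to finite ones requires a genuine factorization/approximation lemma relating the coinductive definition of $\slbtleq$, the $\le_\bot$-approximants of B\"ohm trees, and the head evaluation of $\monoToRed{\ctx}\ctxholep{\monoToBlue{t}}$. Compatibility and transitivity of $\intrleq$ (Corollary~\ref{coro:requirements-for-inequational}) only propagate finite $\eta$-chains between actual $\lambda$-terms, whereas the $\eta$-steps witnessing $t \slbtleq u$ act on possibly infinite trees that need not be the B\"ohm tree of any term, so your compactness sketch does not bridge that mismatch (compare the term $\comb{J}$, where infinitely many $\eta$-steps are genuinely needed and no finite truncation is itself $\eta$-related to $\comb{I}$ in the required way). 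In sum: your proposal maps the terrain correctly and its honest diagnosis agrees with the paper's own, but it does not prove the conjecture---the paper proves no such statement, and the two genuinely missing pieces, the quantitative $\eta$-soundness for $\intrleq$ and the infinitary factorization, are missing from your attempt as well.
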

\citet{BreuvartMR18} prove that the preorder $\slbtleq$ coincides with the preorder $\leqtypepl$ induced by (plain) multi types (\refdef{relsem}), which validates $\eta$-reduction (\refthm{standardprops}.\ref{thm:standardprops3}). As already mentioned, the difficulty for proving the conjecture is managing $\eta$-reduction in the checkers framework, since both rewriting techniques and checkers multi types fail to handle it. Actually, the variant of \bohm out technique of Section \ref{sec:bohmout} smoothly adapts from preorder to improvement, giving $\intrleq\,\subseteq\,\slbtleq$, that is, $\btleq\,\subseteq\,\intrleq\,\subseteq\,\slbtleq$. Therefore, conjecture \ref{conj:improvement} reduces to prove $\slbtleq\,\subseteq\,\intrleq$.

	\paragraph{White Contexts Are Enough} In the proof of the completeness theorem (\refth{intleq-included-in-bohm}), black terms are separated using only white contexts. The next corollary guarantees that, for the interaction preorder, white contexts are as discriminating as general checkers contexts. Such a strong fact validates the intuition that the interaction preorder amounts to consider the program and the  context as \emph{different} players.
	
	\begin{corollary}[The interaction preorder can be restricted to white contexts]
	Let $\tm,\tmtwo\in\Lambda$ and the preorder $\tm \wintleq \tmtwo$ be defined as: for all ordinary contexts $\ctx\in\Cctx$, if there exists $k$ such that $\monoToRed{\ctx}\ctxholep{\monoToBlue{\tm}}\bshcol{k}$ then $\monoToRed{\ctx}\ctxholep{\monoToBlue{\tmtwo}}\bshcol{k}$. Then, $\tm\intleq\tmtwo$ if and only if $\tm\wintleq\tmtwo$.
	\end{corollary}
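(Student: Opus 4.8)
The plan is to establish the two inclusions $\intleq\,\subseteq\,\wintleq$ and $\wintleq\,\subseteq\,\intleq$ separately. The first is immediate, and the second is essentially a repackaging of the completeness proof (\refth{intleq-included-in-bohm}), whose separating witnesses turn out to always be white. For $\intleq\,\subseteq\,\wintleq$, I would simply observe that, by the Player Lifting definition, every white-painted ordinary context $\monoToRed\ctx$ with $\ctx\in\Cctx$ is itself a checkers context, i.e.\ $\monoToRed\ctx\in\chcontexts$. Hence the clause defining $\wintleq$ ranges over a subclass of the contexts appearing in $\monoToBlue\tm\leqchcol\monoToBlue\tmtwo$, and $\tm\intleq\tmtwo$ trivially restricts to $\tm\wintleq\tmtwo$ with no extra work.

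The substance lies in $\wintleq\,\subseteq\,\intleq$, which I would prove contrapositively. Assume $\tm\not\intleq\tmtwo$; by the characterization theorem (\refthm{final}) we have $\intleq\,=\,\btleq$, so $\tm\not\btleq\tmtwo$. The key observation is that the completeness proof already constructs a \emph{white} separating context in each of its two cases. If $\tm\not\etabtle\tmtwo$, then Theorem~\ref{thm:HW76} together with \refprop{ordinary-reductions-can-happen-in-checkers} yields an ordinary context $\ctx$ with $\monoToRed\ctx\ctxholep{\monoToBlue\tm}\bshcols$ and $\monoToRed\ctx\ctxholep{\monoToBlue\tmtwo}\bshcoldiv$. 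If instead $\tm\etabtle\tmtwo$, the interaction \bohm-out lemma (Lemma~\ref{lem:separatingeta}) yields an ordinary context $\ctx$ with $\monoToRed\ctx\ctxholep{\monoToBlue\tm}\bshcol{i}$ and $\monoToRed\ctx\ctxholep{\monoToBlue\tmtwo}\bshcol{i'}$ for some $i\neq i'$. In both cases the witness is the white lifting $\monoToRed\ctx$ of an ordinary context, so it is admissible for $\wintleq$.

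It then remains to check that this white context refutes $\wintleq$, and to flag the one delicate point. In the first case, picking the $k$ for which $\monoToRed\ctx\ctxholep{\monoToBlue\tm}\bshcol{k}$, the term $\monoToRed\ctx\ctxholep{\monoToBlue\tmtwo}$ fails to head normalize at all, so a fortiori $\monoToRed\ctx\ctxholep{\monoToBlue\tmtwo}\bshcol{k}$ is false. In the second case, taking $k=i$, determinism of $\tohch$ makes the interaction index unique, so $\monoToRed\ctx\ctxholep{\monoToBlue\tmtwo}\bshcol{i}$ is false because the term reduces with index $i'\neq i$. Either way $\tm\not\wintleq\tmtwo$, closing the contrapositive. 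I expect no genuine obstacle, since all the difficult work lives in \refth{intleq-included-in-bohm}; the only step requiring care is this last bookkeeping, where one must invoke determinism of checkers head reduction (equivalently, the well-definedness of the index in the big-step notation $\bshcol{k}$) to ensure that $i'\neq i$ really makes $\bshcol{i}$ fail for $\monoToBlue\tmtwo$, rather than merely providing a second admissible index.
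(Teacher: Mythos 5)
Your proposal is correct and follows essentially the same route as the paper: the inclusion $\intleq\,\subseteq\,\wintleq$ is the same trivial observation that $\monoToRed{\ctx}$ is a checkers context, and your contrapositive argument for $\wintleq\,\subseteq\,\intleq$ is exactly the paper's chain $\wintleq\,\subseteq\,\btleq\,\subseteq\,\intleq$ read backwards, resting on the same key point that the separating contexts in the completeness theorem (Thm.~\ref{th:intleq-included-in-bohm}) are white liftings of ordinary contexts, combined with Thm.~\ref{thm:final}. Your final remark on determinism of $\tohch$ (well-definedness of the index in $\bshcol{k}$) is a correct and welcome piece of bookkeeping that the paper leaves implicit.
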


\begin{proof}
\refth{intleq-included-in-bohm} shows that $\tm\wintleq\tmtwo$ implies $\tm\btpre\tmtwo$. \refthm{final} shows that $\tm\btpre\tmtwo$ implies $\tm\intleq\tmtwo$. Clearly $\tm\intleq\tmtwo$ implies $\tm\wintleq\tmtwo$, as $\monoToRed{\ctx}$ is a checkers context for any $\ctx\in\Cctx$. \qedhere
\end{proof}

% !TEX spellcheck = en-US
% !TEX root = ../main.tex
\section{Related Work}\label{sec:relatedworks}
\paragraph{Improvements} Improvements \cite{SandsImprovementTheory,SandsToplas,DBLP:journals/tcs/Sands96} were developed in the '90s by Sands and co-authors to prove that various program transformations are time or space improvements \cite{DBLP:journals/tcs/Sands96,DBLP:conf/popl/MoranS99,DBLP:conf/icfp/GustavssonS01}, in the context of call-by-need evaluation, and have seen a revival in recent years \cite{DBLP:conf/icfp/HackettH14,DBLP:conf/lics/HackettH15,DBLP:journals/pacmpl/HackettH18,DBLP:conf/ppdp/Schmidt-Schauss18,DBLP:conf/sas/RielyP00,DBLP:conf/flops/MuroyaH24}.

\paragraph{Observational Equivalences and Trees} The characterization of observational equivalences in terms of equalities on trees originates in \cite{Hyland75,Hyland76,Wadsworth76}, and is an ongoing line of research whose state of the art is presented in \cite{IntrigilaMP19}. The question appears in the TLCA list of open problems \cite{Dezani01}. This paper contributes to this line of research, somewhat backwards: we introduce a new observational equivalence that matches a known equality on trees, namely non-extensional \bohm tree equality $\cB$.

\paragraph{\bohm Tree Equality} For the specific case of $\cB$, the literature already presents some corresponding observational equivalences. As hinted at in the introduction, however, they are \emph{partial} answers since they all extend the $\l$-calculus with computational primitives, or embed it into process algebras:
\begin{itemize}
\item \citet{Dezani98} add numerals, tests, and a non-deterministic choice operator, and show that $\cB$ corresponds to \emph{may convergence to a natural number} in their calculus. 
\item In the slightly different setting of \emph{weak} head reduction, \citet{DBLP:journals/iandc/Sangiorgi94} shows that \levy-Longo tree equality (the weak variant of $\cB$) corresponds to various contextual equivalences all obtained via extended settings, namely the $\pi$-calculus, $\l$-calculus with non-determinism, and $\l$-calculus with well-formed operators. Similarly, \citeauthor{DEZANICIANCAGLINI1999153} use another non-deterministic $\lambda$-calculus \citeyearpar{DEZANICIANCAGLINI1999153}, and \citet{BOUDOL199683} use a $\lambda$-calculus with modified syntax and rewriting rules where---crucially---terms can get stuck.
\item Recently, \citet{DBLP:conf/lics/SakayoriS23} provide a characterization of $\cB$ via the $\pi$-calculus.
\end{itemize}
In this paper, we do modify the $\l$-calculus via the checkers calculus, but we do \emph{not} add extra features, we only refine the analysis of $\beta$-reduction in a quantitative and interactive way.

%\paragraph{The $\lambda$-Calculus with Multiplicities/Resource Calculus.}
%Resource-aware calculi have been developed for the lazy $\lambda$-calculus \cite{BOUDOL199683,Boudol_resource} and for the strong case \cite{BucciarelliCEM12}. In these calculi one needs to explicitly specify how many times an argument can be used in a computation.
%The evaluation process fails whenever the specified amount of arguments does not match the behavior of the function. In the checkers calculus programs never fail and the resource consumption is only inspected in the definition of the interaction preorder.

\paragraph{Clocked $\l$-Calculus} The idea of discriminating \lam-terms by counting the head reduction steps in the construction of their B\"ohm trees also underlies the \emph{clocked \lam-calculus} of \citet{EndrullisHKP14,EndrullisHKP17}. The analogy ends there, as the clocked \lam-calculus does not allow one to separate the internal/silent steps of terms and contexts from the external/interaction steps between the two. 

\paragraph{Game Semantics} Our notion of interaction is inspired by the one between Player and Opponent in game semantics \cite{DBLP:journals/iandc/HylandO00,DBLP:journals/iandc/AbramskyJM00}---see \citet{Clairambault24} for the state of the art---in particular, silent steps are inspired by the hiding mechanism for composing games. We end up, however, with what seems to be a different setting. While the study of the relationship is left to future work, our interaction seems more basic (no need of the technical apparatus of game semantics), perfectly symmetrical, and allows for arbitrary shufflings of the two players in checkers terms, rather than keeping them apart as in game semantics. Moreover, with a few exceptions---namely, \cite{DBLP:conf/icalp/OngG02,DBLP:journals/tcs/KerNO03}---game models usually validate $\eta$-equivalence, while interaction equivalence does not. Notably, the game model by \citet{DBLP:journals/tcs/KerNO03} captures exactly $\cB$. %It would be interesting to see if it can be naturally connected to our checkers calculus.

\paragraph{Multi Types} The quantitative analysis of the relational semantics via multi types was pioneered by \citet{DBLP:journals/mscs/Carvalho18}. Then \citet{DBLP:journals/tcs/CarvalhoPF11,DBLP:conf/fossacs/BernadetL11} and especially \citet{DBLP:journals/jfp/AccattoliGK20} developed and extended that approach, that has been applied to a variety of settings, including classical logic \cite{DBLP:conf/lics/KesnerV20}, the probabilistic $\l$-calculus \cite{DBLP:journals/pacmpl/LagoFR21}, reasonable space \cite{DBLP:journals/pacmpl/AccattoliLV22}, and Bayesian networks \cite{DBLP:journals/pacmpl/FaggianPV24}.%, and monadic effects \cite{DBLP:conf/esop/GavazzoTV24}.

\paragraph{Variants of Relational Semantics} Relational semantics is generalized along several directions by \citet{LairdMMP13,GrelloisM15,Ong17}.
\citeauthor{LairdMMP13}'s generalization of relations $r\subseteq A\times B$ from $r:A\times B\to \mathrm{Bool}$ to $r:A\times B\to \rel$ \citeyearpar{LairdMMP13}, where $\rel$ is a continuous semi-ring, might be used---in principle---to count interaction steps. The exact meaning of the coefficients in the interpretation of programs, however, is well-understood only at ground types, and remains unclear in the untyped case.
Any temptation of seeing our checkers type system as a dichromatic version of \citeauthor{GrelloisM15}'s \emph{colored linear logic} \citeyearpar{GrelloisM15} should be avoided: in their work, colors need to correctly ``match'' in a type derivation because they represent different levels of priority.

\paragraph{Cost-Aware Denotational Semantics} 
Another line of research aims at defining cost-sensitive denotational semantics based on \emph{sized} domains
\cite{DBLP:journals/pacmpl/KavvosMLD20,DBLP:journals/jfp/DannerL22}. These models are used to interpret a syntactic recurrence extracted from a given program, and prove a bounding theorem about such extraction. 
\citet{DBLP:conf/lics/Niu023} propose a synthetic language for cost-aware denotational semantics, endowed with phase-separated constructions of intensional and extensional computations. These approaches are designed for typed languages with constants, where observations are made at ground type, but they are hardly generalizable to the untyped case. 

There also are cost-aware game models such as \citeauthor{DBLP:conf/popl/Ghica05}'s slot games \citeyearpar{DBLP:conf/popl/Ghica05}---which capture Sands' improvements---and \citeauthor{DBLP:conf/fossacs/AlcoleiCL19}'s resource-tracking concurrent games \citeyearpar{DBLP:conf/fossacs/AlcoleiCL19}. The relationship between these concurrency-driven models and our approach is deferred to future investigation.
% !TEX spellcheck = en-US
% !TEX root = ../main.tex
\section{Future Work}
\label{sect:future-work}

%Our work opens various avenues for future research.

%\paragraph{Our Own Conjecture} First of all, trying to prove Conjecture \ref{conj:conjecture-int-improvement} about the explicit characterization of interaction improvements.

%\paragraph{Full $\beta$} For contextual equivalence, observing termination of $\beta$ (rather than head reduction) changes the equational theory to \bohm tree equality up to \emph{finite} (rather than \emph{possibly infinite}) $\eta$-conversion. What happens to interaction equivalence?

\paragraph{Weak Head.} Our study focuses on the paradigmatic case of head reduction, but it could be adapted to weak head reduction (sometimes called \emph{lazy} reduction \cite{DBLP:journals/iandc/AbramskyO93}). It is folklore that the Lévy-Longo tree preorder (a weak variant of \bohm trees) matches the weak head type preorder. We conjecture that the Lévy-Longo tree preorder  matches exactly the weak head variant of our interaction preorder.
Most of our study would adapt smoothly, but for the interaction \bohm out. 
The culprit is that there is no analogous of \bohm separation theorem for weak head reduction. 
There exist separation results but they all involve extensions of the $\l$-calculus (see above among related works). Crafting a weak interaction \bohm out might require some work.

\paragraph{Call-by-Value} In call-by-name, the gap between head normal form bisimilarity (aka interaction equivalence), and contextual equivalence is, roughly, extensionality. In call-by-value, the gap contains more computational principles, as stressed in particular by the recent works \cite{DBLP:journals/corr/abs-2303-08161,DBLP:conf/fscd/AccattoliL24}. It would be interesting to adapt interaction equivalence to call-by-value and explore whether variants of the definition catch theories in between.

\paragraph{Back to Improvements} Now that there is a notion of equational improvement, it is natural to adapt it to call-by-need and compare / revisit / extend the results about Sands' (non-equational) improvements and program transformations in the literature \cite{DBLP:journals/tcs/Sands96,DBLP:conf/popl/MoranS99,DBLP:conf/icfp/GustavssonS01,DBLP:conf/icfp/HackettH14,DBLP:conf/lics/HackettH15,DBLP:journals/pacmpl/HackettH18,DBLP:conf/ppdp/Schmidt-Schauss18}.

\paragraph{Game Semantics} To relate interaction equivalence with game semantics, it is natural to look at the \bohm tree game model by \citet{DBLP:journals/tcs/KerNO03}. Operational game semantics \cite{LevyLICSGames,jaber_et_al-games,DBLP:conf/icalp/Laird07}, a sub-area of game semantics based on labeled transition systems, is another natural candidate. More generally, our work provides a good justification for a systematic exploration of non-extensional game semantics.

\paragraph{PCF} Game semantics was introduced to capture denotationally \textsf{PCF} contextual equivalence, which is obtained via a quotient on games \cite{DBLP:journals/iandc/HylandO00,DBLP:journals/iandc/AbramskyJM00}. Is there a relationship between game models \emph{before the quotient} and \textsf{PCF} interaction equivalence?

\paragraph{Higher-Order Model Checking} Higher-order model checking is strongly connected to game semantics \cite{DBLP:conf/lics/Ong06}, intersection/multi types \cite{DBLP:conf/popl/Kobayashi09}, and \bohm trees \cite{DBLP:conf/fsttcs/ClairambaultM13}. It is reasonable to expect a connection with interaction equivalence.
 
\paragraph{Complete Normal Form Bisimilarities} In some $\l$-calculi with effects, normal form bisimilarities are complete for contextual equivalences \cite{DBLP:conf/birthday/StovringL09,DBLP:conf/fossacs/BiernackiLP19}. We conjecture that, therein, contextual equivalence and interaction equivalence coincide.

\paragraph{Interaction Cost} It is natural to wonder what kind of interaction cost emerges from our study and how it relates to both the actual cost of computation and interaction (in)equivalence. 
The characterization of our interaction improvement in terms of \bohm trees reveals that one can improve the interactive cost of a \lam-term by performing $\eta$-reductions, or by replacing some head-diverging sub-term—that is, an idle looping execution branch—with a non-looping one. 
Beyond \bohm tree characterizations, this may also relate to the length of interaction sequences in game semantics for the untyped call-by-name \lam-calculus, where interaction improvements could reflect optimizations in communication between player and opponent.

We believe that our interaction semantics can also effectively capture communication costs in the evaluation of functional programs within distributed environments. 
In this framework, black and white terms correspond to processes running on different machines, with interaction steps modeling inter-process communication, while locally executable steps remain silent.
To formalize this idea, we shall first adapt our interaction relations to a process calculus, and then investigate how the resulting improvements in interaction relates with measures of communication complexity.
%\paragraph{Semantic Bounds}
%, without also adding $\eta$-equivalence.  
%\paragraph{Future Work (for the conclusion)} A direct further investigation is to prove Conjecture \ref{conj:conjecture-int-improvement}, relating the interaction improvement and relational semantics. It would also be of interest to develop the interaction preorder for the Call-by-Value $\l$-calculus (or Call-by-Need evaluation) where duplicating sub-terms is not always allowed for efficiency purposes, but somehow makes its way inside contextual equivalence \adr{[cite Call-by-Silly? CbV program equivalences]}. 

% !TEX spellcheck = en-US
% !TEX root = ../main.tex
\section{Conclusions}
\label{sect:conclusions}
Our work stems from the recognition of a tension between the equational aspect of contextual equivalences and the desire to observe the time cost of programs, expressed as the number of evaluation steps. We solve the tension by introducing the \emph{checkers calculus}, a $\l$-calculus where the internal and external aspects of computation receive a first-class status. The new setting is then used to define \emph{interaction equivalence}, which induces an equational theory for the ordinary $\l$-calculus: a very well-known one,  namely the equality $\cB$ of \bohm trees without $\eta$.

Beyond the technical aspects, the main takeaway is probably the framework, which is considerably simpler than other theories of interaction such as game semantics or the geometry of interaction, and not \emph{ad-hoc}, as witnessed by the relationship with $\cB$ and with multi types.

\begin{acks}
The authors would like to thank Guilhem Jaber for many discussions about operational game semantics for the $\l$-calculus, which eventually led us to the idea of tagging terms to distinguish between silent and interaction steps.
\end{acks}
%%
%% The next two lines define the bibliography style to be used, and
%% the bibliography file.
%\bibliographystyle{ACM-Reference-Format}
\bibliographystyle{ACM-Reference-Format}
\bibliography{main.bib}

%%% -*-BibTeX-*-
%%% Do NOT edit. File created by BibTeX with style
%%% ACM-Reference-Format-Journals [18-Jan-2012].

\begin{thebibliography}{72}

%%% ====================================================================
%%% NOTE TO THE USER: you can override these defaults by providing
%%% customized versions of any of these macros before the \bibliography
%%% command.  Each of them MUST provide its own final punctuation,
%%% except for \shownote{}, \showDOI{}, and \showURL{}.  The latter two
%%% do not use final punctuation, in order to avoid confusing it with
%%% the Web address.
%%%
%%% To suppress output of a particular field, define its macro to expand
%%% to an empty string, or better, \unskip, like this:
%%%
%%% \newcommand{\showDOI}[1]{\unskip}   % LaTeX syntax
%%%
%%% \def \showDOI #1{\unskip}           % plain TeX syntax
%%%
%%% ====================================================================

\ifx \showCODEN    \undefined \def \showCODEN     #1{\unskip}     \fi
\ifx \showDOI      \undefined \def \showDOI       #1{#1}\fi
\ifx \showISBNx    \undefined \def \showISBNx     #1{\unskip}     \fi
\ifx \showISBNxiii \undefined \def \showISBNxiii  #1{\unskip}     \fi
\ifx \showISSN     \undefined \def \showISSN      #1{\unskip}     \fi
\ifx \showLCCN     \undefined \def \showLCCN      #1{\unskip}     \fi
\ifx \shownote     \undefined \def \shownote      #1{#1}          \fi
\ifx \showarticletitle \undefined \def \showarticletitle #1{#1}   \fi
\ifx \showURL      \undefined \def \showURL       {\relax}        \fi
% The following commands are used for tagged output and should be
% invisible to TeX
\providecommand\bibfield[2]{#2}
\providecommand\bibinfo[2]{#2}
\providecommand\natexlab[1]{#1}
\providecommand\showeprint[2][]{arXiv:#2}

\bibitem[Abramsky et~al\mbox{.}(2000)]%
        {DBLP:journals/iandc/AbramskyJM00}
\bibfield{author}{\bibinfo{person}{Samson Abramsky}, \bibinfo{person}{Radha
  Jagadeesan}, {and} \bibinfo{person}{Pasquale Malacaria}.}
  \bibinfo{year}{2000}\natexlab{}.
\newblock \showarticletitle{Full Abstraction for {PCF}}.
\newblock \bibinfo{journal}{\emph{Inf. Comput.}} \bibinfo{volume}{163},
  \bibinfo{number}{2} (\bibinfo{year}{2000}), \bibinfo{pages}{409--470}.
\newblock
\urldef\tempurl%
\url{https://doi.org/10.1006/INCO.2000.2930}
\showDOI{\tempurl}


\bibitem[Abramsky and Ong(1993)]%
        {DBLP:journals/iandc/AbramskyO93}
\bibfield{author}{\bibinfo{person}{Samson Abramsky} {and}
  \bibinfo{person}{C.{-}H.~Luke Ong}.} \bibinfo{year}{1993}\natexlab{}.
\newblock \showarticletitle{Full Abstraction in the Lazy Lambda Calculus}.
\newblock \bibinfo{journal}{\emph{Inf. Comput.}} \bibinfo{volume}{105},
  \bibinfo{number}{2} (\bibinfo{year}{1993}), \bibinfo{pages}{159--267}.
\newblock
\urldef\tempurl%
\url{https://doi.org/10.1006/INCO.1993.1044}
\showDOI{\tempurl}


\bibitem[Accattoli et~al\mbox{.}(2022)]%
        {DBLP:journals/pacmpl/AccattoliLV22}
\bibfield{author}{\bibinfo{person}{Beniamino Accattoli}, \bibinfo{person}{Ugo
  Dal~Lago}, {and} \bibinfo{person}{Gabriele Vanoni}.}
  \bibinfo{year}{2022}\natexlab{}.
\newblock \showarticletitle{Multi types and reasonable space}.
\newblock \bibinfo{journal}{\emph{Proc. {ACM} Program. Lang.}}
  \bibinfo{volume}{6}, \bibinfo{number}{{ICFP}} (\bibinfo{year}{2022}),
  \bibinfo{pages}{799--825}.
\newblock
\urldef\tempurl%
\url{https://doi.org/10.1145/3547650}
\showDOI{\tempurl}


\bibitem[Accattoli et~al\mbox{.}(2019)]%
        {DBLP:conf/aplas/AccattoliFG19}
\bibfield{author}{\bibinfo{person}{Beniamino Accattoli},
  \bibinfo{person}{Claudia Faggian}, {and} \bibinfo{person}{Giulio Guerrieri}.}
  \bibinfo{year}{2019}\natexlab{}.
\newblock \showarticletitle{Factorization and Normalization, Essentially}. In
  \bibinfo{booktitle}{\emph{Programming Languages and Systems - 17th Asian
  Symposium, {APLAS} 2019, Nusa Dua, Bali, Indonesia, December 1-4, 2019,
  Proceedings}} \emph{(\bibinfo{series}{Lecture Notes in Computer Science},
  Vol.~\bibinfo{volume}{11893})},
  \bibfield{editor}{\bibinfo{person}{Anthony~Widjaja Lin}} (Ed.).
  \bibinfo{publisher}{Springer}, \bibinfo{pages}{159--180}.
\newblock
\urldef\tempurl%
\url{https://doi.org/10.1007/978-3-030-34175-6\_9}
\showDOI{\tempurl}


\bibitem[Accattoli et~al\mbox{.}(2023)]%
        {DBLP:journals/corr/abs-2303-08161}
\bibfield{author}{\bibinfo{person}{Beniamino Accattoli},
  \bibinfo{person}{Claudia Faggian}, {and} \bibinfo{person}{Adrienne
  Lancelot}.} \bibinfo{year}{2023}\natexlab{}.
\newblock \showarticletitle{Normal Form Bisimulations By Value}.
\newblock \bibinfo{journal}{\emph{CoRR}}  \bibinfo{volume}{abs/2303.08161}
  (\bibinfo{year}{2023}).
\newblock
\urldef\tempurl%
\url{https://doi.org/10.48550/ARXIV.2303.08161}
\showDOI{\tempurl}
\showeprint[arXiv]{2303.08161}


\bibitem[Accattoli et~al\mbox{.}(2020)]%
        {DBLP:journals/jfp/AccattoliGK20}
\bibfield{author}{\bibinfo{person}{Beniamino Accattoli},
  \bibinfo{person}{St{\'{e}}phane Graham{-}Lengrand}, {and}
  \bibinfo{person}{Delia Kesner}.} \bibinfo{year}{2020}\natexlab{}.
\newblock \showarticletitle{Tight typings and split bounds, fully developed}.
\newblock \bibinfo{journal}{\emph{J. Funct. Program.}}  \bibinfo{volume}{30}
  (\bibinfo{year}{2020}), \bibinfo{pages}{e14}.
\newblock
\urldef\tempurl%
\url{https://doi.org/10.1017/S095679682000012X}
\showDOI{\tempurl}


\bibitem[Accattoli and Lancelot(2024)]%
        {DBLP:conf/fscd/AccattoliL24}
\bibfield{author}{\bibinfo{person}{Beniamino Accattoli} {and}
  \bibinfo{person}{Adrienne Lancelot}.} \bibinfo{year}{2024}\natexlab{}.
\newblock \showarticletitle{Mirroring Call-By-Need, or Values Acting Silly}. In
  \bibinfo{booktitle}{\emph{9th International Conference on Formal Structures
  for Computation and Deduction, {FSCD} 2024, July 10-13, 2024, Tallinn,
  Estonia}} \emph{(\bibinfo{series}{LIPIcs}, Vol.~\bibinfo{volume}{299})},
  \bibfield{editor}{\bibinfo{person}{Jakob Rehof}} (Ed.).
  \bibinfo{publisher}{Schloss Dagstuhl - Leibniz-Zentrum f{\"{u}}r Informatik},
  \bibinfo{pages}{23:1--23:24}.
\newblock
\urldef\tempurl%
\url{https://doi.org/10.4230/LIPICS.FSCD.2024.23}
\showDOI{\tempurl}


\bibitem[Aczel(1978)]%
        {AczelHO}
\bibfield{author}{\bibinfo{person}{Peter Aczel}.}
  \bibinfo{year}{1978}\natexlab{}.
\newblock \bibinfo{booktitle}{\emph{A General Church-Rosser Theorem}}.
\newblock \bibinfo{type}{{T}echnical {R}eport}.
  \bibinfo{institution}{University of Manchester}.
\newblock


\bibitem[Alcolei et~al\mbox{.}(2019)]%
        {DBLP:conf/fossacs/AlcoleiCL19}
\bibfield{author}{\bibinfo{person}{Aurore Alcolei}, \bibinfo{person}{Pierre
  Clairambault}, {and} \bibinfo{person}{Olivier Laurent}.}
  \bibinfo{year}{2019}\natexlab{}.
\newblock \showarticletitle{Resource-Tracking Concurrent Games}. In
  \bibinfo{booktitle}{\emph{Foundations of Software Science and Computation
  Structures - 22nd International Conference, {FOSSACS} 2019, Held as Part of
  the European Joint Conferences on Theory and Practice of Software, {ETAPS}
  2019, Prague, Czech Republic, April 6-11, 2019, Proceedings}}
  \emph{(\bibinfo{series}{Lecture Notes in Computer Science},
  Vol.~\bibinfo{volume}{11425})}, \bibfield{editor}{\bibinfo{person}{Mikolaj
  Bojanczyk} {and} \bibinfo{person}{Alex Simpson}} (Eds.).
  \bibinfo{publisher}{Springer}, \bibinfo{pages}{27--44}.
\newblock
\urldef\tempurl%
\url{https://doi.org/10.1007/978-3-030-17127-8\_2}
\showDOI{\tempurl}


\bibitem[Barendregt(1977)]%
        {Bare77}
\bibfield{author}{\bibinfo{person}{Henk Barendregt}.}
  \bibinfo{year}{1977}\natexlab{}.
\newblock \showarticletitle{The Type Free Lambda Calculus}.
\newblock In \bibinfo{booktitle}{\emph{Handbook of Mathematical Logic}},
  \bibfield{editor}{\bibinfo{person}{Jon Barwise}} (Ed.).
  \bibinfo{series}{Studies in Logic and the Foundations of Mathematics},
  Vol.~\bibinfo{volume}{90}. \bibinfo{publisher}{Elsevier},
  \bibinfo{pages}{1091 -- 1132}.
\newblock


\bibitem[Barendregt(1984)]%
        {Barendregt84}
\bibfield{author}{\bibinfo{person}{Henk Barendregt}.}
  \bibinfo{year}{1984}\natexlab{}.
\newblock \bibinfo{booktitle}{\emph{The Lambda Calculus -- Its Syntax and
  Semantics}}. \bibinfo{series}{Studies in logic and the foundations of
  mathematics}, Vol.~\bibinfo{volume}{103}.
\newblock \bibinfo{publisher}{North-Holland}.
\newblock
\showISBNx{9780444867483}
\showLCCN{84005966}


\bibitem[Barendregt and Manzonetto(2022)]%
        {BarendregtM22}
\bibfield{author}{\bibinfo{person}{Henk Barendregt} {and}
  \bibinfo{person}{Giulio Manzonetto}.} \bibinfo{year}{2022}\natexlab{}.
\newblock \bibinfo{booktitle}{\emph{A Lambda Calculus Satellite}}.
\newblock \bibinfo{publisher}{College Publications}.
\newblock
\showISBNx{978-1-84890-415-6}
\urldef\tempurl%
\url{https://www.collegepublications.co.uk/logic/mlf/?00035}
\showURL{%
\tempurl}


\bibitem[Bernadet and Lengrand(2011)]%
        {DBLP:conf/fossacs/BernadetL11}
\bibfield{author}{\bibinfo{person}{Alexis Bernadet} {and}
  \bibinfo{person}{St{\'{e}}phane Lengrand}.} \bibinfo{year}{2011}\natexlab{}.
\newblock \showarticletitle{Complexity of Strongly Normalising
  \emph{{\(\lambda\)}}-Terms via Non-idempotent Intersection Types}. In
  \bibinfo{booktitle}{\emph{Foundations of Software Science and Computational
  Structures - 14th International Conference, {FOSSACS} 2011, Held as Part of
  the Joint European Conferences on Theory and Practice of Software, {ETAPS}
  2011, Saarbr{\"{u}}cken, Germany, March 26-April 3, 2011. Proceedings}}
  \emph{(\bibinfo{series}{Lecture Notes in Computer Science},
  Vol.~\bibinfo{volume}{6604})}, \bibfield{editor}{\bibinfo{person}{Martin
  Hofmann}} (Ed.). \bibinfo{publisher}{Springer}, \bibinfo{pages}{88--107}.
\newblock
\urldef\tempurl%
\url{https://doi.org/10.1007/978-3-642-19805-2\_7}
\showDOI{\tempurl}


\bibitem[Biernacki et~al\mbox{.}(2019)]%
        {DBLP:conf/fossacs/BiernackiLP19}
\bibfield{author}{\bibinfo{person}{Dariusz Biernacki},
  \bibinfo{person}{Sergue{\"{\i}} Lenglet}, {and} \bibinfo{person}{Piotr
  Polesiuk}.} \bibinfo{year}{2019}\natexlab{}.
\newblock \showarticletitle{A Complete Normal-Form Bisimilarity for State}. In
  \bibinfo{booktitle}{\emph{Foundations of Software Science and Computation
  Structures - 22nd International Conference, {FOSSACS} 2019, Held as Part of
  the European Joint Conferences on Theory and Practice of Software, {ETAPS}
  2019, Prague, Czech Republic, April 6-11, 2019, Proceedings}}
  \emph{(\bibinfo{series}{Lecture Notes in Computer Science},
  Vol.~\bibinfo{volume}{11425})}, \bibfield{editor}{\bibinfo{person}{Mikolaj
  Bojanczyk} {and} \bibinfo{person}{Alex Simpson}} (Eds.).
  \bibinfo{publisher}{Springer}, \bibinfo{pages}{98--114}.
\newblock
\urldef\tempurl%
\url{https://doi.org/10.1007/978-3-030-17127-8\_6}
\showDOI{\tempurl}


\bibitem[B{\"{o}}hm(1968)]%
        {Boehm68}
\bibfield{author}{\bibinfo{person}{Corrado B{\"{o}}hm}.}
  \bibinfo{year}{1968}\natexlab{}.
\newblock \showarticletitle{Alcune propriet\`a delle forme
  $\beta$-$\eta$-normali nel $\lambda$-{$K$}-calcolo}.
\newblock \bibinfo{journal}{\emph{Pubblicazioni dell'istituto per le
  applicazioni del calcolo}}  \bibinfo{volume}{696} (\bibinfo{year}{1968}),
  \bibinfo{pages}{1--19}.
\newblock
\newblock
\shownote{Lavoro eseguito all'INAC}.


\bibitem[Boudol and Laneve(1996)]%
        {BOUDOL199683}
\bibfield{author}{\bibinfo{person}{Gérard Boudol} {and}
  \bibinfo{person}{Cosimo Laneve}.} \bibinfo{year}{1996}\natexlab{}.
\newblock \showarticletitle{The Discriminating Power of Multiplicities in the
  $\lambda$-Calculus}.
\newblock \bibinfo{journal}{\emph{Information and Computation}}
  \bibinfo{volume}{126}, \bibinfo{number}{1} (\bibinfo{year}{1996}),
  \bibinfo{pages}{83--102}.
\newblock
\showISSN{0890-5401}
\urldef\tempurl%
\url{https://doi.org/10.1006/inco.1996.0037}
\showDOI{\tempurl}


\bibitem[Breuvart et~al\mbox{.}(2018)]%
        {BreuvartMR18}
\bibfield{author}{\bibinfo{person}{Flavien Breuvart}, \bibinfo{person}{Giulio
  Manzonetto}, {and} \bibinfo{person}{Domenico Ruoppolo}.}
  \bibinfo{year}{2018}\natexlab{}.
\newblock \showarticletitle{Relational Graph Models at Work}.
\newblock \bibinfo{journal}{\emph{Log. Methods Comput. Sci.}}
  \bibinfo{volume}{14}, \bibinfo{number}{3} (\bibinfo{year}{2018}).
\newblock
\urldef\tempurl%
\url{https://doi.org/10.23638/LMCS-14(3:2)2018}
\showDOI{\tempurl}


\bibitem[Bucciarelli et~al\mbox{.}(2017)]%
        {BKV17}
\bibfield{author}{\bibinfo{person}{Antonio Bucciarelli}, \bibinfo{person}{Delia
  Kesner}, {and} \bibinfo{person}{Daniel Ventura}.}
  \bibinfo{year}{2017}\natexlab{}.
\newblock \showarticletitle{Non-idempotent intersection types for the
  Lambda-Calculus}.
\newblock \bibinfo{journal}{\emph{Log. J. {IGPL}}} \bibinfo{volume}{25},
  \bibinfo{number}{4} (\bibinfo{year}{2017}), \bibinfo{pages}{431--464}.
\newblock
\urldef\tempurl%
\url{https://doi.org/10.1093/JIGPAL/JZX018}
\showDOI{\tempurl}


\bibitem[Clairambault(2024)]%
        {Clairambault24}
\bibfield{author}{\bibinfo{person}{Pierre Clairambault}.}
  \bibinfo{year}{2024}\natexlab{}.
\newblock \bibinfo{booktitle}{\emph{Causal Investigations in Interactive
  Semantics}}.
\newblock
\urldef\tempurl%
\url{https://tel.archives-ouvertes.fr/tel-04523273}
\showURL{%
\tempurl}


\bibitem[Clairambault and Murawski(2013)]%
        {DBLP:conf/fsttcs/ClairambaultM13}
\bibfield{author}{\bibinfo{person}{Pierre Clairambault} {and}
  \bibinfo{person}{Andrzej~S. Murawski}.} \bibinfo{year}{2013}\natexlab{}.
\newblock \showarticletitle{B{\"{o}}hm Trees as Higher-Order Recursive
  Schemes}. In \bibinfo{booktitle}{\emph{{IARCS} Annual Conference on
  Foundations of Software Technology and Theoretical Computer Science, {FSTTCS}
  2013, December 12-14, 2013, Guwahati, India}}
  \emph{(\bibinfo{series}{LIPIcs}, Vol.~\bibinfo{volume}{24})},
  \bibfield{editor}{\bibinfo{person}{Anil Seth} {and}
  \bibinfo{person}{Nisheeth~K. Vishnoi}} (Eds.). \bibinfo{publisher}{Schloss
  Dagstuhl - Leibniz-Zentrum f{\"{u}}r Informatik}, \bibinfo{pages}{91--102}.
\newblock
\urldef\tempurl%
\url{https://doi.org/10.4230/LIPICS.FSTTCS.2013.91}
\showDOI{\tempurl}


\bibitem[Dal~Lago et~al\mbox{.}(2021)]%
        {DBLP:journals/pacmpl/LagoFR21}
\bibfield{author}{\bibinfo{person}{Ugo Dal~Lago}, \bibinfo{person}{Claudia
  Faggian}, {and} \bibinfo{person}{Simona Ronchi Della~Rocca}.}
  \bibinfo{year}{2021}\natexlab{}.
\newblock \showarticletitle{Intersection types and (positive) almost-sure
  termination}.
\newblock \bibinfo{journal}{\emph{Proc. {ACM} Program. Lang.}}
  \bibinfo{volume}{5}, \bibinfo{number}{{POPL}} (\bibinfo{year}{2021}),
  \bibinfo{pages}{1--32}.
\newblock
\urldef\tempurl%
\url{https://doi.org/10.1145/3434313}
\showDOI{\tempurl}


\bibitem[Danner and Licata(2022)]%
        {DBLP:journals/jfp/DannerL22}
\bibfield{author}{\bibinfo{person}{Norman Danner} {and}
  \bibinfo{person}{Daniel~R. Licata}.} \bibinfo{year}{2022}\natexlab{}.
\newblock \showarticletitle{Denotational semantics as a foundation for cost
  recurrence extraction for functional languages}.
\newblock \bibinfo{journal}{\emph{J. Funct. Program.}}  \bibinfo{volume}{32}
  (\bibinfo{year}{2022}), \bibinfo{pages}{e8}.
\newblock
\urldef\tempurl%
\url{https://doi.org/10.1017/S095679682200003X}
\showDOI{\tempurl}


\bibitem[de~Carvalho(2007)]%
        {Carvalho07}
\bibfield{author}{\bibinfo{person}{Daniel de Carvalho}.}
  \bibinfo{year}{2007}\natexlab{}.
\newblock \emph{\bibinfo{title}{S\'emantiques de la logique lin\'eaire et temps
  de calcul}}.
\newblock {T}h\`ese de Doctorat. \bibinfo{school}{Universit\'e Aix-Marseille
  II}.
\newblock


\bibitem[de~Carvalho(2018)]%
        {DBLP:journals/mscs/Carvalho18}
\bibfield{author}{\bibinfo{person}{Daniel de Carvalho}.}
  \bibinfo{year}{2018}\natexlab{}.
\newblock \showarticletitle{Execution time of {\(\lambda\)}-terms via
  denotational semantics and intersection types}.
\newblock \bibinfo{journal}{\emph{Math. Struct. Comput. Sci.}}
  \bibinfo{volume}{28}, \bibinfo{number}{7} (\bibinfo{year}{2018}),
  \bibinfo{pages}{1169--1203}.
\newblock
\urldef\tempurl%
\url{https://doi.org/10.1017/S0960129516000396}
\showDOI{\tempurl}


\bibitem[de~Carvalho et~al\mbox{.}(2011)]%
        {DBLP:journals/tcs/CarvalhoPF11}
\bibfield{author}{\bibinfo{person}{Daniel de Carvalho},
  \bibinfo{person}{Michele Pagani}, {and} \bibinfo{person}{Lorenzo~Tortora de
  Falco}.} \bibinfo{year}{2011}\natexlab{}.
\newblock \showarticletitle{A semantic measure of the execution time in linear
  logic}.
\newblock \bibinfo{journal}{\emph{Theor. Comput. Sci.}} \bibinfo{volume}{412},
  \bibinfo{number}{20} (\bibinfo{year}{2011}), \bibinfo{pages}{1884--1902}.
\newblock
\urldef\tempurl%
\url{https://doi.org/10.1016/J.TCS.2010.12.017}
\showDOI{\tempurl}


\bibitem[Dezani-Ciancaglini(2001)]%
        {Dezani01}
\bibfield{author}{\bibinfo{person}{Mariangiola Dezani-Ciancaglini}.}
  \bibinfo{year}{2001}\natexlab{}.
\newblock \bibinfo{title}{{TLCA} list, {P}roblem~\#18: Find trees representing
  contextual equivalences}.
\newblock
\newblock
\newblock
\shownote{See \url{http://tlca.di.unito.it/opltlca/}}.


\bibitem[Dezani-Ciancaglini et~al\mbox{.}(1998)]%
        {Dezani98}
\bibfield{author}{\bibinfo{person}{Mariangiola Dezani-Ciancaglini},
  \bibinfo{person}{Benedetto Intrigila}, {and} \bibinfo{person}{Marisa
  Venturini-Zilli}.} \bibinfo{year}{1998}\natexlab{}.
\newblock \showarticletitle{B{\"o}hm’s theorem for B{\"o}hm trees}. In
  \bibinfo{booktitle}{\emph{ICTCS}}, Vol.~\bibinfo{volume}{98}. World
  Scientific, \bibinfo{pages}{1--23}.
\newblock


\bibitem[Dezani-Ciancaglini et~al\mbox{.}(1999)]%
        {DEZANICIANCAGLINI1999153}
\bibfield{author}{\bibinfo{person}{Mariangiola Dezani-Ciancaglini},
  \bibinfo{person}{Jerzy Tiuryn}, {and} \bibinfo{person}{Pawel Urzyczyn}.}
  \bibinfo{year}{1999}\natexlab{}.
\newblock \showarticletitle{Discrimination by Parallel Observers: The
  Algorithm}.
\newblock \bibinfo{journal}{\emph{Information and Computation}}
  \bibinfo{volume}{150}, \bibinfo{number}{2} (\bibinfo{year}{1999}),
  \bibinfo{pages}{153--186}.
\newblock
\showISSN{0890-5401}
\urldef\tempurl%
\url{https://doi.org/10.1006/inco.1998.2773}
\showDOI{\tempurl}


\bibitem[Endrullis et~al\mbox{.}(2014)]%
        {EndrullisHKP14}
\bibfield{author}{\bibinfo{person}{J\"{o}rg Endrullis},
  \bibinfo{person}{Dimitri Hendriks}, \bibinfo{person}{Jan~Willem Klop}, {and}
  \bibinfo{person}{Andrew Polonsky}.} \bibinfo{year}{2014}\natexlab{}.
\newblock \showarticletitle{{Discriminating Lambda-Terms Using Clocked Boehm
  Trees}}.
\newblock \bibinfo{journal}{\emph{Logical Methods in Computer Science}}
  \bibinfo{volume}{10}, \bibinfo{number}{2} (\bibinfo{year}{2014}).
\newblock
\urldef\tempurl%
\url{https://doi.org/10.2168/LMCS-10(2:4)2014}
\showDOI{\tempurl}


\bibitem[Endrullis et~al\mbox{.}(2017)]%
        {EndrullisHKP17}
\bibfield{author}{\bibinfo{person}{J{\"{o}}rg Endrullis},
  \bibinfo{person}{Dimitri Hendriks}, \bibinfo{person}{Jan~Willem Klop}, {and}
  \bibinfo{person}{Andrew Polonsky}.} \bibinfo{year}{2017}\natexlab{}.
\newblock \showarticletitle{Clocked lambda calculus}.
\newblock \bibinfo{journal}{\emph{Math. Struct. Comput. Sci.}}
  \bibinfo{volume}{27}, \bibinfo{number}{5} (\bibinfo{year}{2017}),
  \bibinfo{pages}{782--806}.
\newblock
\urldef\tempurl%
\url{https://doi.org/10.1017/S0960129515000389}
\showDOI{\tempurl}


\bibitem[Faggian et~al\mbox{.}(2024)]%
        {DBLP:journals/pacmpl/FaggianPV24}
\bibfield{author}{\bibinfo{person}{Claudia Faggian}, \bibinfo{person}{Daniele
  Pautasso}, {and} \bibinfo{person}{Gabriele Vanoni}.}
  \bibinfo{year}{2024}\natexlab{}.
\newblock \showarticletitle{Higher Order Bayesian Networks, Exactly}.
\newblock \bibinfo{journal}{\emph{Proc. {ACM} Program. Lang.}}
  \bibinfo{volume}{8}, \bibinfo{number}{{POPL}} (\bibinfo{year}{2024}),
  \bibinfo{pages}{2514--2546}.
\newblock
\urldef\tempurl%
\url{https://doi.org/10.1145/3632926}
\showDOI{\tempurl}


\bibitem[Ghica(2005)]%
        {DBLP:conf/popl/Ghica05}
\bibfield{author}{\bibinfo{person}{Dan~R. Ghica}.}
  \bibinfo{year}{2005}\natexlab{}.
\newblock \showarticletitle{Slot games: a quantitative model of computation}.
  In \bibinfo{booktitle}{\emph{Proceedings of the 32nd {ACM} {SIGPLAN-SIGACT}
  Symposium on Principles of Programming Languages, {POPL} 2005, Long Beach,
  California, USA, January 12-14, 2005}},
  \bibfield{editor}{\bibinfo{person}{Jens Palsberg} {and}
  \bibinfo{person}{Mart{\'{\i}}n Abadi}} (Eds.). \bibinfo{publisher}{{ACM}},
  \bibinfo{pages}{85--97}.
\newblock
\urldef\tempurl%
\url{https://doi.org/10.1145/1040305.1040313}
\showDOI{\tempurl}


\bibitem[Grellois and Melli{\`{e}}s(2015)]%
        {GrelloisM15}
\bibfield{author}{\bibinfo{person}{Charles Grellois} {and}
  \bibinfo{person}{Paul{-}Andr{\'{e}} Melli{\`{e}}s}.}
  \bibinfo{year}{2015}\natexlab{}.
\newblock \showarticletitle{Relational Semantics of Linear Logic and
  Higher-order Model Checking}. In \bibinfo{booktitle}{\emph{24th {EACSL}
  Annual Conference on Computer Science Logic, {CSL} 2015, September 7-10,
  2015, Berlin, Germany}} \emph{(\bibinfo{series}{LIPIcs},
  Vol.~\bibinfo{volume}{41})}, \bibfield{editor}{\bibinfo{person}{Stephan
  Kreutzer}} (Ed.). \bibinfo{publisher}{Schloss Dagstuhl - Leibniz-Zentrum
  f{\"{u}}r Informatik}, \bibinfo{pages}{260--276}.
\newblock
\urldef\tempurl%
\url{https://doi.org/10.4230/LIPICS.CSL.2015.260}
\showDOI{\tempurl}


\bibitem[Gustavsson and Sands(2001)]%
        {DBLP:conf/icfp/GustavssonS01}
\bibfield{author}{\bibinfo{person}{J{\"{o}}rgen Gustavsson} {and}
  \bibinfo{person}{David Sands}.} \bibinfo{year}{2001}\natexlab{}.
\newblock \showarticletitle{Possibilities and Limitations of Call-by-Need Space
  Improvement}. In \bibinfo{booktitle}{\emph{Proceedings of the Sixth {ACM}
  {SIGPLAN} International Conference on Functional Programming {(ICFP} '01),
  Firenze (Florence), Italy, September 3-5, 2001}},
  \bibfield{editor}{\bibinfo{person}{Benjamin~C. Pierce}} (Ed.).
  \bibinfo{publisher}{{ACM}}, \bibinfo{pages}{265--276}.
\newblock
\urldef\tempurl%
\url{https://doi.org/10.1145/507635.507667}
\showDOI{\tempurl}


\bibitem[Hackett and Hutton(2014)]%
        {DBLP:conf/icfp/HackettH14}
\bibfield{author}{\bibinfo{person}{Jennifer Hackett} {and}
  \bibinfo{person}{Graham Hutton}.} \bibinfo{year}{2014}\natexlab{}.
\newblock \showarticletitle{Worker/wrapper/makes it/faster}. In
  \bibinfo{booktitle}{\emph{Proceedings of the 19th {ACM} {SIGPLAN}
  international conference on Functional programming, Gothenburg, Sweden,
  September 1-3, 2014}}, \bibfield{editor}{\bibinfo{person}{Johan Jeuring}
  {and} \bibinfo{person}{Manuel M.~T. Chakravarty}} (Eds.).
  \bibinfo{publisher}{{ACM}}, \bibinfo{pages}{95--107}.
\newblock
\urldef\tempurl%
\url{https://doi.org/10.1145/2628136.2628142}
\showDOI{\tempurl}


\bibitem[Hackett and Hutton(2015)]%
        {DBLP:conf/lics/HackettH15}
\bibfield{author}{\bibinfo{person}{Jennifer Hackett} {and}
  \bibinfo{person}{Graham Hutton}.} \bibinfo{year}{2015}\natexlab{}.
\newblock \showarticletitle{Programs for Cheap!}. In
  \bibinfo{booktitle}{\emph{30th Annual {ACM/IEEE} Symposium on Logic in
  Computer Science, {LICS} 2015, Kyoto, Japan, July 6-10, 2015}}.
  \bibinfo{publisher}{{IEEE} Computer Society}, \bibinfo{pages}{115--126}.
\newblock
\urldef\tempurl%
\url{https://doi.org/10.1109/LICS.2015.21}
\showDOI{\tempurl}


\bibitem[Hackett and Hutton(2018)]%
        {DBLP:journals/pacmpl/HackettH18}
\bibfield{author}{\bibinfo{person}{Jennifer Hackett} {and}
  \bibinfo{person}{Graham Hutton}.} \bibinfo{year}{2018}\natexlab{}.
\newblock \showarticletitle{Parametric polymorphism and operational
  improvement}.
\newblock \bibinfo{journal}{\emph{Proc. {ACM} Program. Lang.}}
  \bibinfo{volume}{2}, \bibinfo{number}{{ICFP}} (\bibinfo{year}{2018}),
  \bibinfo{pages}{68:1--68:24}.
\newblock
\urldef\tempurl%
\url{https://doi.org/10.1145/3236763}
\showDOI{\tempurl}


\bibitem[Hyland(1975)]%
        {Hyland75}
\bibfield{author}{\bibinfo{person}{Martin Hyland}.}
  \bibinfo{year}{1975}\natexlab{}.
\newblock \showarticletitle{A survey of some useful partial order relations on
  terms of the lambda calculus}. In \bibinfo{booktitle}{\emph{Lambda-Calculus
  and Computer Science Theory, Proceedings of the Symposium Held in Rome,
  Italy, March 25-27, 1975}} \emph{(\bibinfo{series}{Lecture Notes in Computer
  Science}, Vol.~\bibinfo{volume}{37})},
  \bibfield{editor}{\bibinfo{person}{Corrado B{\"{o}}hm}} (Ed.).
  \bibinfo{publisher}{Springer}, \bibinfo{pages}{83--95}.
\newblock
\urldef\tempurl%
\url{https://doi.org/10.1007/BFB0029520}
\showDOI{\tempurl}


\bibitem[Hyland(1976)]%
        {Hyland76}
\bibfield{author}{\bibinfo{person}{Martin Hyland}.}
  \bibinfo{year}{1976}\natexlab{}.
\newblock \showarticletitle{A Syntactic Characterization of the Equality in
  Some Models for the Lambda Calculus}.
\newblock \bibinfo{journal}{\emph{Journal of the London Mathematical Society}}
  \bibinfo{volume}{s2-12}, \bibinfo{number}{3} (\bibinfo{year}{1976}),
  \bibinfo{pages}{361--370}.
\newblock
\urldef\tempurl%
\url{https://doi.org/10.1112/jlms/s2-12.3.361}
\showDOI{\tempurl}


\bibitem[Hyland et~al\mbox{.}(2004)]%
        {HylandNPR06}
\bibfield{author}{\bibinfo{person}{Martin Hyland}, \bibinfo{person}{Misao
  Nagayama}, \bibinfo{person}{John Power}, {and} \bibinfo{person}{Giuseppe
  Rosolini}.} \bibinfo{year}{2004}\natexlab{}.
\newblock \showarticletitle{A Category Theoretic Formulation for Engeler-style
  Models of the Untyped lambda}.
\newblock   \bibinfo{volume}{161} (\bibinfo{year}{2004}),
  \bibinfo{pages}{43--57}.
\newblock
\urldef\tempurl%
\url{https://doi.org/10.1016/J.ENTCS.2006.04.024}
\showDOI{\tempurl}


\bibitem[Hyland and Ong(2000)]%
        {DBLP:journals/iandc/HylandO00}
\bibfield{author}{\bibinfo{person}{Martin Hyland} {and}
  \bibinfo{person}{C.{-}H.~Luke Ong}.} \bibinfo{year}{2000}\natexlab{}.
\newblock \showarticletitle{On Full Abstraction for {PCF:} I, II, and {III}}.
\newblock \bibinfo{journal}{\emph{Inf. Comput.}} \bibinfo{volume}{163},
  \bibinfo{number}{2} (\bibinfo{year}{2000}), \bibinfo{pages}{285--408}.
\newblock
\urldef\tempurl%
\url{https://doi.org/10.1006/INCO.2000.2917}
\showDOI{\tempurl}


\bibitem[Intrigila et~al\mbox{.}(2019)]%
        {IntrigilaMP19}
\bibfield{author}{\bibinfo{person}{Benedetto Intrigila},
  \bibinfo{person}{Giulio Manzonetto}, {and} \bibinfo{person}{Andrew
  Polonsky}.} \bibinfo{year}{2019}\natexlab{}.
\newblock \showarticletitle{Degrees of extensionality in the theory of
  B{\"{o}}hm trees and Sall{\'{e}}'s conjecture}.
\newblock \bibinfo{journal}{\emph{Log. Methods Comput. Sci.}}
  \bibinfo{volume}{15}, \bibinfo{number}{1} (\bibinfo{year}{2019}).
\newblock
\urldef\tempurl%
\url{https://doi.org/10.23638/LMCS-15(1:6)2019}
\showDOI{\tempurl}


\bibitem[Jaber and Sangiorgi(2022)]%
        {jaber_et_al-games}
\bibfield{author}{\bibinfo{person}{Guilhem Jaber} {and} \bibinfo{person}{Davide
  Sangiorgi}.} \bibinfo{year}{2022}\natexlab{}.
\newblock \showarticletitle{{Games, Mobile Processes, and Functions}}. In
  \bibinfo{booktitle}{\emph{30th EACSL Annual Conference on Computer Science
  Logic (CSL 2022)}} \emph{(\bibinfo{series}{Leibniz International Proceedings
  in Informatics (LIPIcs)}, Vol.~\bibinfo{volume}{216})},
  \bibfield{editor}{\bibinfo{person}{Florin Manea} {and} \bibinfo{person}{Alex
  Simpson}} (Eds.). \bibinfo{publisher}{Schloss Dagstuhl -- Leibniz-Zentrum
  f{\"u}r Informatik}, \bibinfo{address}{Dagstuhl, Germany},
  \bibinfo{pages}{25:1--25:18}.
\newblock
\showISBNx{978-3-95977-218-1}
\showISSN{1868-8969}
\urldef\tempurl%
\url{https://doi.org/10.4230/LIPIcs.CSL.2022.25}
\showDOI{\tempurl}


\bibitem[Kavvos et~al\mbox{.}(2020)]%
        {DBLP:journals/pacmpl/KavvosMLD20}
\bibfield{author}{\bibinfo{person}{G.~A. Kavvos}, \bibinfo{person}{Edward
  Morehouse}, \bibinfo{person}{Daniel~R. Licata}, {and} \bibinfo{person}{Norman
  Danner}.} \bibinfo{year}{2020}\natexlab{}.
\newblock \showarticletitle{Recurrence extraction for functional programs
  through call-by-push-value}.
\newblock \bibinfo{journal}{\emph{Proc. {ACM} Program. Lang.}}
  \bibinfo{volume}{4}, \bibinfo{number}{{POPL}} (\bibinfo{year}{2020}),
  \bibinfo{pages}{15:1--15:31}.
\newblock
\urldef\tempurl%
\url{https://doi.org/10.1145/3371083}
\showDOI{\tempurl}


\bibitem[Ker et~al\mbox{.}(2003)]%
        {DBLP:journals/tcs/KerNO03}
\bibfield{author}{\bibinfo{person}{Andrew~D. Ker}, \bibinfo{person}{Hanno
  Nickau}, {and} \bibinfo{person}{C.{-}H.~Luke Ong}.}
  \bibinfo{year}{2003}\natexlab{}.
\newblock \showarticletitle{Adapting innocent game models for the B{\"{o}}hm
  tree $\lambda$-theory}.
\newblock \bibinfo{journal}{\emph{Theor. Comput. Sci.}} \bibinfo{volume}{308},
  \bibinfo{number}{1-3} (\bibinfo{year}{2003}), \bibinfo{pages}{333--366}.
\newblock
\urldef\tempurl%
\url{https://doi.org/10.1016/S0304-3975(02)00849-6}
\showDOI{\tempurl}


\bibitem[Kesner and Vial(2020)]%
        {DBLP:conf/lics/KesnerV20}
\bibfield{author}{\bibinfo{person}{Delia Kesner} {and} \bibinfo{person}{Pierre
  Vial}.} \bibinfo{year}{2020}\natexlab{}.
\newblock \showarticletitle{Consuming and Persistent Types for Classical
  Logic}. In \bibinfo{booktitle}{\emph{{LICS} '20: 35th Annual {ACM/IEEE}
  Symposium on Logic in Computer Science, Saarbr{\"{u}}cken, Germany, July
  8-11, 2020}}, \bibfield{editor}{\bibinfo{person}{Holger Hermanns},
  \bibinfo{person}{Lijun Zhang}, \bibinfo{person}{Naoki Kobayashi}, {and}
  \bibinfo{person}{Dale Miller}} (Eds.). \bibinfo{publisher}{{ACM}},
  \bibinfo{pages}{619--632}.
\newblock
\urldef\tempurl%
\url{https://doi.org/10.1145/3373718.3394774}
\showDOI{\tempurl}


\bibitem[Klop(1980)]%
        {phdklop}
\bibfield{author}{\bibinfo{person}{Jan~Willem Klop}.}
  \bibinfo{year}{1980}\natexlab{}.
\newblock \emph{\bibinfo{title}{Combinatory Reduction Systems}}.
\newblock PhD thesis. \bibinfo{school}{Utrecht University}.
\newblock


\bibitem[Kobayashi(2009)]%
        {DBLP:conf/popl/Kobayashi09}
\bibfield{author}{\bibinfo{person}{Naoki Kobayashi}.}
  \bibinfo{year}{2009}\natexlab{}.
\newblock \showarticletitle{Types and higher-order recursion schemes for
  verification of higher-order programs}. In
  \bibinfo{booktitle}{\emph{Proceedings of the 36th {ACM} {SIGPLAN-SIGACT}
  Symposium on Principles of Programming Languages, {POPL} 2009, Savannah, GA,
  USA, January 21-23, 2009}}, \bibfield{editor}{\bibinfo{person}{Zhong Shao}
  {and} \bibinfo{person}{Benjamin~C. Pierce}} (Eds.).
  \bibinfo{publisher}{{ACM}}, \bibinfo{pages}{416--428}.
\newblock
\urldef\tempurl%
\url{https://doi.org/10.1145/1480881.1480933}
\showDOI{\tempurl}


\bibitem[Laird(2007)]%
        {DBLP:conf/icalp/Laird07}
\bibfield{author}{\bibinfo{person}{Jim Laird}.}
  \bibinfo{year}{2007}\natexlab{}.
\newblock \showarticletitle{A Fully Abstract Trace Semantics for General
  References}. In \bibinfo{booktitle}{\emph{Automata, Languages and
  Programming, 34th International Colloquium, {ICALP} 2007, Wroclaw, Poland,
  July 9-13, 2007, Proceedings}} \emph{(\bibinfo{series}{Lecture Notes in
  Computer Science}, Vol.~\bibinfo{volume}{4596})},
  \bibfield{editor}{\bibinfo{person}{Lars Arge}, \bibinfo{person}{Christian
  Cachin}, \bibinfo{person}{Tomasz Jurdzinski}, {and} \bibinfo{person}{Andrzej
  Tarlecki}} (Eds.). \bibinfo{publisher}{Springer}, \bibinfo{pages}{667--679}.
\newblock
\urldef\tempurl%
\url{https://doi.org/10.1007/978-3-540-73420-8\_58}
\showDOI{\tempurl}


\bibitem[Laird et~al\mbox{.}(2013)]%
        {LairdMMP13}
\bibfield{author}{\bibinfo{person}{Jim Laird}, \bibinfo{person}{Giulio
  Manzonetto}, \bibinfo{person}{Guy McCusker}, {and} \bibinfo{person}{Michele
  Pagani}.} \bibinfo{year}{2013}\natexlab{}.
\newblock \showarticletitle{Weighted Relational Models of Typed
  Lambda-Calculi}. In \bibinfo{booktitle}{\emph{28th Annual {ACM/IEEE}
  Symposium on Logic in Computer Science, {LICS} 2013, New Orleans, LA, USA,
  June 25-28, 2013}}. \bibinfo{publisher}{{IEEE} Computer Society},
  \bibinfo{pages}{301--310}.
\newblock
\urldef\tempurl%
\url{https://doi.org/10.1109/LICS.2013.36}
\showDOI{\tempurl}


\bibitem[Lassen(1999)]%
        {lassen1999bisimulation}
\bibfield{author}{\bibinfo{person}{S{\o}ren~B. Lassen}.}
  \bibinfo{year}{1999}\natexlab{}.
\newblock \showarticletitle{Bisimulation in Untyped Lambda Calculus: B{\"{o}}hm
  Trees and Bisimulation up to Context}.
\newblock   \bibinfo{volume}{20} (\bibinfo{year}{1999}),
  \bibinfo{pages}{346--374}.
\newblock
\urldef\tempurl%
\url{https://doi.org/10.1016/S1571-0661(04)80083-5}
\showDOI{\tempurl}


\bibitem[Levy and Staton(2014)]%
        {LevyLICSGames}
\bibfield{author}{\bibinfo{person}{Paul~Blain Levy} {and} \bibinfo{person}{Sam
  Staton}.} \bibinfo{year}{2014}\natexlab{}.
\newblock \showarticletitle{Transition systems over games}. In
  \bibinfo{booktitle}{\emph{Proceedings of the Joint Meeting of the
  Twenty-Third EACSL Annual Conference on Computer Science Logic (CSL) and the
  Twenty-Ninth Annual ACM/IEEE Symposium on Logic in Computer Science (LICS)}}
  (Vienna, Austria) \emph{(\bibinfo{series}{CSL-LICS '14})}.
  \bibinfo{publisher}{Association for Computing Machinery},
  \bibinfo{address}{New York, NY, USA}, Article \bibinfo{articleno}{64},
  \bibinfo{numpages}{10}~pages.
\newblock
\showISBNx{9781450328869}
\urldef\tempurl%
\url{https://doi.org/10.1145/2603088.2603150}
\showDOI{\tempurl}


\bibitem[Moran and Sands(1999)]%
        {DBLP:conf/popl/MoranS99}
\bibfield{author}{\bibinfo{person}{Andrew Moran} {and} \bibinfo{person}{David
  Sands}.} \bibinfo{year}{1999}\natexlab{}.
\newblock \showarticletitle{Improvement in a Lazy Context: An Operational
  Theory for Call-by-Need}. In \bibinfo{booktitle}{\emph{{POPL} '99,
  Proceedings of the 26th {ACM} {SIGPLAN-SIGACT} Symposium on Principles of
  Programming Languages, San Antonio, TX, USA, January 20-22, 1999}},
  \bibfield{editor}{\bibinfo{person}{Andrew~W. Appel} {and}
  \bibinfo{person}{Alex Aiken}} (Eds.). \bibinfo{publisher}{{ACM}},
  \bibinfo{pages}{43--56}.
\newblock
\urldef\tempurl%
\url{https://doi.org/10.1145/292540.292547}
\showDOI{\tempurl}


\bibitem[Morris(1968)]%
        {morris1968lambda}
\bibfield{author}{\bibinfo{person}{James~Hiram Morris}.}
  \bibinfo{year}{1968}\natexlab{}.
\newblock \emph{\bibinfo{title}{Lambda-calculus Models of Programming
  Languages}}.
\newblock \bibinfo{thesistype}{Ph.\,D. Dissertation}.
  \bibinfo{school}{Massachusetts Institute of Technology}.
\newblock
\showLCCN{70246122}
\urldef\tempurl%
\url{https://books.google.is/books?id=DklAAQAAIAAJ}
\showURL{%
\tempurl}


\bibitem[Muroya and Hamana(2024)]%
        {DBLP:conf/flops/MuroyaH24}
\bibfield{author}{\bibinfo{person}{Koko Muroya} {and} \bibinfo{person}{Makoto
  Hamana}.} \bibinfo{year}{2024}\natexlab{}.
\newblock \showarticletitle{Term Evaluation Systems with Refinements:
  First-Order, Second-Order, and Contextual Improvement}. In
  \bibinfo{booktitle}{\emph{Functional and Logic Programming - 17th
  International Symposium, {FLOPS} 2024, Kumamoto, Japan, May 15-17, 2024,
  Proceedings}} \emph{(\bibinfo{series}{Lecture Notes in Computer Science},
  Vol.~\bibinfo{volume}{14659})}, \bibfield{editor}{\bibinfo{person}{Jeremy
  Gibbons} {and} \bibinfo{person}{Dale Miller}} (Eds.).
  \bibinfo{publisher}{Springer}, \bibinfo{pages}{31--61}.
\newblock
\urldef\tempurl%
\url{https://doi.org/10.1007/978-981-97-2300-3\_3}
\showDOI{\tempurl}


\bibitem[Nipkow(1991)]%
        {DBLP:conf/lics/Nipkow91}
\bibfield{author}{\bibinfo{person}{Tobias Nipkow}.}
  \bibinfo{year}{1991}\natexlab{}.
\newblock \showarticletitle{Higher-Order Critical Pairs}. In
  \bibinfo{booktitle}{\emph{Proceedings of the Sixth Annual Symposium on Logic
  in Computer Science {(LICS} '91), Amsterdam, The Netherlands, July 15-18,
  1991}}. \bibinfo{publisher}{{IEEE} Computer Society},
  \bibinfo{pages}{342--349}.
\newblock
\urldef\tempurl%
\url{https://doi.org/10.1109/LICS.1991.151658}
\showDOI{\tempurl}


\bibitem[Niu and Harper(2023)]%
        {DBLP:conf/lics/Niu023}
\bibfield{author}{\bibinfo{person}{Yue Niu} {and} \bibinfo{person}{Robert
  Harper}.} \bibinfo{year}{2023}\natexlab{}.
\newblock \showarticletitle{A Metalanguage for Cost-Aware Denotational
  Semantics}. In \bibinfo{booktitle}{\emph{38th Annual {ACM/IEEE} Symposium on
  Logic in Computer Science, {LICS} 2023, Boston, MA, USA, June 26-29, 2023}}.
  \bibinfo{publisher}{{IEEE}}, \bibinfo{pages}{1--14}.
\newblock
\urldef\tempurl%
\url{https://doi.org/10.1109/LICS56636.2023.10175777}
\showDOI{\tempurl}


\bibitem[Ong(2006)]%
        {DBLP:conf/lics/Ong06}
\bibfield{author}{\bibinfo{person}{C.{-}H.~Luke Ong}.}
  \bibinfo{year}{2006}\natexlab{}.
\newblock \showarticletitle{On Model-Checking Trees Generated by Higher-Order
  Recursion Schemes}. In \bibinfo{booktitle}{\emph{21th {IEEE} Symposium on
  Logic in Computer Science {(LICS} 2006), 12-15 August 2006, Seattle, WA, USA,
  Proceedings}}. \bibinfo{publisher}{{IEEE} Computer Society},
  \bibinfo{pages}{81--90}.
\newblock
\urldef\tempurl%
\url{https://doi.org/10.1109/LICS.2006.38}
\showDOI{\tempurl}


\bibitem[Ong(2017)]%
        {Ong17}
\bibfield{author}{\bibinfo{person}{C.{-}H.~Luke Ong}.}
  \bibinfo{year}{2017}\natexlab{}.
\newblock \showarticletitle{Quantitative semantics of the lambda calculus: Some
  generalisations of the relational model}. In \bibinfo{booktitle}{\emph{32nd
  Annual {ACM/IEEE} Symposium on Logic in Computer Science, {LICS} 2017,
  Reykjavik, Iceland, June 20-23, 2017}}. \bibinfo{publisher}{{IEEE} Computer
  Society}, \bibinfo{pages}{1--12}.
\newblock
\urldef\tempurl%
\url{https://doi.org/10.1109/LICS.2017.8005064}
\showDOI{\tempurl}


\bibitem[Ong and Di~Gianantonio(2002)]%
        {DBLP:conf/icalp/OngG02}
\bibfield{author}{\bibinfo{person}{C.{-}H.~Luke Ong} {and}
  \bibinfo{person}{Pietro Di~Gianantonio}.} \bibinfo{year}{2002}\natexlab{}.
\newblock \showarticletitle{Games Characterizing L\'evy-Longo Trees}. In
  \bibinfo{booktitle}{\emph{Automata, Languages and Programming, 29th
  International Colloquium, {ICALP} 2002, Malaga, Spain, July 8-13, 2002,
  Proceedings}} \emph{(\bibinfo{series}{Lecture Notes in Computer Science},
  Vol.~\bibinfo{volume}{2380})}, \bibfield{editor}{\bibinfo{person}{Peter
  Widmayer}, \bibinfo{person}{Francisco~Triguero Ruiz},
  \bibinfo{person}{Rafael~Morales Bueno}, \bibinfo{person}{Matthew Hennessy},
  \bibinfo{person}{Stephan~J. Eidenbenz}, {and} \bibinfo{person}{Ricardo
  Conejo}} (Eds.). \bibinfo{publisher}{Springer}, \bibinfo{pages}{476--487}.
\newblock
\urldef\tempurl%
\url{https://doi.org/10.1007/3-540-45465-9\_41}
\showDOI{\tempurl}


\bibitem[Patrignani et~al\mbox{.}(2019)]%
        {DBLP:journals/csur/PatrignaniAC19}
\bibfield{author}{\bibinfo{person}{Marco Patrignani}, \bibinfo{person}{Amal
  Ahmed}, {and} \bibinfo{person}{Dave Clarke}.}
  \bibinfo{year}{2019}\natexlab{}.
\newblock \showarticletitle{Formal Approaches to Secure Compilation: {A} Survey
  of Fully Abstract Compilation and Related Work}.
\newblock \bibinfo{journal}{\emph{{ACM} Comput. Surv.}} \bibinfo{volume}{51},
  \bibinfo{number}{6} (\bibinfo{year}{2019}), \bibinfo{pages}{125:1--125:36}.
\newblock
\urldef\tempurl%
\url{https://doi.org/10.1145/3280984}
\showDOI{\tempurl}


\bibitem[Plotkin(1977)]%
        {DBLP:journals/tcs/Plotkin77}
\bibfield{author}{\bibinfo{person}{Gordon~D. Plotkin}.}
  \bibinfo{year}{1977}\natexlab{}.
\newblock \showarticletitle{{LCF} Considered as a Programming Language}.
\newblock \bibinfo{journal}{\emph{Theor. Comput. Sci.}} \bibinfo{volume}{5},
  \bibinfo{number}{3} (\bibinfo{year}{1977}), \bibinfo{pages}{223--255}.
\newblock
\urldef\tempurl%
\url{https://doi.org/10.1016/0304-3975(77)90044-5}
\showDOI{\tempurl}


\bibitem[Riely and Prins(2000)]%
        {DBLP:conf/sas/RielyP00}
\bibfield{author}{\bibinfo{person}{James Riely} {and} \bibinfo{person}{Jan~F.
  Prins}.} \bibinfo{year}{2000}\natexlab{}.
\newblock \showarticletitle{Flattening Is an Improvement}. In
  \bibinfo{booktitle}{\emph{Static Analysis, 7th International Symposium, {SAS}
  2000, Santa Barbara, CA, USA, June 29 - July 1, 2000, Proceedings}}
  \emph{(\bibinfo{series}{Lecture Notes in Computer Science},
  Vol.~\bibinfo{volume}{1824})}, \bibfield{editor}{\bibinfo{person}{Jens
  Palsberg}} (Ed.). \bibinfo{publisher}{Springer}, \bibinfo{pages}{360--376}.
\newblock
\urldef\tempurl%
\url{https://doi.org/10.1007/978-3-540-45099-3\_19}
\showDOI{\tempurl}


\bibitem[Sakayori and Sangiorgi(2023)]%
        {DBLP:conf/lics/SakayoriS23}
\bibfield{author}{\bibinfo{person}{Ken Sakayori} {and} \bibinfo{person}{Davide
  Sangiorgi}.} \bibinfo{year}{2023}\natexlab{}.
\newblock \showarticletitle{Extensional and Non-extensional Functions as
  Processes}. In \bibinfo{booktitle}{\emph{38th Annual {ACM/IEEE} Symposium on
  Logic in Computer Science, {LICS} 2023, Boston, MA, USA, June 26-29, 2023}}.
  \bibinfo{publisher}{{IEEE}}, \bibinfo{pages}{1--13}.
\newblock
\urldef\tempurl%
\url{https://doi.org/10.1109/LICS56636.2023.10175686}
\showDOI{\tempurl}


\bibitem[Sands(1996a)]%
        {DBLP:journals/tcs/Sands96}
\bibfield{author}{\bibinfo{person}{David Sands}.}
  \bibinfo{year}{1996}\natexlab{a}.
\newblock \showarticletitle{Proving the Correctness of Recursion-Based
  Automatic Program Transformations}.
\newblock \bibinfo{journal}{\emph{Theor. Comput. Sci.}} \bibinfo{volume}{167},
  \bibinfo{number}{1{\&}2} (\bibinfo{year}{1996}), \bibinfo{pages}{193--233}.
\newblock
\urldef\tempurl%
\url{https://doi.org/10.1016/0304-3975(96)00074-6}
\showDOI{\tempurl}


\bibitem[Sands(1996b)]%
        {SandsToplas}
\bibfield{author}{\bibinfo{person}{David Sands}.}
  \bibinfo{year}{1996}\natexlab{b}.
\newblock \showarticletitle{Total correctness by local improvement in the
  transformation of functional programs}.
\newblock \bibinfo{journal}{\emph{ACM Trans. Program. Lang. Syst.}}
  \bibinfo{volume}{18}, \bibinfo{number}{2} (\bibinfo{date}{mar}
  \bibinfo{year}{1996}), \bibinfo{pages}{175–234}.
\newblock
\showISSN{0164-0925}
\urldef\tempurl%
\url{https://doi.org/10.1145/227699.227716}
\showDOI{\tempurl}


\bibitem[Sands(1999)]%
        {SandsImprovementTheory}
\bibfield{author}{\bibinfo{person}{David Sands}.}
  \bibinfo{year}{1999}\natexlab{}.
\newblock \bibinfo{booktitle}{\emph{Improvement theory and its applications}}.
\newblock \bibinfo{publisher}{Cambridge University Press},
  \bibinfo{address}{USA}, \bibinfo{pages}{275–306}.
\newblock
\showISBNx{0521631688}


\bibitem[Sangiorgi(1994)]%
        {DBLP:journals/iandc/Sangiorgi94}
\bibfield{author}{\bibinfo{person}{Davide Sangiorgi}.}
  \bibinfo{year}{1994}\natexlab{}.
\newblock \showarticletitle{The Lazy Lambda Calculus in a Concurrency
  Scenario}.
\newblock \bibinfo{journal}{\emph{Inf. Comput.}} \bibinfo{volume}{111},
  \bibinfo{number}{1} (\bibinfo{year}{1994}), \bibinfo{pages}{120--153}.
\newblock
\urldef\tempurl%
\url{https://doi.org/10.1006/INCO.1994.1042}
\showDOI{\tempurl}


\bibitem[Schmidt{-}Schau{\ss} et~al\mbox{.}(2018)]%
        {DBLP:conf/ppdp/Schmidt-Schauss18}
\bibfield{author}{\bibinfo{person}{Manfred Schmidt{-}Schau{\ss}},
  \bibinfo{person}{David Sabel}, {and} \bibinfo{person}{Nils Dallmeyer}.}
  \bibinfo{year}{2018}\natexlab{}.
\newblock \showarticletitle{Sequential and Parallel Improvements in a
  Concurrent Functional Programming Language}. In
  \bibinfo{booktitle}{\emph{Proceedings of the 20th International Symposium on
  Principles and Practice of Declarative Programming, {PPDP} 2018, Frankfurt am
  Main, Germany, September 03-05, 2018}},
  \bibfield{editor}{\bibinfo{person}{David Sabel} {and} \bibinfo{person}{Peter
  Thiemann}} (Eds.). \bibinfo{publisher}{{ACM}}, \bibinfo{pages}{20:1--20:13}.
\newblock
\urldef\tempurl%
\url{https://doi.org/10.1145/3236950.3236952}
\showDOI{\tempurl}


\bibitem[Scott(1972)]%
        {Scott72}
\bibfield{author}{\bibinfo{person}{Dana~S. Scott}.}
  \bibinfo{year}{1972}\natexlab{}.
\newblock \showarticletitle{Continuous lattices}. In
  \bibinfo{booktitle}{\emph{Toposes, Algebraic Geometry and Logic}}
  \emph{(\bibinfo{series}{Lecture Notes in Mathematics},
  Vol.~\bibinfo{volume}{274})}, \bibfield{editor}{\bibinfo{person}{Lawvere}}
  (Ed.). \bibinfo{publisher}{Springer}, \bibinfo{pages}{97--136}.
\newblock
\urldef\tempurl%
\url{https://doi.org/10.1007/BFb0073967}
\showDOI{\tempurl}


\bibitem[St{\o}vring and Lassen(2009)]%
        {DBLP:conf/birthday/StovringL09}
\bibfield{author}{\bibinfo{person}{Kristian St{\o}vring} {and}
  \bibinfo{person}{S{\o}ren~B. Lassen}.} \bibinfo{year}{2009}\natexlab{}.
\newblock \showarticletitle{A Complete, Co-inductive Syntactic Theory of
  Sequential Control and State}. In \bibinfo{booktitle}{\emph{Semantics and
  Algebraic Specification, Essays Dedicated to Peter D. Mosses on the Occasion
  of His 60th Birthday}} \emph{(\bibinfo{series}{Lecture Notes in Computer
  Science}, Vol.~\bibinfo{volume}{5700})},
  \bibfield{editor}{\bibinfo{person}{Jens Palsberg}} (Ed.).
  \bibinfo{publisher}{Springer}, \bibinfo{pages}{329--375}.
\newblock
\urldef\tempurl%
\url{https://doi.org/10.1007/978-3-642-04164-8\_17}
\showDOI{\tempurl}


\bibitem[Wadsworth(1976)]%
        {Wadsworth76}
\bibfield{author}{\bibinfo{person}{Christopher~P. Wadsworth}.}
  \bibinfo{year}{1976}\natexlab{}.
\newblock \showarticletitle{{The Relation Between Computational and
  Denotational Properties for Scott's $\mathcal{D}_\infty$-Models of the
  Lambda-Calculus}}.
\newblock \bibinfo{journal}{\emph{SIAM J. Comput.}} \bibinfo{volume}{5},
  \bibinfo{number}{3} (\bibinfo{year}{1976}), \bibinfo{pages}{488--521}.
\newblock
\urldef\tempurl%
\url{https://doi.org/10.1137/0205036}
\showDOI{\tempurl}


\end{thebibliography}

\withproofs{
\newpage
\appendix
\setboolean{appendix}{true}
% !TEX root = ../main.tex

\section{Proving the $\beta$ and $\eta$ Invariance of Contextual Equivalence}
\label{app:invariance}
The aim of this Appendix is providing the schema of rewriting-based proofs of the invariance with respect to $\beta$ and $\eta$ of head contextual equivalence $\obseq$ (that is, that $\eqb\,\subseteq\, \obseq$ and that $\eqeta\,\subseteq\, \obseq$), to convince the reader that these are non-trivial theorems resting on various properties, themselves non-trivial. This should not be surprising: it is well-known that establishing contextual equivalence is hard, because of the universal quantification on contexts.

\paragraph{$\beta$-Invariance of $\obseq$} The theorem rests on the following three auxiliary properties, the first two of which themselves require non-trivial proofs; the third one is instead a simple lemma.
\begin{enumerate}
\item \emph{Confluence of $\beta$}: if $\tm \tob^* \tmtwo_1$ and $\tm \tob^* \tmtwo_2$ then there exists $\tmthree$ such that $\tmtwo_1 \tob^* \tmthree$ and $\tmtwo_2 \tob^* \tmthree$.

\item \emph{Untyped normalization of head reduction}: if $\tm \tob^* \htm$ with $\htm$ a head normal form then head reduction $\toh$ terminates on $\tm$.

\item \emph{Stability of head normal forms by $\beta$}: if $\htm$ is a head normal form and $\htm \tob \tmtwo$ then $\tmtwo$ is a head normal form.
\end{enumerate}
The trickiest of them is the untyped normalization theorem. The statement is not a tautology because the sequence $\tm \tob^* \htm$ in the hypothesis is of \emph{arbitrary} $\beta$ steps, that is, not necessarily of head steps. The conclusion is also correctly stated because in general the head normal form of $\tm$ is not $\htm$. For instance, $\tm \defeq \Id \var (\Id \vartwo) \tob \Id \var \vartwo \tob \var \vartwo$ is a sequence in the hypothesis (whose first step is not a head step), but the head normal form of $\tm$ is $\var (\Id \vartwo)$ and not $\var\vartwo$. For a modern account of untyped normalization theorems, see \cite{DBLP:conf/aplas/AccattoliFG19}.

\begin{theorem}[$\beta$-Invariance of $\obseq$]
\hfill
\begin{enumerate}
\item If $\tm \tob \tmtwo$ then $\tm\obsle\tmtwo$.
\item If $\tm \tob \tmtwo$ then $\tmtwo\obsle\tm$.
\end{enumerate}
\end{theorem}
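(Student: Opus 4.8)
The plan is to reduce both inclusions to the three auxiliary properties listed above, the common engine being that $\beta$-reduction is closed under contexts. First I observe that, since $\tob$ is the context closure of the $\beta$-rule, $\tm \tob \tmtwo$ immediately gives $\ctxp\tm \tob \ctxp\tmtwo$ for every context $\ctx$. I also record the two readings of the observable that make the argument work: $\ctxp s \conv\hsym$ means by definition that head reduction terminates on $\ctxp s$, which in particular exhibits a $\beta$-reduction $\ctxp s \tob^* \htm$ to a head normal form $\htm$ (head steps being $\beta$-steps); conversely, the untyped normalization theorem (property~2) lets me recover $\ctxp s \conv\hsym$ from \emph{any} such $\beta$-reduction to a head normal form, even one using non-head steps.

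For the easy direction, item~(2), I would assume $\ctxp\tmtwo \conv\hsym$, so that $\ctxp\tmtwo \tob^* \htm$ for some head normal form $\htm$. Prepending the context-closure step yields $\ctxp\tm \tob \ctxp\tmtwo \tob^* \htm$, hence $\ctxp\tm \tob^* \htm$ with $\htm$ a head normal form, and untyped normalization (property~2) then gives $\ctxp\tm \conv\hsym$. Since $\ctx$ is arbitrary, $\tmtwo \obsle \tm$ follows; this direction uses only the normalization theorem.

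For item~(1) I would assume $\ctxp\tm \conv\hsym$, so $\ctxp\tm \tob^* \htm$ with $\htm$ a head normal form, while also $\ctxp\tm \tob \ctxp\tmtwo$. Here the single step toward $\ctxp\tmtwo$ and the head-normalizing sequence toward $\htm$ diverge, so I close the diamond with confluence (property~1): there is a term $s$ with $\htm \tob^* s$ and $\ctxp\tmtwo \tob^* s$. Iterating stability of head normal forms under $\beta$ (property~3) along $\htm \tob^* s$ shows that $s$ is again a head normal form. Thus $\ctxp\tmtwo \tob^* s$ with $s$ a head normal form, and untyped normalization delivers $\ctxp\tmtwo \conv\hsym$, establishing $\tm \obsle \tmtwo$.

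The hard part is not the bookkeeping above but the untyped normalization theorem itself (property~2), which is the genuinely deep ingredient and the reason $\conv\hsym$ behaves robustly under arbitrary, possibly non-head, $\beta$-reductions; confluence and stability of head normal forms, while standard, are precisely what is needed in item~(1) to reconcile the contracted redex $\tm \tob \tmtwo$ with an otherwise unrelated head-reduction sequence. I would stress the asymmetry between the two items: item~(2) (the reduct normalizes, hence so does the source) is a one-line consequence of normalization obtained by prepending the single step, whereas item~(1) (the source normalizes, hence so does the reduct) genuinely requires all three properties, since one must first realign the two reduction sequences via confluence before transporting head-normal-ness through the common reduct.
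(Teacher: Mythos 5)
Your proof is correct and follows essentially the same route as the paper's: both directions use the same three auxiliary properties in the same way, with confluence plus (iterated) stability of head normal forms plus untyped normalization for item (1), and prepending the single step plus untyped normalization alone for item (2). The only differences are presentational (order of the two items and the explicit remark that stability must be iterated along the multistep reduction, which the paper leaves implicit).
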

\begin{proof}
\hfill
\begin{enumerate}
\item $\tm\obsle\tmtwo$. Let $\ctx$ be a context and let $\ctxp\tm \toh^*\htm$ with $\htm$  head normal. Since $\ctxp\tm \tob \ctxp\tmtwo$, by confluence there is $\tmfour$ such that $\htm \tob^* \tmfour$ and $\ctxp\tmtwo \tob^* \tmfour$. By stability of head normal forms, $\tmfour$ is  head normal. By the untyped normalization theorem, $\ctxp\tmtwo$ is head normalizing.

\item $\tmtwo\obsle\tm$.  Let $\ctx$ be a context and let $\ctxp\tmtwo\toh^*\htm$ with $\htm$  head normal. Since $\ctxp\tm \tob \ctxp\tmtwo$, we have that $\ctxp\tm \tob^*\htm$. By the untyped normalization theorem, $\ctxp\tm$ is head normalizing.\qedhere
\end{enumerate}
\end{proof}

\paragraph{$\eta$-Invariance of $\obseq$} Also in this case, we need some auxiliary properties:

\begin{enumerate}
\item \emph{Commutation of $\toh$ and $\toeta$}: if $\tm \toh^* \tmtwo$ and $\tm \toeta^* \tmthree$ then there exists $\tmfour$ such that $\tmtwo \toeta^*\tmfour$ and $\tmthree \toh^* \tmfour$.
\item \emph{Postponement of $\toeta$ with respect to $\toh$}: if $\tm (\toh\cup\toeta)^* \tmtwo$ then there exists $\tmthree$ such that $\tm\toh^*\tmthree\toeta^*\tmtwo$.
\item \emph{Head adequacy of $\eta$}: if $\tm \toeta^* \tmtwo$ and $\tmtwo$ is head normal then $\tm$ is head normalizing.
\item \emph{Stability of head normal forms by $\eta$}: if $\tm \toeta^* \tmtwo$ and $\tm$ is head normal then $\tmtwo$ is head normal.
\end{enumerate}
The commutation and postponement properties are non-trivial. Let us focus on postponement, the argument for commutation is similar. In general, postponements can be proved easily if the local swaps between the two reductions do not duplicate steps. Unfortunately, it is not the case here. Let us recall that $\comb{1}\defeq \la \var\la\vartwo\var\vartwo \toeta \Id$. Then consider:
\begin{center}
$\begin{array}{cccccccccc}
(\la\var\var\var) \comb{1} & \toeta & (\la\var\var\var) \Id & \toh & \Id\Id
\end{array}$
\end{center}
It swaps as follows, duplicating the $\eta$ step:
\begin{center}
$\begin{array}{cccccccccc}
(\la\var\var\var) \comb{1} & \toh & \comb{1}\comb{1} & \toeta & \Id\comb{1} & \toeta & \Id\Id
\end{array}$
\end{center}
Consider now this example:
\begin{center}
$\begin{array}{cccccccccc}
\comb{1} \tm \tmtwo& \toeta & \Id \tm\tmtwo& \toh & \tm\tmtwo
\end{array}$
\end{center}
It swaps as follows, duplicating the head step:
\begin{center}
$\begin{array}{cccccccccc}
\comb{1} \tm \tmtwo& = & (\la \var\la\vartwo\var\vartwo)\tm \tmtwo& \toh & (\la\vartwo\tm\vartwo) \tmtwo & \toh & \tm\tmtwo
\end{array}$
\end{center}
For a proof of the postponement of $\eta$, see \cite[Cor.~15.1.6]{Barendregt84}.

\begin{theorem}[$\eta$-Invariance of $\obseq$]
\hfill
\begin{enumerate}
\item If $\tm \toeta \tmtwo$ then $\tm\obsle\tmtwo$.
\item If $\tm \toeta \tmtwo$ then $\tmtwo\obsle\tm$.
\end{enumerate}
\end{theorem}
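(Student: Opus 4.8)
The plan is to derive both inclusions from the four auxiliary properties stated just above the theorem---commutation of $\toh$ and $\toeta$, postponement of $\toeta$ after $\toh$, head adequacy of $\eta$, and stability of head normal forms by $\eta$---together with the compatibility of $\toeta$. The compatibility is the crucial first observation: since $\toeta$ is the closure of the $\eta$-rule under \emph{all} contexts, the hypothesis $\tm\toeta\tmtwo$ lifts to $\ctxp\tm\toeta\ctxp\tmtwo$ for every context $\ctx$. Thus in both directions we start from a single $\eta$-step between $\ctxp\tm$ and $\ctxp\tmtwo$ and must transfer head normalization across it, the two directions differing precisely in whether the $\eta$-step sits \emph{before} or \emph{after} the head reduction we are given.

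For Part 1, I would fix a context $\ctx$ with $\ctxp\tm\toh^*\htm$ for some head normal form $\htm$, and show $\ctxp\tmtwo$ head normalizes. Here the $\eta$-step $\ctxp\tm\toeta\ctxp\tmtwo$ diverges from the head reduction $\ctxp\tm\toh^*\htm$ at the source, so I would apply commutation of $\toh$ and $\toeta$ to obtain a term $\tmfour$ with $\htm\toeta^*\tmfour$ and $\ctxp\tmtwo\toh^*\tmfour$. Then stability of head normal forms by $\eta$ turns $\htm\toeta^*\tmfour$ into the statement that $\tmfour$ is itself head normal, so $\ctxp\tmtwo\toh^*\tmfour$ witnesses $\ctxp\tmtwo\conv{\hsym}$. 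This direction uses only commutation and stability.

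For Part 2, I would fix $\ctx$ with $\ctxp\tmtwo\toh^*\htm$, $\htm$ head normal, and prove $\ctxp\tm$ head normalizes. Now the reduction $\ctxp\tm\toeta\ctxp\tmtwo\toh^*\htm$ is a mixed $(\toh\cup\toeta)^*$ sequence whose $\eta$-step comes \emph{first}, so commutation no longer applies directly. Instead I would invoke postponement of $\toeta$ after $\toh$ to factor this sequence as $\ctxp\tm\toh^*\tmthree\toeta^*\htm$ for some $\tmthree$. Head adequacy of $\eta$, applied to $\tmthree\toeta^*\htm$ with $\htm$ head normal, yields a head-normalizing evaluation $\tmthree\toh^*\htmtwo$ with $\htmtwo$ head normal. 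Concatenating with $\ctxp\tm\toh^*\tmthree$ gives $\ctxp\tm\toh^*\htmtwo$, hence $\ctxp\tm\conv{\hsym}$.

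The genuine difficulty is entirely encapsulated in the auxiliary properties, which are assumed here but are non-trivial: as the duplicating swaps displayed above illustrate, neither commutation nor postponement of $\eta$ past head steps is a routine local-confluence argument, since a single $\eta$-step such as $\comb{1}\toeta\Id$ can be copied by a head step, and a head step can likewise be duplicated by an $\eta$-step. Granting those properties, the only subtlety remaining in the present proof is the direction-dependent bookkeeping: Part 1 pushes a head normal form \emph{forward} across the $\eta$-step and relies on stability, whereas Part 2 must first \emph{reorganize} a mixed reduction via postponement before head adequacy can recover normalization of the source term.
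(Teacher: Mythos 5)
Your proof is correct and follows essentially the same route as the paper: Part 1 via commutation of $\toh$ and $\toeta$ plus stability of head normal forms under $\eta$, and Part 2 via postponement of $\toeta$ followed by head adequacy. The only (harmless) differences are that you make explicit two steps the paper leaves implicit, namely the use of compatibility to lift $\tm\toeta\tmtwo$ to $\ctxp\tm\toeta\ctxp\tmtwo$ and the final concatenation $\ctxp\tm\toh^*\tmthree\toh^*\htmtwo$.
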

\begin{proof}
\hfill
\begin{enumerate}
\item $\tm\obsle\tmtwo$. Let $\ctx$ be a context and let $\ctxp\tm \toh^*\htm$ with $\htm$  head normal. Since $\ctxp\tm \toeta \ctxp\tmtwo$, by commutation there is $\tmfour$ such that $\htm \toeta^* \tmfour$ and $\ctxp\tmtwo \toh^* \tmfour$. By stability of head normal forms by $\eta$, $\tmfour$ is head normal. Then, $\ctxp\tmtwo$ is head normalizing.

\item $\tmtwo\obsle\tm$.  Let $\ctx$ be a context and let $\ctxp\tmtwo\toh^*\htm$ with $\htm$  head normal. By postponement applied to $\ctxp\tm \toeta \ctxp\tmtwo\toh^*\htm$, we have that there exists $\tmthree$ such that $\ctxp\tm \toh^*\tmthree\toeta^*\htm$. By head adequacy of $\eta$, $\tmthree$ is head normalizing.\qedhere
\end{enumerate}
\end{proof}
% !TEX root = ../main.tex

\section{Proofs from Section \ref{sec:checkers-calculus} (The Checkers Calculus)}
In this section we provide omitted proofs about the checkers calculus, especially proving that the ordinary $\lambda$-calculus embeds nicely in the checkers calculus.

\gettoappendix{prop:embedding-of-head-reduction}
\begin{proof}
	\applabel{prop:embedding-of-head-reduction}
	\begin{enumerate}
		\item If $\la{\vec{\var}}(\la{\vartwo}\tmtwo)\tmthree\vec{\tm} \toh 
		\la{\vec{\var}}\tmtwo\isub{\vartwo}{\tmthree}\vec{\tm}$, then\\ $\lambda_{\clr{}\cdots\clr{}}\vec\var.\appp{\clr{}\cdots\clr{}}{(\cla{}\vartwo\monoToPlayer{\clr{}}{\tmtwo})\cdot^\clr{}\monoToPlayer{\clr{}}{\tmthree}}
		{\vec{\monoToPlayer{\clr{}}{\tm}}}
		\tohnoint \lambda_{\clr{}\cdots\clr{}}\vec\var.\appp{\clr{}\cdots\clr{}}{\monoToPlayer{\clr{}}{\tmtwo}\isub{\vartwo}{\monoToPlayer{\clr{}}{\tmthree}}}
		{\vec{\monoToPlayer{\clr{}}{\tm}}}$
		
		\item Suppose that $\monoToPlayer{\colr}{\la{\vec{\var}}(\la{\vartwo}\tmtwo)\tmthree\vec{\tm}} = \lambda_{\clr{}\cdots\clr{}}\vec\var.\appp{\clr{}\cdots\clr{}}{(\cla{}\vartwo\monoToPlayer{\clr{}}{\tmtwo})\cdot^\clr{}\monoToPlayer{\clr{}}{\tmthree}}
		{\vec{\monoToPlayer{\clr{}}{\tm}}}
		\tohnoint \lambda_{\clr{}\cdots\clr{}}\vec\var.\appp{\clr{}\cdots\clr{}}{\monoToPlayer{\clr{}}{\tmtwo}\isub{\vartwo}{\monoToPlayer{\clr{}}{\tmthree}}}
		{\vec{\monoToPlayer{\clr{}}{\tm}}}$
		then $ \lambda_{\clr{}\cdots\clr{}}\vec\var.\appp{\clr{}\cdots\clr{}}{\monoToPlayer{\clr{}}{\tmtwo}\isub{\vartwo}{\monoToPlayer{\clr{}}{\tmthree}}}
		{\vec{\monoToPlayer{\clr{}}{\tm}}}= \monoToPlayer{\colr}{\la{\vec{\var}}\tmtwo\isub{\vartwo}{\tmthree}\vec{\tm}}$ and $\la{\vec{\var}}(\la{\vartwo}\tmtwo)\tmthree\vec{\tm} \toh 
		\la{\vec{\var}}\tmtwo\isub{\vartwo}{\tmthree}\vec{\tm}$.
		
			\item Head normal forms (resp. checkers head normal forms) are exactly those terms of shape $\la{\vec{\var}}~\vartwo\,\vec{\tm}$ (resp. $\lap{\clr{1}\cdots\clr{k}}{\vec{\var}}~\appp{\clr{1}\cdots\clr{n}}{\vartwo}{\vec{\tm}}$). We conclude as $\monoToPlayer{\colr}{\la{\vec{\var}}~\vartwo\,\vec{\tm}}=\lap{\clr{}\cdots\clr{}}{\vec{\var}}~\appp{\clr{}\cdots\clr{}}{\vartwo}{\vec{\tm}}$.\qedhere
	\end{enumerate}
\end{proof}

\gettoappendix{prop:ordinary-reductions-can-happen-in-checkers}

\begin{proof}
	\applabel{prop:ordinary-reductions-can-happen-in-checkers}
	If $\la{\vec{\var}}(\la{\vartwo}\tmtwo)\tmthree\vec{\tm} \toh 
	\la{\vec{\var}}\tmtwo\isub{\vartwo}{\tmthree}\vec{\tm}$, then for a tagging $T$ of $\la{\vec{\var}}(\la{\vartwo}\tmtwo)\tmthree\vec{\tm}$,  $(\la{\vec{\var}}(\la{\vartwo}\tmtwo)\tmthree\vec{\tm})^T = \lambda_{\clr{1}\cdots\clr{k}}\vec\var.\appp{\clrtwo{1}\cdots\clrtwo{n}}{(\cla{}\vartwo{\tmtwo^T})\cdot^\colrtwo{\tmthree^T}}
	{\vec{{\tm}}^T}
	\tohcol
	\lambda_{\clr{1}\cdots\clr{k}}\vec\var.\appp{\clrtwo{1}\cdots\clrtwo{n}}{\tmtwo^T\isub\vartwo{\tmthree^T}}
	{\vec{{\tm}}^T}$ which is clearly a tagging of $\la{\vec{\var}}\tmtwo\isub{\vartwo}{\tmthree}\vec{\tm}$. 
\end{proof}

Thus, interaction equivalence is shown to be a refinement of the ordinary contextual equivalence.

\gettoappendix{l:improvements-are-ctx-equivalent}
\begin{proof}
\applabel{l:improvements-are-ctx-equivalent}
Let $\tm \intrleq \tmtwo$ and $\ctxp\tm\conv{\hsym}$, that is, $\ctxp\tm \toh^*\htm$ with $\htm$ in head normal form. By \refprop{embedding-of-head-reduction}(1), $\monoToBlue{\ctxp\tm} \tohch^* \monoToBlue\htm$, and by \refprop{embedding-of-head-reduction}(3) $\monoToBlue\htm$ is $\hchsym$-normal. That is, we have $\monoToBlue{\ctxp\tm}\conv{\hchsym}$, and so $\monoToBlue{\ctxp\tm}\bshcol{0}$ (with $0$, since everything is black). By $\tm \intrleq \tmtwo$, we obtain $\monoToBlue{\ctxp\tmtwo}\bshcol{0}$, that is, $\monoToBlue{\ctxp\tmtwo}\tohch^*\htmtwo$ for some head normal form $\htmtwo$. By \refprop{embedding-of-head-reduction}(2), there exists $\tmthree$ such that $\ctxp\tmtwo\toh^*\tmthree$ with $\monoToBlue\tmthree=\htmtwo$. By \refprop{embedding-of-head-reduction}(3), $\tmthree$ is head normal, thus $\ctxp\tmtwo \conv{\hsym}$. We conclude that $\tm \obsle \tmtwo$.
\end{proof}
% !TEX root = ../main.tex
\section{Proofs from \refsect{multi-types} (Checkers Multi Types and Relational Semantics)}

In this section we use the following convenient notation, that allows us to treat the indices in the two applications rules $\typingruleApp$ in a uniform manner.
For all players $\clr{},\clrtwo{}\in\{\redclr,\blueclr\}$, we define:
\[
	\gamma_{\clr{}\clrtwo{}} = 
	\begin{cases}
	0,&\textrm{if $\clr{} = \clrtwo{}$;}\\
	1,&\textrm{otherwise.}\\
	\end{cases}
\]
(For the readers used to Kronecker $\delta$, our $\gamma$ is the opposite: $\gamma_{\clr{}\clrtwo{}} = 1 - \delta_{\clr{}\clrtwo{}}$.)

We first prove some technical lemmas to reason on multi types and derivations.

\begin{lemma}[Splitting multisets with respect to derivations]
\label{l:types-splitting-multisets}
	Let $\tm$ be a term, $\tderiv \derives  \typectx \ctypes{k} \tm : \mtype$ a derivation, and $\mtype = \mtypetwo \mplus \mtypethree$  a splitting. Then there exist two derivations $\tderiv_{\mtypetwo} \derives  \typectx_{\mtypetwo} \ctypes{k_1} \tm : \mtypetwo$, and $\tderiv_{\mtypethree} \derives  \typectx_{\mtypethree} \ctypes{k_2} \tm : \mtypethree$ such that $\typctx = \typctx_{\mtypetwo} \mplus \typctx_{\mtypethree}$, $k = k_1 + k_2$, and $\insize{\tderiv}=\insize{\tderiv_{\mtypetwo}}+\insize{\tderiv_{\mtypethree}}$.
\end{lemma}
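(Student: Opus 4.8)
The plan is to exploit the shape of the type system: a multi type can be introduced only by the rule $\typingruleMany$, since inspecting Figure~\ref{fig:colored-types-for-cbn} shows that every other rule ($\typingruleAx$, $\typingruleAbs$, $\typingruleAppNoInt$, $\typingruleApp$, $\typingruleAppInt$) concludes a \emph{linear} type. Hence the derivation $\tderiv$, whose conclusion assigns the multi type $\mtype$ to $\tm$, must end with an instance of $\typingruleMany$. The whole argument is then a single inversion-and-reassembly of this top rule, with no induction on $\tm$ needed.

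Concretely, I would write $\mtype = [\ltype_i]_{i\in I}$ for a finite index set $I$ and invert the $\typingruleMany$ rule at the root of $\tderiv$: this yields a family of derivations $(\tderiv_i \derives \typctx_i \ctypes{k_i} \tm \hastype \ltype_i)_{i\in I}$ with $\typctx = \biguplus_{i\in I}\typctx_i$ and $k = \sum_{i\in I} k_i$; moreover, since $\typingruleMany$ is not an application rule, $\insize{\tderiv} = \sum_{i\in I}\insize{\tderiv_i}$. The splitting $\mtype = \mtypetwo \mplus \mtypethree$ is, by definition of multiset union, nothing but a partition of the index set $I$ into two disjoint subsets $J$ and $L$ with $\mtypetwo = [\ltype_j]_{j\in J}$ and $\mtypethree = [\ltype_l]_{l\in L}$. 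I would then reassemble the two desired derivations by reapplying $\typingruleMany$ to the corresponding subfamilies: $\tderiv_{\mtypetwo}$ applies $\typingruleMany$ to $(\tderiv_j)_{j\in J}$, concluding $\typctx_{\mtypetwo} \ctypes{k_1}\tm\hastype\mtypetwo$ with $\typctx_{\mtypetwo} \defeq \biguplus_{j\in J}\typctx_j$ and $k_1 \defeq \sum_{j\in J} k_j$, and symmetrically $\tderiv_{\mtypethree}$ is built from $(\tderiv_l)_{l\in L}$ with $\typctx_{\mtypethree} \defeq \biguplus_{l\in L}\typctx_l$ and $k_2 \defeq \sum_{l\in L} k_l$.

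The three required identities then follow by routine additivity, using that $\{J,L\}$ partitions $I$: associativity and commutativity of $\biguplus$ give $\typctx_{\mtypetwo}\mplus\typctx_{\mtypethree} = \typctx$, additivity of the sum gives $k_1 + k_2 = k$, and additivity of the size gives $\insize{\tderiv_{\mtypetwo}}+\insize{\tderiv_{\mtypethree}} = \insize{\tderiv}$ (again because the two freshly introduced $\typingruleMany$ rules contribute no occurrence of an application rule). I do not expect any real obstacle: the only point worth stating explicitly is that the premises of $\typingruleMany$ are indexed precisely by the multiset of linear types composing $\mtype$, so a splitting of $\mtype$ as a multiset corresponds exactly to a partition of the premises, which is what makes the reassembly possible.
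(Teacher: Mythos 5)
Your proof is correct and follows exactly the paper's argument: both observe that the last rule of $\tderiv$ must be $\typingruleMany$ (the only rule concluding a multi type), re-group its premises according to the splitting of $\mtype$, and note that the size identity holds because $\typingruleMany$ rules do not contribute to $\insize{\cdot}$. Your version merely spells out the inversion-and-reassembly in more detail than the paper's terse formulation.
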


\begin{proof}
\applabel{l:types-splitting-multisets}
	The last rule of $\tderiv \derives  \typectx_{\mtype} \ctypes{k} \tm : \mtype$ can only be $\typingruleMany$, thus it is enough to re-group its hypotheses according to $\mtypetwo$ and $\mtypethree$. Since $\typingruleMany$ rules do not count in the measure of type derivations, it is immediate that $\insize{\tderiv}=\insize{\tderiv_{\mtypetwo}}+\insize{\tderiv_{\mtypethree}}$.
\end{proof}

%%%%%%%
%%%%%%%
\begin{lemma}[Substitution]
\label{l:types-substitution}
	If $\tderiv \derives \typctx, \var\hastype \mtype \ctypes k \tm \hastype \gtype$ and $\tderivtwo \derives \typctxtwo \ctypes {k'} \tmtwo \hastype \mtype$ then there exists a derivation $\tderivp \derives \typctx\uplus\typctxtwo \ctypes {k+k'} \tm\isub\var\tmtwo \hastype \gtype$ such that $\insize{\tderiv'}=\insize{\tderiv}+\insize{\tderivtwo}$.
\end{lemma}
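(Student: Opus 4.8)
The plan is to prove the statement by structural induction on the derivation $\tderiv$ of $\tm$, with a case analysis on its last rule (as given in \reffig{colored-types-for-cbn}). The single recurring idea is that the one derivation $\tderivtwo$ of $\tmtwo$ must be \emph{distributed} among the various places where $\var$ is consumed inside $\tderiv$; this redistribution is exactly what the splitting lemma (\reflemmaeq{types-splitting-multisets}) provides. The index and size accounting then follow automatically, since the interaction index is additive over the premises of every rule and the applicative size counts only the $\typingruleApp$ rules.

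First I would handle the two cases that create or split multisets. If $\tderiv$ ends in $\typingruleMany$, then $\gtype$ is a multi type $\mset{\ltype_i}_{i\in I}$ and $\mtype$ decomposes as $\biguplus_{i\in I}\mtype_i$ along the premises. Applying \reflemmaeq{types-splitting-multisets} to $\tderivtwo$ along this decomposition yields derivations of $\tmtwo\hastype\mtype_i$ with indices and sizes summing correctly; I then invoke the induction hypothesis on each premise and recombine with $\typingruleMany$. If $\tderiv$ ends in $\typingruleAx$, then $\tm$ is a variable: when it is $\var$ itself we have $\mtype=\mset{\ltype}$, $k=0$, empty $\typctx$, and $\tm\isub\var\tmtwo=\tmtwo$, so the witness is obtained by stripping the outer (single-premise) $\typingruleMany$ of $\tderivtwo$ to expose a \emph{linear} typing $\typctxtwo\ctypes{k'}\tmtwo\hastype\ltype$ of the generic type $\gtype=\ltype$; when it is a different variable we have $\mtype=\emptymset$, which forces $\typctxtwo$ empty and $k'=0$, so $\tderiv$ itself already is the required derivation.

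The remaining cases are mechanical rule replays. For $\typingruleAbs$ I first $\alpha$-rename the bound variable away from $\var$ and from the free variables of $\tmtwo$, apply the induction hypothesis to the unique premise, and re-apply $\typingruleAbs$. For the application rules, treated uniformly as $\typingruleApp$ (covering both $\typingruleAppNoInt$ and $\typingruleAppInt$), I split $\mtype$ according to how $\var$ occurs in the function premise and the argument premise, split $\tderivtwo$ accordingly via \reflemmaeq{types-splitting-multisets}, apply the induction hypothesis to both premises, and re-apply the same application rule. Because substitution commutes with application, the conclusion types $\tm\isub\var\tmtwo$; the index adds up to $k+k'$ (the extra $+1$ of an interaction application is left untouched), and the size adds up to $\insize{\tderiv}+\insize{\tderivtwo}$.

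I expect no conceptual obstacle: the difficulty is purely administrative, namely keeping the index and the size synchronized with the environment splitting in the $\typingruleMany$ and $\typingruleApp$ cases, which is precisely what \reflemmaeq{types-splitting-multisets} is designed to discharge. The one point requiring a little care is the $\typingruleAx$ subcase $\tm=\var$, where the generic conclusion $\gtype$ is linear while $\tderivtwo$ types $\tmtwo$ with the \emph{multi} type $\mset{\ltype}$; switching between these two via the single premise of $\typingruleMany$ is what makes the size bookkeeping come out exactly, rather than merely up to a constant.
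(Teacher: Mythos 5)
Your proposal is correct and follows essentially the same route as the paper's proof: induction on $\tderiv$ with a case analysis on the last rule, the splitting lemma (\reflemmaeq{types-splitting-multisets}) to distribute $\tderivtwo$ across the premises in the $\typingruleMany$ and $\typingruleApp$ cases, the two axiom subcases (substituted versus unsubstituted variable) handled exactly as in the paper, and the same additive accounting of the interaction index and applicative size. The only cosmetic difference is that the paper first splits on whether $\gtype$ is linear or a multi type before examining the last rule, which amounts to the same decomposition.
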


\begin{proof}
	By induction on $\tderiv$. Cases of $\gtype$:
	\begin{enumerate}
		\item Linear types, \ie $\gtype \defeq \ltype$. Cases of the last derivation rule:
		
		\begin{enumerate}
			\item \emph{Unsubstituted variable}, \ie: $$\tderiv = \infer[\typingruleAx]{\var\hastype \zero,\vartwo \hastype [\ltype] \ctypes {0} \vartwo \hastype \ltype}{}$$
			Then as $\mtype = \zero$, $\tderivtwo$ must be of the following shape: 
			
			$$\infer[\typingruleMany]{\typctxtwo \ctypes {0} \tmtwo \hastype \zero}{}$$ 
			
			Hence $\tderivp \defeq \tderiv$ works, as $\vartwo\isub\var\tmtwo = \vartwo$ and $k'=0$. Clearly, $\insize{\tderiv}=\insize{\tderiv'}=0$, as required.
			\item \emph{Substituted variable}, \ie: $$\tderiv = \infer[\typingruleAx]{\var \hastype [\ltype] \ctypes 0 \var \hastype \ltype}{}$$
			Then, as $\mtype = [\ltype]$, $\tderivtwo$ must be of the following shape: 
			
			$$\infer[\typingruleMany]{\typctxtwo \ctypes {k'} \tmtwo \hastype [\ltype]}{\tderivtwo' \exder \typctxtwo \ctypes {k'} \tmtwo \hastype \ltype}$$
			
			Hence $\tderivp \defeq \tderivtwo'$ works, as $\var\isub\var\tmtwo = \tmtwo$ and $k=0$. Clearly, $\insize{\tderiv'}=\insize{\tderivtwo}=\insize{\tderiv} + \insize{\tderivtwo}$.
			
			\item \emph{Abstraction}, \ie: $$\tderiv = \infer[\typingruleAbs]{\typctx, \var \hastype \mtype \ctypes {k} \cla{}\vartwo\tmp \hastype \mtypetwo \typearrowpp{\clr{}}{\clrtwo{}} \ltype}{\rho\derives\typctx, \var \hastype \mtype, \vartwo \hastype \mtypetwo \ctypes {k} \tmp \hastype \ltype}$$
				
			By induction, there exists a derivation  $\rho'$ of final judgment $\typctx,\vartwo\hastype\mtypetwo \ctypes {k+k'} \tmp\isub\var\tmtwo \hastype \ltype$ such that $\insize{\rho'}=\insize{\rho} +\insize{\tderivtwo}$, from which we can conclude obtaining $\tderiv'$ by applying the derivation rule $\typingruleAbs$. We observe that $\insize{\tderiv'}=\insize{\rho'}=\insize{\rho} +\insize{\tderivtwo}= \insize{\tderiv} +\insize{\tderivtwo}$.
			
			\item \emph{Application}, \ie: 
			$$	\infer[\typingruleApp]{\typectx_1 \uplus \typectx_2, \var \hastype \mtype  \ctypes {k} \capp{}{\tm_1}{\tm_2} \hastype \ltype}{ \rho_1\derives\typectx_1, \var\hastype\mtype_1 \ctypes{l_1}{\tm_1} \hastype \mtype \typearrowpp{\clr{}}{\clrtwo{}} \ltype & \rho_2 \derives\typectx_2, \var\hastype\mtype_2  \ctypes {l_2} {\tm_2} \hastype \mtype  }$$ where $k =  l_1 + l_2 + \gamma_{\clr{}\clrtwo{}}$.
			
			We have that $\mtype = \mtype_1 \mplus \mtype_2$. By \reflemma{types-splitting-multisets}, the derivation $\tderivtwo$ splits into two derivations of final judgments $\tderivtwo_1 \derives \typctxtwo_1 \ctypes {k'_1} \tmtwo \hastype \mtype_1$ and $\tderivtwo_2 \derives \typctxtwo_2 \ctypes {k'_2} \tmtwo \hastype \mtype_2$ such that $k' = k'_1 + k'_2$, and $\insize{\tderivtwo}=\insize{\tderivtwo'} + \insize{\tderivtwo''}$.
			
By \ih, there are two derivations of final judgment  $\rho_1\derives\typectx_1 \ctypes  {l_1+k'_1} {\tm_1\isub\var\tmtwo} \hastype \mtype \typearrowpp{\clr{}}{\clrtwo{}} \ltype$ and  $\rho_2\derives \typectx_2 \ctypes {l_2+ k'_2} {\tm_2\isub\var\tmtwo} \hastype \mtype $, such that $\insize{\rho_i'}=\insize{\rho_i} +\insize{\tderivtwo_i}$. 
We conclude obtaining $\tderiv'$ by applying to these two derivations the rule $\typingruleApp$, as $k+k' = l_1 + k'_1+ l_2 + k'_2+\gamma_{\clr{}\clrtwo{}}$. We observe that $\insize{\tderiv'}=1+\insize{\rho'_1} + \insize{\rho'_2} =1+\insize{\rho_1} +\insize{\tderivtwo_1} + \insize{\rho_2} +\insize{\tderivtwo_2}= \insize{\tderiv} +\insize{\tderivtwo}$.
				
		\end{enumerate}
		
		\item Multi types, \ie $\gtype \defeq [\ltype_i]_{i\in I}$ with $\iset$ finite. The last rule of the derivation must be $\typingruleMany$:		
		$$
		\infer[\typingruleMany]{\uplus_{i\in I}\typectx_i, \var\hastype\mtype \ctypes{\sum_{i\in I} l_i}  \tm \hastype [\ltype_i]_{i\in I}}
		{(\tderiv_i \derives \typectx_i, \var\hastype\mtype_i \ctypes {l_i} \tm \hastype \ltype_i)_{i\in I}}$$		
		with $\mtype = \uplus_{i\in I} \mtype_i$ and $k = \sum_{i\in I}l_i$ and $\uplus_{i\in I}\typectx_i = \typctx$.
		
		By the multiset splitting lemma (\reflemma{types-splitting-multisets}), the derivation $\tderivtwo$ for $\tmtwo$ in the hypotheses splits in several derivations of final judgments $\tderivtwo_i \derives \typctxtwo_i \ctypes {k'_i} \tmtwo \hastype \mtype_i$ such that $k' = \sum_{i\in I}k'_i$ and $\uplus_{i\in I}\typctxtwo_i = \typctxtwo$, and $\sum\insize{\tderiv_i}=\insize{\tderivtwo}$.
		
		Then by induction hypothesis, there exist several derivations $\tderivp_i \derives \typectx_i \cup \typctxtwo_i \ctypes  {l_i+k'_i} {\tm\isub\var\tmtwo} \hastype \ltype_i$, such that $\insize{\tderiv_{i}'} = \insize{\tderiv_{i}} + \insize{\tderivtwo_i}$. The derivation $\tderivp$ of the statement is obtained by applying rule $\typingruleMany$ to the family of derivations $\{\tderivp_i\}_{i\in I}$, as follows:
		$$
		\infer[\typingruleMany]{\uplus_{i\in I}\typectx_i\cup\typctxtwo_i \ctypes{\sum_{i\in I} (l_i+k'_i)}  \tm\isub\var\tmtwo \hastype [\ltype_i]_{i\in I}}
		{(\tderivp_i \exder \typectx_i \cup \typctxtwo_i \ctypes  {l_i+k'_i} {\tm\isub\var\tmtwo} \hastype \ltype_i )_{i\in\iset}}$$		
		Note indeed that $k+k' = \sum_{i\in I} (l_i + k'_i)$ and $\uplus_{i\in I}\typectx_i\uplus\typctxtwo_i = \typctx \uplus \typctxtwo$. Moreover, $\insize{\tderiv'}= \sum_i \insize{\tderiv_{i}'} = \sum_i\insize{\tderiv_{i}} + \insize{\tderivtwo_i}  = \insize{\tderiv} + \insize{\tderivtwo}$. \qedhere
	\end{enumerate}
\end{proof}

\refprop{silent-root-subject-reduction} shows that the interpretation of a checkers term is invariant by silent root reduction.

\begin{proposition}[Silent Root Subject Reduction]
	\label{prop:silent-root-subject-reduction}
	Let $\tm,\tmtwo\in\Lambdac$.
	If $\tm \rtobnoint \tmp$ and $\tderiv \derives \typctx \ctypes { k}  \tm \hastype \ltype$ then there exists $\tderivp \derives \typctx \ctypes { k}  \tmp \hastype \ltype$ with $\insize\tderivp = \insize\tderiv-1$.
\end{proposition}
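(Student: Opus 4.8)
The statement concerns the \emph{root} silent rule, so the hypothesis $\tm \rtobnoint \tmp$ forces $\tm = \ccapp{}{(\cla{}\var\tmthree)}{\tmtwo}$, where the application and the abstraction carry the \emph{same} player $\clr{}$ (this is what makes the redex silent), and $\tmp = \tmthree\isub\var\tmtwo$. The plan is a one-step inversion of $\tderiv$ followed by a single application of the Substitution Lemma (\reflemma{types-substitution}), which already does all the quantitative bookkeeping.

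First I would invert the derivation $\tderiv \derives \typctx \ctypes{k} \tm \hastype \ltype$. Since the assigned type $\ltype$ is linear and $\tm$ is an application, the last rule cannot be $\typingruleMany$ (which produces a multi type), so it is an application rule. The key observation is that this rule is necessarily $\typingruleAppNoInt$, with index increment $0$. Indeed, the function part $\cla{}\var\tmthree$ is a $\clr{}$-abstraction, hence typed by $\typingruleAbs$, so its arrow type has $\clr{}$ as its \emph{first} tag; on the other hand the application player of $\tm$ is also $\clr{}$, and the application rules force this player to coincide with the \emph{second} tag of the arrow. Thus the arrow is the silent arrow $\mtype \typearrowpp{\clr{}}{\clr{}} \ltype$, rule $\typingruleAppInt$ is impossible (it would demand two different tags), and $\gamma_{\clr{}\clr{}} = 0$. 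From this last rule I extract $\typctx = \typctx_1 \uplus \typctx_2$, $k = k_1 + k_2$, and premises $\rho_1 \derives \typctx_1 \ctypes{k_1} \cla{}\var\tmthree \hastype \mtype \typearrowpp{\clr{}}{\clr{}} \ltype$ and $\rho_2 \derives \typctx_2 \ctypes{k_2} \tmtwo \hastype \mtype$; inverting $\typingruleAbs$ on $\rho_1$ gives $\rho_1' \derives \typctx_1, \var\hastype\mtype \ctypes{k_1} \tmthree \hastype \ltype$.

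I would then apply the Substitution Lemma to $\rho_1'$ and $\rho_2$, obtaining a derivation $\tderivp \derives \typctx_1 \uplus \typctx_2 \ctypes{k_1+k_2} \tmthree\isub\var\tmtwo \hastype \ltype$, that is $\tderivp \derives \typctx \ctypes{k} \tmp \hastype \ltype$ (indices and environments match because $k_1 + k_2 = k$ and $\typctx_1 \uplus \typctx_2 = \typctx$), with $\insize{\tderivp} = \insize{\rho_1'} + \insize{\rho_2}$. The final size computation is then immediate: rule $\typingruleAbs$ does not contribute to $\insize{\cdot}$, while the removed application rule contributes exactly one, so $\insize{\tderiv} = 1 + \insize{\rho_1'} + \insize{\rho_2} = 1 + \insize{\tderivp}$, whence $\insize{\tderivp} = \insize{\tderiv} - 1$. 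The index $k$ is preserved precisely because the application rule eliminated was silent (increment $0$). The only genuinely delicate point is the inversion guaranteeing that the typed application is silent; every quantitative claim is carried by \reflemma{types-substitution}, which is the real workhorse but is already available, so I expect no further obstacle.
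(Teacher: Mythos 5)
Your proof is correct and follows essentially the same route as the paper's: invert the typing derivation into the application rule (necessarily silent, hence with index increment $0$) over the abstraction rule, then conclude by the Substitution Lemma, which carries both the index bookkeeping $k = k_1 + k_2$ and the size equation $\insize{\tderivp} = \insize{\tderiv} - 1$. Your explicit justification that the typed application must be silent---matching the abstraction's first arrow tag against the application's tag---is a point the paper leaves implicit, but it is the same argument.
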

\begin{proof}
	\emph{Root step}, \ie $\tm = (\cla{}\var\tmthree)\cdot^\clr{}\tmtwo\rtobnoint \tmthree\isub\var\tmtwo=\tmp$.

	Then the last rule of the derivation $\tderiv$ is $\typingruleApp$ and is followed by the $\typingruleAbs$ rule on the left:
	
	$$	\infer[\typingruleApp]{\typectx \uplus \typectxtwo \ctypes {l_1+l_2} (\cla{}\var\tmthree)\cdot^\clr{}\tmtwo \hastype \ltype}{ \infer{\typectx \ctypes {l_1} \cla{}\var\tmthree \hastype \mtype \typearrowpp{\clr{}}{\clr{}} \ltype}{\typectx, \var \hastype \mtype\ctypes {l_1} \tmthree \hastype \ltype} & \typectxtwo \ctypes {l_2} \tmtwo \hastype \mtype  }$$
	
	By the Substitution \reflemma{types-substitution}, there exists a derivation $\tderiv' \derives \typctx \uplus \typctxtwo \ctypes{l_1+l_2} \tmthree \isub\var\tmfour \hastype \ltype$ such that $\insize{\tderiv'}=\insize{\tderiv_s}+\insize{\tderiv_u}=\insize{\tderiv}-1$.
\end{proof}

\begin{lemma}[Merging multisets w.r.t.\ derivations]
	\label{l:silly-merging-multisets}
	Let $\tm\in\Lambdac$. Consider two derivations:
	\begin{itemize}
		\item $\tderiv_{\mtypetwo} \derives  \typectx_{\mtypetwo} \ctypes{k_1} \tm : \mtypetwo$, and
		\item $\tderiv_{\mtypethree} \derives  \typectx_{\mtypethree} \ctypes{k_2} \tm : \mtypethree$.
	\end{itemize}
	Then there exists a derivation $\tderiv_{\mtypetwo} \derives  \typectx_{\mtypetwo}\uplus\typctx_{\mtypethree} \ctypes{k_1+k_2} \tm : \mtypetwo\mplus \mtypethree$.
\end{lemma}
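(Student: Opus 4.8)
The plan is to recognize this lemma as the exact converse of the splitting lemma (\reflemma{types-splitting-multisets}), and to prove it by the same structural observation about the shape of derivations concluding with a multi type. Since in the system of \reffig{colored-types-for-cbn} a multi type can only be introduced by the rule $\typingruleMany$, both input derivations must end with an instance of that rule, and the whole task reduces to merging their premises into a single instance of $\typingruleMany$.

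Concretely, I would first unfold the two derivations. By inspection of the rules, the last rule of $\tderiv_{\mtypetwo} \derives \typectx_{\mtypetwo} \ctypes{k_1} \tm \hastype \mtypetwo$ is $\typingruleMany$, so there exist a finite index set $I$, linear types $(\ltype_i)_{i\in I}$, and premises $(\rho_i \derives \typctx_i \ctypes{l_i} \tm \hastype \ltype_i)_{i\in I}$ with $\mtypetwo = [\ltype_i]_{i\in I}$, $\typectx_{\mtypetwo} = \biguplus_{i\in I}\typctx_i$, and $k_1 = \sum_{i\in I} l_i$. Symmetrically, $\tderiv_{\mtypethree}$ decomposes along a finite index set $J$, which I may assume disjoint from $I$ (renaming indices if needed), yielding premises $(\rho_j \derives \typctx_j \ctypes{l_j} \tm \hastype \ltype_j)_{j\in J}$ together with the analogous identities for $\mtypethree$, $\typectx_{\mtypethree}$, and $k_2$.

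Then I would apply a single $\typingruleMany$ rule to the combined family $(\rho_n \derives \typctx_n \ctypes{l_n} \tm \hastype \ltype_n)_{n\in I\sqcup J}$, indexed by the finite disjoint union $I\sqcup J$. This produces a derivation concluding $\biguplus_{n\in I\sqcup J}\typctx_n \ctypes{\sum_{n\in I\sqcup J} l_n} \tm \hastype [\ltype_n]_{n\in I\sqcup J}$, and it remains only to check that the three components add up as required: the environment is $\typectx_{\mtypetwo}\uplus\typectx_{\mtypethree}$, the index is $k_1+k_2$, and the multi type is $\mtypetwo\mplus\mtypethree$. All three identities follow from the disjointness of $I$ and $J$ together with the additivity of $\uplus$, of the numeric sum, and of $\mplus$ over disjoint unions.

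There is no genuine obstacle here: the statement is a purely bookkeeping fact about the $\typingruleMany$ rule, matching its informal status as the trivial dual of \reflemma{types-splitting-multisets}. The only mild subtlety is the need to take $I$ and $J$ disjoint before combining them, which is harmless since index sets of $\typingruleMany$ instances may always be renamed. If desired, the same regrouping immediately yields the size identity $\insize{\tderiv} = \insize{\tderiv_{\mtypetwo}} + \insize{\tderiv_{\mtypethree}}$, as the $\typingruleMany$ rule contributes nothing to the applicative size.
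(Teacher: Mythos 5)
Your proposal is correct and is essentially the paper's own proof: the paper also observes that both derivations must end with the $\typingruleMany$ rule and concludes by re-grouping their hypotheses into a single $\typingruleMany$ instance, merely stating this in one line where you spell out the bookkeeping (disjoint index sets, additivity of environments, indices, and multisets) explicitly.
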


\begin{proof}
	The last rules of $\tderiv_{\mtypetwo} \derives  \typectx_{\mtypetwo} \ctypes{k_1} \tm : \mtypetwo$ and $\tderiv_{\mtypethree} \derives  \typectx_{\mtypethree} \ctypes{k_2} \tm : \mtypethree$ can only be the rule $\typingruleMany$, thus it is enough to re-group their hypotheses.
\end{proof}

\begin{lemma}[Anti-substitution]
	\label{l:types-anti-substitution}
	Let $\tderiv \derives  \typectx\ \uplus \typectxtwo \ctypes{k+k'} \tm\isub\var\tmtwo : \gtype$. Then there exist
	\begin{itemize}
		\item a multi type $\mtype$,
		\item a derivation $\tderiv_\tm\derives{\typectx,\var:\mtype}\ctypes{k}{\tm}:{\gtype}$, and 
		\item a derivation $\tderiv_\tmtwo\derives{\typectxtwo}\ctypes{k'} {\tmtwo}:{\mtype}$.
	\end{itemize}
\end{lemma}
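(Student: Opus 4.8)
The plan is to prove this by induction on the derivation $\tderiv$, since the statement is precisely the converse of the Substitution Lemma (\reflemma{types-substitution}): there two derivations for $\tm$ and $\tmtwo$ are glued into one for $\tm\isub\var\tmtwo$ with contexts and indices added, and here I undo that gluing. The invariant to track is that the total index of $\tderiv$ (the number of $\typingruleAppInt$ rules) splits as $k+k'$ between the recovered $\tm$-derivation and the $\tmtwo$-derivation, and that the context $\typectx\uplus\typectxtwo$ and the multi type $\mtype$ recording the uses of $\var$ are likewise recovered from $\tderiv$. The split is well behaved because meta-level substitution touches neither the player tags of $\tm$ nor those of $\tmtwo$: no interaction application is created or destroyed, it is merely relocated. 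The first case split is on whether $\tm$ is the substituted variable $\var$.

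If $\tm = \var$, then $\tm\isub\var\tmtwo = \tmtwo$ and $\tderiv$ already derives $\typectx\uplus\typectxtwo \ctypes{k+k'} \tmtwo \hastype \gtype$; I set $\tderiv_\tmtwo \defeq \tderiv$, take $\mtype \defeq [\gtype]$ when $\gtype$ is linear (resp.\ $\mtype \defeq \gtype$ when $\gtype$ is a multi type), and let $\tderiv_\tm$ be the canonical derivation $\var \hastype \mtype \ctypes{0} \var \hastype \gtype$ (an $\typingruleAx$ when $\gtype$ is linear, an $\typingruleMany$ of axioms otherwise), so that $k = 0$ and $\typectx$ is empty. If instead $\tm$ is a variable $\vartwo \neq \var$, then $\tm\isub\var\tmtwo = \vartwo$, and I take $\tderiv_\tm \defeq \tderiv$ (harmlessly extending its environment with $\var \hastype \emptymset$), $\mtype \defeq \emptymset$, and $\tderiv_\tmtwo$ the empty $\typingruleMany$ rule typing $\tmtwo$ with $\emptymset$, giving $k' = 0$.

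When $\tm$ is neither $\var$ nor another variable, the root constructor of $\tm\isub\var\tmtwo$ coincides with that of $\tm$, so the last rule of $\tderiv$ is determined by $\tm$. For $\tm = \cla{}\vartwo\tmp$ (with $\vartwo \notin \FV{\tmtwo}$ by $\alpha$-renaming) the last rule is $\typingruleAbs$; I apply the induction hypothesis to its unique premise, which types $\tmp\isub\var\tmtwo$, and re-apply $\typingruleAbs$ to the recovered $\tm$-derivation, discharging $\vartwo$. For $\tm = \capp{}{\tmp_1}{\tmp_2}$ the last rule is an application rule; I apply the induction hypothesis to each of the two premises, obtaining multi types $\mtype_1,\mtype_2$ for $\tmtwo$, set $\mtype \defeq \mtype_1 \mplus \mtype_2$, merge the two $\tmtwo$-derivations with \reflemma{silly-merging-multisets}, and rebuild the application with the same $\typingruleApp$ instance, so that the additive index $\gamma_{\clr{}\clrtwo{}}$ is reproduced. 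Finally, when $\gtype$ is a multi type $[\ltype_i]_{i\in I}$ the last rule is $\typingruleMany$; I apply the induction hypothesis to each premise, merge the resulting $\tmtwo$-derivations with \reflemma{silly-merging-multisets} and recombine the $\tm$-derivations via $\typingruleMany$, summing both the multi types $\mtype_i$ and the indices $k_i$. In every case a routine check yields $k + k' = $ the index of $\tderiv$.

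The main obstacle, as is typical for such lemmas, is purely the bookkeeping: correctly splitting and re-merging the multi type $\mtype$ recording the uses of $\var$, and verifying the index arithmetic—in particular that the interaction contribution $\gamma_{\clr{}\clrtwo{}}$ of an application rule is carried over intact, so that the recovered $\tm$-derivation accounts for exactly the interaction applications originating in $\tm$, and the $\tmtwo$-derivation for those in the substituted copies of $\tmtwo$. No idea beyond the merging and splitting lemmas is needed; the result is the symmetric counterpart of \reflemma{types-substitution}, and it is exactly what the subject expansion direction of \refprop{ch-subject}(2) requires.
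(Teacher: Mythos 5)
Your proposal is correct and follows essentially the same route as the paper's proof: the same case analysis (substituted vs.\ unsubstituted variable, abstraction, application, $\typingruleMany$), the same use of the merging lemma (\reflemma{silly-merging-multisets}) to recombine the $\tmtwo$-derivations, and the same index bookkeeping carrying $\gamma_{\clr{}\clrtwo{}}$ into the recovered $\tm$-derivation; the paper merely phrases the induction as lexicographic on $(\gtype,\tm)$ rather than structural on $\tderiv$, which is equivalent here. The only cosmetic slip is in the $\tm=\var$, $\gtype$ linear case, where $\tderiv_\tmtwo$ is not literally $\tderiv$ but $\tderiv$ wrapped in one $\typingruleMany$ rule so as to derive the multi type $[\gtype]$, exactly as the paper does.
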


\begin{proof}
	By lexicographic induction on $(\gtype,\tm)$. Cases of $\gtype$:
	\begin{enumerate}
		\item Linear types, \ie $\gtype \defeq \ltype$. Cases of the last derivation rule:
		
		\begin{enumerate}
			\item \emph{Unsubstituted variable}, \ie: $\vartwo\isub\var\tmtwo = \vartwo$ and
			
			$$\tderiv = \infer[\typingruleAx]{\vartwo \hastype [\ltype] \ctypes {0} \vartwo \hastype \ltype}{}$$
			Then, taking $\mtype \defeq \zero$, $\tderiv_\tmtwo$ must be of the following shape: 
			
			$$\infer[\typingruleMany]{\typctxtwo \ctypes {0} \tmtwo \hastype \zero}{}$$ 
			
			Hence $\tderiv_\tm \defeq \tderiv$ works.
			\item \emph{Substituted variable}, \ie: $\var\isub\var\tmtwo = \tmtwo$ and $\tderiv \derives \typctxtwo \ctypes {k'} \tmtwo \hastype \ltype$. We take $\mtype\defeq[\ltype]$. Then:
			\[
			\infer{\tderiv_\tmtwo \derives \typctxtwo \ctypes {k'} \tmtwo \hastype [\ltype]}{\tderiv \derives \typctxtwo \ctypes {k'} \tmtwo \hastype \ltype}
			\]
			
			and
			
			\[
			\infer[\typingruleAx]{\tderiv_\tm=\var \hastype [\ltype] \ctypes {0} \var \hastype \ltype}{}\]
			
			\item \emph{Abstraction}, \ie: $$\tderiv = \infer[\typingruleAbs]{\typctx\uplus\typectxtwo \ctypes {k+k'} \la\vartwo\tmp\isub\var\tmtwo \hastype \mtypetwo \typearrowpp{\clr{}}{\clrtwo{}} \ltype}{\rho\derives\typctx\uplus\typectxtwo, \vartwo \hastype \mtypetwo \ctypes {k+k'} \tmp\isub\var\tmtwo \hastype \ltype}$$
			
			By induction, there exists a derivation  $\rho'$ of final judgment $\typctx,\vartwo\hastype\mtypetwo,\var:\mtype \ctypes {k} \tmp \hastype \ltype$ and a type derivation $\tderiv_\tmtwo$ of final judgment $\typctxtwo\ctypes {k'} \tmtwo \hastype \mtype$. Then

			$$\tderiv_\tm = \infer[\typingruleAbs]{\typctx, \var \hastype \mtype \ctypes {k} \cla{}\vartwo\tmp \hastype \mtypetwo \typearrowpp{\clr{}}{\clrtwo{}} \ltype}{\rho'\derives\typctx, \var \hastype \mtype, \vartwo \hastype \mtypetwo \ctypes {k} \tmp \hastype \ltype}$$
			
			\item \emph{Application}, \ie: 
			
			$$	\infer[\typingruleApp]{\typectx_1 \uplus \typectx_2\uplus\typectxtwo_1\uplus\typectxtwo_2  \ctypes  {k+k'+ \gamma_{\clr{}\clrtwo{}}} (\capp{}{\tm_1}{\tm_2})\isub\var\tmtwo \hastype \ltype}{ \rho_1\derives\typectx_1\uplus\typectxtwo_1 \ctypes  {k_1+k_1'} {\tm_1}\isub\var\tmtwo \hastype \mtypetwo \typearrowpp{\clr{}}{\clrtwo{}} \ltype & \rho_2 \derives\typectx_2\uplus\typectxtwo_2  \ctypes {k_2+k_2'} {\tm_2}\isub\var\tmtwo \hastype \mtypetwo  }$$ where $k =  k_1+k_2$ and $k' =  k_1'+k_2'$.
			
			By \ih, we have a type derivation $\rho_1'$ such that $\rho_1'\derives\typectx_1, \var\hastype\mtype_1 \ctypes  {k_1} {\tm_1} \hastype \mtypetwo \typearrowpp{\clr{}}{\clrtwo{}} \ltype$, a type derivation $\tderiv_{\tmtwo_1}$ such that $\tderiv_{\tmtwo_1} \derives \typctxtwo_1 \ctypes {k'_1} \tmtwo \hastype \mtype_1$, a type derivation $\rho_2'$ such that $\rho_2' \derives\typectx_2, \var\hastype\mtype_2  \ctypes {k_2} {\tm_2} \hastype \mtypetwo$, and a type derivation $\tderiv_{\tmtwo_2}$ such that $\tderiv_{\tmtwo_2} \derives \typctxtwo_2 \ctypes {k'_2} \tmtwo \hastype \mtype_2$. $\tderiv_\tmtwo$ is obtained by merging $\tderiv_{\tmtwo_1}$ and $\tderiv_{\tmtwo_2}$ (\reflemma{silly-merging-multisets}). $\tderiv_\tm$ is as follows:
			$$	
			\infer[\typingruleApp]{\typectx_1 \uplus \typectx_2, \var \hastype \mtype_1\uplus\mtype_2  \ctypes  {k+ \gamma_{\clr{}\clrtwo{}}} \capp{}{\tm_1}{\tm_2} \hastype \ltype}{ \rho_1'\derives\typectx_1, \var\hastype\mtype_1 \ctypes  {k_2} {\tm_1} \hastype \mtypetwo \typearrowpp{\clr{}}{\clrtwo{}} \ltype & \rho_2' \derives\typectx_2, \var\hastype\mtype_2  \ctypes {k_2} {\tm_2} \hastype \mtypetwo  }
			$$
			
		\end{enumerate}
		
		\item Multi types, \ie $\gtype \defeq [\ltype_i]_{i\in I}$ with $\iset$ finite. The last rule of the derivation must be $\typingruleMany$:		
		$$
		\infer[\typingruleMany]{\uplus_{i\in I}\typectx_i\cup\typctxtwo_i \ctypes{\sum_{i\in I} (k_i+k'_i)}  \tm\isub\var\tmtwo \hastype [\ltype_i]_{i\in I}}
		{(\tderivp_i \exder \typectx_i \cup \typctxtwo_i \ctypes  {k_i+k'_i} {\tm\isub\var\tmtwo} \hastype \ltype_i )_{i\in\iset}}
		$$	
		
	By applying the \ih, we obtain for each $i\in I$: $\tderiv_i \derives \typectx_i, \var\hastype\mtype_i \ctypes {k_i} \tm \hastype \ltype_i$ and
	$\tderivtwo_i \derives \typctxtwo_i \ctypes {k'_i} \tmtwo \hastype \mtype_i$. 
	$\tderiv_\tmtwo$ is obtaining by merging the $\tderivtwo_i$ (\reflemma{silly-merging-multisets}). $\tderiv_\tm$ is as follows:	
		$$
		\infer[\typingruleMany]{\uplus_{i\in I}\typectx_i, \var\hastype\mtype \ctypes{\sum_{i\in I} k_i}  \tm \hastype [\ltype_i]_{i\in I}}
		{(\tderiv_i \exder \typectx_i, \var\hastype\mtype_i \ctypes {k_i} \tm \hastype \ltype_i)_{i\in I}}$$		
		with $\mtype = \uplus_{i\in I} \mtype_i$.
		\qedhere
	\end{enumerate}
\end{proof}

\refprop{silent-root-subject-expansion} shows that the interpretation of a checkers term is invariant by silent root expansion.

\begin{proposition}[Silent Root Subject Expansion]
	\label{prop:silent-root-subject-expansion}
	If $\tm \rtobnoint \tmp$ and $\tderiv \derives \typctx \ctypes { k}  \tm' \hastype \ltype$ then there exists $\tderivp \derives \typctx \ctypes { k}  \tm \hastype \ltype$.
\end{proposition}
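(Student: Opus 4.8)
The plan is to prove this as the exact dual of Silent Root Subject Reduction (\refprop{silent-root-subject-reduction}), simply replacing the appeal to the Substitution Lemma with the Anti-substitution Lemma (\reflemma{types-anti-substitution}). Since the statement concerns a \emph{root} silent step, the redex must have the form $\tm = (\cla{}\var\tmthree)\cdot^\clr{}\tmtwo$ with matching players $\clr{}$, and $\tmp = \tmthree\isub\var\tmtwo$. Hence the whole argument is a single case, with no induction on contexts required.

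First I would feed the hypothesised derivation $\tderiv \derives \typctx \ctypes{k} \tmthree\isub\var\tmtwo \hastype \ltype$ into the Anti-substitution Lemma. This yields a multi type $\mtype$, a splitting $\typctx = \typctx_1 \uplus \typctx_2$ and $k = k_1 + k_2$, together with derivations $\tderiv_\tmthree \derives \typctx_1, \var\hastype\mtype \ctypes{k_1} \tmthree \hastype \ltype$ and $\tderiv_\tmtwo \derives \typctx_2 \ctypes{k_2} \tmtwo \hastype \mtype$.

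Next I would reassemble these into a derivation for the redex itself. Applying the abstraction rule $\typingruleAbs$ to $\tderiv_\tmthree$, and crucially choosing its second (freely selectable) player to be exactly $\clr{}$, produces $\typctx_1 \ctypes{k_1} \cla{}\var\tmthree \hastype \mtype \typearrowpp{\clr{}}{\clr{}} \ltype$ with a \emph{silent} arrow. Then the silent application rule $\typingruleAppNoInt$ combines this with $\tderiv_\tmtwo$, giving $\typctx_1 \uplus \typctx_2 \ctypes{k_1+k_2} (\cla{}\var\tmthree)\cdot^\clr{}\tmtwo \hastype \ltype$, which is precisely $\typctx \ctypes{k} \tm \hastype \ltype$, as required.

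The only point needing care — and the reason the index is preserved \emph{exactly} rather than possibly incremented — is the choice of the second player in the abstraction rule. Because the contracted redex is silent, its abstraction and application carry the same player $\clr{}$; selecting the arrow $\typearrowpp{\clr{}}{\clr{}}$ makes the reconstructing application fall under $\typingruleAppNoInt$, which leaves the interaction index untouched. Had the reconstruction instead been forced onto the interaction rule $\typingruleAppInt$, the index would have risen by one, contradicting the statement. This mirrors the observation, in the subject reduction proof, that silent steps do not change $k$, and I expect it to be the main (albeit mild) subtlety of the argument.
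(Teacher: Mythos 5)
Your proof is correct and is essentially identical to the paper's: both apply the Anti-substitution Lemma (\reflemma{types-anti-substitution}) to the derivation of $\tmthree\isub\var\tmtwo$ and then rebuild the redex derivation with an $\typingruleAbs$ rule whose arrow is the silent $\typearrowpp{\clr{}}{\clr{}}$, followed by the silent application rule, so that the index stays at $k_1+k_2=k$. The subtlety you flag — choosing the abstraction's second player equal to $\clr{}$ so the reconstructing application is silent — is exactly the (implicit) key point of the paper's derivation.
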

\begin{proof}
	\emph{Root step}, \ie $\tm = (\cla{}\var\tmthree)\cdot^\clr{}\tmtwo\rtobnoint \tmthree\isub\var\tmtwo=\tmp$.
	
	By the Anti-Substitution \reflemma{types-anti-substitution}, there exist
	$\tderivtwo \derives \typctx',\var:\mtype \ctypes {k_1}  \tmthree \hastype \ltype$ and $\rho \derives \typctx'' \ctypes {k_2}  \tmtwo \hastype \mtype$ such that $\typctx=\typctx'\uplus\typctx''$ and $k=k_1+k_2$.
	
	Then we can build the type derivation $\tderiv'\derives \typctx \ctypes { k}  \tm \hastype \ltype$ as follows:
	
\[
	\infer[\typingruleApp]{\typectx' \uplus \typectx'' \ctypes {k_1+k_2} (\cla{}\var\tmthree)\cdot^\clr{}\tmtwo \hastype \ltype}{ \infer{\typectx' \ctypes {k_1} \cla{}\var\tmthree \hastype \mtype \typearrowpp{\clr{}}{\clr{}} \ltype}{\sigma\derives\typectx', \var \hastype \mtype\ctypes {k_1} \tmthree \hastype \ltype} & \rho\derives\typectx'' \ctypes {k_2} \tmtwo \hastype \mtype  }
\]
This concludes the proof.
\end{proof}

%%%%%%%
%%%%%%%
The next proposition sums up the core of the technical development, showing that checkers multi types are compositional, invariant by silent $\beta$ and do not satisfy any form of $\eta$-equivalence.

\gettoappendix{thm:ch-types-compatibility}
\begin{proof}
\applabel{thm:ch-types-compatibility}\hfill
\begin{enumerate}
\item \emph{Compatibility}. Let $\tm$ and $\tmtwo$ such that $\tm \leqctype \tmtwo$. By induction on $\ctx$:
		\begin{itemize}
			\item $\ctx =\ctxhole$, trivial.
			
			\item $\ctx=\capp{}\ctxtwo\tmthree$. 
			\newcommand{\tmdepart}{\capp{}{\ctxtwop\tm}\tmthree}
			\newcommand{\tmarrivee}{\capp{}{\ctxtwop\tmtwo}\tmthree}
			
			Let $\typctx \ctypes{k} \tmdepart \hastype \ltype$. As $\tmdepart$ is an application, the last rule of the derivation is $\typingruleApp$.
			
			$$\infer{\typctx \ctypes{k} \tmdepart \hastype \ltype}{\typctx_1 \ctypes{k_1} \ctxtwop\tm \hastype \mtype\typearrowpp{\clr{}}{\clrtwo{}} \ltype& \typctx_2 \ctypes{k_2} \tmthree \hastype \mtype}$$ where $k =  k_1 + k_2+ \gamma_{\clr{}\clrtwo{}}$.
			
			By induction, $\typctx_1 \ctypes{k_1} \ctxtwop\tmtwo \hastype \mtype\typearrowpp{\clr{}}{\clrtwo{}} \ltype$.
			
			Hence we conclude by building the appropriate derivation for $\ctxtwop\tmtwo$: $$\infer{\typctx \ctypes{k} \tmarrivee \hastype \ltype}{\typctx_1 \ctypes{k_1} \ctxtwop\tmtwo \hastype \mtype\typearrowpp{\clr{}}{\clrtwo{}} \ltype& \typctx_2 \ctypes{k_2} \tmthree \hastype \mtype}$$
			
			\item $\ctx=\capp{}\tmthree\ctxtwo$. 
			
			\renewcommand{\tmdepart}{\capp{}\tmthree{\ctxtwop\tm}}
			\renewcommand{\tmarrivee}{\capp{}\tmthree{\ctxtwop\tmtwo}}
			
			Let $\typctx \ctypes{k} \tmdepart \hastype \ltype$. As $\tmdepart$ is an application, the last rule of the derivation is $\typingruleApp$.
			
			$$\infer{\typctx \ctypes{k} \tmdepart \hastype \ltype}{\typctx_1 \ctypes{k_1} \tmthree \hastype \mtype\typearrowpp{\clr{}}{\clrtwo{}} \ltype& \typctx_2 \ctypes{k_2} \ctxtwop\tm \hastype \mtype}$$ where $k = k_1 + k_2 + \gamma_{\clr{}\clrtwo{}}$.
			
			By induction, $\typctx_2 \ctypes{k_2} \ctxtwop\tmtwo \hastype \mtype$.
			
			Hence we conclude by building the appropriate derivation for $\ctxtwop\tmtwo$: $$\infer{\typctx \ctypes{k} \tmarrivee \hastype \ltype}{\typctx_1 \ctypes{k_1} \tmthree \hastype \mtype\typearrowpp{\clr{}}{\clrtwo{}} \ltype& \typctx_2 \ctypes{k_2} \ctxtwop\tm \hastype \mtype}$$
			
			\item $\ctx=\cla{}\var\ctxtwo$. 
			
			\renewcommand{\tmdepart}{\cla{}\var\ctxtwop\tm}
			\renewcommand{\tmarrivee}{\cla{}\var\ctxtwop\tmtwo}
			
			Let $\typctx \ctypes{k} \tmdepart \hastype \mtype\typearrowpp{\clr{}}{\clrtwo{}}\ltype$. As $\tmdepart$ is an abstraction, the last rule of the derivation is $\typingruleAbs$.
			\[
			\infer{\typectx \ctypes k \tmdepart \hastype \mtype \typearrowpp{\clr{}}{\clrtwo{}} \ltype}{\typectx, \var \hastype \mtype \ctypes k \ctxtwop\tm \hastype\ltype}
			\]
			By induction, $\typectx, \var \hastype \mtype \ctypes k \ctxtwop\tmtwo \hastype\ltype$. 
			
			Hence we conclude by building the appropriate derivation for $\ctxtwop\tmtwo$: 
			\[
			\infer{\typectx \ctypes k \tmarrivee \hastype \mtype \typearrowpp{\clr{}}{\clrtwo{}} \ltype}{\typectx, \var \hastype \mtype \ctypes k \ctxtwop\tmtwo \hastype\ltype}
			\]
		\end{itemize}
	\item \begin{itemize}
		\item $\semint\tm \subseteq \semint\tmtwo$. By Prop.~\ref{prop:silent-root-subject-reduction} if $\tm\rtobnoint\tmtwo$, then $\semint\tm \subseteq \semint\tmtwo$. We conclude by compatibility.
		\item $\semint\tm \supseteq \semint\tmtwo$. By Prop.~\ref{prop:silent-root-subject-expansion} if $\tm\rtobnoint\tmtwo$, then $\semint\tmtwo \subseteq \semint\tm$. We conclude by compatibility.
	\end{itemize}
	\item Let us consider type derivations for $\cla{}\vartwo\capp{}\var\vartwo$:
	\[
	\infer[\typingruleAbs]{
		\var\hastype [\mtype \typearrowpp{\clr{}'}{\clrtwo{}} \ltype] \ctypes{k} \cla{}\vartwo\capp{}\var\vartwo \hastype \mtype \typearrowpp{\clr{}}{\clrtwo{}'} \ltype 
	}{
		\infer[\typingruleAppInt]{
			\var\hastype [\mtype \typearrowpp{\clr{}'}{\clrtwo{}} \ltype],\vartwo:\mtype \ctypes{k} \capp{}\var\vartwo\hastype \ltype 
		}{ 
			\infer[\typingruleAx]{
				\var\hastype [\mtype \typearrowpp{\clr{}'}{\clrtwo{}} \ltype] \ctypes{0} \var \hastype \mtype \typearrowpp{\clr{}'}{\clrtwo{}} \ltype
			}{}
			~~~~&~~~~
			\infer[\typingruleMany]{
				\vartwo:\mtype\ctypes{0} \vartwo \hastype \mtype}{\vdots}{}
		}
	}
	\]
	where $k=\gamma_{\clr{}'\clrtwo{}}$. Then $([\mtype \typearrowpp{\clr{}'}{\clrtwo{}} \ltype],k,\mtype \typearrowpp{\clr{}}{\clrtwo{}'} \ltype )\in\semint{\cla{}\vartwo\capp{}\var\vartwo} $ for each $\clr{},\clr{}', \clrtwo{},\clr{}'$. Instead, $([\mtype \typearrowpp{\clr{}'}{\clrtwo{}} \ltype],k,\mtype \typearrowpp{\clr{}}{\clrtwo{}'} \ltype )\not\in\semint{\var} $  if $\clr{}\neq \clr{}'$ or $\clrtwo{}\neq \clrtwo{}'$ or $k\geq 1$. Then $\semint{\cla{}\vartwo\capp{}\var\vartwo}\not\subseteq \semint{\var} $.
	\item One immediately realizes that $([\vartype],0,\vartype)\in\semint\var$, because \[
	\infer[\typingruleAx]{\var \hastype [\vartype] \ctypes{0} \var \hastype \vartype}{}
	\]
	But it is impossible to type an abstraction $\cla{}\vartwo\capp{}\var\vartwo$ with the atomic type $\vartype$. Then $([\vartype],0,\vartype)\not\in\semint{\cla{}\vartwo\capp{}\var\vartwo}$, and hence $\semint\var\not\subseteq \semint{\cla{}\vartwo\capp{}\var\vartwo}$.\qedhere
\end{enumerate}
\end{proof}

\gettoappendix{coro:btype-preorder-inequational}
\begin{proof}
\applabel{coro:btype-preorder-inequational}
	$\leqbtype$ is compatible because $\tm\leqbtype\tmtwo \implies \monoToBlue{\tm}\leqctype\monoToBlue{\tmtwo} \overset{\text{\refthm{checkers-type-compatibility}(1)}}{\implies} 
	\monoToBlue{\ctxp\tm}\leqctype\monoToBlue{\ctxp\tmtwo} \implies \ctxp\tm\leqbtype\ctxp\tmtwo $ for each $\ctx$. $\leqbtype$ contains $\beta$ conversion because $\tm=_\beta \tmtwo \overset{\text{\refprop{embedding-of-head-reduction}(1)}}{\implies} \monoToBlue{\tm}\silconv \monoToBlue{\tmtwo} \overset{\text{\refthm{checkers-type-compatibility}(2)}}{\implies} \monoToBlue{\tm}\leqctype \monoToBlue{\tmtwo} \implies \tm \leqbtype \tmtwo$. The fact that $\leqbtype$ is neither extensional nor semi-extensional directly follows from \refthm{checkers-type-compatibility}(3) and \refthm{checkers-type-compatibility}(4).
\end{proof}

%%%
The next proposition generalizes the (already proven) \refprop{silent-root-subject-reduction} and \refprop{silent-root-subject-expansion} by showing that head reduction/expansion of a typed checkers term preserves its typability. Crucially, in subject reduction, the number of $\typingruleAppInt$ of a derivation, that is the $k$-index of the final judgment, diminishes exactly by $1$ if the head step is an interaction step, and is stable if the head step is silent.

In the following proof, we use the following grammar generating head contexts:
\[\begin{array}{r@{\hspace{.15cm}}r@{\hspace{.1cm}}l@{\hspace{.1cm}}ll}
	\textsc{Weak Head Contexts} & \hauxctx & \grameq & \ctxhole \mid \hauxctx\tm
	\\
	\textsc{Head Contexts} & \hctx & \grameq & \la\var\hctx \mid \hauxctx
	
	%\textsc{Notations} & \ \tovsc   & \defeq  & \tom \cup \toe \quad\ \  \tovscp    \ \defeq\   \tom \cup \toeabs
\end{array}\]
%%%%%%
%%%%%%
\gettoappendix{prop:ch-subject}
\begin{proof}
	\applabel{prop:ch-subject}
	\begin{enumerate}
		\item \begin{enumerate}
			\item \emph{Root step}, \ie $\tm = \capp{}{(\cla{}\var\tmthree)}\tmtwo\tohch \tmthree\isub\var\tmtwo=\tmp$.
			
			Then the last rule of the derivation $\tderiv$ is $\typingruleApp$ and is followed by the $\typingruleAbs$ rule on the left:
			
			$$	\infer[\typingruleApp]{\typectx \uplus \typectxtwo \ctypes {k} \capp{}{(\cla{}\var\tmthree)}\tmtwo \hastype \ltype}{ \infer{\typectx \ctypes {k_1} \cla{}\var\tmthree \hastype \mtype \typearrowpp{\clr{}}{\clrtwo{}} \ltype}{\typectx, \var \hastype \mtype\ctypes {k_1} \tmthree \hastype \ltype} & \typectxtwo \ctypes{k_2} \tmtwo \hastype \mtype  }$$ where $k =  k_1 + k_2 + \gamma_{\clr{}\clrtwo{}}$.
			
			By the Substitution \reflemma{types-substitution}, we have that there exists a derivation $\tderiv' \derives \typctx \uplus \typctxtwo \ctypes{k_1+k_2} \tmthree \isub\var\tmfour \hastype \ltype$ such that $\insize{\tderiv'}=\insize{\tderiv_s}+\insize{\tderiv_u}=\insize{\tderiv}-1$. We conclude by observing that if $\tm \tohnoint\tmp$ then $k = k_1 + k_2$, and otherwise if $\tm \tohint \tmp$ then $k_1+k_2=k-1$.
			
			\item \emph{Contextual closure}. We have two subcases:
			\begin{enumerate}
				\item Weak contexts, \ie $\tm= \capp{}{\hauxctxp{s}}\tmtwo\toh \capp{}{\hauxctxp{s'}}\tmtwo=\tm'$. Then the last rule of $\tderiv$ is $\typingruleApp$:
				\[
				\infer{\typectx \uplus \typectxtwo \ctypes {k} \capp{}{\hauxctxp{s}}\tmtwo \hastype \ltype}{\tderivtwo\derives
					\typectx \ctypes {k_1} \hauxctxp{s} \hastype \mtype \typearrowpp{\clr{}}{\clrtwo{}} \ltype & 
					\typectxtwo \ctypes{k_2} \tmtwo \hastype \mtype } 
				\]
				where $k =  k_1 + k_2+ \gamma_{\clr{}\clrtwo{}}$. By \ih, there exists a derivation $\tderivtwo'\derives\typectx\ctypes{k_1'} \hauxctxp{s'} \hastype \mtype \typearrowpp{\clr{}}{\clrtwo{}} \ltype$, where $\insize{\tderivtwo'}=\insize{\tderivtwo}-1$ and 
				\[
				k_1' = \begin{cases}
					k_1, & if ~\hauxctxp{s} \tohnoint\hauxctxp{s'},
					\\
					k_1-1, & if~ \hauxctxp{s} \tohint \hauxctxp{s'}.
				\end{cases}
				\] Then, we can build the type derivation $\tderiv'$ as:
				\[
				\infer{\typectx \uplus \typectxtwo \ctypes {k'} \capp{}{\hauxctxp{s'}}\tmtwo \hastype \ltype}{\tderivtwo'\derives
					\typectx \ctypes {k_1'} \hauxctxp{s'} \hastype \mtype \typearrowpp{\clr{}}{\clrtwo{}} \ltype & 
					\typectxtwo \ctypes{k_2} \tmtwo \hastype \mtype } 
				\]
				where clearly $\size{\tderiv'}=\size{\tderiv}-1$ and $k' = \begin{cases}
					k, & \textrm{if }~\capp{}{\hauxctxp{s}}\tmtwo \tohnoint\capp{}{\hauxctxp{s'}}\tmtwo,
					\\
					k-1, & \textrm{if }~ \capp{}{\hauxctxp{s}}\tmtwo \tohint \capp{}{\hauxctxp{s'}}\tmtwo.
				\end{cases}$
				\item Head contexts, \ie $\tm= \cla{}{\var}{H\ctxholep{s}}\toh \cla{}{\var}{H\ctxholep{s'}}=\tm'$. Analogous.
			\end{enumerate}
		\end{enumerate}
	\item \begin{enumerate}
		\item \emph{Root step}, \ie $\tm = (\cla{}\var\tmthree)\cdot^\clrtwo{}\tmtwo\rtobnoint \tmthree\isub\var\tmtwo=\tmp$.
		
		By the Anti-Substitution \reflemma{types-anti-substitution}, there exist
		$\tderivtwo \derives \typctx',\var:\mtype \ctypes { k_1}  \tmthree \hastype \ltype$ and $\rho \derives \typctx'' \ctypes {k_2}  \tmtwo \hastype \mtype$ such that $\typctx=\typctx'\uplus\typctx''$ and $k=k_1+k_2$.
		
		Then we can build the type derivation $\tderiv\derives \typctx \ctypes { k'}  \tm \hastype \ltype$ as follows:
		
		$$	\infer[\typingruleApp]{\typectx' \uplus \typectx'' \ctypes {k'} (\cla{}\var\tmthree)\cdot^\clrtwo{}\tmtwo \hastype \ltype}{ \infer{\typectx' \ctypes {k_1} \cla{}\var\tmthree \hastype \mtype \typearrowpp{\clr{}}{\clrtwo{}} \ltype}{\sigma\derives\typectx', \var \hastype \mtype\ctypes {k_1} \tmthree \hastype \ltype} & \rho\derives\typectx'' \ctypes{k_2} \tmtwo \hastype \mtype  }$$
		where $k'=k+\gamma_{\clr{}\clrtwo{}}$.
		\item \emph{Contextual closure}. We have two subcases:
		\begin{enumerate}
			\item Weak contexts, \ie $\tm= \capp{}{\hauxctxp{s}}\tmtwo\toh \capp{}{\hauxctxp{s'}}\tmtwo=\tm'$. Then the last rule of $\tderiv'$ is $\typingruleApp$:
			\[
			\infer{\typectx \uplus \typectxtwo \ctypes {k} \capp{}{\hauxctxp{s'}}\tmtwo \hastype \ltype}{\tderivtwo'\derives
				\typectx \ctypes {k_1} \hauxctxp{s'} \hastype \mtype \typearrowpp{\clr{}}{\clrtwo{}} \ltype & 
				\typectxtwo \ctypes{k_2} \tmtwo \hastype \mtype } 
			\]
			where $k = k_1 + k_2 + \gamma_{\clr{}\clrtwo{}}$. By \ih, there exists a derivation $\tderivtwo\derives\typectx\ctypes{k_1'} \hauxctxp{s} \hastype \mtype \typearrowpp{\clr{}}{\clrtwo{}} \ltype$
			Then, we can build $\tderiv$ as follows:
			\[
			\infer{\typectx \uplus \typectxtwo \ctypes {k'} \capp{}{\hauxctxp{s}}\tmtwo \hastype \ltype}{\tderivtwo\derives
				\typectx \ctypes {k_1'} \hauxctxp{s} \hastype \mtype \typearrowpp{\clr{}}{\clrtwo{}} \ltype & 
				\typectxtwo \ctypes{k_2} \tmtwo \hastype \mtype } 
			\]
			\item Head contexts, \ie $\tm= \cla{}{\var}{H\ctxholep{s}}\toh \cla{}{\var}{H\ctxholep{s'}}=\tm'$. Analogous.\qedhere
		\end{enumerate}
	\end{enumerate}
	\end{enumerate}
\end{proof}

%%%%%
%%%%%
\gettoappendix{prop:ch-typability-hnf}
\begin{proof}
	\applabel{prop:ch-typability-hnf}
		Proposition~\ref{prop:nf-are-typable-tight} provides a stronger statement.
\end{proof}

%%%%%
%%%%%
By combining the previous results of this section, we now prove the standard result of intersection types: typability is equivalent to normalization (here, checkers head normalization).

\gettoappendix{thm:ch-head-characterization}
\begin{proof}
	\applabel{thm:ch-head-characterization}
	
	\begin{enumerate}
		\item By induction on $\size{\tderiv}$ and case analysis on whether $\tm$ is head normal. If $\tm$ is head
		normal then $\tm$ is head normalizable in $0\leq k$ interaction steps. If $\tm\tohch\tmtwo$ then there two cases:
		\begin{itemize}
			\item $\tm\tohnoint\tmtwo$. Then by quantitative subject reduction
			(Prop.~\ref{prop:ch-subject}(1)), there is $\tderivtwo\derives\typectx\ctypes k\tmtwo \hastype\ltype$ such that $\insize{\tderiv} = \insize{\tderivtwo}+1$. By i.h., $\tmtwo$ is head normalizable in less than $k$ interaction steps. Then, the same holds for $\tm$.
			\item $\tm\tohint\tmtwo$. Then by quantitative subject reduction
			(Prop.~\ref{prop:ch-subject}(1)), there is $\tderivtwo\derives\typectx\ctypes {k-1}\tmtwo \hastype\ltype$ such that $\insize{\tderiv} = \insize{\tderivtwo}+1$. By i.h., $\tmtwo$ is head normalizable in less than $k-1$ interaction steps. Then, $\tm$ is head normalizable in less than $k$.\qedhere
		\end{itemize}
	\item By induction on $m$. 
	Cases:
	\begin{enumerate}
		\item If $m=0$ then $\tm = \htm$. Then we conclude by Proposition~\ref{prop:ch-typability-hnf}. 
		
		\item If $m>0$ then $\tm \tohch \tmtwo \tohch^{m-1} \ntm$. By \ih, there 
		exists 
		$\tderivtwo\derives\tctx \vdash {\tmtwo}:{\ltype}$. By subject expansion (Prop.~\ref{prop:ch-subject}(2)), there exists 
		$\tderiv\derives\typectx \vdash {\tm}:{\ltype}$.\qedhere
	\end{enumerate}
	\end{enumerate}
\end{proof}

\section{Proofs of Section~\ref{sect:bohm-included-in-interaction} (From the \bohm Preorder to the Interaction One, via Multi Types)}
In this section, we prove that the notion of (checkers) tight types is able to represent derivations of normal forms that can only contain $\typingruleAppNoInt$ as applications $\typingruleApp$ rules, \ie derivations with a null index $k$ in the final judgment.

%%%%%%%
%%%%%%%
\gettoappendix{prop:nf-are-typable-tight}

\begin{proof}
\applabel{prop:nf-are-typable-tight}\hfill
	\begin{enumerate}
		\item % !TEX root = ../../main.tex
%\cben{
%Let $\ntm =\manyclam{k}{\var}\manycapp{h}{\var}{\tm}$. Set $\ltype \defeq\mtype_1 \typearrowpp{\clr{1}}{\clr{1}} \cdots \mtype_h \cben{\typearrowpp{\clr{h}}{\clr{h}}}{\typearrowpp{\clr{k}}{\clr{k}}} \vartype$ and $\ltype'\defeq[\emptytype \typearrowpp{\clrtwo{1}}{\clrtwo{1}} \cdots \emptytype \cben{\typearrowpp{\clrtwo{k}}{\clrtwo{k}}}{\typearrowpp{\clrtwo{h}}{\clrtwo{h}}} \vartype]$. We build the following derivation that types $\ntm$:
%	
%	
%	\[
%	\infer=[k \typingruleAbs]{\typctx \ctypes{0} \ntm \hastype \ltype}
%	{
%	\infer=[h \typingruleApp]{\cben{\typctx; (\var_i \hastype \mtype_i)_{1 \leq i\leq k}}{\var\hastype\mset{\ltype'}} \ctypes{0} \manycapp{h}{\var}{\tm} \hastype \vartype}
%		{\var\hastype[\ltype'] \ctypes{0} \var \hastype \ltype' & (\emptytypectx \ctypes{0} \tm_{i} \hastype [])_{1 \leq i\leq h}}
%	}
%	\]
%	
%	
%	Let us show that $(\typctx,\ltype)$ is a tight typing:
%	
%	\begin{enumerate}
%		\item \emph{Almost all types are empty:} $\typctx; (\var_i \hastype \mtype_i)_{1 \leq i\leq k}=\var\hastype\mset{\ltype'}$ hence all types in $\typctx$ and $\mtype_i$ are empty except for one.
%		
%		\item \emph{The head multi type only mono-chromatically erases:} $\ltype'$ matches the requirements.
%	\end{enumerate}
%	}{
Let $\ntm =\manyclam{n}{\var}\manycapp{m}{\var}{\tm}$. Set $\ltype'\defeq\mset{\emptytype \typearrowpp{\clrtwo{1}}{\clrtwo{1}} \cdots \emptytype \typearrowpp{\clrtwo{m}}{\clrtwo{m}} \vartype}$. 
Firstly, we build the following derivation:
	\[
	\infer=[m \typingruleApp]{\var\hastype\mset{\ltype'} \ctypes{0} \manycapp{m}{\var}{\tm} \hastype \vartype}
			{\infer{\vdots}{
			\infer[\typingruleApp]{\var\hastype\mset{\ltype'} \ctypes{0} \capp{1}\var {\tm_1} \hastype \emptytype \typearrowpp{\clrtwo{2}}{\clrtwo{2}} \cdots \emptytype \typearrowpp{\clrtwo{m}}{\clrtwo{m}} \vartype
			}{
				\var\hastype[\ltype'] \ctypes{0} \var \hastype \ltype' & \infer[\typingruleMany]{\emptytypectx \ctypes{0} \tm_{1} \hastype \emptytype}{}}
			}
			& \infer[\typingruleMany]{\emptytypectx \ctypes{0} \tm_{m} \hastype \emptytype}{}
			}
	\]
Then, there are two cases:
\begin{itemize}
\item $\var = \var_i$ for some $i\in\set{1,\ldots,n}$. For the sake of simplicity, let us say that $i=n$, the other cases are analogous. Then, we obtain the following derivation:
	\[
\infer=[(n-1) \typingruleAbs]{\ctypes{0} \manyclam{n}{\var}\manycapp{m}{\var}{\tm} \hastype \emptytype \typearrowpp{\clr{1}}{\clr{1}} \cdots \emptytype \typearrowpp{\clr{n-1}}{\clr{n-1}}\mset{\ltype'}\typearrowpp{\clr{n}}{\clr{n}} \vartype
}{
	\infer[\typingruleAbs]{ \ctypes{0} \cla{n}{\var_n}\manycapp{m}{\var}{\tm} \hastype \mset{\ltype'}\typearrowpp{\clr{n}}{\clr{n}} \vartype
	}{
			\var\hastype\mset{\ltype'} \ctypes{0} \manycapp{m}{\var}{\tm} \hastype \vartype
	}
}
	\]
Let $\ltype \defeq \emptytype \typearrowpp{\clr{1}}{\clr{1}} \cdots \emptytype \typearrowpp{\clr{k-1}}{\clr{k-1}}\mset{\ltype'}\typearrowpp{\clr{n}}{\clr{n}} \vartype$. 	We show that $(\emptytypctx,\ltype)$ is a tight typing:
	
	\begin{enumerate}
		\item \emph{Exactly one multiset is not empty:} the only non-empty multiset is $\mset{\ltype'}$ in $\ltype$.
		
		\item \emph{The arrows are silent}: all arrows are indeed silent.
	\end{enumerate}	

\item $\var \neq \var_i$ for all $i\in\set{1,\ldots,n}$.  Then, we obtain the following derivation:
	\[
\infer=[n\typingruleAbs]{\var\hastype\mset{\ltype'} \ctypes{0} \manyclam{n}{\var}\manycapp{m}{\var}{\tm} \hastype \emptytype \typearrowpp{\clr{1}}{\clr{1}} \cdots \emptytype \typearrowpp{\clr{n}}{\clr{n}} \vartype
}{
			\var\hastype\mset{\ltype'} \ctypes{0} \manycapp{m}{\var}{\tm} \hastype \vartype
}
	\]
Let $\ltype \defeq \emptytype \typearrowpp{\clr{1}}{\clr{1}} \cdots \emptytype\typearrowpp{\clr{n}}{\clr{n}} \vartype$. 	Note that $(\var\hastype\mset{\ltype'},\ltype)$ is a tight typing:
	
	\begin{enumerate}
		\item \emph{Exactly one multiset is not empty:} the only non-empty multiset is $\mset{\ltype'}$.
		
		\item \emph{The arrows are silent}: all arrows are indeed silent.
	\end{enumerate}
\end{itemize}

%	}
		
		\item % !TEX root = ../../main.tex
Suppose $\htm =\manyclam{n}{\var}\manycapp{m}{\var}{\tm}$ and let $(\typctx,\ltype)$ a tight typing and an integer $k$ such that $\typctx \ctypes k \htm \hastype \ltype$ (please note that $\var$ could be either a free variable $\vartwo$ or a bound variable $\var_i$ for $1\leq i\leq k$).
		
		The typing derivation can only look like (that is, the difference with the type derivations built at the previous point is that the atomic type $\vartype$ can more generally be a type $\ltypethree$ made out of empty multi-sets and silent arrows):
		\[\infer=[n \typingruleAbs]{\typctx \ctypes{k} \htm \hastype \ltype}{\infer=[m \typingruleApp]{\typctx, \var_1 \hastype \mtype_1,\ldots,\var_n \hastype \mtype_n \ctypes{k} \manycapp{m}{\var}{\tm} \hastype \ltypethree}{\typctxtwo_0 \ctypes{l_0} \var \hastype \ltype' & (\typctxtwo_i \ctypes{l_i} \tm_{i} \hastype \mtypetwo_{i})_{1 \leq i\leq m}}}\]
		Where: 
		\begin{itemize}
			
			\item $\typctx, \var_1 \hastype \mtype_1,\ldots,\var_n \hastype \mtype_n = \uplus_{0 \leq i \leq m} \typctxtwo_i$;
			\item $\ltype' = \mtypetwo_{1} \typearrowpp{\clrp{1}}{\clrtwo{1}}  \cdots \mtypetwo_{m} \typearrowpp{\clrp{m}}{\clrtwo{m}} \ltypethree$;
			
			\item $l_0 + \sum_{1 \leq i\leq m} (l_i +\gamma_{\clrp{i}\clrtwo{i}})= k$.
		\end{itemize}
	
		Then, we get that $\typctxtwo_0= \var\hastype [\ltype']$. By assumption, as $(\typctx,\ltype)$ is a tight typing and $\typctxtwo_0\subseteq\typctx, \var_1 \hastype \mtype_1,\ldots,\var_n \hastype \mtype_n$ we get that $\typctx, \var_1 \hastype \mtype_1,\ldots,\var_n \hastype \mtype_n = \var\hastype [\ltype']$ and that $\clrp{i}=\clrtwo{i}$ and $\mtypetwo_i=\emptytype$ for all $1\leq i \leq m$.
		
		Hence, for all $i$, $\gamma_{\clrp{i}\clrtwo{i}} = 0$ and $l_i=0$ (the derivations for $\tm_i\hastype\mtypetwo_i$ can only start by the many rule with no premises).
		
		As $l_0=0$, ($\typctxtwo_0 \ctypes{l_0} \var \hastype \ltype'$ only consists of an axiom rule), we conclude that $k=0$.
		\qedhere		
	\end{enumerate}
\end{proof}

}

\end{document}